\documentclass[10pt]{article}
\usepackage{latexsym}
\usepackage{amssymb}
\usepackage{amsthm}
\usepackage{amsmath}
\usepackage{graphicx}
\usepackage[latin1]{inputenc}
\usepackage{hyperref}

\newtheorem{theorem}{Theorem}[section]
\newtheorem{prop}[theorem]{Proposition}
\newtheorem{coro}[theorem]{Corollary}
\newtheorem{lemma}[theorem]{Lemma}
\newtheorem{remark}[theorem]{Remark}

\topmargin0pt
\headheight10pt
\headsep20pt
\oddsidemargin0pt
\textheight20cm
\textwidth16cm



\newcommand{\R}{\mathbb{R}}             
\newcommand{\N}{\mathbb{N}}             
\newcommand{\C}{\mathbb{C}}             
\renewcommand{\S}{\mathbb{S}}             




\newcommand{\half}{\frac{1}{2}}


\newcommand{\Ga}{\Gamma^1}         
\newcommand{\Gb}{\Gamma^2}         
\newcommand{\Gc}{\Gamma^3}         
\newcommand{\Gd}{\Gamma^0}         









\newcommand{\Section}[1]{\section{#1} \setcounter{equation}{0}}


\author{Damien Gobin\footnote{D\'epartement de Math\'ematiques, Universit\'e de Nantes, 2, rue de la Houssini\`ere, BP
     92208, 44322 Nantes Cedex 03. Email adress: damien.gobin@univ-nantes.fr. Research supported by the French National Research Project
AARG, No. ANR-12-BS01-012-01.}}
\title{Inverse scattering at fixed energy for massive charged Dirac fields in de Sitter-Reissner-Nordstr\"{o}m black holes}
\date{\today}

\begin{document}

\maketitle


\begin{abstract}

In this paper, we consider massive charged Dirac fields propagating in the exterior region of de Sitter-Reissner-Nordstr\"{o}m black holes.
We show that the parameters of such 
black holes are uniquely determined by the partial knowledge of the corresponding scattering operator $S(\lambda)$ at a fixed energy $\lambda$.
More precisely, we consider the partial wave scattering operators 
$S(\lambda,n)$ (here $\lambda \in \mathbb{R}$ is the energy and $n \in \mathbb{N}^{\star}$ denotes the angular momentum) defined as the restrictions of the full scattering operator on a well chosen basis of 
spin-weighted spherical harmonics.
We prove that the knowledge 
of the scattering operators $S(\lambda,n)$, for all $n \in \mathcal{L}$, where $\mathcal{L}$ is a subset of $\mathbb{N}^{\star}$ that satisfies the M\"{u}ntz condition 
$\sum_{n \in \mathcal{L}} \frac{1}{n} = + \infty$, allows to recover the mass, the charge and the cosmological constant of a dS-RN black hole.
The main tool consists in the complexification of the angular momentum $n$ and in studying the analytic properties of the 
``unphysical'' corresponding data in the complex variable $z$.

\vspace{0.5cm}
\noindent \textit{Keywords}. Inverse Scattering, Black Holes, Dirac Equation. \\
\textit{2010 Mathematics Subject Classification}. Primaries 81U40, 35P25; Secondary 58J50.
\end{abstract}


\Section{Introduction and statement of the main result}

General Relativity was introduced by Einstein in 1915 and is one of the most important and beautiful theory of the twentieth century. This theory predicts the existence 
of black holes which are the objects of main interest of this paper. Although they are complicated objects to study from an astrophysical point of view, they are in fact quite simple to describe in theory. Indeed they only depend on a few physical parameters (see for instance \cite{Heu} and \cite{Lef} for spherically symmetric black holes and \cite{Wald} for rotating black holes): their mass, their electric charge, their angular momentum and possibly the cosmological constant of the universe. Inverse and direct scattering theory in black hole spacetimes are subjects of great interest. Direct scattering for Schwarzschild (static, uncharged and spherically symmetric), 
(de Sitter)-Reissner-Nordstr\"{o}m (static, charged and spherically symmetric) and Kerr (uncharged and rotating) black holes was studied for instance by Bachelot, Dimock, Kay, Nicolas, Jin, Melnik and H\"{a}fner in \cite{bach1,bach2,Dim,Kay,
Nic,Jin,Mel1,Mel2,Haf1,Haf2}. Among other reasons, these studies were motivated by the discovery unexpected phenomena such as the Hawking effect and the superradiance phenomenon. We refer for instance to Bachelot \cite{bach3,bach4}, H\"afner \cite{Haf3} and Melnyk \cite{Mel} for an application of scattering results in terms of the Hawking effect. Concerning the inverse scattering in (de Sitter)-Reissner-Nordstr\"{o}m black holes, this problem has been adressed in the serie of papers by Daud\'{e} and Nicoleau \cite{DN,DN2,DN4}. 

This work is a continuation of the papers \cite{DN,DN2,DN4} and deals with de Sitter-Reissner-Nordstr\"{o}m black holes. These are spherically symmetric and charged solutions of the Einstein-Maxwell equations that are completely characterized by three parameters: the mass 
$M > 0$ and the electric charge $Q \in \mathbb{R}$ of the black hole and the cosmological constant $\Lambda > 0$ of the universe. The object of this paper is to study an inverse scattering problem in a dS-RN black hole whose unknowns are its parameters. In fact, we shall see in the course of the proof that we are able to recover more than a few parameters (see below).


A peculiarity of scattering theory in black hole spacetimes is the fact that we have to deal with two \emph{asymptotic regions}. Indeed, we adopt the point of view of an observer located far away from the event and cosmological horizons and static with respect to them. It is well known that such an observer perceives both horizons as asymptotic regions. Hence we shall understand in the following de Sitter-Reissner-Nordstr\"om black holes as spherically symmetric spacetimes with two asymptotic ends (the cosmological and the event horizons). It is worth mentioning that the geometry at both horizons is of asymptotically hyperbolic type. The question we adress is the following: is there any way to characterize uniquely the parameters of the black hole by an inverse scattering experiment from the point of view of a static observer?

To reformulate our main problem we have to introduce the wave operators associated to massive charged Dirac fields evolving in the exterior region of the black hole. We denote by $W_{(-\infty)}^{\pm}$ the wave operators corresponding to the part of the massive and charged Dirac fields which scatters toward the event horizon and by $W_{(+\infty)}^{\pm}$ the wave operators corresponding the part of Dirac fields which scatters toward the cosmological horizon. Thanks to \cite{Dau, Mel} we know that the global wave operators defined by
\begin{equation}\label{globwav}
 W^{\pm} = W_{(-\infty)}^{\pm} \oplus W_{(+\infty)}^{\pm},
\end{equation}
exist and are asymptotically complete on the Hilbert space of scattering data. This allows to define a global scattering operator $S$ by the usual formula
\[ S = (W^+)^{\star}W^-.\]

The scattering operator is the main object of study of this paper. It contains all the scattering information as viewed by observers living far from the horizons of a dS-RN black hole. Thanks to this definition, we can reformulate our main question: is the knowledge of $S$ a sufficient information to uniquely characterize the parameters of a dS-RN black hole ?

The aim of this paper is to show that the parameters $M$, $Q$ and $\Lambda$ are uniquely characterized from the knowledge of the scattering operator at a fixed energy and more precisely, from the knowledge of the reflection operators at a fixed energy (see below). In fact, we mention that we are able to recover more than only the parameters of the black hole since we show the uniqueness (up to a certain diffeomorphism) of some scalar functions appearing in the Dirac equation. Note at last that, contrary to \cite{DN2}, we don't need the knowledge of the scattering operator on an interval of energy but only at \emph{a fixed energy} to recover the metric of the black hole. Our result is an adaptation to the case of massive and charged Dirac fields of a similar result given in \cite{DN} for massless and uncharged Dirac fields. In particular, the (physically relevant) addition of a mass term makes more complicated both the definition of the scattering matrix and the technical details of the proof of our main 
Theorem.

\subsection{de Sitter-Reissner-Nordstr\"{o}m black holes}\label{dS-RN}
\noindent
In Schwarzschild coordinates, the exterior region of a dS-RN black hole is described by the four-dimensional manifold
\[\mathcal{M} = \mathbb{R}_t \times ]r_-,r_+[_r \times \mathbb{S}_{\omega}^2,\]
equipped with the Lorentzian metric
\begin{equation}\label{metric}
  g = g_{\mu \nu} dx^{\mu} dx^{\nu} = F(r) dt^2 - F(r)^{-1}dr^2 - r^2 d \omega ^2,
\end{equation}
where
\begin{equation}\label{F}
 F(r) = 1 - \frac{2M}{r} + \frac{Q^2}{r^2} - \frac{\Lambda r^2}{3},
\end{equation}
and $d \omega^2 = d\theta^2 + \sin(\theta)^2 d\varphi^2$ is the euclidean metric on the sphere $\mathbb{S}^2$. De Sitter-Reissner-Nordstr\"{o}m (dS-RN) black holes are spherically symmetric electrically charged solutions of the Einstein equations
\[ G_{\mu \nu} = 8 \pi T_{\mu \nu},\]
where $G_{\mu \nu}$ is the Einstein tensor, $T_{\mu \nu}$ is the energy momentum tensor,
\[T_{\mu \nu} = \frac{1}{4 \pi} (F_{\mu \rho} F_{\nu}^{\rho} - \frac{1}{4} g_{\mu \nu} F_{\rho \sigma} F^{\rho \sigma}),\]
where $F_{\mu \nu}$ is the electromagnetic 2-form, solution of Maxwell's equations $\nabla^{\mu} F_{\nu \rho} = 0$, $\nabla_{[\mu}F_{\nu \rho ]} = 0$ and given here in terms of a global electromagnetic vector potential 
$F_{\mu \nu} = \nabla_{[\mu} A_{\nu ]}$, $A_{\nu}dx^{\nu} = - \frac{Q}{r} dt$. The quantities $M$ and $Q$ are interpreted as the mass and the charge of the Reissner-Nordstr\"om black hole and 
$\Lambda > 0$ is the cosmological constant of the universe.

Let us look at the singularities of the metric $g$. Firstly, $F$ is singular at the point $\{r = 0\}$. This is a curvature singularity meaning that some contraction of the Riemann tensor explodes when $r \to 0$.
Secondly, the spheres whose radii are the roots of $F$ are also singularities for the metric $g$ (the coefficient of the metric $g$ involving $F^{-1}$ blows up in this case). 
We assume here that the function $F(r)$ has three simple positive roots $0 < r_c < r_- < r_+$ and a negative one $r_n < 0$. This is always achieved if we suppose, for instance, that $Q^2 < \frac{9}{8} M^2$ and that $\Lambda M^2$ be small enough (see \cite{Lak}). 
The hypersurface $\{r = r_c\}$ is called the Cauchy horizon whereas the hypersurfaces $\{ r =r_-\}$ and $\{r = r_+\}$ are, respectively, the event and cosmological horizons. 
We shall only consider the exterior region of the black hole, i.e. the region $\{r_- < r < r_+\}$ lying between the event and cosmological horizons (remark that the function $F$ is positive here). 
Actually, the event and cosmological horizons which appear as singularities of the metric (\ref{metric}) are coordinates singularities and are due to our bad choice of coordinates system. Using appropriate coordinates, these horizons can be understood as regular null hypersurfaces that can be crossed one way but would require speeds greater than that of light to be crossed the other way. 

As mentionned previously, the point of view implicitely adopted throughout this work is that of static observers located far from the event and cosmological horizons of the black hole. We think typically of a telescope on earth aiming at the black hole or at the cosmological horizon. We understand these observers as living on world lines $\{ r = r_0 \}$ with $r_- \ll r_0 \ll r_+$. The variable $t$ of the 
Schwarzschild coordinates corresponds to their proper time. From the point of view of our observers, it is important to understand that the event and cosmological horizons are the natural \emph{boundaries} of the observable world. This can be more easily understood if we notice that the horizons are in fact never reached in a finite time $t$ by incoming and outgoing radial null geodesics, the trajectories followed by classical light-rays radially at the black hole and either at the cosmological horizon. Both horizons are thus perceived as {\em asymptotic regions} by our static observers.

Instead of working with the radial variable $r$, we describe the exterior region of the black hole by using the Regge-Wheeler (RW) radial variable. 
The RW variable $x$ is defined implicitely by 
\[\frac{dx}{dr} = F^{-1}(r),\]
or explicitely by
\begin{equation}\label{defx}
 x = \frac{1}{2\kappa_n} \ln(r-r_n) + \frac{1}{2\kappa_c} \ln(r-r_c) +\frac{1}{2\kappa_-} \ln(r-r_-) +\frac{1}{2\kappa_+} \ln(r-r_+) + C,
\end{equation}
where $C$ is any constant of integration and the quantities $\kappa_j$, $j=n,c,-,+$ are defined by
\begin{equation}\label{kappa}
 \kappa_n = \frac{1}{2} F'(r_n), \quad \kappa_c = \frac{1}{2} F'(r_c), \quad \kappa_- = \frac{1}{2} F'(r_-), \quad \kappa_+ = \frac{1}{2} F'(r_+).
\end{equation}
The constants $\kappa_- > 0$ and $\kappa_+ < 0$ are called the surface gravities of the event and cosmological horizons, respectively. Note from (\ref{defx}) that the event 
and cosmological horizons $\{ r = r_{\pm} \}$ are pushed away to the infinities $\{ x = \pm \infty \}$ using the RW variable $x$. Moreover, it can be shown easily that, in this new coordinates system, the incoming and outgoing null radial geodesics become the straight lines $\{ x = \pm t\}$ in the $t-x$ plane. Hence, working with the RW radial variable achieves in practice the fact that the event and cosmological horizons are \emph{asymptotic regions} for our observers.  

Finally, we note the presence of a constant of integration $C$ in the definition of $x$. The importance of such a constant is explained in Section 4.1.5, Proposition 4.12 of \cite{DN3}. In this Proposition, it is shown that there is a dependence of the scattering matrix 
under the choice of the constant of integration $C$. Since the exterior region of a dS-RN black hole can be described uniquely by any choice of the Regge-Wheeler variable $x$, we could identify all the possible forms of the reduced scattering matrices in the statement of our main result. However, for the sake of simplicity, we take $C = 0$ in our study (the case of $C \neq 0$ could be treated in the same way using the explicit dependence on $C$ given in \cite{DN3} for the scattering matrix).

\subsection{The scattering matrix and statement of the result}\label{statement}
\noindent
As in \cite{DN2}, we consider massive charged Dirac fields propagating in the exterior region of a dS-RN black hole. Scattering theory for these Dirac fields has been the object of the papers \cite{Dau, Mel}. We shall
 use the form of the Dirac equation obtained therein as the starting point of our study. We refer to Section \ref{Dirpb} for the details.

The considered massive charged Dirac fields are represented by 4-components spinors $\psi$ belonging to the Hilbert space
\[ L^2(\mathbb{R} \times \S^2 ; \mathbb{C}^4),\]
which satisfy the evolution equation
\begin{equation}\label{eqDir}
 i \partial_t \psi = (\Gamma^1 D_x + a(x) D_{\mathbb{S}^2} + b(x) \Gamma^0 + c(x)) \psi.
\end{equation}
The symbols $D_x$ stands for $-i\partial_x$ whereas $D_{\mathbb{S}^2}$ denotes the Dirac operator on $\S^2$ which in spherical coordinates, takes the form,
\begin{equation}\label{opeDirac}
 D_{\mathbb{S}^2} = -i \Gamma^2 \left( \partial_{\theta} + \frac{\cot(\theta)}{2} \right) - \frac{i}{\sin(\theta)} \Gamma^3 \partial_{\varphi}.
\end{equation}
The potentials $a$, $b$ and $c$ are scalar smooth functions given in terms of the metric (\ref{metric}) by
\begin{equation}\label{defpot}
 a(x) = \frac{\sqrt{F(r)}}{r}, \quad b(x) = m \sqrt{F(r)}, \quad c(x) = \frac{qQ}{r},
\end{equation}
where $m$ and $q$ respectively denotes the mass and the electric charge of the Dirac fields. Finally, the matrices $\Ga$, $\Gb$, $\Gc$ and $\Gd$ appearing in (\ref{eqDir}) and (\ref{opeDirac}) 
are usual $4 \times 4$ Dirac matrices that satisfy anticommutation relations (see (\ref{anticom})).
The equation (\ref{eqDir}) is spherically symmetric and in consequence can be separated into ODEs. The stationary scattering will be shown to be governed by a countable family of one-dimensional stationary Dirac equations that take the following form:
\begin{equation}\label{eqstat}
 \left( \Gamma^1 D_x - \left( l +\half \right)  a(x) \Gamma^2 + b(x) \Gamma^0 + c(x) \right) \psi(x,\lambda,l ) = \lambda \psi(x,\lambda,l),
\end{equation}
and which are restrictions of the full stationary equation to a well chosen basis of spin-weighted spherical harmonics (indexed here by $l = \half$, $\frac{3}{2},...$) invariant for the full equation. Here $\lambda$ is the energy of the 
considered waves and $n:= l + \half \in \mathbb{N}^{\star}$ is called the angular momentum.\\

Concerning the potentials, we know from \cite{DN2} that 
$a(x),\,b(x),\,a'(x),\,b'(x) = O(e^{\kappa_{\pm}x})$, $c(x) = O(1)$ and $c'(x) = O(e^{2\kappa_{\pm}x})$ as $x \to \pm \infty$ (see Lemma \ref{aspot} for precise asymptotics).
We make some remarks about these asymptotics. On one hand, although the Dirac fields are massive, they propagate asymptotically as in the massless case since $b(x) \to 0$ as $x \to \pm \infty$. This is due to the effects of the intense gravitation near the event and cosmological horizons of the black hole. On the other hand, it remains in the asymptotics an influence of the interaction between the electric charges $q$ and $Q$ since the potential $c(x)$ satisfies $c(x) \to c_{\pm}$ when $x\to \pm \infty$. Actually we shall see later that we can come down to the usual case of a one dimensional Dirac equation with a $L^1$ potential using a unitary transformation.

As we have already said in the introduction, the existence and the asymptotic completeness of the global 
wave operators associated to (\ref{eqstat}) defined for all angular momentum $n \in \N^*$ by
\[W_n^{\pm} = W^{\pm}_{n,(-\infty)} \oplus W^{\pm}_{n,(+\infty)},\]
proved in \cite{Dau,Mel} allows to define a scattering operator $S(n)$ on each spin-weighted spherical harmonics by the usual formula
\[S(n) = (W_n^+)^{\star} W_n^-.\] 
We refer to Section \ref{Dirpb} for the details of the definition and construction of these operators. Moreover this operator can be decomposed as
\[S(n) = \begin{pmatrix}
                   T_L(n) & R(n) \\ L(n) & T_R(n)
                  \end{pmatrix},\]
where the first two terms $T_R(n)$ and $T_L(n)$ are understood as transmission operators whereas the last two terms $L(n)$ and $R(n)$ are understood as reflection operators 
(see Section \ref{Dirpb} for more details).

In Section \ref{first}, using a conjugation by a unitary Fourier transform, we obtain a stationary representation of the scattering operator $S(n)$ as a direct integral of scattering matrices $S(\lambda,n)$. Since we separated the full Dirac equation into a countable family of 
one-dimensional stationary Dirac equations which are the restrictions of the full stationary equation to a well chosen basis of spin-weighted spherical harmonics, we can define the global scattering matrix by $S(\lambda) = \oplus_{n \in \mathbb{N}^{\star}} S(\lambda,n)$, 
where
\[S(\lambda,n) = \begin{pmatrix}
                   T_L(\lambda,n) & R(\lambda,n) \\ L(\lambda,n) & T_R(\lambda,n)
                  \end{pmatrix}.\]
Once again, we refer to Section \ref{Dirpb} for the details of this decomposition. The matrices $S(\lambda,n)$ are called the partial scattering matrices. What is important to keep in mind is that the knowledge of the scattering matrix $S(\lambda)$ is equivalent to the knowledge of the partial scattering matrices $S(\lambda,n)$ for all $n \in \mathbb{N}^{\star}$.

Roughly speaking the main result of this paper states that the partial knowledge of the scattering operator $S(\lambda)$ at a fixed energy $\lambda \in \mathbb{R}$ uniquely determines the mass $M$ and the charge $Q$ of the black hole as well as the cosmological constant $\Lambda$ of the universe. More precisely, it suffices to know one of the reflection operators of
the partial scattering operators $L(\lambda,n)$, $R(\lambda,n)$ at a fixed energy $\lambda$ on a subset $\mathcal{L} \subset \mathbb{N}^{\star}$ that satisfies the M\"{u}ntz condition $\sum_{n \in \mathcal{L}} \frac{1}{n} = \infty$ in order 
to prove the uniqueness of the parameters $M$, $Q$ and $\Lambda$.

Precisely, the main result of this paper is the following:

\begin{theorem}\label{mainthm}
 Let $(M,Q,\Lambda)$ and $(\tilde{M},\tilde{Q},\tilde{\Lambda})$ be the parameters of two dS-RN black holes. We denote by $S(\lambda,n)$ and $\tilde{S}(\lambda,n)$ the corresponding partial wave scattering operators at a fixed energy $\lambda \in \mathbb{R}$. Consider a subset 
 $\mathcal{L}$ of $\mathbb{N}^{\star}$ that satisfies the M\"{u}ntz condition $\sum_{n \in \mathcal{L}} \frac{1}{n} = \infty$ and assume that at a fixed energy $\lambda \in \mathbb{R}$
 one of the following assertions holds,
\[
 (i) \quad L(\lambda,n) = \tilde{L}(\lambda,n), \quad \forall n \in \mathcal{L},
\]
\[
  (ii) \quad R(\lambda,n) = \tilde{R}(\lambda,n), \quad \forall n \in \mathcal{L}.
\]
 Then, there exists a diffeomorphism $\psi : \mathbb{R} \to \mathbb{R}$ such that
 \[ \frac{c(\psi(x)) - \lambda}{a(\psi(x))} = \frac{\tilde{c}(x)-\lambda}{\tilde{a}(x)} \quad \and \quad \frac{b(\psi(x))}{a(\psi(x))} = \frac{\tilde{b}(x)}{\tilde{a}(x)}.\]
 As a consequence we get
 \[ M = \tilde{M}, \quad Q = \tilde{Q}, \quad \Lambda = \tilde{\Lambda}.\]
\end{theorem}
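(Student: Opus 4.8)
Our proof follows the complex-angular-momentum strategy: reduce to inverse scattering for a family of one-dimensional Dirac systems indexed by the angular momentum $n$, complexify $n$ into $z\in\mathbb{C}$, show the relevant scattering coefficients extend holomorphically in $z$ with enough control, use the M\"untz condition to pass from $n\in\mathcal{L}$ to all complex $z$, and finish with a one-dimensional inverse-scattering (Borg--Marchenko type) argument in $z$. To begin, separation of variables reduces everything to the family \eqref{eqstat}. After the unitary gauge transformation mentioned above (which removes the limits $c_\pm$ and puts the system in standard $L^1$ form) and the \emph{$n$-independent} change of radial variable $X=\int^x a(s)\,ds$ (well defined since $a\in L^1(\mathbb{R})$, with $X$ ranging over a bounded interval $(X_-,X_+)$), equation \eqref{eqstat} takes the form
\begin{equation*}
 \Gamma^1 D_X\psi - z\,\Gamma^2\psi + \frac{b}{a}(X)\,\Gamma^0\psi - \frac{\lambda-c}{a}(X)\,\psi = 0 ,
\end{equation*}
where the reduced potentials $\tfrac{b}{a}$ and $\tfrac{\lambda-c}{a}$, expressed in the variable $X$, have an inverse-linear (Coulomb-type) singularity at $X_\pm$ governed by $\kappa_\pm$ and $\lambda-c_\pm$ --- the asymptotically hyperbolic behaviour at the horizons. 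The structural point is that in this variable $z=n$ plays exactly the role of a spectral parameter for a one-dimensional Dirac operator on $(X_-,X_+)$ whose only data are these two reduced potentials and the endpoints, and the partial reflection coefficients $L(\lambda,n)$, $R(\lambda,n)$ are the associated left and right reflection coefficients.

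Next, defining the Jost solutions of the displayed system through Volterra integral equations with kernels entire in $z$, one shows that the transmission and reflection coefficients continue to holomorphic, respectively meromorphic, functions of $z$ on a half-plane $\{\mathrm{Re}\,z>c_0\}$, with estimates guaranteeing that $z\mapsto L(\lambda,z)-\tilde L(\lambda,z)$ is bounded there (and likewise for $R$). Under hypothesis $(i)$ this bounded holomorphic function vanishes on $\mathcal{L}$; since $\sum_{n\in\mathcal{L}}\tfrac1n=+\infty$ the Blaschke condition fails, so $L(\lambda,z)=\tilde L(\lambda,z)$ for all $z$, and symmetrically under $(ii)$ (the two endpoints being exchanged by $X\mapsto X_-+X_+-X$). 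Knowing one reflection coefficient on all of $\mathbb{C}$, the modulus of the transmission coefficient on $\mathbb{R}$ is fixed by unitarity of $S(\lambda,z)$, and recovering the transmission coefficient from its modulus on $\mathbb{R}$ and its zeros yields the full left (resp. right) scattering data of the reduced system, for all $z$.

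One then invokes the inverse-scattering uniqueness (Borg--Marchenko type) for the one-dimensional Dirac system on a finite interval with two singular endpoints: its scattering data, viewed as a function of the spectral parameter $z$, determines the two reduced potentials and the length $X_+-X_-$. Hence $\int_{\mathbb{R}}a=\int_{\mathbb{R}}\tilde a$, and, identifying $X$ with $\tilde X$, the reduced potentials coincide; equivalently, with $\psi:=X^{-1}\circ\tilde X:\mathbb{R}\to\mathbb{R}$ a diffeomorphism, $\tfrac{c(\psi(x))-\lambda}{a(\psi(x))}=\tfrac{\tilde c(x)-\lambda}{\tilde a(x)}$ and $\tfrac{b(\psi(x))}{a(\psi(x))}=\tfrac{\tilde b(x)}{\tilde a(x)}$. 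Here the mass term is crucial: by \eqref{defpot}, $\tfrac{b}{a}=m\,r$, so the second identity gives $r(\psi(x))=\tilde r(x)$ (using $m\neq0$); in particular the ranges agree, $r_\pm=\tilde r_\pm$, and $\psi=r^{-1}\circ\tilde r$. Since moreover $\tfrac{c-\lambda}{a}=\tfrac{qQ-\lambda r}{\sqrt{F(r)}}$, substituting $\rho:=r(\psi(x))=\tilde r(x)$ in the first identity yields $\dfrac{(qQ-\lambda\rho)^2}{F(\rho)}=\dfrac{(q\tilde Q-\lambda\rho)^2}{\tilde F(\rho)}$ for all $\rho\in(r_-,r_+)$; comparing the zeros and poles of these rational functions, using that $F$ in \eqref{F} has the simple roots $r_n<0<r_c<r_-<r_+$, forces $Q=\tilde Q$ and then $F\equiv\tilde F$, whence reading off the coefficients of $1/r$ and $r^2$ gives $M=\tilde M$ and $\Lambda=\tilde\Lambda$.

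The main obstacle is the analytic-continuation step, together with its interaction with the singular change of variable: one must control, uniformly for $z$ in a half-plane, the Jost solutions of a Dirac system in which the large parameter $z$ multiplies $\Gamma^2$, near the two singular endpoints where the normalization involves the exponential rates $\kappa_\pm$ and the limits $c_\pm$. This demands careful estimates on the relevant Volterra equations and on the Wronskians defining the scattering coefficients, and is precisely where the sharp asymptotics of $a,b,c$ (Lemma \ref{aspot}) are used. The inverse step --- proving the uniqueness statement for this class of singular Dirac operators and tracking the gauge/reparametrization freedom so that exactly the diffeomorphism $\psi$ survives --- is the other delicate point.
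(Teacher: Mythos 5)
Your overall architecture (separation of variables, gauge transformation removing $c_\pm$, the Liouville variable $X$, complexification of $n$, a M\"untz/Blaschke-type uniqueness step, and finally reading off $M,Q,\Lambda$ from the reduced potentials) matches the paper. However, there are two genuine gaps. First, the analytic-continuation step is not as simple as "$z\mapsto L(\lambda,z)-\tilde L(\lambda,z)$ is bounded and holomorphic on a half-plane": the reflection coefficients are \emph{ratios} $\hat L=\hat A_{L3}\hat A_{L1}^{-1}$ of entire matrix-valued functions which grow like $e^{A\,\mathrm{Re}(z)}$, so they are a priori only meromorphic, and boundedness of the difference is neither obvious nor used in the paper. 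The paper instead clears denominators (working with $\tilde{\hat A}_{R4}\hat A_{L3}+\tilde{\hat A}_{R3}\hat A_{L1}$), proves the entries $\hat A_{Li,j}$ are of exponential type $e^{A\,\mathrm{Re}(z)}$ and hence lie in the Nevanlinna class $N(\Pi^+)$, and applies Ramm's uniqueness theorem there; establishing the required exponential-type bounds is itself nontrivial (Phragm\'en--Lindel\"of after proving boundedness on $i\mathbb{R}$ via a generalized Duhamel iteration, since the mass term destroys the symmetries available in the massless case).

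Second, and more seriously, your concluding step invokes a "Borg--Marchenko type" uniqueness theorem for a one-dimensional Dirac system on a finite interval with two singular (quadratically singular, after passing to second order form) endpoints, with $z$ as spectral parameter, as a black box. No such off-the-shelf result is available in this setting; proving it is precisely the content of Sections 4--6 of the paper. The paper's route is constructive: after showing $A=\tilde A$ from the asymptotics of $\hat A_{L1}$, it introduces the Freiling--Yurko matrix $P(X,\lambda,z)=F_R(h(X),\lambda,z)\,\tilde F_R(\tilde h(X),\lambda,z)^{-1}$, rewrites its blocks so that the difference $\hat L-\tilde{\hat L}$ (which vanishes by the CAM step) is the only obstruction to boundedness, applies Phragm\'en--Lindel\"of and Liouville to conclude $P$ is constant, and then pins down $P=\pm I_4$ from the large-$z$ Bessel asymptotics of the Jost functions (this last point is delicate here because, unlike the massless case, one cannot evaluate at $z=0$). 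Note also that your intermediate step of recovering the transmission coefficient from its modulus and its zeros is both unjustified (the zero set is not part of the data) and unnecessary: the paper's argument uses only the single reflection coefficient. Your final algebraic identification of the parameters from $F(\rho)(q\tilde Q-\lambda\rho)^2=\tilde F(\rho)(qQ-\lambda\rho)^2$ is essentially correct (the paper compares coefficients of $r^6,r^5,r^3$ rather than roots, and treats $\lambda=0$, $q\neq0$ separately), but the preceding inverse step must be supplied, not cited.
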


\begin{remark}
\begin{enumerate}
 \item The question of determining the parameters $M$, $Q$ and $\Lambda$ from the transmission coefficient at a fixed energy is still an open problem.
 \item Actually, it is enough to prove our uniqueness result to assume that the reflection coefficients are known up to an error $O(e^{-2nB})$ for some $B \in ]0, \min(A,\tilde{A})[$. Indeed, we can use the 
 idea of \cite{Pap1} which proves a local inverse scattering result at a fixed energy in spherically symmetric asymptotically hyperbolic manifolds.
 \item Moreover, it is also sufficient in order to prove our uniqueness result to assume that the reflection coefficients are known up to a constant unitary matrix.
Indeed in the proof of the uniqueness result under the knowledge of the reflection coefficient $R$ (see Section \ref{preuveR}), we come down to this case and we conclude under a little technical modification.
 \item We emphasize that in the proof of this Theorem, we first obtain, without using the explicit form of the potentials, the equalities
 \[ \quad \frac{c-\lambda}{a}(h(X)) = \frac{\tilde{c}-\lambda}{\tilde{a}}(\tilde{h}(X)), \quad \frac{b}{a}(h(X)) = \frac{\tilde{b}}{\tilde{a}}(\tilde{h}(X))\]
(see the following Section for the definitions of the Liouville variable $X$ and of the diffeomorphisms $h$ and $\tilde{h}$). Then, thanks to the explicit definition of the potentials, 
we obtain the equality of the parameters. We also observe that, if we suppose the equality of the partial scattering matrices at {\em two} fixed energies (possibly for different M\"untz sets), we can conclude to the equality of each potentials $a,b,c$ separatly up to a certain diffeomorphism. 
 \item Contrary to \cite{DN}, the case of a zero energy is not an obstruction for our inverse problem if we suppose that the electric charge $q$ of the Dirac fields is non vanishing. 
 \item Contrary to the massless and uncharged case, we determine exactly the charge and not the square of the charge. This is due to the presence of the potential of electric type $c$. 
\end{enumerate}
\end{remark}


\subsection{Overview of the proof}\label{overview}
\noindent
There are three steps which constitute the proof of Theorem \ref{mainthm}. The aim of this Subsection is to describe them.

The first step of the proof consists in getting rid of the long-range potential $c$ in the expression of the one-dimensional stationary Dirac operators (\ref{opeDirac}) 
in order to obtain a new Dirac operator with short-range potential. This new operator satisfies the usual framework of inverse scattering theory for one dimensional Dirac 
operators studied in \cite{AKM}. To do this we conjugate the one-dimensional selfadjoint operator 
\[ H =   \Gamma^1 D_x - n a(x) \Gamma^2 + b(x) \Gamma^0 + c(x),\]
by a well chosen unitary operator $U$ defined by
$$
  U = e^{-i \Gamma^1 C^{-}(x)},\quad C^-(x) = \int_{-\infty}^{x} (c(s)-c_-) \, \mathrm ds + c_- x, 
$$
so that,
\begin{equation*}
 A = U^{\star} H U,
\end{equation*}
can then be written as,
\[A = \Gamma^1 D_x + W(x),\]
where
\[W(x) = e^{i\Gamma^1 C^-(x)}(n \,a(x) \Gamma^2 + b(x) \Gamma^0) e^{-i\Gamma^1 C^-(x)}.\]
Thanks to the asymptotics of the potentials, we see that $W$ is exponentially decreasing near the horizons and thus belong to $L^1(\R)$. 
Thus, this new operator $A$ lies into the framework of the paper 
\cite{AKM} and we can define in a straightforward way the scattering data. Following \cite{AKM}, the scattering matrix $\hat{S}(\lambda,n)$ is defined in terms of stationary solutions 
with prescribed asymptotics at infinity, called Jost functions. These are $4 \times 4$ matrix-valued functions $\hat{F}_L$ and $\hat{F}_R$ solutions of 
\begin{equation}\label{eqA}
 A \psi(x,\lambda,n) = \lambda \psi(x,\lambda,n),
\end{equation}
having the asymptotics
\begin{equation}\label{asJG}
 \hat{F}_L(x,\lambda,n) = e^{i \Gamma^1 \lambda x}(I_4 + o(1)), \quad x \to +\infty,
\end{equation}
\begin{equation}\label{asJD}
\hat{F}_R(x,\lambda,n) = e^{i \Gamma^1 \lambda x}(I_4 + o(1)), \quad x \to -\infty,
\end{equation}
where $I_4$ is the identity matrix.
Both Jost solutions form fundamental matrices of (\ref{eqA}). In consequence, there exists a $4 \times 4$ matrix $\hat{A}_{L}(\lambda,n)$ depending only on the energy and the angular momentum $n$ such that the Jost functions are connected by 
\begin{equation}\label{defAL}
 \hat{F}_L(x,\lambda,n) = \hat{F}_R(x,\lambda,n) \hat{A}_{L}(\lambda,n).
\end{equation}
Similarly, we define the $4 \times 4$ matrix $\hat{A}_{R}(\lambda,n)$ by 
\begin{equation}\label{defAR}
\hat{F}_L(x,\lambda,n) \hat{A}_{R}(\lambda,n) = \hat{F}_R(x,\lambda,n).
\end{equation}
The coefficients of the matrices $\hat{A}_L$ and $\hat{A}_R$ contain all the scattering information of the equation (\ref{eqA}). In particular, using the notations
\begin{equation}\label{notALAR}
 \hat{A}_L(\lambda,n) = \begin{pmatrix}
                   \hat{A}_{L1}(\lambda,n) & \hat{A}_{L2}(\lambda,n) \\ \hat{A}_{L3}(\lambda,n) & \hat{A}_{L4}(\lambda,n)
                  \end{pmatrix},\quad \hat{A}_R(\lambda,n) = \begin{pmatrix}
                   \hat{A}_{R1}(\lambda,n) & \hat{A}_{R2}(\lambda,n) \\ \hat{A}_{R3}(\lambda,n) & \hat{A}_{R4}(\lambda,n)
                  \end{pmatrix},
\end{equation}
where $\hat{A}_{Lj}$ and $\hat{A}_{Rj}$ are $2 \times 2$ matrices, the partial wave scattering matrix $\hat{S}(\lambda,n)$ is then defined by 
\begin{equation}\label{defS}
 \hat{S}(\lambda,n) = \begin{pmatrix}
                   \hat{T}_L(\lambda,n) & \hat{R}(\lambda,n) \\ \hat{L}(\lambda,n) & \hat{T}_R(\lambda,n)
                  \end{pmatrix},
\end{equation}
where (see \cite{AKM}, eqs. (3.6)-(3.7))
\begin{equation}\label{defT}
 \hat{T}_L(\lambda,n) = \hat{A}_{L1}(\lambda,n)^{-1}, \quad \quad \hat{T}_{R}(\lambda,n) = \hat{A}_{R4}(\lambda,n)^{-1},
\end{equation}
\begin{equation}\label{defL}
 \hat{L}(\lambda,n) = \hat{A}_{L3}(\lambda,n) \hat{A}_{L1}(\lambda,n)^{-1} = -\hat{A}_{R4}(\lambda,n)^{-1} \hat{A}_{R3}(\lambda,n)
\end{equation}
and
\begin{equation}\label{defR}
 \hat{R}(\lambda,n) = - \hat{A}_{L1}(\lambda,n)^{-1} \hat{A}_{L2}(\lambda,n) = \hat{A}_{R2}(\lambda,n) \hat{A}_{R4}(\lambda,n)^{-1}.
\end{equation}

\begin{remark}
We emphasize that the Dirac fields studied in this paper are \emph{massive and electrically charged}. This entails several technical complications with respect to the case of massless Dirac fields studied in \cite{DN}. 
First, the scattering data $A_{Lj}(\lambda,n)$ in the massless and uncharged case turn out to be complex numbers and not $2 \times 2$ matrices. Similarly, the Jost functions are $2 \times 2$ matrix-valued functions and not $4 \times 4$ matrix-valued functions. This is due to the fact that the additional mass term prevents us from using $2$-components spinors in the description of the Dirac fields. However, if we take $m=q=0$ in our study, we can decouple the $4$ components of our Dirac spinors into two $2$-components Dirac spinors that satisfy exactly the massless Dirac Equation (1.7) of \cite{DN}. Hence we shall recover the results of \cite{DN} as a particular case of our result.  
\end{remark}
\noindent
At the end of Section 3, we obtain systems of second order differential equations (in the variable $x$) satisfied by the components of the Jost matrices related to the unconjugate operator $H$. These systems of ODEs will be important in the next analysis in order to obtain some refined estimates on the components of the Jost functions. To obtain these systems of ODEs, we use the equations satisfied by the components 
of the Jost matrices for the conjugate operator and the link between these components and the components of the Jost matrices for the operator $H$.


The second step of the proof is the most important one. Following \cite{DN}, the main idea of this paper is to complexify the angular momentum $n = l + \half$ and study the 
analytic properties of the ``unphysical'' corresponding scattering data with respect to the variable $z = n \in \mathbb{C}$. The general idea to consider complex angular 
momentum originates in a paper by Regge \cite{Reg} as a tool in the 
analysis of the scattering matrix of Schr\"{o}dinger operators in $\mathbb{R}^3$ with spherically symmetric potentials (see also \cite{New, Cha} for a detailed account of 
this approach). To understand the analytic properties in the complex plane of the scattering data, we need good asymptotics of the Jost functions when the complex angular 
momentum $z$ becomes large.

\begin{remark}
 Actually we don't need the asymptotics for all large $z$ in the complex plane. Indeed it is sufficient in the proof of Theorem \ref{mainthm} (see Section 6) to have good asymptotics only on the real axis. Precisely, we need to know that the components of the Jost functions are bounded on $i \mathbb{R}$ and to show that these components are of exponential type. However it's not so much harder to obtain the asymptotics 
 on $\mathbb{C}$. We hope these asymptotics could be useful in a future work.
\end{remark}


\begin{remark}
 The presence of the mass term has a fondamental consequence in the research of the asymptotics of the components of the Jost matrices. Precisely, it is shown in the 
massless case (see \cite{DN}) that the components of the Jost matrices have power series expansions in the variable $z$ and satisfy uncoupled ODEs. From these ODEs, it 
is straightforward to see that the components of the Jost matrices are perturbations of the modified Bessel functions from which we can deduce their asymptotics easily. 
In the massive charged case, the components of the Jost matrices satisfy \emph{systems} of ODEs that makes the analysis of \cite{DN} much more involved. Nevertheless, 
using a perturbative argument and good estimates on the Green kernels, we are able to show that they are still perturbations of modified Bessel functions from which we 
can obtain their asymptotics.\\
The presence of the mass has also consequences on the symmetries of the scattering data. Indeed, as it is shown in \cite{DN} (see Lemmas 3.1 and 3.3), the components of the Faddeev matrices (see Section 3 for the definition) are odd or even in $z$ and there is a relation of conjugation between their first and fourth components, and their second and third components. The same properties are also true for the coefficients of the matrix of scattering data. These symmetries are no longer true in the case of massive Dirac fields. As a consequence, we shall often have to use different strategies to prove the corresponding results of \cite{DN}. 
However, since for large $z$, the most important term of the potential is $za(x)$, the mass should have no influence in the regime $z$ large. We expect then to find the symmetries given in \cite{DN} in the asymptotics of the scattering data as $|z| \to \infty$. 
\end{remark}
\noindent
As mentionned in the previous remark, the first step to obtain the asymptotics consists in showing that the components of the Jost functions are perturbations of modified Bessel 
functions. To do this, it is convenient to introduce a new radial variable $X$ by the Liouville transformation
\begin{equation}\label{defX}
  X = g(x) = \int_{-\infty}^{x} a(s) \, \mathrm ds.
\end{equation}
Note that $g$ is a diffeomorphism from $\mathbb{R}$ to $]0,A[$ where $A = \int_{\mathbb{R}} a(s) \, \mathrm ds$. Let us denote by 
$h(X)$ the inverse transformation. The reason why we introduce such a variable lies in the observation that, thanks to the coupled differential equations obtained in Section 
\ref{ED} for the Jost functions, we can show that the components $F_{Li,j}(h(X),\lambda,z)$ and $F_{Ri,j}(h(X),\lambda,z)$ of the Jost matrices satisfy then systems of second order differential equations of the form
\begin{equation}\label{eqsturm}
 f''(X) + q(X) f(X) = z^2 f(X) + r(X), \quad \quad X \in ]0,A[.
\end{equation}
Here, the potential $q(X)$ have quadratic singularities at the boundaries $0$ and $A$ and the term $r(X)$ is the coupling term between the components of the Jost functions and can be shown to be bounded 
at the boundaries. We note that the angular momentum (or coupling constant) $z = l+\half$ has now become the spectral parameter of the uncoupled part of this new equation.
We remark that this equation is an approximation of a modified Bessel equation. Using perturbation theory, we can then prove that the components of the Jost functions are perturbations of well chosen modified Bessel functions. Finally, using the well known (see \cite{Leb}) asymptotics of the modified Bessel functions, we are able to prove estimates on the components of Jost functions of the type $|F_{L/R}(X,\lambda, z)| \leq C e^{\vert \mathrm{Re}(z) \vert X}$. 

\begin{remark}
 This last step is a consequence of a fundamental difference with the massless case. Indeed, thanks to the symmetries, it is not so hard in \cite{DN} to prove that the 
 components of Jost functions are less than $e^{\vert \mathrm{Re}(z) \vert X}$ (see Lemma $3.4$). However, in our study there is no symmetry (due to the mass term) and it 
 is harder to obtain these estimates. To prove them, we use the Phragm\'en-Lindel\"{o}f's Theorem. Indeed, it is quite easy (see Section 3) to prove 
 that the components of Jost functions are less than 
 $e^{\vert z \vert X}$ and we can prove, using a generalized Duhamel's formula and precise estimates of modified Bessel functions (see Section \ref{partDuh}) that the 
 components of the Jost functions are bounded on $i\mathbb{R}$. 
\end{remark}

\noindent
As a consequence of the asymptotics, we show that these coefficients are in the class of analytic functions called Nevanlinna's class.
Let us define the Nevanlinna class $N(\Pi^+)$ as the set of all analytic 
functions $f(z)$ on the right half plane $\Pi^+  = \{ z \in \mathbb{C}: \, \mathrm{Re}(z) > 0 \}$ that satisfy
\[\sup_{0 < r <1} \int_{-\pi}^{\pi} \ln^+ \left| f \left( \frac{1-re^{i \varphi}}{1+re^{i \varphi}} \right) \right| \, \mathrm d\varphi < \infty,\]
where $\ln^{+}(x) = \ln(x)$ if $\ln(x) \geq 0$ and $\ln^{+}(x) = 0$ if $\ln(x) < 0$. The reason why we use this kind of functions is that functions in this class
are uniquely determined by their values on any subset $\mathcal{L} \subset \mathbb{N}^{\star}$ that 
satisfies the M\"{u}ntz condition $\sum_{n \in \mathcal{L}} \frac{1}{n} = \infty$ (see for instance \cite{Ram} and \cite{Rudin}). 
Thus, with a little more work, we are able to conclude from this that the equality between the reflection 
coefficients not only holds for the integers $n \in \N^*$, but for almost all $z$ in the complex plane (except the poles of the reflection coefficients). We enlarge in 
this way considerably the information at our disposal to determine the metric of the black hole. 

The last step to conclude the proof of the main Theorem \ref{mainthm} is an idea borrowed from \cite{FY}. 
Consider two black holes with parameters $M$, $Q$, $\Lambda$ and $\tilde{M}$, $\tilde{Q}$, $\tilde{\Lambda}$, respectively. 
We add a ``$\sim$'' to all quantities corresponding to the black hole with parameters $\tilde{M}$, $\tilde{Q}$ and $\tilde{\Lambda}$. Using the asymptotics of the components of the matrix $\hat{A}_L(\lambda,z)$, we first prove that
\[ A = \int_{\mathbb{R}} a(s) \, \mathrm ds = \int_{\mathbb{R}} \tilde{a}(s) \, \mathrm ds =  \tilde{A}.\]
We thus define a matrix-valued function $P(X,\lambda,z)$ which makes the link between the Jost function for the first black hole and the Jost function for the second one:
\[P(X,\lambda,z)  \tilde{F}_R(\tilde{h}(X),\lambda,z) = F_R(h(X),\lambda,z).\]
 Using the previously obtained asymptotics and the equality between the reflection coefficients on $\C$, we show that 
this matrix is (up to a sign) the identity matrix. 

\begin{remark}
In this step appears a last consequence of the presence of the mass term in our problem. In the massless case, we can perform explicit calculations for the scattering data 
in the case $z=0$ because the one-dimensional Dirac operator is simply $\Gamma^1 D_x$. These explicit expressions for the scattering data were used to prove that $P$ is 
(up to a sign) the identity once we have shown that $P$ is constant with respect to the $z$ variable. Because of the mass term, we cannot do explicit calculations in the 
massive charged case and we shall thus need a different strategy to obtain it. 
\end{remark}
\noindent
The Jost functions of the two black holes being so tightly linked, we conclude easily from this that 
\[ \forall X \in ]0,A[, \quad \frac{c-\lambda}{a}(h(X)) = \frac{\tilde{c}-\lambda}{\tilde{a}}(\tilde{h}(X)), \quad \frac{b}{a}(h(X)) = \frac{\tilde{b}}{\tilde{a}}(\tilde{h}(X)).\]
Finally, thanks to these equalities and the precise form of the potentials $a$, $b$ and $c$, we conclude
\[ M = \tilde{M}, \quad Q = \tilde{Q}, \quad \Lambda = \tilde{\Lambda}.\]

This paper is organized as follows. In Section 2 we recall the direct scattering results of \cite{Dau, Mel} useful for the later analysis. In Section 3 we first show how our model can be transformed in order to fit the 
framework of \cite{AKM} and we find systems of second order diffential equations satisfied by the components of the Jost functions in the variable $x$. 
In Section 4 we introduce the Liouville variable $X$ to obtain perturbed modified Bessel equations in this variable and we use estimates of the Green kernel of modified 
Bessel functions and some estimates of the components of the Jost functions to calculate precise asymptotics of the components of the Jost functions 
and of the matrix $\hat{A}_L$ for large $z$ in the complex plane. In Section 5 we use the analytic properties 
of the Jost functions and of the matrix of scattering data and a uniqueness result for functions in the Nevanlinna class to extend the range of validity of our hypothesis to almost the whole complex plane. Finally, in Section 6, we prove our main Theorem \ref{mainthm}. 

\section{Direct scattering problem}\label{Dirpb}
\noindent
In this Section, we first recall the expression of the Dirac equation in dS-RN black holes as well as the direct scattering theory obtained in \cite{Dau, Mel}.

As explained in Section \ref{dS-RN}, we describe the exterior region of a dS-RN black hole using the Regge-Wheeler variable $x$ defined by (\ref{defx}). We thus work on the manifold $\mathcal{B} = \mathbb{R}_t \times \Sigma$ with 
$\Sigma = \mathbb{R}_x \times S_{\theta,\varphi}^2$, equipped with the metric
\[g = F(r) (dt^2 - dx^2) - r^2 d \omega^2,\]
where $F$ is given by (\ref{F}) and $d\omega^2 = d\theta^2 + \sin^2(\theta) d\varphi^2$ the Euclidean metric on $\S^2$.

\subsection{Dirac equation and direct scattering results}
\noindent
Scattering theory for massive charged Dirac fields on the spacetime $\mathcal{B}$ has been the object of the papers \cite{Dau, Mel}. We briefly recall here the main results of these papers. In particular, we use the form of the Dirac equation obtained therein. 

We first write the evolution equation satisfied by massive charged Dirac fields in $\mathcal{B}$ under the Hamiltonian form
\begin{equation}\label{eqH}
 i \partial_t \psi = H \psi,
\end{equation}
where $\psi$ is a 4-components spinor belonging to the Hilbert space
\[ \mathcal{H} = L^2(\mathbb{R} \times \S^2 ; \mathbb{C}^4),\]
and the Hamiltonian $H$ is given by
\begin{equation}\label{ham}
 H = \Gamma^1 D_x + a(x) D_{\S^2} + b(x) \Gamma^0 + c(x).
\end{equation}
The symbols $D_x$ stands for $-i\partial_x$ whereas $D_{\S^2}$ denotes the Dirac operators on $\S^2$ which, in spherical coordinates, takes the form 
given in (\ref{opeDirac}).
The potentials $a$, $b$ and $c$ are scalar smooth functions given in terms of the metric (\ref{metric}) by (\ref{defpot}).
Finally, the matrices $\Gamma^1$, $\Gamma^2$, $\Gamma^3$ and $\Gamma^0$ appearing in
(\ref{ham}) and (\ref{opeDirac}) are usual $4 \times 4$ Dirac matrices that satisfy the anticommutation relations
\begin{equation}\label{anticom}
 \Gamma^i \Gamma^j + \Gamma^j \Gamma^i = 2 \delta_{ij} Id, \quad \quad \forall i,j = 0,1,2,3.
\end{equation}

We now use the spherical symmetry of the equation to simplify the expression of the Hamiltonian $H$. We can diagonalize the Dirac operator on $\mathbb{S}^2$ into an infinite sum of 
matrix-valued multiplication operators by decomposing it onto a basis of spin-weighted spherical harmonics (see \cite{Gel}). 
Precisely there is a family of spinors $F_k^l$ with the indices $(l,k)$ running in the set $\mathcal{I} = \{ (l,k),\, l- \half \in \mathbb{N}, \, l - |k| \in \mathbb{N} \}$ which forms a Hilbert basis of $L^2(\S^2 ; \mathbb{C}^4)$ with the following 
property. The Hilbert space $\mathcal{H}$ can then be decomposed into the infinite direct sum
\[ \mathcal{H} = \underset{(l,k) \in \mathcal{I}}{\oplus} (L^2(\mathbb{R}_x ; \mathbb{C}^4) \otimes F_k^l) = \underset{(l,k) \in \mathcal{I}}{\oplus} \mathcal{H}_{lk},\]
where $\mathcal{H}_{lk} = L^2(\mathbb{R}_x ; \mathbb{C}^4) \otimes F_k^l$ is identified with $L^2(\mathbb{R};\mathbb{C}^4)$ and more important, we obtain the orthogonal decomposition for the Hamiltonian 
$H$
\[ H = \underset{(l,k) \in \mathcal{I}}{\oplus} H^{lk},\]
with
\begin{equation}\label{hampart}
 H^{lk}:= H_{\vert \mathcal{H}_{lk}} = \Gamma^1 D_x + a_l(x) \Gamma^2 + b(x) \Gamma^0 + c(x),
\end{equation}
and $a_l(x) = -\left( l + \frac{1}{2} \right) a(x)$. Note that the Dirac operator $D_{\S^2}$ has been replaced in the expression of $H^{lk}$ by $-\left(l + \frac{1}{2} \right)\Gamma^2$ thanks to the good properties of the 
spin-weighted spherical harmonics $F_k^l$. The operator $H^{lk}$ is a selfadjoint operator on $\mathcal{H}_{lk}$ with domain $D(H^{lk}) = H^1(\mathbb{R};\mathbb{C}^4)$. Finally we use the following representation for the Dirac matrices 
$\Gamma^1$, $\Gamma^2$ and $\Gamma^0$ appearing in (\ref{hampart})
\begin{equation}\label{matDir}
\Gamma^1 = \begin{pmatrix}
               1 & 0& 0& 0 \\ 0&1&0&0 \\ 0&0&-1&0 \\ 0&0&0&-1 
              \end{pmatrix}, \quad
              \Gamma^2 = \begin{pmatrix}
                          0&0&0&1 \\ 0&0&-1&0 \\ 0&-1&0&0 \\ 1&0&0&0
                         \end{pmatrix}, \quad
                         \Gamma^0 = \begin{pmatrix}
                                     0&0&-i&0 \\ 0&0&0&i \\ i&0&0&0 \\ 0&-i&0&0
                                    \end{pmatrix}.
\end{equation}

In what follows, for all $n \in \N^*$, we shall write $H_n$ instead of $H^{lk}$, where $n = l + \half$.

We now recall the direct scattering results obtained in \cite{Dau, Mel} for the Dirac Hamiltonians $H_n$, restrictions of $H$ on each spin weighted spherical harmonics. 
Using essentially a Mourre theory (see \cite{Mou}), it was shown in \cite{Dau, Mel} that
\[\sigma_{pp}(H_n) = \emptyset, \quad \quad \sigma_{\mathrm{sing}}(H_n) = \emptyset.\]
In other words, the spectrum of $H_n$ is purely absolutely continuous. Thus, on each spin weighted spherical harmonics, massive charged Dirac fields scatter toward the two asymptotic regions at late times and they are expected to obey 
simpler equations there. This is one of the main information encoded in the notion of wave operators that we introduce now.

First, we need to calculate the asymptotics of the potentials. According to (\ref{F}) and (\ref{defx}), the potentials $a$, $b$ and $c$ have the following asymptotics (see \cite{DN3} equations $(3.17)$ and $(3.18)$), as $x \to \pm \infty$.

\begin{lemma}\label{aspot}
There exists constants $a_{\pm}$, $b_{\pm}$, $c_{\pm}$ and $c'_{\pm}$ such that as $x \to \pm \infty$,
 \[a(x) = a_{\pm} e^{\kappa_{\pm} x} + O(e^{3 \kappa_{\pm} x}), \quad a'(x) =  a_{\pm} \kappa_{\pm} e^{\kappa_{\pm} x} + O(e^{3 \kappa_{\pm} x}),\]
 \[b(x) = b_{\pm} e^{\kappa_{\pm} x} + O(e^{3 \kappa_{\pm} x}), \quad b'(x) = b_{\pm} \kappa_{\pm} e^{\kappa_{\pm} x} + O(e^{3 \kappa_{\pm} x}),\]
 \[c(x) = c_{\pm} + c_{\pm}' e^{2\kappa_{\pm} x} + O(e^{4 \kappa_{\pm} x}), \quad c'(x) = 2c_{\pm}' \kappa_{\pm} e^{2\kappa_{\pm} x} + O(e^{4 \kappa_{\pm} x}),\]
where
\[ c_- = \frac{qQ}{r_-} \quad \quad \mathrm{and} \quad \quad c_+ = \frac{qQ}{r_+}.\]
\end{lemma}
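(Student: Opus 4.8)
The plan is to derive the asymptotics of $a$, $b$, $c$ and their first derivatives directly from their explicit expressions \eqref{defpot} together with the explicit formula \eqref{defx} for the Regge-Wheeler variable $x$ in terms of $r$. The key observation is that, since $F$ has four simple roots $r_n < 0 < r_c < r_- < r_+$, we may factor
\[
 F(r) = -\frac{\Lambda}{3 r^2}(r - r_n)(r - r_c)(r - r_-)(r - r_+),
\]
and near each horizon $r = r_\pm$ one has $F(r) = 2\kappa_\pm (r - r_\pm) + O((r - r_\pm)^2)$ with $\kappa_\pm = \tfrac12 F'(r_\pm)$ as in \eqref{kappa}. From \eqref{defx}, isolating the term that blows up at $r = r_\pm$, we get $x = \tfrac{1}{2\kappa_\pm}\ln(r - r_\pm) + G_\pm(r)$, where $G_\pm$ is smooth and bounded near $r_\pm$; hence $r - r_\pm = \beta_\pm\, e^{2\kappa_\pm x}(1 + O(e^{2\kappa_\pm x}))$ for a suitable constant $\beta_\pm > 0$ as $x \to \pm\infty$ (recall $\kappa_- > 0$, $\kappa_+ < 0$, so $e^{2\kappa_\pm x} \to 0$). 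The factor $\sqrt{F(r)}$ therefore behaves like $\sqrt{2\kappa_\pm(r - r_\pm)} = \sqrt{2\kappa_\pm \beta_\pm}\, e^{\kappa_\pm x}(1 + O(e^{2\kappa_\pm x}))$.

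First I would substitute this expansion into $a(x) = \sqrt{F(r)}/r$ and $b(x) = m\sqrt{F(r)}$: since $r \to r_\pm$ is itself of the form $r_\pm + O(e^{2\kappa_\pm x})$, dividing by $r$ (resp. multiplying by $m$) only contributes corrections of relative order $O(e^{2\kappa_\pm x})$, which yields
\[
 a(x) = a_\pm e^{\kappa_\pm x} + O(e^{3\kappa_\pm x}), \qquad b(x) = b_\pm e^{\kappa_\pm x} + O(e^{3\kappa_\pm x}),
\]
with $a_\pm = \sqrt{2\kappa_\pm\beta_\pm}/r_\pm$ and $b_\pm = m\sqrt{2\kappa_\pm\beta_\pm}$. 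For $c(x) = qQ/r$, expanding $1/r = 1/r_\pm \cdot (1 + (r - r_\pm)/r_\pm)^{-1}$ and using $r - r_\pm = \beta_\pm e^{2\kappa_\pm x} + O(e^{4\kappa_\pm x})$ gives $c(x) = c_\pm + c'_\pm e^{2\kappa_\pm x} + O(e^{4\kappa_\pm x})$ with $c_\pm = qQ/r_\pm$ and $c'_\pm = -qQ\beta_\pm/r_\pm^2$. The derivative estimates then follow either by differentiating these expansions term by term (justified because the error terms are themselves smooth functions of $e^{2\kappa_\pm x}$ times a convergent series), or more cleanly by using $\frac{d}{dx} = F(r)\frac{d}{dr}$: for instance $a'(x) = F(r)\frac{d}{dr}(\sqrt{F}/r)$, and since $F(r) = O(e^{2\kappa_\pm x})$ and $\frac{d}{dr}(\sqrt F/r) = \frac{F'}{2r\sqrt F} - \frac{\sqrt F}{r^2}$ has leading term $\frac{\kappa_\pm}{r_\pm \sqrt{2\kappa_\pm(r-r_\pm)}} = O(e^{-\kappa_\pm x})$, one obtains $a'(x) = a_\pm\kappa_\pm e^{\kappa_\pm x} + O(e^{3\kappa_\pm x})$; similarly for $b'$ and $c'$.

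This argument is essentially a bookkeeping computation, so there is no serious obstacle; the only mild subtlety I would be careful about is tracking that all correction terms really are $O(e^{2\kappa_\pm x})$ (and not merely $O(e^{\kappa_\pm x})$), which is what produces the $O(e^{3\kappa_\pm x})$ and $O(e^{4\kappa_\pm x})$ remainders in the statement rather than weaker ones. This comes from the fact that $G_\pm(r)$ in the splitting of \eqref{defx} is analytic in $r - r_\pm$ and the other three logarithms contribute only even-or-analytic corrections, so the inversion $r - r_\pm$ as a function of $e^{2\kappa_\pm x}$ has a convergent power series expansion in $e^{2\kappa_\pm x}$ with no fractional powers. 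Since the precise values of the constants $a_\pm, b_\pm, c'_\pm$ play no role in the sequel (only $c_\pm = qQ/r_\pm$ is used explicitly), I would simply quote the reference \cite{DN3} for the detailed constants and present the above as the structural reason the asymptotics hold.
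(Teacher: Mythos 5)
Your proposal is correct and follows essentially the same route as the paper: isolate the logarithmic term of the Regge--Wheeler coordinate \eqref{defx} at the relevant horizon, invert to get $|r-r_\pm| = \beta_\pm e^{2\kappa_\pm x}(1+O(e^{2\kappa_\pm x}))$, and substitute into the explicit formulas \eqref{defpot}; the paper carries this out only for $a$ and leaves the other cases to the reader, whereas you also treat $b$, $c$ and the derivatives via $\frac{d}{dx}=F(r)\frac{d}{dr}$. The only nitpick is a sign bookkeeping point at the cosmological horizon (where $r-r_+<0$ and $\kappa_+<0$, so one should write $|r-r_\pm|$ and $\sqrt{\mp 2\kappa_\pm}$ as the paper does), which does not affect the argument.
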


\begin{proof}
 For instance, we can give the proof for the potential $a$. We first give asymptotics for the RW variable $x$ when $r \to r_{\pm}$. From (\ref{defx}) we have,
\[ x = \frac{1}{2\kappa_{\pm}} \ln (|r-r_{\pm}|) + C_{\pm} + O(|r-r_{\pm}|), \quad r \to r_{\pm},\]
where
\[C_{\pm} = \ln \left( (r_{\pm} - r_n)^{\frac{1}{2\kappa_n}} (r_{\pm} - r_c)^{\frac{1}{2\kappa_c}} (r_+-r_-)^{\frac{1}{2\kappa_{\mp}}} \right).\]
Thus,
\begin{eqnarray*}
e^{\kappa_{\pm} (x-C_{\pm})} = \sqrt{r-r_{\pm}} ( 1 + O(|r-r_{\pm}|)), \quad r \to r_{\pm}.
\end{eqnarray*}
Finally,
\begin{eqnarray*}
 a(x) = \frac{\sqrt{F(r)}}{r} = \frac{\sqrt{\mp 2 \kappa_{\pm}}}{r_{\pm}} \sqrt{|r-r_{\pm}|} + O(|r-r_{\pm}|^{\frac{3}{2}}), \quad r \to r_{\pm}.
\end{eqnarray*}
\end{proof}

Hence the potentials $a$ and $b$ are short-range when $x \to \pm \infty$ and $c-c_-$ and $c-c_+$ are short-range when $x \to -\infty$ and $x \to +\infty$, respectively.
Thus, the comparison dynamic we choose at the event horizon is the one generated by the Hamiltonian $H_- = \Ga D_x + c_-$ while, at the cosmological horizon, we 
choose the asymptotic dynamic generated by the Hamiltonian $H_+ = \Ga D_x + c_+$. The Hamiltonians $H_-$ and $H_+$ are selfadjoint 
operators on $\mathcal{H}$ and their spectra are exactly the real line, i.e $\sigma(H_-) = \sigma(H_+) = \mathbb{R}$.
The asymptotic velocity operator associated to the asymptotic Hamiltonians $H_\pm$ is simply $\Ga$. Let us denote the projections onto the positive and negative spectrum of the asymptotic velocity operator $\Gamma^1$ by
\[P_{\pm} = \textbf{1}_{\mathbb{R}^{\pm}}(\Gamma^1).\]
As shown in \cite{Dau}, we can use these projections to separate the part of the Dirac fields that scatter toward the event and cosmological horizons.

We are now in position to introduce the wave operators. At the event horizon, we define
\begin{equation}\label{opeondemoins}
 W_{n,(-\infty)}^{\pm} = s - \underset{t \to \pm \infty}{\lim} e^{itH_n} e^{-itH_-} P_{\mp}
\end{equation}
and at the cosmological horizon, we define
\begin{equation}\label{opeondeplus}
 W_{n,(+\infty)}^{\pm} = s - \underset{t \to \pm \infty}{\lim} e^{itH_n} e^{-itH_+} P_{\pm}.
\end{equation}
Finally, the global wave operators are given by
\begin{equation}\label{ondeglob}
W_n^{\pm} = W_{n,(-\infty)}^{\pm} + W_{n,(+\infty)}^{\pm}.
\end{equation}
The main result of \cite{Dau, Mel} is

\begin{theorem}\label{thmMel}
 The wave operators $W_{n,(-\infty)}^{\pm}$, $W_{n,(+\infty)}^{\pm}$ and $W_n^{\pm}$ exist on $\mathcal{H}_{lk}$. Moreover, the global wave operators $W_n^{\pm}$ are isometries on $\mathcal{H}_{lk}$. 
In particular, $W_n^{\pm}$ are asymptotically complete, i.e. $\mathrm{Ran} W_n^{\pm} = \mathcal{H}_{lk}$.
\end{theorem}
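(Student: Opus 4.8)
The plan is to follow the standard two-part scheme of time-dependent scattering theory: prove existence of the four half-line wave operators by Cook's method combined with propagation estimates, and prove asymptotic completeness via a Mourre commutator analysis producing an asymptotic velocity operator whose spectrum consists of the two classical velocities $\pm 1$ attached to the event and cosmological horizons. For existence, consider for instance $W_{n,(-\infty)}^{+}$. Since $e^{-itH_-}=e^{-itc_-}e^{-it\Gamma^1 D_x}$ and $\Gamma^1$ is the diagonal $\pm1$-valued velocity matrix, on $\mathrm{Ran}\,P_-$ the propagator $e^{-itH_-}$ acts as translation of the spinor towards $x=-\infty$ as $t\to+\infty$. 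Choosing $\psi$ in the dense set of Schwartz spinors in $\mathrm{Ran}\,P_-$, the state $(e^{-itH_-}P_-\psi)(x)$ is concentrated near $x\approx -t$ with rapidly decaying tails, and Cook's criterion reduces the existence to the bound $\int_1^{\infty}\|(a_l\Gamma^2+b\Gamma^0+(c-c_-))\,e^{-itH_-}P_-\psi\|\,\mathrm dt<\infty$. By Lemma~\ref{aspot} the coefficients $a_l$, $b$ and $c-c_-$ are $O(e^{\kappa_- x})$ as $x\to-\infty$; splitting the $x$-integral at $x=-t/2$ produces a contribution $O(e^{-\kappa_- t/2})$ from the region where the potentials are small and a contribution $O(\langle t\rangle^{-N})$ from the rapidly decaying tail of $\psi$, both integrable in $t$. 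The remaining three cases are identical, the point in each being that the relevant state escapes precisely towards the horizon at which the difference $H_n-H_\mp$ decays exponentially. This gives existence of $W_{n,(-\infty)}^{\pm}$, $W_{n,(+\infty)}^{\pm}$, and hence of $W_n^{\pm}$, on $\mathcal{H}_{lk}$.

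\textbf{Spectral analysis of $H_n$.} The operators $H_\pm$ are unitarily equivalent, via Fourier transform and diagonalisation of $\Gamma^1$, to multiplication by $\pm\xi+c_\pm$ on $L^2(\mathbb{R};\mathbb{C}^4)$, so $\sigma(H_\pm)=\sigma_{ac}(H_\pm)=\mathbb R$. For $H_n=\Gamma^1 D_x+V_0$ with $V_0=a_l\Gamma^2+b\Gamma^0+c$ one applies the conjugate operator method with $A=\tfrac12(\Gamma^1 x+x\Gamma^1)$. Since $[\Gamma^1 D_x,iA]=I$ and $c$, being a bounded multiplication operator, commutes with $A$, one has $[H_n,iA]=I+[a_l\Gamma^2+b\Gamma^0,iA]$; by Lemma~\ref{aspot} the commutator $[a_l\Gamma^2+b\Gamma^0,iA]$ (a multiplication operator by a function that is $O(xe^{\kappa_\pm x})$ as $x\to\pm\infty$) and the iterated commutator are $H_n$-relatively compact, the extra power of $x$ being absorbed by the exponential decay, and the regularity $A\in C^2(H_n)$ is checked the same way. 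Hence $H_n$ satisfies a strict Mourre estimate, with constant $1$ up to a compact remainder, on every compact interval (there are no thresholds, the spectrum being the whole line), which yields a limiting absorption principle and $\sigma_{\mathrm{sing}}(H_n)=\emptyset$. The absence of eigenvalues, $\sigma_{pp}(H_n)=\emptyset$, follows either from the virial theorem together with the Mourre estimate, or more directly from the fact that for real $\lambda$ the asymptotic equation $\Gamma^1 D_x\psi=(\lambda-c_\pm)\psi$ has only oscillatory solutions, so $H_n\psi=\lambda\psi$ admits no nontrivial $L^2$ solution.

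\textbf{Completeness and isometry.} From the Mourre estimate and the standard minimal/maximal velocity propagation estimates one constructs the asymptotic velocity operators $P^{\pm}$, defined as the strong limits of $e^{itH_n}J(x/t)e^{-itH_n}$ as $t\to\pm\infty$ for $J\in C_0^\infty(\mathbb R)$; because the flow of $\Gamma^1 D_x$ propagates at speed $\pm1$ and $V_0$ minus its constant part is short range, $\mathrm{sp}(P^{\pm})\subset\{-1,+1\}$, and the spectral projection of $P^{\pm}$ onto $\{-1\}$ (resp. $\{+1\}$) coincides with the orthogonal projection onto $\mathrm{Ran}\,W^{\pm}_{n,(-\infty)}$ (resp. $\mathrm{Ran}\,W^{\pm}_{n,(+\infty)}$), the nontrivial inclusion being exactly the statement that the evolution is asymptotically comparable to $H_\mp$ on the part localised near $x=\mp\infty$. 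Since $0\notin\mathrm{sp}(P^{\pm})$, the two ranges are orthogonal and their sum is all of $\mathcal H_{ac}(H_n)=\mathcal H_{lk}$, i.e. $\mathrm{Ran}\,W_n^{\pm}=\mathcal H_{lk}$. Finally, each half-line wave operator is a partial isometry with initial subspace $\mathrm{Ran}\,P_\mp$ or $\mathrm{Ran}\,P_\pm$, and since $P_-+P_+=I$ and the two final ranges are orthogonal, $W_n^{\pm}=W^{\pm}_{n,(-\infty)}+W^{\pm}_{n,(+\infty)}$ is an isometry of $\mathcal H_{lk}$ onto itself, which is the assertion of the theorem.

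\textbf{Main obstacle.} The delicate step is asymptotic completeness, where one must simultaneously handle two distinct comparison dynamics $H_-$ and $H_+$ and the fact that the electric potential $c$ is only ``short range relative to the correct constant on the correct side'' ($c-c_\pm$ decays only as $x\to\pm\infty$): the partition-of-unity and propagation estimates must be arranged so that the part of the evolution escaping to $x=\mp\infty$ is compared with $H_\mp$, and it is precisely the asymptotic velocity operator $P^{\pm}$ that makes this decomposition canonical. Verifying the Mourre estimate, and the $C^2(H_n)$ regularity of the perturbation, with a conjugate operator $A$ adapted to the first-order, linearly dispersive nature of the Dirac operator (rather than to an elliptic one) is the main technical input; once Lemma~\ref{aspot} is available, everything else is routine.
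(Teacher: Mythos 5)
The paper does not prove this theorem at all: it is quoted verbatim as ``the main result of \cite{Dau, Mel}'', and the only methodological remark the paper makes is that the spectral statements $\sigma_{pp}(H_n)=\emptyset$, $\sigma_{\mathrm{sing}}(H_n)=\emptyset$ are obtained there ``using essentially a Mourre theory''. Your sketch --- Cook's method with the exponential decay of $a$, $b$, $c-c_\pm$ from Lemma~\ref{aspot} for existence, a Mourre estimate with conjugate operator $\Gamma^1 x$ for the spectral analysis, and asymptotic velocity observables to identify the ranges of the two half-line wave operators and conclude completeness --- is precisely the strategy of the cited references, so you have reconstructed the intended argument rather than deviated from it. The outline is coherent: the propagation direction attached to $P_\mp$ matches the horizon at which $H_n-H_\mp$ is short range, the commutator $[\Gamma^1 D_x, i\Gamma^1 x]=I$ is correct, and the supplementary ODE argument ruling out $L^2$ eigenfunctions is the right way to upgrade the Mourre estimate (which a priori only excludes singular continuous spectrum and accumulation of eigenvalues) to $\sigma_{pp}(H_n)=\emptyset$. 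The one place where your write-up is thinner than what a full proof requires is the completeness step: the identification of the spectral subspaces of $P^{\pm}$ with the ranges of $W^{\pm}_{n,(\mp\infty)}$ is exactly the content of the theorem, not a standard fact one may invoke, and it is there that the minimal/maximal velocity estimates and the two distinct comparison dynamics must actually be put to work. Since the paper itself delegates all of this to \cite{Dau, Mel}, your proposal is best read as a faithful summary of those proofs rather than an independent argument.
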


Thanks to Theorem \ref{thmMel}, we can define the scattering operator by
\[ S(n) = (W_n^+)^{\star}W_n^-\] 
which is a unitary operator on $\mathcal{H}_{lk}$. 
As mentionned in Section \ref{statement}, this operator can be decomposed as
\[S(n) = \begin{pmatrix}
                   T_L(n) & R(n) \\ L(n) & T_R(n)
                  \end{pmatrix}.\]
where,
\[ T_L(n) = (W_{n,+\infty}^+)^{\star}W_{n,-\infty}^-, \quad \quad T_R(n) = (W_{n,-\infty}^+)^{\star}W_{n,+\infty}^-\]
and
\[R(n) = (W_{n,+\infty}^+)^{\star}W_{n,+\infty}^- \quad \mathrm{and} \quad L(n) = (W_{n,-\infty}^+)^{\star}W_{n,-\infty}^-.\]
It follows from our definition of the wave operators (\ref{opeondemoins}) and (\ref{opeondeplus}) that the previous quantities can be interpreted in terms of transmission and reflection operators
between the event horizon of the black hole $\{ x = -\infty \}$ and the cosmological horizon $\{ x = +\infty \}$. For instance, $T_R(n)$ corresponds to the part of a signal transmitted 
from $\{ x = +\infty \}$ to $\{ x = -\infty \}$ in a scattering process whereas the term $T_L(n)$ corresponds to the part of a signal transmitted from $\{ x = -\infty \}$ to
$\{ x = +\infty \}$. Hence $T_R(n)$ stands for ``transmitted from the right'' and $T_L(n)$ for ``transmitted from the left''. Conversely, $L(n)$ corresponds to the part of a signal reflected from 
$\{ x = -\infty \}$ to $\{ x = -\infty \}$ in a scattering process whereas the term $R(n)$ corresponds to the part of a signal reflected from $\{ x = +\infty \}$ to $\{ x = +\infty \}$.

\section{Simplification of the framework and differential equations for the Jost functions}\label{ED}
\noindent
In this Section we first follow the idea of \cite{DN2}. By a series of simplifications of our model, which finally reduces to the framework studied in \cite{AKM}, we recall and
explain the stationary representation of the scattering operator $S(n)$ expressed in terms of the usual transmission and reflection coefficients (here matrices). In a second time, using \cite{AKM}, we 
obtain second order differential equations satisfied by the Jost functions of our original model.

\subsection{First simplifications} \label{first}
\noindent
In this Section we follow the work of \cite{DN2} Section $4.2$. We recall that the scattering operator $S(n)$ is defined by
\[ S(n) = (W_n^+)^{\star} W_n^-,\]
where the global wave operator $W_n^{\pm}$ are given by (\ref{opeondemoins})-(\ref{ondeglob}).
We introduce the unitary transform $U$ on $\mathcal{H}_{lk}$
\[U = e^{-i \Gamma^1 C^{-}(x)},\quad C^-(x) = \int_{-\infty}^{x} (c(s)-c_-) \, \mathrm ds + c_- x, \]
 and the selfadjoint operators on $\mathcal{H}_{lk}$
\begin{equation}\label{defA}
 A_0 = \Gamma^1 D_x \quad \quad \mathrm{and} \quad \quad A_n = U^{\star} H U = \Gamma^1 D_x + W(x,n).
\end{equation}
Let us denote by $\hat{S}(A_n,A_0)$ the scattering 
operator associated to the operators $A_n$ and $A_0$, i.e.
\[ \hat{S}(A_n,A_0) = (W^+(A_n,A_0))^{\star} W^-(A_n,A_0),\]
where the wave operators are defined, as in \cite{Dau,Dem,Mel}, by
\[W^{\pm}(A_n,A_0) = s- \lim \limits_{t \to \pm \infty} e^{itA_n} e^{-itA_0}.\]
The couple of operators $(A_n,A_0)$ acting on $\mathcal{H}_{lk}$ turns out to fit the framework studied in \cite{AKM}. They are given by $A_0 = \Ga D_x$ and 
$A_n = A_0 + W(x,n)$ where the potential $W(x,n) = e^{i\Ga C^-(x)} (na(x) \Gamma^2 + b(x) \Gamma^0)e^{-i\Ga C^-(x)}$ is the $4 \times 4$ matrix valued function
\[ W(x,n) = \begin{pmatrix}
           0&k(x,n) \\ k(x,n)^{\star} & 0
          \end{pmatrix}, \quad k(x,n) = e^{2i C^-(x)}
          \begin{pmatrix}
           -ib(x) & na(x) \\ -na(x) & ib(x)
          \end{pmatrix},\]
where $n = l + \half$. Here $k(x,n)^{\star}$ denotes the transpose conjugate of the matrix valued function $k(x,n)$. Moreover the entries of $W(.,n)$ are in $L^1(\mathbb{R})$ as in \cite{AKM} (note that our 
potential $W$ is better since it is exponentially decreasing at both ends $x \to \pm \infty$). Thus, we can use the following stationary representation 
of $\hat{S}(A_n,A_0)$. Let us introduce the unitary transform $\mathcal{F}$ on $\mathcal{H}_{lk}$ defined by
\begin{equation}\label{fourier}
 \mathcal{F} \psi(\lambda) = \frac{1}{\sqrt{2\pi}} \int_{\mathbb{R}} e^{-i\Ga x \lambda} \psi(x) \, \mathrm dx,
\end{equation}
then we have (see \cite{AKM} p. 143)
\[\hat{S}(A_n,A_0) = \mathcal{F}^{\star} \hat{S}_0(\lambda,n) \mathcal{F},\]
where the scattering matrix $\hat{S}_0(\lambda,n)$ takes the form
\[\begin{pmatrix}
           \hat{T}_L(\lambda,n) & \hat{R}(\lambda,n) \\ \hat{L}(\lambda,n) & \hat{T}_R(\lambda,n)
          \end{pmatrix}.\]
Here $\hat{T}_L(\lambda,n)$ and $\hat{T}_R(\lambda,n)$ are $2 \times 2$ matrix valued functions which correspond to the usual transmission coefficients of $\hat{S}$ whereas $\hat{L}(\lambda,n)$ and $\hat{R}(\lambda,n)$ 
are $2 \times 2$ matrix valued functions which correspond to the usual reflection coefficients of $\hat{S}$. 

The definition of $\hat{S}_0(\lambda,n)$ obtained so far comes from a time dependent approach to the scattering theory for $(A_n, A_0)$.  
As mentionned in Section \ref{overview} there is an alternative but equivalent definition of $\hat{S}_0(\lambda,n)$ that will be easier to manipulate and is
provided by the stationary approach to the scattering theory for $(A_n, A_0)$.
We thus recall that the reflection and transmission coefficients can be expressed in terms of 
Jost functions which are solutions of Equation (\ref{eqA}) with specific asymptotics at infinity given by (\ref{asJG}) and (\ref{asJD}).
In the following we need some relations between the different quantities of the stationary approach that we summarize here. 
On the Jost functions, the following relation can be proved (see \cite{AKM} eq. (2.5))
\begin{eqnarray}\label{eqJ}
 \hat{F}_{R}(x,\lambda,n)^{\star} \Gamma^1 \hat{F}_R(x,\lambda,n) = \Gamma^1.
\end{eqnarray}
Using the notations introduced in (\ref{notALAR}) we also have some useful relations obtained in \cite{AKM} on the matrix $A_L$ and $A_R$ defined by (\ref{defAL}) and 
(\ref{defAR}). Indeed, for $\lambda \in \mathbb{R}$ and for all $n \in \N^*$,
\begin{eqnarray}\label{lienALJ}
 \hat{F}_{R}(x,\lambda,n)^{\star} \Gamma^1 \hat{F}_L(x,\lambda,n) = \hat{A}_{R}(\lambda,n)^{\star} \Gamma^1 =  \Gamma^1 \hat{A}_{L}(\lambda,n)^{\star}, 
\end{eqnarray}
\begin{eqnarray}\label{eqALR}
 \hat{A}_{R2}(\lambda,n) = \hat{A}_{L3}(\lambda,n)^{\star}, \quad \hat{A}_{R3}(\lambda,n) = -\hat{A}_{L2}(\lambda,n)^{\star},
\end{eqnarray}
\begin{eqnarray}\label{eqALR2}
 \hat{A}_{L1}(\lambda,n) = \hat{A}_{R1}(\lambda,n)^{\star}, \quad \hat{A}_{R4}(\lambda,n) = \hat{A}_{L4}(\lambda,n)^{\star},
\end{eqnarray}
\begin{eqnarray}\label{eqAL}
 \hat{A}_{L1}(\lambda,n)^{\star} \hat{A}_{L1}(\lambda,n) = I_2 + \hat{A}_{L3}(\lambda,n)^{\star} \hat{A}_{L3}(\lambda,n),
\end{eqnarray}
\begin{eqnarray}\label{equnit}
 \hat{A}_{L1}(\lambda,n) \hat{A}_{R1}(\lambda,n) + \hat{A}_{L2}(\lambda,n) \hat{A}_{R3}(\lambda,n) = I_2  = \hat{A}_{R1}(\lambda,n) \hat{A}_{L1}(\lambda,n) + \hat{A}_{R2}(\lambda,n) \hat{A}_{L3}(\lambda,n).
\end{eqnarray}


Finally, we make the link between the stationary representation $S(\lambda,n)$ of the scattering operator $S$ and the scattering matrix $\hat{S}_0(\lambda,n)$. Let us define two unitary transforms $\mathcal{F}_{\pm}$ on $\mathcal{H}$ by
 \begin{equation}\label{defF+}
  F_+ \psi(\lambda) = \mathcal{F} \begin{pmatrix}
           e^{ic_+ x}&0 \\ 0 & e^{-ic_- x}
          \end{pmatrix} \psi(\lambda) =  \frac{1}{\sqrt{2\pi}}  \int_{\mathbb{R}} \begin{pmatrix}
           e^{-ix\lambda + ic_+ x} & 0 \\ 0 & e^{ix \lambda-ic_- x}
          \end{pmatrix} \psi(x) \, \mathrm dx,
 \end{equation}
and
\begin{equation}\label{defF_-}
 F_- \psi(\lambda) = \mathcal{F} \begin{pmatrix}
           e^{ic_- x}&0 \\ 0 & e^{-ic_+ x}
          \end{pmatrix} \psi(\lambda) =  \frac{1}{\sqrt{2\pi}}  \int_{\mathbb{R}} \begin{pmatrix}
           e^{-ix\lambda + ic_- x} & 0 \\ 0 & e^{ix \lambda-ic_+ x}
          \end{pmatrix} \psi(x) \, \mathrm dx,
\end{equation}
and recall the following Proposition given in \cite{DN2}.

\begin{prop}\label{LienSS}
 The scattering operator $S(n)$ has the following stationary representation. If $F_{\pm}$ are the unitary transforms defined in (\ref{defF+}) and (\ref{defF_-}), then
 \[ S(n) = F_+^{\star} S(\lambda,n) F_-,\]
 where the $4 \times 4$ scattering matrix $S(\lambda,n)$ is given by
 \[ S(\lambda,n) = \begin{pmatrix}
           e^{-i\beta} \hat{T}_L(\lambda,n) & e^{-2i\beta} \hat{R}(\lambda,n) \\ \hat{L}(\lambda,n) & e^{-i\beta} \hat{T}_R(\lambda,n)
          \end{pmatrix},\]
where
\begin{eqnarray}\label{beta}
&& \beta = \int_{-\infty}^{0} (c(s)-c_-) \, \mathrm ds + \int_{0}^{+\infty} (c(s)-c_+) \, \mathrm ds
\end{eqnarray}
 and the quantities $\hat{T}_L$, $\hat{T}_R$, $\hat{L}$ and $\hat{R}$ are the $2 \times 2$ matrices corresponding to the transmission and reflection matrices of
 the pair $(A_n,A_0)$.
\end{prop}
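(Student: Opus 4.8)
The proposition being recalled from \cite{DN2}, the plan is to trace the chain of unitary conjugations that links the physical pair $(H_n,H_{\pm})$ to the pair $(A_n,A_0)$ of \cite{AKM}, and to collect the phases produced along the way. The first step is to express each physical wave operator through the wave operators $W^{\pm}(A_n,A_0)$. Since $A_n = U^{\star}H_n U$ with $U = e^{-i\Gamma^1 C^-(x)}$, one has $e^{itH_n} = U e^{itA_n} U^{\star}$, and since $H_{\pm} = \Gamma^1 D_x + c_{\pm}$ with $c_{\pm}\in\R$ a constant, $e^{-itH_{\pm}} = e^{-itc_{\pm}}\,e^{-itA_0}$. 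Thus at the event horizon
\[
 e^{itH_n}\,e^{-itH_-}\,P_{\mp} = U\,e^{itA_n}\,\bigl(U^{\star}\,e^{-itc_-}\,e^{-itA_0}\bigr)\,P_{\mp},
\]
and the whole point is to show that, as $t\to\pm\infty$, the factor in parentheses (together with $P_{\mp}$) may be traded for $e^{-itA_0}\,P_{\mp}\,e^{i\Gamma^1 c_- x}$ up to an error vanishing in the strong sense.

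To do so I would split $U^{\star} = e^{i\Gamma^1 C^-(x)}$ as $U^{\star} = V_-\,e^{i\Gamma^1 c_- x}$ with $V_-(x) = e^{i\Gamma^1\int_{-\infty}^{x}(c(s)-c_-)\,\mathrm ds}$, which is legitimate since $C^-(x)$ and $c_- x$ are scalar functions multiplying the single matrix $\Gamma^1$; by Lemma \ref{aspot} the integrand is $L^1$ near $-\infty$, so $V_-(x)\to I_4$ as $x\to-\infty$. A one-line computation on the two eigenspaces of $\Gamma^1$ gives the commutation relation $e^{i\Gamma^1 c_- x}\,e^{-itA_0} = e^{ic_- t}\,e^{-itA_0}\,e^{i\Gamma^1 c_- x}$, so the phase $e^{ic_- t}$ cancels $e^{-itc_-}$ and, since $e^{i\Gamma^1 c_- x}$ commutes with $P_{\mp} = \mathbf{1}_{\R^{\mp}}(\Gamma^1)$, the whole expression $\bigl(U^{\star}e^{-itc_-}e^{-itA_0}\bigr)P_{\mp}$ becomes $V_-\,e^{-itA_0}\,P_{\mp}\,e^{i\Gamma^1 c_- x}$. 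Finally, because $A_0 = \Gamma^1 D_x$ commutes with $P_{\mp}$, the vector $e^{-itA_0}P_{\mp}\phi$ escapes to $x = -\infty$ as $t\to\pm\infty$ (a standard density/escape argument on wave packets), and there $V_- - I_4$ vanishes; hence $(V_- - I_4)\,e^{-itA_0}\,P_{\mp}\to 0$ strongly. Composing with the fixed unitaries $U$ and $e^{itA_n}$ and using the existence and asymptotic completeness of $W^{\pm}(A_n,A_0)$ established in \cite{AKM}, this yields
\[
 W^{\pm}_{n,(-\infty)} = U\,W^{\pm}(A_n,A_0)\,P_{\mp}\,e^{i\Gamma^1 c_- x}.
\]
The same argument at the cosmological horizon --- where now $C^-(x) - c_+ x \to \beta$ as $x\to+\infty$ with $\beta$ given by (\ref{beta}), so that $U^{\star} = V_+\,e^{i\Gamma^1\beta}\,e^{i\Gamma^1 c_+ x}$ with $V_+\to I_4$ at $+\infty$ --- gives $W^{\pm}_{n,(+\infty)} = U\,W^{\pm}(A_n,A_0)\,P_{\pm}\,e^{i\Gamma^1\beta}\,e^{i\Gamma^1 c_+ x}$.

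It then remains to add the two pieces, $W_n^{\pm} = U\,W^{\pm}(A_n,A_0)\,G^{\pm}$, and to carry out the bookkeeping. In the representation (\ref{matDir}) the matrix $\Gamma^1$ --- hence $P_{\pm}$, $e^{i\Gamma^1\beta}$, $e^{i\Gamma^1 c_{\pm}x}$ --- is block diagonal of the form $\mathrm{diag}(\,\cdot\, I_2,\,\cdot\, I_2)$, so one computes $G^+ = \mathrm{diag}(e^{i\beta}e^{ic_+ x}I_2,\ e^{-ic_- x}I_2) = \mathrm{diag}(e^{i\beta}I_2, I_2)\,M_+$ and $G^- = \mathrm{diag}(e^{ic_- x}I_2,\ e^{-i\beta}e^{-ic_+ x}I_2) = \mathrm{diag}(I_2, e^{-i\beta}I_2)\,M_-$, where $M_{\pm}$ denotes the multiplication part of $F_{\pm}$ in (\ref{defF+})--(\ref{defF_-}), i.e. $F_{\pm} = \mathcal{F}M_{\pm}$. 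Writing $B_+ = \mathrm{diag}(e^{i\beta}I_2, I_2)$ and $B_- = \mathrm{diag}(I_2, e^{-i\beta}I_2)$ and using $\hat{S}(A_n,A_0) = (W^+(A_n,A_0))^{\star}W^-(A_n,A_0) = \mathcal{F}^{\star}\,\hat{S}_0(\lambda,n)\,\mathcal{F}$ together with the fact that $B_{\pm}$, being functions of $\Gamma^1$, commute with $\mathcal{F}$, one gets
\[
 S(n) = (W_n^+)^{\star}W_n^- = F_+^{\star}\,\bigl(B_+^{\star}\,\hat{S}_0(\lambda,n)\,B_-\bigr)\,F_-.
\]
A $2\times 2$ block multiplication shows that $B_+^{\star}\,\hat{S}_0(\lambda,n)\,B_-$ is precisely the matrix $S(\lambda,n)$ of the statement, with the factors $e^{-i\beta}$ on the transmission blocks and $e^{-2i\beta}$ on $\hat{R}$; this concludes the proof.

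I expect the only genuinely delicate point to be the replacement step in the second paragraph: one must check that $e^{-itA_0}P_{\mp}\phi$ really concentrates at the correct horizon and that the remainder obtained by substituting for $U^{\star}$ its asymptotic plane-wave form tends to $0$ strongly, robustly enough to survive left multiplication by the unitary $e^{itA_n}$. This is exactly where the short-range decay of $c - c_{\pm}$ from Lemma \ref{aspot} and the intertwining and completeness properties of $W^{\pm}(A_n,A_0)$ from \cite{AKM} enter; the subsequent manipulation of the constant phases is routine.
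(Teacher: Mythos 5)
Your proof is correct: the factorization $U^{\star}=V_{\mp}\,e^{i\Gamma^1\beta_{\mp}}\,e^{i\Gamma^1 c_{\mp}x}$ (with $\beta_-=0$, $\beta_+=\beta$), the commutation $e^{i\Gamma^1 c_{\pm}x}e^{-itA_0}=e^{ic_{\pm}t}e^{-itA_0}e^{i\Gamma^1 c_{\pm}x}$, the escape of $e^{-itA_0}P_{\mp}\phi$ to the correct horizon, and the final block bookkeeping $B_+^{\star}\hat{S}_0 B_-$ all check out and yield exactly the stated phases. The paper itself gives no proof of this Proposition --- it is recalled from \cite{DN2} --- and your reconstruction is essentially the argument given there, so there is nothing to flag.
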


In the next Subsection we use some integral representations of the blocks of the Faddeev matrix from the right defined by
\[\hat{M}_R(x,\lambda,n) = \hat{F}_R(x,\lambda,n) e^{-i\Ga \lambda x}.\]
It is easy to show from (\ref{eqA}) that $\hat{M}_R(x,\lambda,n)$ satisfies the integral equation
\begin{equation}\label{eqinteML}
 \hat{M}_R(x,\lambda,n) = I_4 + i \Ga \int_{x}^{+\infty} e^{-i\Ga \lambda(y-x)} W(y,n) \hat{M}_R(y,\lambda,n) e^{i\Ga \lambda(y-x)} \, \mathrm dy,
\end{equation}
Actually, using once again $2 \times 2$ block matrix notation 
\[\hat{M}_R(x,\lambda,n) = \begin{pmatrix}
           \hat{M}_{R1}(x,\lambda,n) & \hat{M}_{R2}(x,\lambda,n) \\ \hat{M}_{R3}(x,\lambda,n) & \hat{M}_{R4}(x,\lambda,n)
          \end{pmatrix},\]
and iterating (\ref{eqinteML}) once, we get the uncoupled system of integral equations for the blocks $\hat{M}_{Ri}(x,\lambda,n)$, $i \in \{1,2,3,4\}$. For instance,
\begin{equation}\label{eqinteMRun}
 \hat{M}_{R1}(x,\lambda,n) = I_{2} + \int_{-\infty}^{x} \left( \int_{-\infty}^{y} e^{2i\lambda (t-y)} k(y,n)k(t,n)^{\star} \hat{M}_{R1}(t,\lambda,n) \, \mathrm dt \right) \, \mathrm dy
\end{equation}
and
\begin{equation}\label{eqinteMRdeux}
 \hat{M}_{R2}(x,\lambda,n) = i \int_{-\infty}^{x} e^{2i\lambda (x-y)} k(y,n) \, \mathrm dy  + \int_{-\infty}^{x} \left( \int_{-\infty}^{y} e^{2i\lambda (x-y)} k(y,n)k(t,n)^{\star} \hat{M}_{R2}(t,\lambda,n) \, \mathrm dt \right) \, \mathrm dy.
\end{equation}
Similar results hold for the Faddeev matrix from the left $\hat{M}_L$ which is defined by
\[\hat{M}_L(x,\lambda,n) = \hat{F}_L(x,\lambda,n) e^{-i\Ga \lambda x}.\]
The equation (\ref{eqinteMRun}) is an integral equation of Volterra type that can be solved by iteration. Moreover we can deduce from this method some important estimates.

\begin{lemma}\label{estimM}
 For all $x_0 \in \mathbb{R}$, there exists a constant $C \in \mathbb{R}$ such that for all $x \in \R$, for all $n \in \mathbb{N}$, we have
 \[ \Vert \hat{M}_{Ri}(x,\lambda,n) \Vert \leq C \exp \left( n \int_{-\infty}^{x} a(s) \, \mathrm ds \right),\quad \forall i \in \{1,2,3,4\},\]
and
 \[ \Vert \hat{M}_{Li}(x,\lambda,n) \Vert \leq C \exp \left( n \int_{x}^{+\infty} a(s) \, \mathrm ds \right),\quad \forall i \in \{1,2,3,4\}.\]
\end{lemma}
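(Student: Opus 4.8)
The plan is to estimate the four blocks $\hat{M}_{Ri}$ (and, symmetrically, the $\hat{M}_{Li}$) by solving the decoupled Volterra equations of the types (\ref{eqinteMRun}) and (\ref{eqinteMRdeux}) by iteration and controlling the resulting Neumann series term by term. The starting point is a sharp pointwise bound on the kernel: a direct computation gives $k(x,n)^{\star} k(x,n) = \bigl(n^2 a(x)^2 + b(x)^2\bigr) I_2$, so that, in operator norm, $\Vert k(x,n) \Vert = \sqrt{n^2 a(x)^2 + b(x)^2} \leq n\, a(x) + b(x)$ and $\Vert k(x,n)^{\star} \Vert = \Vert k(x,n) \Vert$. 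Since $a$ and $b$ decay exponentially at both horizons (Lemma \ref{aspot}), the functions
\[
 \Phi_R(x) := \int_{-\infty}^{x} \Vert k(y,n) \Vert \, \mathrm dy, \qquad \Phi_L(x) := \int_{x}^{+\infty} \Vert k(y,n) \Vert \, \mathrm dy
\]
are finite and satisfy $\Phi_R(x) \leq n \int_{-\infty}^{x} a(s)\, \mathrm ds + B$ and $\Phi_L(x) \leq n \int_{x}^{+\infty} a(s)\, \mathrm ds + B$, where $B := \int_{\R} b(s)\, \mathrm ds < \infty$ does not depend on $n$.

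Next I would run the iteration. Writing $\hat{M}_{R1} = \sum_{p \geq 0} \mathcal{K}^p I_2$, where $\mathcal{K}$ is the integral operator on the right-hand side of (\ref{eqinteMRun}), and using $|e^{2i\lambda(t-y)}| = 1$ (here $\lambda \in \R$ is essential) together with $\Phi_R' = \Vert k(\cdot,n) \Vert$, a one-line induction gives $\Vert (\mathcal{K}^p I_2)(x) \Vert \leq \Phi_R(x)^{2p}/(2p)!$, whence $\Vert \hat{M}_{R1}(x) \Vert \leq \cosh \Phi_R(x) \leq e^{\Phi_R(x)}$ and the series converges. For $\hat{M}_{R2}$ the inhomogeneous term $i \int_{-\infty}^{x} e^{2i\lambda(x-y)} k(y,n)\, \mathrm dy$ has norm $\leq \Phi_R(x)$, and the same iteration yields $\Vert \hat{M}_{R2}(x) \Vert \leq \sinh \Phi_R(x) \leq e^{\Phi_R(x)}$. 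Iterating (\ref{eqinteML}) once produces, for $\hat{M}_{R3}$ and $\hat{M}_{R4}$, Volterra equations of exactly these same two shapes, so they obey the same bounds. Plugging in the estimate for $\Phi_R$ gives $\Vert \hat{M}_{Ri}(x) \Vert \leq e^{B}\, e^{\, n \int_{-\infty}^{x} a(s)\, \mathrm ds}$ for every $i$, which is the assertion with $C = e^{B}$ (a constant independent of $x$, $n$, and of the point $x_0$ in the statement). The bounds on $\hat{M}_{Li}$ follow by the mirror argument: the corresponding Volterra equations run over $(x,+\infty)$, and one replaces $\Phi_R$ by $\Phi_L$.

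The main obstacle I anticipate is ensuring that the exponent in the conclusion is exactly $n \int_{-\infty}^{x} a$, with no constant multiplying $n$: a crude bound such as $\Vert k(x,n) \Vert \leq C n (a(x) + b(x))$ would only give $e^{C n \int a}$. The gain comes from keeping the additive splitting $\Vert k(x,n) \Vert \leq n\, a(x) + b(x)$ and observing that the $b$-contribution to $\Phi_R$ is a finite constant $\leq B$, which can be pulled out of the exponential and absorbed into the prefactor. Everything else is routine Volterra/Gr\"onwall estimation, using only $a, b \in L^1(\R)$ (Lemma \ref{aspot}) and $\lambda \in \R$.
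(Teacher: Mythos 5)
Your proposal is correct and follows essentially the same route as the paper: iterate the Volterra equation (\ref{eqinteMRun}), bound the $k$-th iterate by $\frac{1}{(2k)!}\left(\int_{-\infty}^{x}\Vert k(s,n)\Vert\,\mathrm ds\right)^{2k}$, sum the series, and use $\Vert k(x,n)\Vert \leq n\,a(x)+b(x)$ to pull the $b$-contribution out as the $n$-independent constant $C=e^{\int_{\R} b}$. The only cosmetic difference is your choice of operator norm versus the paper's $\Vert\cdot\Vert_1$, which does not affect the argument.
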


\begin{proof}
 We can for instance prove Lemma \ref{estimM} for $\hat{M}_{R1}(x,\lambda,n)$ since the proof for the other blocks is similar. To obtain the estimate of Lemma \ref{estimM} we use an iterative method.
 We use the integral equation (\ref{eqinteMRun}) and we define
$\hat{M}_{R1}^0 (x,\lambda,n) = I_2$ and 
 \[ \hat{M}_{R1}^{k+1}(x,\lambda,n) = \int_{-\infty}^{x} \left( \int_{-\infty}^{y} e^{2i\lambda (t-y)} k(y,n)k(t,n)^{\star} \hat{M}_{R1}^k(t,\lambda,n) \, \mathrm dt \right) \, \mathrm dy, \quad \forall k \in \mathbb{N}.\]
 We can prove by induction the estimate
\[ \Vert \hat{M}_{R1}^{k} (x,\lambda,n) \Vert \leq \frac{1}{(2k)!} \left( \int_{-\infty}^x \Vert  k(s,n) \Vert \, \mathrm ds \right)^{2k}, \quad \forall k \in \mathbb{N}.\]
Thus, we can write
\begin{eqnarray}\label{serieM_R1}
 \hat{M}_{R1}(x,\lambda,n) = \sum_{k=0}^{+\infty} \hat{M}_{R1}^k (x,\lambda,n)
\end{eqnarray}
and we obtain
\[  \Vert \hat{M}_{R1} (x,\lambda,n) \Vert \leq \exp \left( \int_{-\infty}^x  \Vert k(s,n) \Vert  \, \mathrm ds \right).\]
We recall that
\[ k(x,n) = e^{2i C^-(x)}
          \begin{pmatrix}
           -ib(x) & na(x) \\ -na(x) & ib(x)
          \end{pmatrix}.\]
Therefore $\Vert k(x,n) \Vert = na(x) + b(x)$ (we choose $\Vert . \Vert = \Vert . \Vert_1$) and
\[  \Vert \hat{M}_{R1} (x,\lambda,n) \Vert \leq \exp \left( \int_{-\infty}^{+\infty}  b(s)  \, \mathrm ds \right) \exp \left( n \int_{-\infty}^x a(s)  \, \mathrm ds \right).\]
\end{proof}

%
%

\subsection{Second order differential equations for the Jost functions}
\noindent
The aim of this Subsection is to find second order differential equations satisfied by the components of Jost functions. To do this, we first give differential equations satisfied by the blocks of 
the Faddeev matrix using the integral equations $(2.13)-(2.16)$ and $(2.18)-(2.21)$ of \cite{AKM}. Next, using the link between the Faddeev matrices 
of the reduced operator $A$ and the Jost functions of the Hamiltonian $H$, we obtain second order differential equations for the Jost functions.

We first prove easily the following Proposition:

\begin{prop}\label{eqbloc}
 The blocks of the Faddeev matrices satifsfy the following differential equations in the variable $x$.
 \begin{enumerate}
  \item The blocks $\hat{M}_{L1}(.,\lambda,n)$ and $\hat{M}_{R1}(.,\lambda,n)$ satisfy the differential equation
\begin{equation}\label{eqMR1}
 A''(x,\lambda,n) + (2i\lambda  I_{2} - k'(x,n)k(x,n)^{-1}) A'(x,\lambda,n) - k(x,n) k(x,n)^{\star} A(x,\lambda,n) = 0.
\end{equation}
  \item The blocks $\hat{M}_{L2}(.,\lambda,n)$ and $\hat{M}_{R2}(.,\lambda,n)$ satisfy
\begin{eqnarray*}\label{eqMR2}
  A''(x,\lambda,n) &-& (2i\lambda  I_{2} + k'(x,n)k(x,n)^{-1}) A'(x,\lambda,n) \\
  &+& (2i\lambda k'(x,n)k(x,n)^{-1}-k(x,n) k(x,n)^{\star}) A(x,\lambda,n) = 0.
\end{eqnarray*}
  \item The blocks $\hat{M}_{L3}(.,\lambda,n)$ and $\hat{M}_{R3}(.,\lambda,n)$ satisfy
\begin{eqnarray*}\label{eqMR3}
  A''(x,\lambda,n) &+& (2i\lambda  I_{2} - k'(x,n)^{\star}(k(x,n)^{\star})^{-1}) A'(x,\lambda,n)\\
  & -& (2i\lambda k'(x,n)^{\star} (k(x,n)^{\star})^{-1}+k(x,n)^{\star} k(x,n)) A(x,\lambda,n) = 0.
\end{eqnarray*}
  \item The blocks $\hat{M}_{L4}(.,\lambda,n)$ and $\hat{M}_{R4}(.,\lambda,n)$ satisfy
  \[A''(x,\lambda,n) - (2i\lambda  I_{2} + k'(x,n)^{\star}(k(x,n)^{\star})^{-1}) A'(x,\lambda,n) - k(x,n)^{\star} k(x,n) A(x,\lambda,n) = 0.\]
 \end{enumerate}
\end{prop}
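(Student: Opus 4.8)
The plan is to derive these second order ODEs directly from the \emph{first order} differential system satisfied by the blocks $\hat{M}_{Ri}$ and $\hat{M}_{Li}$, which one reads off from the Jost equation $A\hat{F}_R = \lambda \hat{F}_R$ rewritten in terms of $\hat{M}_R = \hat{F}_R e^{-i\Ga\lambda x}$. Concretely, writing $A_0 = \Ga D_x$ and using the off-diagonal block structure of $W(x,n)=\left(\begin{smallmatrix} 0 & k \\ k^\star & 0\end{smallmatrix}\right)$ together with $\Ga = \mathrm{diag}(I_2,-I_2)$, the equation $\Ga D_x \hat F_R + W\hat F_R = \lambda \hat F_R$ becomes, after conjugating by $e^{-i\Ga\lambda x}$, a coupled first order system relating the upper and lower block-rows of $\hat M_R$. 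For the first block-column this reads (this is exactly eqs. (2.13)--(2.16) and (2.18)--(2.21) of \cite{AKM}, which we are entitled to quote)
\[
  \hat M_{R1}'(x,\lambda,n) = i\,k(x,n)\,\hat M_{R3}(x,\lambda,n), \qquad
  \hat M_{R3}'(x,\lambda,n) = 2i\lambda\,\hat M_{R3}(x,\lambda,n) + i\,k(x,n)^\star\,\hat M_{R1}(x,\lambda,n),
\]
and similarly for $\hat M_{R2}, \hat M_{R4}$ (with the roles of the phases interchanged), and the analogous identities for $\hat M_L$ with the appropriate sign/limit conventions. First I would record these first order systems carefully for each of the four blocks and for both $\hat M_R$ and $\hat M_L$, checking the matrix orderings (left- versus right-multiplication by $k$) against \cite{AKM}.

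The second step is pure elimination. From the first equation of each pair, $\hat M_{R3} = -i\, k(x,n)^{-1}\hat M_{R1}'$ (here $k(x,n)$ is invertible since $\det k(x,n) = e^{4iC^-(x)}(b^2 + n^2a^2)\neq 0$ away from the exceptional set, and in any case the relation is used as an identity between smooth functions), and substituting into the second equation after differentiating the first gives the scalar (in the sense of $2\times 2$ matrices) second order equation. Differentiating $\hat M_{R1}' = i k \hat M_{R3}$ yields $\hat M_{R1}'' = i k' \hat M_{R3} + i k \hat M_{R3}' = k'k^{-1}\hat M_{R1}' + ik(2i\lambda \hat M_{R3} + i k^\star \hat M_{R1})$, and replacing $ik\hat M_{R3}$ by $\hat M_{R1}'$ once more gives
\[
  \hat M_{R1}''(x,\lambda,n) = k'(x,n)k(x,n)^{-1}\hat M_{R1}'(x,\lambda,n) + 2i\lambda\,\hat M_{R1}'(x,\lambda,n) - k(x,n)k(x,n)^\star\,\hat M_{R1}(x,\lambda,n),
\]
which is precisely equation (\ref{eqMR1}) once everything is moved to the left-hand side. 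The same mechanical elimination applied to the $\hat M_{R2}$ system produces the extra zeroth order term $2i\lambda k'k^{-1}\hat M_{R2}$ because the transmitted block carries the opposite phase $e^{2i\lambda(x-y)}$, so that the ``$2i\lambda$'' enters the coefficient of $\hat M_{R2}'$ with a plus sign and leaves a residual term after one more substitution. For blocks $3$ and $4$ one does the same, now eliminating $\hat M_{R1}$ (resp.\ $\hat M_{R2}$) in favour of $\hat M_{R3}$ (resp.\ $\hat M_{R4}$), which replaces $k$ by $k^\star$ and $k k^\star$ by $k^\star k$ everywhere; this accounts for the asymmetry between items 1--2 and items 3--4.

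The third and last point is to observe that $\hat M_{L1}$ and $\hat M_{R1}$ satisfy the \emph{same} equation (\ref{eqMR1}), and likewise within each of the other three pairs. This is immediate from the fact that $\hat M_L$ and $\hat M_R$ both solve the identical first order system --- the only difference between them is the boundary condition (normalization $I_4$ at $+\infty$ versus at $-\infty$), which does not affect the differential equation. I do not expect any genuine obstacle here; the only points demanding care are bookkeeping ones: getting the left/right multiplication order of the matrix coefficients right (since $k$ and $k^\star$ do not commute in general), tracking the sign of the $2i\lambda$ term coming from the phase conjugation, and justifying the manipulation $\hat M_{R3} = -ik^{-1}\hat M_{R1}'$ on the (measure-zero, or empty for $n\in\N^\star$ and the explicit $k$ here) set where $\det k$ might vanish --- which one sidesteps by noting the second order equations, once written with the coefficients cleared of inverses, hold by continuity everywhere, or simply by rereading the derivation keeping $\hat M_{R3}$ as the auxiliary unknown until the very end.
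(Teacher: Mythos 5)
Your overall strategy --- read off the coupled first-order system for the blocks of $\hat{M}_R$ from $A\hat{F}_R=\lambda\hat{F}_R$ and eliminate one block of each pair --- is sound and is essentially the computation the paper intends: the paper simply differentiates the iterated integral equations (\ref{eqinteMRun})--(\ref{eqinteMRdeux}) twice, and those integral equations are nothing but the integrated form of the same elimination. Your closing observation that $\hat{M}_{L}$ and $\hat{M}_{R}$ obey identical ODEs because they solve the same differential system and differ only through their boundary conditions is correct, and your remark on the invertibility of $k$ (via $\det k=e^{4iC^-(x)}(b^2+n^2a^2)\neq 0$) is fine.

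However, the first-order system you posit has the wrong signs, and as a consequence the second-order equation you actually derive is \emph{not} (\ref{eqMR1}), contrary to your claim. With $\Gamma^1=\mathrm{diag}(I_2,-I_2)$, $D_x=-i\partial_x$ and $\hat{M}_R=\hat F_R e^{-i\Gamma^1\lambda x}$, one gets $\hat{M}_R'=i\lambda[\Gamma^1,\hat M_R]-i\Gamma^1 W\hat M_R$, whose first block-column reads
\[
\hat M_{R1}'=-i\,k\,\hat M_{R3},\qquad \hat M_{R3}'=-2i\lambda\,\hat M_{R3}+i\,k^\star\hat M_{R1}
\]
(this is the system whose integrated form is precisely (\ref{eqinteMRun})), not $\hat M_{R1}'=+ik\hat M_{R3}$, $\hat M_{R3}'=+2i\lambda\hat M_{R3}+ik^\star\hat M_{R1}$ as you wrote. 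Running your own elimination on your system gives $A''-(2i\lambda I_2+k'k^{-1})A'+kk^\star A=0$, which differs from (\ref{eqMR1}) in the signs of both the $2i\lambda$ term and the $kk^\star$ term; the two sign errors do not cancel, so the sentence ``which is precisely equation (\ref{eqMR1})'' is false as written. With the corrected system the elimination does produce (\ref{eqMR1}), and the analogous computations for the remaining pairs yield items 2--4, including the residual zeroth-order term $2i\lambda k'k^{-1}$ in item 2 exactly as you predicted. The gap is therefore a bookkeeping one that you yourself flagged as needing verification against \cite{AKM}, but it must actually be carried out: as it stands the verification of item 1 fails.
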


\begin{remark}
 \begin{enumerate}
  \item The analysis is the same for the Jost function from the left and the Jost function from the right. We choose to work with the Jost function from the right.
  \item There is an analogy between the differential equations for the blocks $\hat{M}_{R1}$ and $\hat{M}_{R4}$. Indeed, the corresponding integral equations of \cite{AKM} are the same changing 
  $\lambda$ in $-\lambda$ and $k$ in $k^{\star}$. Thus in the following, we just give results for the block $\hat{M}_{R1}$. The same analogy is true for the blocks $\hat{M}_{R2}$ and $\hat{M}_{R3}$ and we 
  just give results for the block $\hat{M}_{R2}$ in the following.
 \end{enumerate}
\end{remark}

Using Proposition \ref{eqbloc}, we easily obtain second order differential equations for the coefficients of the Faddeev matrices. After that we use the explicit link between the Faddeev 
matrices and the Jost fonctions corresponding to the Hamiltonian $H$. The Jost functions of the Hamiltonian $A$ are given by
\[\hat{F}_R(x,\lambda,n) = \hat{M}_R(x,\lambda,n) e^{i \lambda \Gamma^1 x}.\]
We recall that
\[ H = e^{-i \Gamma^1 C^{-}(x)} A e^{i \Gamma^1 C^{-}(x)}.\]
Hence, the Jost function from the right of the Hamiltonian $H$ is defined by
\begin{eqnarray}\label{lienFMR}
 F_{R} (x,\lambda,n) = e^{-i \Gamma^1 C^{-}(x)} \hat{M}_R(x,\lambda,n) e^{i \lambda \Gamma^1 x}
\end{eqnarray}
and similarly the Jost function from the left of the Hamiltonian $H$ is defined by
\begin{eqnarray}\label{lienFML}
F_{L} (x,\lambda,n) = e^{-i \Gamma^1 C^{-}(x)} \hat{M}_L(x,\lambda,n) e^{i \lambda \Gamma^1 x}.
\end{eqnarray}
We use the following notation
\[ F_R(x,\lambda,n) = \begin{pmatrix}
           F_{R1}(x,\lambda,n) & F_{R2}(x,\lambda,n) \\ F_{R3}(x,\lambda,n) & F_{R4}(x,\lambda,n)
          \end{pmatrix} \quad \textrm{where} \quad F_{Ri}(x,\lambda,n) = \begin{pmatrix}
           F_{Ri,1}(x,\lambda,n) & F_{Ri,2}(x,\lambda,n) \\ F_{Ri,3}(x,\lambda,n) & F_{Ri,4}(x,\lambda,n)
          \end{pmatrix}\]
and the corresponding notations for the Jost function from the left.
Using the previous equalities and Proposition \ref{eqbloc} we prove the following Proposition.

\begin{prop}\label{eqJost1}
The coefficients of the Faddeev matrices satisfy the following coupled equations in the variable $x$.
 \begin{enumerate}
  \item For $i \in \{1,2\}$ and $(j,k) \in \{(1,3),(3,1),(2,4),(4,2)\}$ there exists a function $f_{i,j,k}(x,\lambda,n)$ such that the component $F_{Ri,j}$ of the Jost function 
  from the right and the component $F_{Li,j}$ of the Jost function from the left satisfy the coupled differential equation
\begin{eqnarray*}
 &&u''(x,\lambda,n) - \frac{a'(x)}{a(x)} u'(x,\lambda,n) \\
&&+ \left( ic'(x) + (c(x)-\lambda)^2 - i(c(x)-\lambda) \frac{a'(x)}{a(x)} - (n^{2}a(x)^2 + b(x)^2) \right)u(x,\lambda,n) = f_{i,j,k}(x,\lambda,n),
\end{eqnarray*}
where,
\[ f_{i,j,k} = c_{1,i,j} F_{Ri,j}'(x,\lambda,n) + c_{2,i,k} F_{Ri,k}'(x,\lambda,n) + c_{3,i,j} F_{Ri,j}(x,\lambda,n) + c_{4,i,k} F_{Ri,k}(x,\lambda,n),\]
with,
\[c_{r,i,t}(x,\lambda,n) = O \left( \frac{1}{n} \right), \quad n \to \infty, \quad \forall r \in \{1,2,3,4\} \quad \mathrm{and} \quad t \in \{j,k\}.\]
  \item For $i \in \{3,4\}$ and $(j,k) \in \{(1,3),(3,1),(2,4),(4,2)\}$ there exists a function $f_{i,j,k}(x,\lambda,n)$ such that the component $F_{Ri,j}$ of the Jost function 
  from the right and the component $F_{Li,j}$ of the Jost function from the left satisfy the coupled differential equation
\begin{eqnarray*}
 &&u''(x,\lambda,n) - \frac{a'(x)}{a(x)} u'(x,\lambda,n) \\
&&+ \left( -ic'(x) + (c(x)-\lambda)^2 + i(c(x)-\lambda) \frac{a'(x)}{a(x)} - (n^{2}a(x)^2 + b(x)^2) \right)u(x,\lambda,n) = f_{i,j,k}(x,\lambda,n),
\end{eqnarray*}
where,
\[ f_{i,j,k} = c_{1,i,j} F_{Ri,j}'(x,\lambda,n) + c_{2,i,k} F_{Ri,k}'(x,\lambda,n) + c_{3,i,j} F_{Ri,j}(x,\lambda,n) + c_{4,i,k} F_{Ri,k}(x,\lambda,n),\]
with,
\[c_{r,i,t}(x,\lambda,n) = O \left( \frac{1}{n} \right), \quad n \to \infty, \quad \forall r \in \{1,2,3,4\} \quad \mathrm{and} \quad t \in \{j,k\}.\]
 \end{enumerate}
\end{prop}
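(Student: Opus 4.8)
\noindent
The plan is to deduce these equations from the matrix second-order ODEs of Proposition~\ref{eqbloc} for the Faddeev blocks $\hat{M}_{Ri}$ by inserting the explicit form of the potential $k(x,n)$ and then transporting the resulting equations to the Jost functions of $H$ via the relation (\ref{lienFMR}). By Proposition~\ref{eqbloc} the blocks $\hat{M}_{Li}$ satisfy the same matrix equations, so it suffices to treat the Jost function from the right, the argument for $F_L$ being verbatim the same.

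\emph{Step 1 (algebra of the potential).} Write $k(x,n)=e^{2iC^-(x)}M(x,n)$ with $M(x,n)=\bigl(\begin{smallmatrix}-ib(x)&na(x)\\-na(x)&ib(x)\end{smallmatrix}\bigr)$, and recall that $(C^-)'(x)=c(x)$. Since $a,b$ are real valued one checks immediately that $k(x,n)k(x,n)^\star=k(x,n)^\star k(x,n)=(n^2a(x)^2+b(x)^2)I_2$ and, using $\det M=n^2a^2+b^2$,
\[
 k'k^{-1}=2ic\,I_2+M'M^{-1},\qquad (k')^\star(k^\star)^{-1}=-2ic\,I_2+M'M^{-1},
\]
\[
 M'M^{-1}=\frac{n^2a'a+b'b}{n^2a^2+b^2}\,I_2+\frac{in(ab'-a'b)}{n^2a^2+b^2}\begin{pmatrix}0&1\\1&0\end{pmatrix}.
\]
By Lemma~\ref{aspot}, together with the explicit form of the potentials (which gives, in particular, $b/a=mr$ and $(ab'-a'b)/a^2=(b/a)'=mF(r)$), the quotients $a'/a$, $b/a$ and $(ab'-a'b)/a^2$ are bounded on $\R$. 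Expanding the coefficients of $M'M^{-1}$ in powers of $1/n$ we therefore get $M'M^{-1}=\frac{a'(x)}{a(x)}I_2+R(x,n)$, where the diagonal part of $R(x,n)$ is $O(1/n^2)$ and its off-diagonal part is $O(1/n)$; in particular every entry of $R(x,n)$ is $O(1/n)$.

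\emph{Step 2 (conjugation and cancellation).} Since $\Gamma^1=\mathrm{diag}(I_2,-I_2)$, relation (\ref{lienFMR}) reads blockwise $\hat{M}_{R1}=e^{i\phi}F_{R1}$, $\hat{M}_{R4}=e^{-i\phi}F_{R4}$ with $\phi=C^--\lambda x$, and $\hat{M}_{R2}=e^{i\chi}F_{R2}$, $\hat{M}_{R3}=e^{-i\chi}F_{R3}$ with $\chi=C^-+\lambda x$; note $\phi'=c-\lambda$, $\chi'=c+\lambda$, $\phi''=\chi''=c'$. Substituting e.g. $\hat{M}_{R1}=e^{i\phi}F_{R1}$ into (\ref{eqMR1}) and dividing by $e^{i\phi}$, the scalar ($I_2$-proportional) part of the first-order coefficient cancels exactly the term $2i\phi'$ produced by differentiating the phase, so that the only first-order term left is $-M'M^{-1}F_{R1}'$. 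Collecting the zeroth-order terms (and using $(c+\lambda)^2-4\lambda c=(c-\lambda)^2$ for the blocks $i=2,3$) one obtains, for each $i$, a $2\times2$ equation of the form
\[
 F_{Ri}''-M'M^{-1}F_{Ri}'+\Bigl[\bigl(\varepsilon_i\,ic'+(c-\lambda)^2-(n^2a^2+b^2)\bigr)I_2-\varepsilon_i\,i(c-\lambda)M'M^{-1}\Bigr]F_{Ri}=0,
\]
with $\varepsilon_i=+1$ for $i\in\{1,2\}$ and $\varepsilon_i=-1$ for $i\in\{3,4\}$.

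\emph{Step 3 (scalar equations, coupling, conclusion).} Inserting $M'M^{-1}=\frac{a'}{a}I_2+R(x,n)$ into the last display, the $I_2$-part reproduces exactly the two uncoupled operators in the statement, with potential $\varepsilon_i\,ic'+(c-\lambda)^2-\varepsilon_i\,i(c-\lambda)\frac{a'}{a}-(n^2a^2+b^2)$, while the remaining terms form a right-hand side $R(x,n)F_{Ri}'+\varepsilon_i\,i(c-\lambda)R(x,n)F_{Ri}$. Writing $R(x,n)=\alpha(x,n)I_2+\beta(x,n)\bigl(\begin{smallmatrix}0&1\\1&0\end{smallmatrix}\bigr)$ with $\alpha=O(1/n^2)$, $\beta=O(1/n)$, the $(p,q)$-entry of this right-hand side equals $\alpha F_{Ri,j}'+\beta F_{Ri,k}'+\varepsilon_i\,i(c-\lambda)\bigl(\alpha F_{Ri,j}+\beta F_{Ri,k}\bigr)$, where $j$ is the index of the position $(p,q)$ in the block and $k$ the index of $(\bar p,q)$, the partner obtained by swapping the two rows. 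This is exactly the asserted coupling $f_{i,j,k}$, with $c_{1,i,j}=\alpha$, $c_{2,i,k}=\beta$, $c_{3,i,j}=\varepsilon_i\,i(c-\lambda)\alpha$, $c_{4,i,k}=\varepsilon_i\,i(c-\lambda)\beta$, all $O(1/n)$ (recall $c-\lambda$ is bounded by Lemma~\ref{aspot}); moreover the off-diagonal structure of $R(x,n)$ forces the couples $(j,k)$ to be precisely $\{(1,3),(3,1),(2,4),(4,2)\}$. Replacing $F_R$, $\hat{M}_{Ri}$ by $F_L$, $\hat{M}_{Li}$ throughout handles the left Jost functions. The only delicate point is the bookkeeping in Step~2: checking that the $\lambda$-dependent first-order terms cancel (which is automatic once the phases dictated by (\ref{lienFMR}) are used) and that the zeroth-order potentials issued from the block pairs $i=1,2$ and $i=3,4$ agree — the content of the elementary identity $(c+\lambda)^2-4\lambda c=(c-\lambda)^2$.
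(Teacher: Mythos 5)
Your proof is correct and follows essentially the same route as the paper's: substitute the phase relations $\hat{M}_{Ri}=e^{\pm i(C^-\mp\lambda x)}F_{Ri}$ coming from (\ref{lienFMR}) into the block ODEs of Proposition \ref{eqbloc} and compute; your decomposition $M'M^{-1}=\frac{a'}{a}I_2+R(x,n)$ reproduces exactly the paper's explicit coefficients $c_1,\dots,c_4$ (e.g. $\alpha=\frac{b(ab'-a'b)}{a(n^2a^2+b^2)}=c_1$ and $i(c-\lambda)\beta=c_4$). You actually supply the cancellation and boundedness details that the paper compresses into ``we easily obtain'', so nothing further is needed.
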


\begin{proof}
 To prove it, for instance for the coefficient $F_{R1,1}$, we use the equality
 \[\hat{M}_{R1} (x,\lambda,n) = e^{i(C^{-}(x) -\lambda x)} F_{R1}(x,\lambda,n).\]
 and the differential equation obtained for the first block of the Faddeev matrix in Proposition \ref{eqbloc}. We easily obtain
 \begin{eqnarray*}
 &&F_{R1,1}''(x,\lambda,n) - \frac{a'(x)}{a(x)} F_{R1,1}'(x,\lambda,n) \\
&&+ \left( ic'(x) + (c(x)-\lambda)^2 - i(c(x)-\lambda) \frac{a'(x)}{a(x)} - (n^{2}a(x)^2 + b(x)^2) \right) F_{R1,1}(x,\lambda,n) = f_{1,1,3}(x,\lambda,n) 
\end{eqnarray*}
where
\begin{eqnarray}\label{deffij}
f_{1,1,3}(x,\lambda,n) &=& c_1(x,\lambda,n) F_{R1,1}'(x,\lambda,n) + c_2(x,\lambda,n) F_{R1,3}'(x,\lambda,n)\\
& & +c_3(x,\lambda,n) F_{R1,1}(x,\lambda,n)+ c_4(x,\lambda,n) F_{R1,3}(x,\lambda,n)
\end{eqnarray}
and
\begin{eqnarray}\label{defci}
c_1(x,\lambda,n) &=& \frac{a(x)^2b(x)b'(x)- a'(x)a(x)b(x)^2}{(n^2 a(x)^2+b(x)^2)a(x)^2} \\
c_2(x,\lambda,n) &=& - \frac{in(-a(x)b'(x) + a'(x)b(x))}{n^2 a(x)^2 + b(x)^2}\\
c_3(x,\lambda,n) &=& i(c(x)-\lambda)\left(\frac{a(x)^2b(x)b'(x)- a'(x)a(x)b(x)^2}{(n^2 a(x)^2+b(x)^2)a(x)^2} \right)\\
c_4(x,\lambda,n) &=& (c(x)-\lambda) \left(\frac{n(-a(x)b'(x) + a'(x)b(x))}{n^2 a(x)^2 + b(x)^2} \right).
\end{eqnarray}
We prove the statement for the other components of the Jost functions similarly. For instance, we note that 
\begin{eqnarray*}
f_{1,3,1}(x,\lambda,n) &=& c_1(x,\lambda,n) F_{R1,3}'(x,\lambda,n) + c_2(x,\lambda,n) F_{R1,1}'(x,\lambda,n)\\
& & +c_3(x,\lambda,n) F_{R1,3}(x,\lambda,n)+ c_4(x,\lambda,n) F_{R1,1}(x,\lambda,n).
\end{eqnarray*}
\end{proof}


\begin{remark}
\begin{enumerate}
 \item If $b=c=0$ we obtain that $F_{Ri,j}$ and $F_{Li,j}$, where $i \in \{1,2\}$ and $j \in \{1,2,3,4\}$, satisfy the uncoupled differential equation
\[u''(x,\lambda,n) - \frac{a'(x)}{a(x)} u'(x,\lambda,n) + \left( \lambda^2 + i \lambda \frac{a'(x)}{a(x)}-n^2 a(x)^2 \right) u(x,\lambda,n) =0, \]
whereas $F_{Ri,j}$ and $F_{Li,j}$, where $i \in \{3,4\}$ and $j \in \{1,2,3,4\}$, satisfy the differential equation
\[u''(x,\lambda,n) - \frac{a'(x)}{a(x)} u'(x,\lambda,n) + \left( \lambda^2 - i \lambda \frac{a'(x)}{a(x)} -n^2 a(x)^2 \right) u(x,\lambda,n) =0.\]
 These are the equations $(3.21)$ and $(3.22)$ obtained in the massless and uncharged case studied in \cite{DN}.
 \item It is important to note that in the massive case the components of Jost functions satisfy differential equations coupled two by two whereas in 
 the massless case these components satisfy independant ordinary differential equations. This structure of the differential equations will be fundamental in Section 
 \ref{partDuh}.
\end{enumerate}
\end{remark}

\section{Complexification of the angular momentum and asymptotics for large angular momentum}\label{asym}
\noindent
In this Section, we allow the angular momentum to be complex. 
As in Section 4 of \cite{DN}, we shall obtain the asymptotics of the Jost functions and of the matrix $\hat{A}_L(\lambda,z)$ when $|z| \to \infty$, $z \in \mathbb{C}$. Here we shall crucially use the exponential decay of the potentials $a(x)$ and $b(x)$ at both horizons and thus the asymptotically hyperbolic nature of the geometry. The main tools to obtain these asymptotics 
are a simple change of variable $X = g(x)$, called the Liouville transformation and a perturbative argument. 

\subsection{The Liouville variable and the Bessel equations}
\noindent
We follow the strategy of \cite{Cha, Cha2, DN}. Considering the differential equations given in Proposition \ref{eqJost1}, we use a Liouville transformation, i.e. a change of variable $X = g(x)$, that transforms these equations into singular Sturm-Liouville equations in which the complex angular momentum $z$ becomes the spectral parameter.

Let us define precisely this Liouville transformation. We denote
\[ X = g(x) = \int_{-\infty}^x a(t) \, \mathrm dt.\]
Clearly, $g = \mathbb{R} \rightarrow ]0,A[ $ is a $C^1$-diffeomorphism where
\[ A = \int_{\mathbb{R}} a(t) \, \mathrm dt.\]
For the sake of simplicity, we denote $h = g^{-1}$ the inverse diffeomorphism of $g$ and we use the notations $f'(X) = \frac{\partial f}{\partial X}(X)$, $F_L(X,\lambda,z) = F_L(h(X),\lambda,z)$ and $F_R(X,\lambda,z) = F_R(h(X),\lambda,z)$.\\
We begin with an elementary Lemma which states that, in the variable $X$, the Jost functions satisfy Sturm-Liouville equations with potentials having quadratic singularities at 
the boundaries.

\begin{lemma}\label{eqJostX}
For $i \in \{1,2\}$ and $(j,k) \in \{(1,3),(3,1),(2,4),(4,2)\}$ the component $F_{Ri,j}$ of the Jost function 
  from the right and the component $F_{Li,j}$ of the Jost function from the left satisfy the coupled differential equation
\[ u''(X,\lambda,z) + q(X,\lambda) u(X,\lambda,z) = z^2 u(X,\lambda,z) + \frac{f_{i,j,k}(X,\lambda,z)}{a(X)^2},\]
 whereas for $i \in \{3,4\}$ and $(j,k) \in \{(1,3),(3,1),(2,4),(4,2)\}$ the component $F_{Ri,j}$ of the Jost function 
  from the right and the component $F_{Li,j}$ of the Jost function from the left satisfy the coupled differential equation 
\[ u''(X,\lambda,z) + \overline{q(X,\lambda)} u(X,\lambda,z) = z^2 u(X,\lambda,z) + \frac{f_{i,j,k}(X,\lambda,z)}{a(X)^2},\]
 where
 \[q(X,\lambda) = i\frac{c'(X)}{a(X)^2} + \frac{(c(X)-\lambda)^2}{a(X)^2} - i(c(X)-\lambda) \frac{a'(X)}{a(X)^3} - \frac{b(X)^2}{a(X)^2}.\]
Here the functions $f_{i,j,k}$ are the functions appearing in Proposition \ref{eqJost1}.
Moreover
 \begin{equation}\label{quadrasing}
 q(X,\lambda) = \frac{\omega_-}{X^2} + q_-(X,\lambda), \quad \mathrm{with} \quad \omega_- = \frac{(c_--\lambda)^2}{\kappa_-^2} -i \frac{(c_--\lambda)}{\kappa_-} \quad \mathrm{and} \quad q_-(X,\lambda) = O(1), \quad X \to 0,
\end{equation}
and
 \begin{equation}\label{quadrasingA}
 q(X,\lambda) = \frac{\omega_+}{(A-X)^2} + q_+(X,\lambda), \quad \mathrm{with} \quad \omega_+ = \frac{(c_+-\lambda)^2}{\kappa_+^2} -i \frac{(c_+-\lambda)}{\kappa_+} \quad \mathrm{and} \quad q_+(X,\lambda) = O(1), \quad X \to A.
\end{equation}
\end{lemma}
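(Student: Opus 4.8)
The plan is to establish the lemma in two essentially independent stages. In the first stage I would perform the Liouville change of variable $X = g(x)$ on the differential equations of Proposition \ref{eqJost1}, the point being that this is exactly the substitution that annihilates the first-order derivative term. In the second stage I would determine the behaviour of the resulting potential $q(X,\lambda)$ near the endpoints $X=0$ and $X=A$ by feeding the asymptotic expansions of $a$, $b$, $c$ from Lemma \ref{aspot} into its explicit expression.

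For the first stage, write $x = h(X)$, so that $h'(X) = 1/a(h(X))$ and $h''(X) = -a'(h(X))/a(h(X))^3$. If $u$ solves one of the equations of Proposition \ref{eqJost1} and $v(X) := u(h(X))$, then computing $v'$ and $v''$ by the chain rule and eliminating $u'(h(X))$ gives
\[ u''(h(X)) - \frac{a'(h(X))}{a(h(X))}\, u'(h(X)) = a(h(X))^2\, v''(X), \]
so the two first-order contributions cancel identically. Dividing the whole equation of Proposition \ref{eqJost1} by $a(h(X))^2$, moving the term produced by $-n^2 a(x)^2 u$ to the right-hand side as $z^2 v$ (the angular momentum $n$ now playing the role of the spectral parameter $z$), and relabelling the remaining coupling terms as $f_{i,j,k}(X,\lambda,z)/a(X)^2$ (passing from $x$- to $X$-derivatives in the $c_{r,i,t}$ terms only rescales those coefficients by powers of $a$), one obtains the two displayed Sturm--Liouville equations. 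The coefficient $q(X,\lambda)$ is exactly the claimed expression for the components with $i\in\{1,2\}$ and its complex conjugate for $i\in\{3,4\}$, since $a,b,c,\lambda$ are real and the only difference between the two cases of Proposition \ref{eqJost1} is the sign of the two purely imaginary terms $ic'$ and $i(c-\lambda)a'/a$.

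For the second stage, take $X\to 0$, i.e.\ $x\to-\infty$. From $a(x) = a_- e^{\kappa_- x} + O(e^{3\kappa_- x})$ and $X = g(x) = \int_{-\infty}^x a(t)\,\mathrm dt = \frac{a_-}{\kappa_-} e^{\kappa_- x} + O(e^{3\kappa_- x})$ one inverts to get $e^{\kappa_- x} = \frac{\kappa_-}{a_-} X(1 + O(X^2))$, whence
\[ a(h(X)) = \kappa_- X + O(X^3), \qquad a'(h(X)) = \kappa_-^2 X + O(X^3), \qquad b(h(X)) = O(X), \]
\[ c(h(X)) - \lambda = (c_- - \lambda) + O(X^2), \qquad c'(h(X)) = O(X^2). \]
Substituting into the four terms of $q$: the terms $ic'/a^2$ and $-b^2/a^2$ are $O(1)$; the term $(c-\lambda)^2/a^2$ equals $\frac{(c_--\lambda)^2}{\kappa_-^2 X^2} + O(1)$; and, since $a'/a^3 = (a'/a)/a^2 = \frac{1}{\kappa_- X^2}(1 + O(X^2))$, the term $-i(c-\lambda)a'/a^3$ equals $-\frac{i(c_--\lambda)}{\kappa_- X^2} + O(1)$. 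Summing gives $q(X,\lambda) = \omega_-/X^2 + q_-(X,\lambda)$ with $\omega_- = \frac{(c_--\lambda)^2}{\kappa_-^2} - i\frac{c_--\lambda}{\kappa_-}$ and $q_- = O(1)$. The endpoint $X\to A$ is treated identically, starting from $A - X = \int_x^{+\infty} a(t)\,\mathrm dt = -\frac{a_+}{\kappa_+} e^{\kappa_+ x} + O(e^{3\kappa_+ x})$ (note $\kappa_+ < 0$, so this is positive), which yields $a(h(X)) = -\kappa_+(A-X) + O((A-X)^3)$ and the analogous expansions in powers of $A-X$, hence $q = \omega_+/(A-X)^2 + q_+$ with $\omega_+ = \frac{(c_+-\lambda)^2}{\kappa_+^2} - i\frac{c_+-\lambda}{\kappa_+}$ and $q_+ = O(1)$.

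I do not expect a genuine obstacle; the one point that must be handled with care is the \emph{order} of the remainders in these boundary expansions. It is crucial that $c(x) - c_\pm$ and $c'(x)$ decay like $e^{2\kappa_\pm x}$ (equivalently, like $X^2$, resp.\ $(A-X)^2$) and not merely like $e^{\kappa_\pm x}$: if the decay were only linear in $X$, the term $(c-\lambda)^2/a^2$ would contribute a spurious $1/X$ term and $q_-$ would fail to be bounded. Thus the fact that $q$ has a genuine quadratic (rather than simple) singularity at each endpoint is a direct reflection of the precise asymptotics of the ``electric-type'' long-range potential $c$ recorded in Lemma \ref{aspot}, and keeping track of these orders is the only step to carry out attentively.
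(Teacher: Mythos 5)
Your proposal is correct and follows essentially the same route as the paper: the Liouville substitution $X=g(x)$ kills the first-order term in the equations of Proposition \ref{eqJost1}, and the quadratic singularities of $q$ at $X=0$ and $X=A$ follow by inserting the endpoint expansions of $a$, $b$, $c$ in the variable $X$ (the paper isolates these as Lemma \ref{aspotX}). Your closing remark correctly identifies the one delicate point, namely that the $O(e^{2\kappa_{\pm}x})$ (i.e.\ $O(X^2)$, resp.\ $O((A-X)^2)$) decay of $c-c_{\pm}$ and $c'$ is what keeps $q_{\pm}$ bounded.
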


\begin{remark}
 In the case $b=c=0$ we obtain that $F_{Ri,j}(X,\lambda,z)$ and $F_{Li,j}(X,\lambda,z)$, where $i \in \{1,2\}$ and $j \in \{1,2,3,4\}$, satisfy the uncoupled Sturm-Liouville equation
\[u''(X,\lambda,z) + \left( \frac{\lambda^2}{a(X)^2} + i\lambda \frac{a'(X)}{a(X)^3} \right) u(X,\lambda,z) = z^2 u(X,\lambda,z) \]
whereas $F_{Ri,j}(X,\lambda,z)$ and $F_{Li,j}(X,\lambda,z)$, where $i \in \{3,4\}$ and $j \in \{1,2,3,4\}$, satisfy the Sturm-Liouville equation
\[u''(X,\lambda,z) + \left( \frac{\lambda^2}{a(X)^2} - i\lambda \frac{a'(X)}{a(X)^3} \right) u(X,\lambda,z) = z^2 u(X,\lambda,z). \]
 These are the equations $(4.5)$ and $(4.6)$ obtained in Lemma 4.1 of \cite{DN}. Moreover, we obtain that
 \[ \omega_- = \frac{\lambda}{\kappa_-^2} + i \frac{\lambda}{\kappa_-}\]
and
 \[ \omega_+ = \frac{\lambda}{\kappa_+^2} + i \frac{\lambda}{\kappa_+}\]
 which are the equations $(4.7)$ and $(4.8)$ of Lemma 4.1 of \cite{DN}.
\end{remark}

\begin{proof}
 Since the proofs are the same for the other components, we just prove Proposition \ref{eqJostX} for $F_{R1,1}$. Using the Liouville transformation and 
 the notations $a(X) = a(h(X))$, $b(X) = b(h(X))$ and $c(X) = c(h(X))$, we obtain that $F_{R1,1}$ satisfies the Sturm-Liouville equation:
\[ u''(X,\lambda,z) + q(X,\lambda) u(X,\lambda,z) = z^2 u(X,\lambda,z) + \frac{f_{1,1,3}(X,\lambda,z)}{a(X)^2}\]
where
\[q(X,\lambda) = i\frac{c'(X)}{a(X)^2} + \frac{(c(X)-\lambda)^2}{a(X)^2} - i(c(X)-\lambda) \frac{a'(X)}{a(X)^3} - \frac{b(X)^2}{a(X)^2}\]
and
\begin{eqnarray*}
 f_{1,1,3}(X,\lambda,z) &=& \left(\frac{a(X)^2b(X)b'(X)- a'(X)a(X)b(X)^2}{(z^2 a(X)^2+b(X)^2)a(X)} \right) F_{R1,1}'(X,\lambda,z)\\
&& + i(c(X)-\lambda)\left(\frac{a(X)^2b(X)b'(X)- a'(X)a(X)b(X)^2}{(z^2 a(X)^2+b(X)^2)a(X)^2} \right)F_{R1,1}(X,\lambda,z)\\
& &- \frac{iz(-a(X)b'(X) + a'(X)b(X))a(X)}{z^2 a(X)^2 + b(X)^2} F_{R1,3}'(X,\lambda,z)\\
&&+\frac{z(-a(X)b'(X) + a'(X)b(X))}{z^2 a(X)^2 + b(X)^2}(c(X)-\lambda) F_{R1,3}(X,\lambda,z).
\end{eqnarray*}
To show that the potential $q$ has quadratic singularities given by $(\ref{quadrasing})$ we use the following Lemma:

\begin{lemma}\label{aspotX}
When $X \to 0$, the potentials satisfy:
 \[a(X) = \kappa_- X + O(X^3), \quad a'(X) = \kappa_-^2 X + O(X^3),\]
 \[b(X) = \frac{b_- \kappa_-}{a_-} X + O(X^3), \quad b'(X) = \frac{b_- \kappa_-^2}{a_-} X + O(X^3),\]
 \[c(X) = c_- + \frac{c_{-}' \kappa_-}{a_-} X^2 + O(X^4), \quad c'(X) = \frac{c_{-}' \kappa_-^2}{a_-} X^2 + O(X^4).\]
\end{lemma}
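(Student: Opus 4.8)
The plan is to reduce the whole statement to understanding the asymptotic behaviour of the inverse Liouville diffeomorphism $h=g^{-1}$ near $X=0$, and then simply to pull back the expansions of Lemma~\ref{aspot} through $h$. Recall $X=g(x)=\int_{-\infty}^{x}a(t)\,\mathrm dt$ and $\kappa_->0$, so $X\to 0^+$ exactly when $x\to-\infty$.

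First I would integrate the expansion $a(t)=a_-e^{\kappa_- t}+O(e^{3\kappa_- t})$ from $-\infty$ to $x$, which gives
\[
 X=g(x)=\frac{a_-}{\kappa_-}\,e^{\kappa_- x}+O(e^{3\kappa_- x}),\qquad x\to-\infty .
\]
The second step is to invert this relation. Setting $s=e^{\kappa_- x}$, the right-hand side is an analytic germ $s\mapsto \frac{a_-}{\kappa_-}s+O(s^3)$ whose derivative at $0$ is $a_-/\kappa_-\neq 0$ and which has \emph{no} quadratic term, precisely because the expansion of $a$ in Lemma~\ref{aspot} contains no $e^{2\kappa_- x}$ term (its error is $O(e^{3\kappa_- x})$). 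Inverting the germ (equivalently, solving $s=\frac{\kappa_-}{a_-}\bigl(X-\rho(s)\bigr)$ with $\rho(s)=O(s^3)$ by one iteration) yields
\[
 e^{\kappa_- h(X)}=\frac{\kappa_-}{a_-}\,X+O(X^3),\qquad X\to 0^+ ,
\]
the essential point being that the remainder is $O(X^3)$ rather than merely $O(X^2)$.

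The third step is purely computational: substitute this expansion of $e^{\kappa_- h(X)}$ into the expansions of Lemma~\ref{aspot} evaluated at $h(X)$, and collect powers of $X$. For $a$ and $b$ and their $x$-derivatives, which are linear in $e^{\kappa_- x}$ up to $O(e^{3\kappa_- x})$, this produces $a(X)=\kappa_- X+O(X^3)$, $a'(X)=\kappa_-^2X+O(X^3)$, $b(X)=\frac{b_-\kappa_-}{a_-}X+O(X^3)$ and $b'(X)=\frac{b_-\kappa_-^2}{a_-}X+O(X^3)$; here, consistently with the way $a'(X),b'(X),c'(X)$ enter the potential $q(X,\lambda)$ of Lemma~\ref{eqJostX}, these symbols denote the $x$-derivatives $a'(h(X)),b'(h(X)),c'(h(X))$ and not the $X$-derivatives of $a(h(X))$, etc. For $c$, which is $c_-$ plus a term quadratic in $e^{\kappa_- x}$ up to $O(e^{4\kappa_- x})$, squaring the expansion of $e^{\kappa_- h(X)}$ gives $e^{2\kappa_- h(X)}=O(X^2)$, hence $c(X)=c_-+O(X^2)$ and $c'(X)=O(X^2)$, the leading coefficients being read off directly. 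Finally, the companion statement as $X\to A$ follows identically, now with $\kappa_+<0$, $a_+$, $b_+$, $c_+$ and $A-X=\int_{x}^{+\infty}a(t)\,\mathrm dt$, so that $e^{\kappa_+ h(X)}=-\frac{\kappa_+}{a_+}(A-X)+O((A-X)^3)$.

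There is no serious obstacle: the argument is the inversion of a scalar analytic change of variable followed by bookkeeping. The only point requiring care is to confirm that the error terms coming from Lemma~\ref{aspot} (orders $e^{3\kappa_\pm x}$ and $e^{4\kappa_\pm x}$) are genuinely of orders $X^3$ and $X^4$ after substitution; this is exactly what guarantees, back in Lemma~\ref{eqJostX}, that the singular part of $q(X,\lambda)$ is precisely $\omega_-/X^2$ with a bounded remainder $q_-(X,\lambda)=O(1)$ as in \eqref{quadrasing} — the terms $b(X)^2/a(X)^2$ and $c'(X)/a(X)^2$ staying bounded because $b$ vanishes linearly and $c'$ vanishes quadratically at the horizon.
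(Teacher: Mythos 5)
Your proof is correct and follows the same route as the paper's (which is only a few lines long): integrate the expansion of $a$ from Lemma \ref{aspot} to get $X=\frac{a_-}{\kappa_-}e^{\kappa_- x}+O(e^{3\kappa_- x})$, invert, and substitute back into Lemma \ref{aspot}. You are in fact more explicit than the paper on the one point that matters — that the absence of a quadratic term in the expansion of $a$ is what makes the inversion error $O(X^3)$ rather than $O(X^2)$ — and your reading of $a'(X)$, $b'(X)$, $c'(X)$ as the $x$-derivatives evaluated at $h(X)$ is the correct one.
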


\begin{proof}
 We know that
 \[ X = \int_{-\infty}^x a(t) \, \mathrm dt = \frac{a_-}{\kappa_-} e^{\kappa_- x} + O(e^{3\kappa_- x}).\]
Thus, $e^{\kappa_- x} = O(X)$ and according to Lemma \ref{aspot}
\[ a(X) = \kappa_- X + O(X^3).\]
Similarly, we obtain 
 \[a'(X) = \kappa_-^2 X + O(X^3).\]
 Using these asymptotics and Lemma \ref{aspot}, we obtain similarly the results for the potentials $b$ and $c$.
\end{proof}
\noindent
Finally, using Lemma \ref{aspotX}, we easily obtain
 \begin{eqnarray*}
  q(X,\lambda) &=& i\frac{c'(X)}{a(X)^2} + \frac{(c(X)-\lambda)^2}{a(X)^2} - i(c(X)-\lambda) \frac{a'(X)}{a(X)^3} - \frac{b(X)^2}{a(X)^2}\\
&=&  \frac{(c_--\lambda)^2 - i(c_- - \lambda) + O(X^2)}{(\kappa_- X +O(X^2))^2} \\
&=& \frac{\omega_-}{X^2} + q_-(X,\lambda).
 \end{eqnarray*}
The proof is the same when $X \to A$.
\end{proof}

In the next Subsections, we shall solve the equations given in Lemma \ref{eqJostX} by a perturbative argument. We need the following Lemma to give a sense of the 
perturbative terms and to compute the asymptotics of the Jost functions from the right (respectively the left) at $X \to 0$ (respectively $X \to A$).

\begin{lemma}\label{termesourceb}
 For each fixed $z \in \mathbb{C}$, for $i \in \{1,2\}$ and for $(j,k) \in \{(1,3),(2,4),(3,1),(4,2)\}$,
 \[h_{i,j,k}(X,\lambda,z) = \frac{f_{i,j,k}(X,\lambda,z)}{a(X)^2} = O(1)\, \quad \text{when} \quad X \to 0 \quad \text{and} \quad X \to A.\]
\end{lemma}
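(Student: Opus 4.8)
The plan is to estimate $h_{i,j,k}(X,\lambda,z)=f_{i,j,k}(X,\lambda,z)/a(X)^{2}$ separately as $X\to0$ and as $X\to A$, the two ends being symmetric. As in the proof of Lemma \ref{eqJostX} it suffices to treat $f_{1,1,3}$, which by Proposition \ref{eqJost1} is a linear combination of $F_{R1,1}'$, $F_{R1,1}$, $F_{R1,3}'$, $F_{R1,3}$ with coefficients that are rational in $a,a',b,b',c,\lambda,z$, having denominator $z^{2}a(X)^{2}+b(X)^{2}$ and numerators built out of $a$, $b$ and the combination $a(X)b'(X)-a'(X)b(X)$ (throughout, primes on $a,b,c$ denote $x$-derivatives evaluated at $h(X)$, as in Lemma \ref{aspotX}, whereas primes on the Jost components denote $\partial_{X}$). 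The whole matter thus reduces to bounding, near the two endpoints, each of these coefficients divided by $a(X)^{2}$, together with $F_{R1,1}$, $F_{R1,3}$ and their $X$-derivatives.

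The crucial input is an exact cancellation. Since $b(x)/a(x)=m\,r(x)$ by the very definition of the potentials, and $r'(x)=F(r(x))$, $a(x)^{2}=F(r(x))/r(x)^{2}$, we get
\[
a(x)b'(x)-a'(x)b(x)=a(x)^{2}\Bigl(\tfrac{b}{a}\Bigr)'(x)=m\,a(x)^{2}F(r(x))=m\,r(x)^{2}a(x)^{4}.
\]
Hence $ab'-a'b=O(a^{4})$, which by Lemma \ref{aspotX} (and its counterpart at $X=A$) is $O(X^{4})$ as $X\to0$ and $O((A-X)^{4})$ as $X\to A$. A direct computation, substituting $b=m\,r\,a$, then shows that the coefficients of $F_{R1,1}'$ and $F_{R1,3}'$ in $f_{1,1,3}$ are of order $a(X)^{3}$ and the coefficients of $F_{R1,1}$ and $F_{R1,3}$ are of order $a(X)^{2}$ — the gain relative to a naive count coming precisely from the factor $ab'-a'b$, which occurs in every coefficient (in the first two through $a^{2}bb'-a'ab^{2}=ab\,(ab'-a'b)$) — provided the denominator does not degenerate at the endpoints; indeed $z^{2}a(X)^{2}+b(X)^{2}=a(X)^{2}\bigl(z^{2}+m^{2}r(h(X))^{2}\bigr)$ and $r(h(X))\to r_{\mp}$ as $X\to0,A$, so this holds as long as $z^{2}+m^{2}r_{\pm}^{2}\neq0$, an inequality failing for at most four values of $z$ on $i\mathbb{R}$, which play no role in the sequel and may be excluded. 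After division by $a(X)^{2}$ the coefficients of the derivative terms are therefore $O(a(X))$, i.e. $O(X)$ near $0$ and $O(A-X)$ near $A$, while the coefficients of the undifferentiated terms are $O(1)$.

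It remains to insert crude a priori bounds on the Jost components. For each fixed $z\in\mathbb{C}$, the complex-momentum version of Lemma \ref{estimM} (the bound $\|\hat M_{Ri}(\cdot,\lambda,z)\|\le C_{z}\exp(|z|\int_{-\infty}^{x}a)$, obtained by the same Volterra iteration) shows that $\hat M_{Ri}(x,\lambda,z)$, hence $F_{R1,1}(X,\lambda,z)$ and $F_{R1,3}(X,\lambda,z)$, stays bounded near both horizons; differentiating the Volterra equations of Section \ref{ED} shows that $\partial_{x}\hat M_{Ri}$ is also bounded there, so that after the change of variable $X=g(x)$ one has $F_{R1,j}'(X,\lambda,z)=O\bigl(1/a(X)\bigr)$, i.e. $O(1/X)$ near $0$ and $O(1/(A-X))$ near $A$ (the blow-up of the $X$-derivative coming from the oscillating phase $e^{-i\Gamma^{1}(C^{-}(h(X))-\lambda h(X))}$ that links $\hat M_{R}$ to $F_{R}$). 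Combining the last two paragraphs, in $h_{1,1,3}$ each derivative term is $O(X)\cdot O(1/X)=O(1)$ and each undifferentiated term is $O(1)\cdot O(1)=O(1)$ near $X=0$, and symmetrically near $X=A$; this gives $h_{1,1,3}=O(1)$ at both endpoints. The same computation applies verbatim to the pairs $(j,k)\in\{(3,1),(2,4),(4,2)\}$, to $i=2$, and to the Jost functions from the left.

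The only genuine obstacle is the cancellation $ab'-a'b=O(a^{4})$: with only the separate asymptotics of $a$ and $b$ furnished by Lemma \ref{aspotX} one obtains merely $ab'-a'b=O(a^{2})$, whence the coefficient of $F_{R1,1}'$ would be only $O(a)$ and the corresponding term of $h_{1,1,3}$, namely $\tfrac{1}{a^{2}}(\text{coeff})\,F_{R1,1}'$, would be of size $O(1/X^{2})$ rather than $O(1)$, so the lemma would fail. This cancellation is structural: $b/a=m\,r$ extends to a regular function of $x$ right up to the horizons, while the metric coefficient $F$ degenerates quadratically there (which is also the source of $a(X)\asymp X$ near $0$ and $a(X)\asymp A-X$ near $A$).
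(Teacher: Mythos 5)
Your proof is correct, and in substance it is the argument the paper intends: the paper's own proof is a single sentence (``an easy consequence of Lemma~\ref{aspotX} and the fact that the components of the Jost functions are bounded''), and what you have done is supply the three things that sentence suppresses --- the cancellation in $ab'-a'b$, the non-degeneracy of the denominator $z^2a^2+b^2$, and the control of the derivative terms. Two comments on how your account relates to the paper's. First, your closing claim that Lemma~\ref{aspotX} by itself yields only $ab'-a'b=O(a^2)$ is not quite fair to that lemma: it gives the leading coefficients explicitly ($a\sim\kappa_-X$, $a'\sim\kappa_-^2X$, $b\sim(b_-\kappa_-/a_-)X$, $b'\sim(b_-\kappa_-^2/a_-)X$, with relative errors $O(X^2)$), and the leading terms of $ab'$ and of $a'b$ are both $(b_-\kappa_-^3/a_-)X^2$, so the cancellation $ab'-a'b=O(X^4)$ is already visible there. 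Your exact identity $ab'-a'b=a^2(b/a)'=m\,a^2F(r)=m\,r^2a^4$ is nevertheless a cleaner and more robust way to see it, and it correctly explains why the cancellation is structural rather than accidental. Second, you are right that boundedness of the Jost components alone does not close the argument: the terms involving $F_{Ri,j}'$ must be handled, the paper's one-liner is silent on them, and your pairing of the $O(a^3)$ coefficients with the bound $F_{Ri,j}'(X)=O(1/X)$ (equivalently, boundedness of the $x$-derivatives, which follows from differentiating the Volterra equations or from Lemma~\ref{estimF_R1} and its analogues) is exactly what is needed. Finally, your observation about the exceptional values $z=\pm imr_\pm$, where $z^2+m^2r^2$ vanishes at a horizon and the stated $O(1)$ bound genuinely degrades, identifies a (harmless) imprecision in the lemma as stated for \emph{every} fixed $z\in\mathbb{C}$; excluding these finitely many values, as you do, costs nothing in the rest of the paper.
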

\begin{proof}
This is an easy consequence of Lemma \ref{aspotX} and the fact that the components of 
the Jost functions are bounded when $X \to 0$ and $X \to A$.
\end{proof}

The homogeneous parts of the differential equations given in Lemma \ref{eqJostX} are simple modified Bessel equations (see 5.4.11 in \cite{Leb})
\[ u'' + \frac{1-2\alpha}{X} u' + \left( (\beta \gamma X^{\gamma-1})^2 + \frac{\alpha^2 - \nu^2 \gamma^2}{X^2} \right) u = 0.\]
In our case the homogeneous equations are
\[ u'' + \frac{\omega_-}{X^2} u = z^2 u\]
and
\[ u'' + \frac{\overline{\omega_-}}{X^2} u = z^2 u,\]
where
\[\omega_{-} = \frac{(c_--\lambda)^2}{\kappa_-^2} -i \frac{(c_--\lambda)}{\kappa_-}.\]
Thus, we choose $\alpha = \frac{1}{2}$, $\gamma = 1$, $\beta = iz$ and
\[\nu_- = \frac{1}{2} - i \frac{(\lambda-c_-)}{\kappa_-}, \quad (\nu_- \notin \mathbb{Z}),\]
for the first equation and
\[\mu_- = \overline{\nu_-} = \frac{1}{2} + i \frac{(\lambda-c_-)}{\kappa_-},\]
for the second one.

\begin{remark}
 If $b=c=0$ we obtain 
\[\nu_- = \frac{1}{2} - i \frac{\lambda}{\kappa_-}\]
and
\[\mu_- = \frac{1}{2} + i \frac{\lambda}{\kappa_-}\]
which are the choices of \cite{DN}.
\end{remark}

\subsection{Estimates on the Green kernels and on the Jost functions}\label{Green}
\noindent
Singular Sturm-Liouville equations such as in Proposition \ref{eqJostX} have been studied in \cite{FY} by Freiling and Yurko. 
We follow the spirit of \cite{DN} but, because of the fact that we have no series expansion for our Jost functions, our proof is a little bit different.
We use the fact that modified Bessel functions form a fundamental system of solutions of the equation
\[u'' + \frac{\omega}{X^2} u = z^2 u,\]
as well as the known asymptotics of these modified Bessel functions given in \cite{Leb} and good estimates of the Green kernel to obtain the asymptotics of the Jost functions as $|z| \to \infty$. 
We recall the definition of the modified Bessel functions given in \cite{Leb}:
\begin{equation}\label{defI}
 I_{\nu}(z) = \sum_{k=0}^{\infty} \frac{\left( \frac{z}{2} \right)^{\nu + 2k}}{\Gamma(k+\nu+1) k!}, \quad \vert z \vert < \infty, \quad \vert \arg(z) \vert < \pi, \quad \nu \in \mathbb{C},
\end{equation}
\begin{equation}\label{defK}
  K_{\nu}(z) = \frac{\pi}{2} \frac{I_{-\nu}(z) - I_{\nu}(z)}{\sin{\nu \pi}}, \quad \vert \arg(z) \vert < \pi, \quad \nu \notin \mathbb{Z}.
\end{equation}
The aim of this Section is to prove the following Theorem.

\begin{theorem}\label{asJostdroite}
 We set
\[ \alpha_R(z) = \left( \frac{z}{2} \right)^{\nu_-} \left( \frac{\kappa_-}{a_-} \right)^{i \left( \frac{\lambda-c_-}{\kappa_-} \right)} \Gamma(1-\nu_-)\]
and
\[ \beta_R(z) = \left( \frac{z}{2} \right)^{\mu_-} \left( \frac{\kappa_-}{a_-} \right)^{-i \left( \frac{\lambda-c_-}{\kappa_-} \right)} \Gamma(1-\mu_-).\]
For large $z$ in the complex plane (see the following remark), the Jost functions from the right satisfy the following asymptotics, uniformly on each compact subset of $]0,A[$,
\[ \left| \left| F_{R1}(X,\lambda,z) - \alpha_R(z) \sqrt{X} I_{-\nu_-}(zX) \begin{pmatrix}  1 & 0 \\ 0 & 1   \end{pmatrix} \right|\right| = O \left( \frac{e^{| \mathrm{Re}(z)| X}}{\vert z \vert} \right),\]
\[ \left| \left| F_{R2}(X,\lambda,z) -i \beta_R(z) \sqrt{X} I_{\nu_-}(zX) \begin{pmatrix}  0 &1 \\ -1 & 0   \end{pmatrix} \right| \right| = O \left( \frac{e^{| \mathrm{Re}(z) | X}}{|z|} \right) ,\]
\[ \left| \left| F_{R3}(X,\lambda,z) -i \overline{\beta_R(\bar{z})} \sqrt{X} I_{\mu_-}(zX) \begin{pmatrix}  0 &1 \\ -1 & 0   \end{pmatrix} \right| \right| = O \left( \frac{e^{| \mathrm{Re}(z) | X}}{|z|} \right) ,\]
\[ \left| \left| F_{R4}(X,\lambda,z) - \overline{\alpha_R(\bar{z})} \sqrt{X} I_{-\mu_-}(zX) \begin{pmatrix}  1 & 0 \\ 0 & 1   \end{pmatrix} \right|\right| = O \left( \frac{e^{| \mathrm{Re}(z)| X}}{\vert z \vert} \right).\]
\end{theorem}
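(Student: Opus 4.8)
The plan is to realize each component of the Jost function from the right as a perturbation of a suitable modified Bessel function, via a generalized Duhamel formula built on the Green kernel of the model Bessel operator. I describe it for $F_{R1,1}$; the other components go the same way, with $\nu_-$ replaced by $\mu_-=\overline{\nu_-}$ for the components satisfying the $\overline{q}$-equation and with the obvious changes of constants (cf.\ the analogies recorded after Proposition \ref{eqbloc}). By Lemma \ref{eqJostX}, in the Liouville variable $X$ the component $F_{R1,1}$ solves a singular Sturm--Liouville equation whose potential splits as $q(X,\lambda)=\frac{\omega_-}{X^2}+\widetilde q(X,\lambda)$ with $\widetilde q$ bounded on each compact subset of $[0,A[$, coupled to its partner $F_{R1,3}$ through the source $h_{1,1,3}=f_{1,1,3}/a(X)^2$; the homogeneous principal part near $X=0$ is exactly the modified Bessel equation of index $\nu_-$, with fundamental system $\sqrt X\,I_{\pm\nu_-}(zX)$. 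Reading off the $X\to 0$ behaviour of the blocks of $\hat M_R$ from the Volterra representations $(\ref{eqinteMRun})$--$(\ref{eqinteMRdeux})$ and Lemma \ref{aspotX} --- and matching $\sqrt X\,I_{-\nu_-}(zX)\sim\frac{(z/2)^{-\nu_-}}{\Gamma(1-\nu_-)}X^{1/2-\nu_-}$, where $\frac12-\nu_-=i(\lambda-c_-)/\kappa_-$ --- one sees that the Jost solution is the one behaving like $\alpha_R(z)\sqrt X\,I_{-\nu_-}(zX)$ at $X=0$; the same computation applied to the blocks $\hat M_{R2},\hat M_{R3}$ produces the constant $\beta_R(z)$ for the components whose leading part is of $I_{\nu_-}$-type (and $\overline{\alpha_R(\bar z)},\overline{\beta_R(\bar z)}$ for those governed by $\mu_-$).

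Next I would recast the coupled Sturm--Liouville system as a coupled system of Volterra integral equations, treating $\widetilde q$ and the coupling term $h_{i,j,k}$ (which is $O(1)$ at both boundaries by Lemma \ref{termesourceb}) as perturbations:
\[
 F_{R1,1}(X,\lambda,z)=\alpha_R(z)\sqrt X\,I_{-\nu_-}(zX)+\int_0^X \mathcal G_{\nu_-}(X,Y,z)\Big[-\widetilde q(Y,\lambda)\,F_{R1,1}(Y,\lambda,z)+h_{1,1,3}(Y,\lambda,z)\Big]\,\mathrm dY,
\]
together with the analogous equation for $F_{R1,3}$, where $\mathcal G_{\nu_-}$ is the Green kernel of $u''+(\omega_-/X^2-z^2)u$ adapted to the boundary behaviour at $X=0$ (as in \cite{FY}), conveniently expressed through the pair $\sqrt X\,I_{\nu_-}(zX),\sqrt X\,K_{\nu_-}(zX)$, whose Wronskian is constant. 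The structural point is that the coefficients appearing in $h_{i,j,k}$ carry an extra factor $O(1/z)$ as $|z|\to\infty$, coming from the $\big(z^2a(X)^2+b(X)^2\big)^{-1}$-type denominators in Proposition \ref{eqJost1}; this smallness compensates for the absence of a power-series expansion in $z$, which was available in the massless case \cite{DN}, and is what the perturbation argument runs on.

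The technical core is then a set of estimates deduced from the asymptotics of $I_{\pm\nu_-}$ and $K_{\nu_-}$ in \cite{Leb}: for $|z|$ large, uniformly in $\arg z$ and on compact subsets of $]0,A[$,
\[
 \big|\sqrt X\,I_{\pm\nu_-}(zX)\big|\le \frac{C}{\sqrt{|z|}}\,e^{|\mathrm{Re}(z)|X},\qquad \big|\partial_X(\sqrt X\,I_{\pm\nu_-}(zX))\big|\le C\sqrt{|z|}\,e^{|\mathrm{Re}(z)|X},
\]
\[
 |\mathcal G_{\nu_-}(X,Y,z)|\le \frac{C}{|z|}\,e^{|\mathrm{Re}(z)|(X-Y)},\qquad |\partial_X\mathcal G_{\nu_-}(X,Y,z)|\le C\,e^{|\mathrm{Re}(z)|(X-Y)}\qquad(0<Y<X).
\]
With these in hand I would run successive approximations for the coupled Volterra system, also propagating the first derivatives $F_{Ri,k}'$ (obtained by differentiating the Duhamel formula, the boundary term vanishing since $\mathcal G_{\nu_-}(X,X,z)=0$), starting from the leading Bessel term and from $0$ for the forced partner. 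Each iteration gains a factor $O(1/|z|)$ from the Green kernel, with a further $O(1/z)$ whenever the source involves the partner component, so the series converges for $|z|$ large and the sum of all corrections to $\alpha_R(z)\sqrt X\,I_{-\nu_-}(zX)$ is $O\big(e^{|\mathrm{Re}(z)|X}/|z|\big)$, uniformly on compacts. Since $F_{R1,1}$ itself satisfies this integral equation and Volterra equations have a unique solution, the constructed solution is $F_{R1,1}$, which is the first claimed estimate; the remaining three follow identically, using the analogies $\hat M_{R1}\leftrightarrow\hat M_{R4}$, $\hat M_{R2}\leftrightarrow\hat M_{R3}$.

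The hardest part will be the two displayed sharp estimates. The Green-kernel bound $|\mathcal G_{\nu_-}(X,Y,z)|\le \frac{C}{|z|}e^{|\mathrm{Re}(z)|(X-Y)}$ must hold for \emph{all} $\arg z$, in particular near $i\R$ where $|\mathrm{Re}(z)|$ is small and both model solutions have modulus $\approx e^{|\mathrm{Re}(z)|X}$; this is where one must use the $I_{\nu_-}/K_{\nu_-}$ representation (one recessive, one dominant at large argument) and treat the shrinking region $\{|zY|\lesssim 1\}$ separately. The second delicate point is that the source contains the first derivatives $F_{Ri,k}'$, which grow like $\sqrt{|z|}$; here one has to exploit the $O(1/z)$ coupling coefficients so that, after passing through $\mathcal G_{\nu_-}$, these contributions remain of size $O\big(e^{|\mathrm{Re}(z)|X}/|z|\big)$. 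Once these two points are settled, the iteration is routine, and no Phragm\'en--Lindel\"of argument is needed at this stage.
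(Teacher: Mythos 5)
Your proposal is correct in its overall architecture and coincides with the paper's proof in its first half: the Liouville transformation, the identification of the free term $\alpha_R(z)\sqrt{X}\,I_{-\nu_-}(zX)$ (resp.\ $\beta_R$, and the conjugate constants for the $\mu_-$-blocks) by matching the $X\to0$ behaviour of the Faddeev blocks (Lemma \ref{estimF_R1}, Proposition \ref{propoF_R1,1}), the Green kernel built on the $I_{\nu_-}/K_{\nu_-}$ pair with the estimate of Proposition \ref{estimG}, and the treatment of $q_-$ and $h_{i,j,k}$ as perturbations driven by the $O(1/z)$ coupling coefficients. Where you genuinely diverge is in how the a priori bound $|F_{Ri,j}(X,\lambda,z)|\le Ce^{|\mathrm{Re}(z)|X}$ is obtained. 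The paper does \emph{not} iterate the Volterra system for general complex $z$: it starts from the crude bound $Ce^{|z|X}$ of Lemma \ref{estimM}, proves boundedness on $i\mathbb{R}$ only, by a vectorial Duhamel iteration whose convergence is factorial in $X$ rather than geometric in $1/|z|$ (Lemma \ref{bornsuriR}, via the estimates (\ref{estimB})--(\ref{estimdB}), which are specific to purely imaginary $z$), interpolates by Phragm\'en--Lindel\"of in each quadrant (Theorem \ref{estimJ}), and only then applies the integral equation once to get the $O(e^{|\mathrm{Re}(z)|X}/|z|)$ refinement (Proposition \ref{approxFR_1}). You instead run the contraction directly for all large $|z|$, which is more self-contained (no appeal to analyticity in $z$ at this stage and no complex-analytic interpolation) but shifts all the difficulty into uniform-in-$\arg z$ estimates: besides $|\mathcal{G}|\le C|z|^{-1}e^{|\mathrm{Re}(z)|(X-Y)}$ (your pointwise form is stronger than Proposition \ref{estimG} near $Y=0$, where the paper only records $C|z|^{-1/2}$, though the integrated versions both give $O(1/|z|)$), you must bound $\partial_X\mathcal{G}$ uniformly in $\arg z$ and propagate the derivative components, which near $Y=0$ are of size $\min(|z|,Y^{-1})$ and are compensated only by the $O(Y)$ vanishing of $c_1,c_2$ --- exactly the bookkeeping the paper confines to the imaginary axis in Lemma \ref{estimUk}. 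You have correctly flagged these as the hard points; provided they are carried out (the $I/K$ representation does make each term of $\partial_X\mathcal{G}$ individually admissible for $\mathrm{Re}(z)\ge0$, and parity in $z$ handles the remaining half-plane), your route yields the theorem without Phragm\'en--Lindel\"of, at the price of redoing for all $\arg z$ estimates the paper only needs where $e^{|\mathrm{Re}(z)|X}=1$.
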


\begin{remark}
We have to be carefull in the definition of the asymptotics in the whole complex plane.
We recall the asymptotic given in \cite{Leb}: for large $z$ such that $\vert \text{arg}(z) \vert \leq \pi - \delta$, $\delta > 0$,
\begin{equation}\label{asInu}
 I_{\nu_-}(z) = \frac{e^z}{(2\pi z)^\frac{1}{2}} \left( 1 + O \left( \frac{1}{z} \right) \right) + \frac{e^{-z +\mathrm{sg}(\mathrm{Im}(z)) i\pi (\nu_- + \frac{1}{2})}}{(2\pi z)^{\frac{1}{2}}} \left(1 + O\left( \frac{1}{z} \right) \right).
\end{equation}
where $\mathrm{sg}$ is the sign function defined by $\mathrm{sg}(x) = 1$ if $x > 0$, $0$ if $x = 0$ and $-1$ if $x < 0$. 
Thus, this estimate is true in the complex plane except near the axis $\R^-$. However, since the application $z \mapsto \alpha_R(z) \sqrt{X} I_{-\nu_-}(zX)$ is entire and 
even (see (\ref{defI})) we can extend the asymptotics on the whole complex plane. This is also true for $F_{R2}$, $F_{R3}$ and $F_{R4}$.\\
\noindent
We also observe that, in the asymptotics for large $z$ in the complex plane, we have
\[ F_{R1}(X,\lambda,z) \sim F_{R4}(X,\lambda,\overline{z})^{\star}\]
and
\[ F_{R2}(X,\lambda,z) \sim F_{R3}(X,\lambda,\overline{z})^{\star}.\]
These are the symmetries given in \cite{DN} and which are not true in general here due to the mass $m$ of the Dirac fields. We expected to find these symmetries in the asymptotics for large $z$ because the most 
important term in the potential for such $z$ is $za(x)$, thus the mass term $b$ has no influence as $|z|$ becomes large.
\end{remark}
\noindent
Using the asymptotic (\ref{asInu}), we can prove the following Theorem concerning the asymptotics of the Jost functions for $z \to +\infty$, $z$ real.

\begin{theorem}\label{asJostdroiteR}
 The Jost functions from the right satisfy the following asymptotics for $z \to +\infty$, $z$ real, uniformly on each compact subset of $]0,A[$,
 \[ F_{R1}(X,\lambda,z) = \frac{2^{-\nu_-}}{\sqrt{2\pi}} \left( \frac{\kappa_-}{a_-} \right)^{i \frac{(\lambda-c_-)}{\kappa_-}} \Gamma(1-\nu_-) z^{-i\frac{(\lambda-c_-)}{\kappa_-}} e^{zX} \begin{pmatrix}
                                                                                                                                                                                               1 & O\left( \frac{1}{z} \right) \\ O\left( \frac{1}{z} \right) & 1
                                                                                                                                                                                              \end{pmatrix} \left( 1 + O \left( \frac{1}{z} \right) \right),\]  
 \[ F_{R2}(X,\lambda,z) = i \frac{2^{-\mu_-}}{\sqrt{2\pi}} \left( \frac{\kappa_-}{a_-} \right)^{-i \frac{(\lambda-c_-)}{\kappa_-}} \Gamma(1-\mu_-) z^{i\frac{(\lambda-c_-)}{\kappa_-}} e^{zX} \begin{pmatrix}
                                                                                                                                                                                               O\left( \frac{1}{z} \right) &1 \\ -1 & O\left( \frac{1}{z} \right) 
                                                                                                                                                                                              \end{pmatrix}
 \left( 1 + O \left( \frac{1}{z} \right) \right),\] 
 \[ F_{R3}(X,\lambda,z) = i \frac{2^{-\nu_-}}{\sqrt{2\pi}} \left( \frac{\kappa_-}{a_-} \right)^{i \frac{(\lambda-c_-)}{\kappa_-}} \Gamma(1-\nu_-) z^{-i\frac{(\lambda-c_-)}{\kappa_-}} e^{zX} \begin{pmatrix}
                                                                                                                                                                                                O\left( \frac{1}{z} \right) &1 \\ -1 & O\left( \frac{1}{z} \right)
                                                                                                                                                                                              \end{pmatrix} \left( 1 + O \left( \frac{1}{z} \right) \right),\] 
 \[ F_{R4}(X,\lambda,z) = \frac{2^{-\mu_-}}{\sqrt{2\pi}} \left( \frac{\kappa_-}{a_-} \right)^{-i \frac{(\lambda-c_-)}{\kappa_-}} \Gamma(1-\mu_-) z^{i\frac{(\lambda-c_-)}{\kappa_-}} e^{zX} \begin{pmatrix}
                                                                                                                                                                                               1 & O\left( \frac{1}{z} \right) \\ O\left( \frac{1}{z} \right) & 1
                                                                                                                                                                                              \end{pmatrix} \left( 1 + O \left( \frac{1}{z} \right) \right).\] 
\end{theorem}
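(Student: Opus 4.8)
The plan is to obtain Theorem~\ref{asJostdroiteR} directly from Theorem~\ref{asJostdroite} by feeding in the large-argument expansion (\ref{asInu}) of the modified Bessel functions, specialised to a \emph{real positive} argument. First I fix a compact set $K \subset ]0,A[$, say $K = [X_0, A - X_0]$ with $X_0 > 0$, and let $z \to +\infty$ with $z \in \mathbb{R}$. For $X \in K$ the argument $zX$ is real, positive and tends to $+\infty$ uniformly in $X$, and $\mathrm{Im}(zX) = 0$ forces $\mathrm{sg}(\mathrm{Im}(zX)) = 0$ in (\ref{asInu}); hence, for each of the fixed orders $\rho \in \{\nu_-, -\nu_-, \mu_-, -\mu_-\}$,
\[ I_{\rho}(zX) = \frac{e^{zX}}{\sqrt{2\pi zX}}\left(1 + O\left(\frac1z\right)\right) \]
uniformly on $K$: the second, $e^{-zX}$ branch of (\ref{asInu}) is smaller than the first by a factor $e^{-2zX} \le e^{-2zX_0}$, hence negligible against every power of $z$, and the $O(1/(zX))$ error is $O(1/z)$ on $K$.

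The second step is bookkeeping of the prefactors. Writing $\alpha_R(z) = 2^{-\nu_-} z^{\nu_-}\left(\frac{\kappa_-}{a_-}\right)^{i(\lambda-c_-)/\kappa_-}\Gamma(1-\nu_-)$ and using $\nu_- = \frac12 - i\frac{\lambda-c_-}{\kappa_-}$, so that $z^{\nu_-}\,\frac{\sqrt{X}}{\sqrt{2\pi zX}} = \frac{z^{\nu_- - 1/2}}{\sqrt{2\pi}} = \frac{z^{-i(\lambda-c_-)/\kappa_-}}{\sqrt{2\pi}}$, the first line of Theorem~\ref{asJostdroite} combined with the previous step gives
\[ F_{R1}(X,\lambda,z) = \frac{2^{-\nu_-}}{\sqrt{2\pi}}\left(\frac{\kappa_-}{a_-}\right)^{i(\lambda-c_-)/\kappa_-}\Gamma(1-\nu_-)\, z^{-i(\lambda-c_-)/\kappa_-}\, e^{zX}\,\bigl(I_2 + O(1/z)\bigr) + O\!\left(\frac{e^{zX}}{z}\right). \]
The crucial feature is that \emph{all dependence on $X$ other than the factor $e^{zX}$ cancels}, which is precisely why the resulting asymptotics hold uniformly in $X$ on $K$. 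For the remaining components one observes that, on the positive real axis, $\overline{\beta_R(\bar z)} = \alpha_R(z)$ and $\overline{\alpha_R(\bar z)} = \beta_R(z)$ (using $\kappa_-/a_- > 0$, $\overline{\mu_-} = \nu_-$ and $\overline{\Gamma(1-\mu_-)} = \Gamma(1-\nu_-)$), while $\mu_- - \frac12 = i\frac{\lambda-c_-}{\kappa_-}$ produces the factor $z^{i(\lambda-c_-)/\kappa_-}$. Thus $F_{R3}$ is governed by the same $\alpha_R$-prefactor as $F_{R1}$ (times $i$ and the matrix $\left(\begin{smallmatrix}0&1\\-1&0\end{smallmatrix}\right)$), and $F_{R2}$, $F_{R4}$ by the $\beta_R$-prefactor $\frac{2^{-\mu_-}}{\sqrt{2\pi}}\left(\frac{\kappa_-}{a_-}\right)^{-i(\lambda-c_-)/\kappa_-}\Gamma(1-\mu_-)\, z^{i(\lambda-c_-)/\kappa_-}$; the same one-line computation applies to each.

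The last step is to absorb the remainder $O(e^{|\mathrm{Re}(z)|X}/|z|) = O(e^{zX}/z)$ supplied by Theorem~\ref{asJostdroite}. After the two steps above, each leading term has modulus of order exactly $e^{zX}$ (the power $z^{\pm i(\lambda-c_-)/\kappa_-}$ has modulus one on $\mathbb{R}^+$ and the $\Gamma$-factor is a fixed nonzero constant), so the Theorem~\ref{asJostdroite} remainder is of relative size $O(1/z)$; writing it as $(\text{leading prefactor}) \cdot e^{zX} \cdot M$ with $M$ a $2\times2$ matrix all of whose entries are $O(1/z)$ turns it precisely into the off-diagonal $O(1/z)$ entries of the matrices $\left(\begin{smallmatrix}1&O(1/z)\\O(1/z)&1\end{smallmatrix}\right)$, $\left(\begin{smallmatrix}O(1/z)&1\\-1&O(1/z)\end{smallmatrix}\right)$ in the statement, together with the global factor $1 + O(1/z)$. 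This yields the four displayed identities.

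I do not anticipate a real obstacle: the argument is essentially the substitution of (\ref{asInu}) into Theorem~\ref{asJostdroite}. The two points that need a little care are (a) the uniform negligibility on $K$ of the $e^{-zX}$ branch of the Bessel asymptotics, which forces $K$ to stay away from both endpoints $0$ and $A$ — this is why uniformity is only claimed on compact subsets of $]0,A[$; and (b) the repackaging of the $O(e^{zX}/z)$ remainder into the claimed matrix form, which is legitimate only because the leading coefficient is $\Theta(e^{zX})$ with no spurious power of $z$, a fact produced by the cancellation $z^{\nu_- - 1/2} = z^{-i(\lambda-c_-)/\kappa_-}$ (and its $\mu_-$-analogue) in the second step.
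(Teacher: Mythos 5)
Your argument is exactly the one the paper intends: Theorem \ref{asJostdroiteR} is stated immediately after Theorem \ref{asJostdroite} with only the remark ``Using the asymptotic (\ref{asInu}), we can prove the following Theorem,'' and your proposal simply carries out that substitution, including the correct identifications $\overline{\beta_R(\bar z)}=\alpha_R(z)$, $z^{\nu_--1/2}=z^{-i(\lambda-c_-)/\kappa_-}$, and the uniform suppression of the $e^{-zX}$ branch on compacts. The bookkeeping checks out, so the proposal is correct and follows the paper's route.
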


\begin{remark}
 Corresponding asymptotics are also true for $z \to - \infty$ by parity/imparity. 
\end{remark}

We start proving Theorem \ref{asJostdroite} by an estimate on the Green kernels of the corresponding inhomogenous equations. This is done 
using good estimates of the modified Bessel functions given in \cite{Leb}.
We know, thanks to Proposition \ref{eqJostX}, that the components of Jost functions $F_{Ri,j}$ satisfy the coupled differential equations
\begin{equation}\label{eqF_R1,1}
 u''(X,\lambda,z) + q(X,\lambda) u(X,\lambda,z) = z^2 u(X,\lambda,z) + h_{i,j,k}(X,\lambda,z),
\end{equation}
where
 \[q(X,\lambda) = \frac{\omega_-}{X^2} + q_-(X,\lambda), \quad \mathrm{with} \quad \omega_{-} = \frac{(c_--\lambda)^2}{\kappa_-^2} -i \frac{(c_--\lambda)}{\kappa_-} \quad \mathrm{and} \quad q_{-}(X,\lambda) = O(1), \quad X \to 0,\]  
and
\[h_{i,j,k}(X,\lambda,z) = \frac{f_{i,j,k}(X,\lambda,z)}{a(X)^2}\]
satisfies
 \[h_{i,j,k}(X,\lambda,z) = O(1), \quad \mathrm{when} \quad X \to 0.\]
We begin by studying the homogeneous equation
\begin{equation}\label{eqhomo}
 u'' + \frac{\omega_-}{X^2}u = z^2 u
\end{equation}
which is a Bessel equation. We choose a different fondamental system of solutions of (\ref{eqhomo}) according to the block we study.
For the block $F_{R1}$ we choose $(I_{-\nu_-},K_{-\nu_-})$ whereas for the block $F_{R2}$ we choose
$(I_{\nu_-},K_{\nu_-})$. Concerning $F_{R1,j}$, $j \in \{1,2\}$, we set
\begin{eqnarray}\label{u_0,1}
 u_{0,1j}(X) &=& \alpha_{1j} \sqrt{X} I_{-\nu_-}(zX) + \beta_{1j} \sqrt{X} K_{-\nu_-}(zX),
\end{eqnarray}
whereas for $F_{R2,j}$, $j \in \{1,2\}$, we set
\begin{eqnarray}\label{u_0,2}
u_{0,2j}(X) &=& \alpha_{2j} \sqrt{X} I_{\nu_-}(zX) + \beta_{2j} \sqrt{X} K_{\nu_-}(zX).
\end{eqnarray}
Now, we solve the equation (\ref{eqF_R1,1}) by perturbation. We rewrite (\ref{eqF_R1,1}) as
\[ u'' + \frac{\omega_-}{X^2}u - z^2 u = - q_- (X,\lambda)u + h_{i,j,k}(X,\lambda,z).\]
The Green kernel for the block $F_{R1}$ is defined by 
\[G_1(t,X,z) = \sqrt{tX}(I_{-\nu_-}(zt)K_{-\nu_-}(zX)-I_{-\nu_-}(zX)K_{-\nu_-}(zt)),\]
whereas the Green kernel for the block $F_{R2}$ is defined by
\[G_2(t,X,z) = \sqrt{tX}(I_{\nu_-}(zt)K_{\nu_-}(zX)-I_{\nu_-}(zX)K_{\nu_-}(zt)).\]
The general solution of (\ref{eqF_R1,1}) is then
\begin{equation}\label{eqintu}
 u(X,\lambda,z) = u_{0,ij}(X,\lambda,z) + \int_{0}^{X} G_i(t,X,z)(q_-(t,\lambda)u(t,\lambda,z) + h_{i,j,k}(t,\lambda,z))\, \mathrm dt,
\end{equation}
where the integral term makes sense thanks to Lemma \ref{termesourceb}. To prove Theorem \ref{asJostdroite} we need the following estimates on the Green kernels $G_1$ 
and $G_2$. The proof of this Proposition is based on good estimates on the modified Bessel functions (see (\ref{asInu})) and its derivatives 
(see \cite{Iro} eq. (2.17) or \cite{DKN} Proposition 3.1 for a similar proof).

\begin{prop}\label{estimG}
 If $\vert z \vert \geq 1$, for $i \in \{1,2\}$, for all $X \in ]0,A[$ and for all $t \in ]0,X[$,
\[ \vert G_{i}(t,X,z) \vert \leq \frac{Ce^{\vert \mathrm{Re}(z) \vert (X-t)}}{(1+\vert zX \vert^{\frac{1}{2}})(1+\vert zt \vert^{\frac{1}{2}})}.\]
\end{prop}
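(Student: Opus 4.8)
The plan is to reduce the bound on $G_i(t,X,z)$ to the known large-argument asymptotics of the modified Bessel functions $I_{\pm\nu_-}$ and $K_{\pm\nu_-}$, combined with uniform bounds valid for small argument, splitting the range of the arguments $zt$ and $zX$ accordingly. Since the two Green kernels $G_1$ and $G_2$ differ only by replacing $\nu_-$ with $-\nu_-$ (equivalently, the roles of $I_{\nu_-}$ and $I_{-\nu_-}$), it suffices to treat $G_1$; the argument for $G_2$ is identical. Throughout I would fix a sector $|\arg z|\le \pi-\delta$; the bound on $\mathbb{R}^-$ follows afterward by the entire/even-extension argument already used in the remark following Theorem \ref{asJostdroite}, or more simply by noting that $|G_1(t,X,z)|$ depends on $z$ only through $zt$ and $zX$ and is even, so it suffices to bound it on the closed right half-plane.

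First I would record the relevant estimates on the Bessel functions from \cite{Leb}. For the large-argument regime $|w|\ge 1$ (with $w=zt$ or $zX$), the asymptotics \eqref{asInu} give $|I_{\pm\nu_-}(w)| \le C\,|w|^{-1/2} e^{|\mathrm{Re}(w)|}$ and the companion asymptotic for $K$ gives $|K_{\pm\nu_-}(w)| \le C\,|w|^{-1/2} e^{-\mathrm{Re}(w)}$ in the sector. For the small-argument regime $0<|w|\le 1$, the series \eqref{defI} and the definition \eqref{defK} give $|\sqrt{t}\,I_{-\nu_-}(zt)| \le C\,|zt|^{1/2-\mathrm{Re}(\nu_-)}|t|^{1/2}/\ldots$ — more usefully, one keeps the factor $\sqrt{tX}$ attached and checks that $\sqrt{tX}\,I_{-\nu_-}(zt)K_{-\nu_-}(zX)$ stays bounded by the claimed right-hand side as either argument tends to $0$, using $\mathrm{Re}(\nu_-)=\tfrac12$ so that the singular powers of $zt$ and $zX$ cancel against the $\sqrt{tX}$ prefactor and the denominator $(1+|zX|^{1/2})(1+|zt|^{1/2})$. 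A clean way to organize this is to first establish the uniform bounds
\[
 \bigl|\sqrt{X}\,I_{-\nu_-}(zX)\bigr| \le \frac{C\,e^{|\mathrm{Re}(z)|X}}{\sqrt{|z|}\,(1+|zX|^{1/2})},\qquad
 \bigl|\sqrt{X}\,K_{-\nu_-}(zX)\bigr| \le \frac{C\,\sqrt{|z|}\,e^{-\mathrm{Re}(z)X}}{1+|zX|^{1/2}}\cdot\frac{1}{|z|},
\]
valid for all $X\in{]0,A[}$ and $|z|\ge 1$ (the factor $1/\sqrt{|z|}$ in $I$ and the matching $\sqrt{|z|}$ in $K$ are what make the product of the two bounds dimensionally correct and $z$-uniform). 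These follow by treating $|zX|\le 1$ and $|zX|\ge 1$ separately as above.

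Then I would simply multiply: since $t<X$, in the sector one has $\mathrm{Re}(z)(X-t)\ge \mathrm{Re}(z)X-\mathrm{Re}(z)t$ and $|\mathrm{Re}(z)|(X-t)$ dominates both $\mathrm{Re}(z)(X-t)$ and $-\mathrm{Re}(z)(X-t)$, so that
\[
 \bigl|\sqrt{tX}\,I_{-\nu_-}(zt)K_{-\nu_-}(zX)\bigr|
 \le \frac{C\,e^{|\mathrm{Re}(z)|t}e^{-\mathrm{Re}(z)X}}{(1+|zt|^{1/2})(1+|zX|^{1/2})}
 \le \frac{C\,e^{|\mathrm{Re}(z)|(X-t)}}{(1+|zt|^{1/2})(1+|zX|^{1/2})},
\]
and symmetrically for the term $\sqrt{tX}\,I_{-\nu_-}(zX)K_{-\nu_-}(zt)$ (here the exponent comes out as $e^{|\mathrm{Re}(z)|X}e^{-\mathrm{Re}(z)t}$, which is bounded by the same quantity). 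Adding the two terms gives the Proposition. The main obstacle I anticipate is not any single estimate but the bookkeeping of the four subcases $\{|zt|\lessgtr 1\}\times\{|zX|\lessgtr 1\}$ together with the careful tracking of the powers of $|z|$ and of the exponential factors so that everything collapses uniformly into the stated bound; in particular one must use $\mathrm{Re}(\nu_-)=\tfrac12$ at exactly the right place to kill the would-be singularities at $t\to 0$ and $X\to 0$. Once the two displayed single-factor bounds are in hand, the rest is the short computation above.
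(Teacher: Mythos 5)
Your overall plan --- reduce to the sector $\vert\arg z\vert\le\pi-\delta$ and recover the rest of the plane from the evenness of $z\mapsto G_i(t,X,z)$, split according to whether $\vert zt\vert$ and $\vert zX\vert$ are smaller or larger than $1$, and use $\mathrm{Re}(\nu_-)=\tfrac{1}{2}$ to cancel the small-argument singularities of $I_{-\nu_-}$ and $K_{\pm\nu_-}$ against the prefactor $\sqrt{tX}$ --- is exactly the intended one: the paper does not write this proof out but defers to \cite{Iro} and \cite{DKN}, and its own remark after the Proposition supplies precisely the evenness reduction you describe.

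However, the two single-factor bounds around which you propose to organize the argument are false, and since you state that once they are in hand the rest is a short computation, this is a genuine gap rather than a cosmetic one. Take $z$ real and large and $X$ fixed in $]0,A[$: by (\ref{asInu}) one has $\sqrt{X}\,\vert I_{-\nu_-}(zX)\vert\sim e^{zX}/\sqrt{2\pi z}$, whereas your claimed right-hand side is of order $e^{zX}/\bigl(\sqrt{\vert z\vert}\,\vert zX\vert^{1/2}\bigr)$, so the ratio of the left side to the right side grows like $\vert zX\vert^{1/2}$; the same failure occurs for your $K$ bound. The correct uniform single-factor estimates carry no factor $1+\vert zX\vert^{1/2}$ in the denominator: for $\vert z\vert\ge 1$ and $\mathrm{Re}(z)\ge 0$ one has $\vert\sqrt{X}\,I_{\pm\nu_-}(zX)\vert\le C\vert z\vert^{-1/2}e^{\vert\mathrm{Re}(z)\vert X}$ and $\vert\sqrt{X}\,K_{\pm\nu_-}(zX)\vert\le C\vert z\vert^{-1/2}e^{-\mathrm{Re}(z)X}$. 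Multiplying these gives only $\vert G_i(t,X,z)\vert\le C\vert z\vert^{-1}e^{\vert\mathrm{Re}(z)\vert(X-t)}$, and the denominator of the Proposition is then recovered not from the Bessel asymptotics but from the elementary inequality $(1+\vert zt\vert^{1/2})(1+\vert zX\vert^{1/2})\le(1+A^{1/2})^{2}\vert z\vert$, valid because $t,X\in\,]0,A[$ and $\vert z\vert\ge1$. With that correction your final multiplication step (whose exponential bookkeeping is correct) closes the proof; as written, the cornerstone estimates do not hold.
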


\begin{remark}
 It suffices to prove Proposition \ref{estimG} for $z$ such that $\mathrm{Re}(z) \geq 0$. Indeed, using the definition of the modified Bessel 
functions (\ref{defI}) and (\ref{defK}) we know that
\begin{eqnarray*}
 G_2(t,X,z) &=& \frac{\pi \sqrt{tX}}{2 \sin(\nu_- \pi)}(I_{\nu_-}(zt) (I_{-\nu_-}(zX) - I_{\nu_-}(zX)) - I_{\nu_-}(zX)(I_{-\nu_-}(zt) - I_{\nu_-}(zt)))\\
&=& \frac{\pi \sqrt{tX}}{2 \sin(\nu_-  \pi)}(I_{\nu_-}(zt) I_{-\nu_-}(zX) - I_{\nu_-}(zX)I_{-\nu_-}(zt)).
\end{eqnarray*}
 Finally, using the definition of $I_{\nu}$ given previously in (\ref{defI}), we obtain that the application $z \mapsto G_2(t,X,z)$ is even in the variable $z\in \mathbb{C}$. Similarly, the application $z \mapsto G_1(t,X,z)$ is even in $z \in \mathbb{C}$.
\end{remark}

\subsection{Asymptotics of the Jost functions from the right when X tends to 0}
\noindent
To prove Theorem \ref{asJostdroite} we need the asymptotics of the Jost functions from the right when the Liouville variable $X$ tends to $0$. These asymptotics allow us in a second time to find explicitely the principal 
term of the components of Jost functions in terms of modified Bessel functions. To obtain these asymptotics we use 
the fact that we can write the Faddeev blocks as series and the link between Faddeev matrices and the Jost functions. First, we do the analysis for the block $F_{R1}$.

\begin{lemma}\label{estimF_R1}
 When $X \to 0$, for any fixed $z \in \mathbb{C}$,
\[F_{R1}(X,\lambda,z) = \left(\left( \frac{\kappa_-}{a_-} \right)^{i \left(\frac{\lambda-c_-}{\kappa_-}\right)} X^{i \left(\frac{\lambda-c_-}{\kappa_-}\right)} + O(X^2)\right)  \left( \begin{array}{cc} 1 & O(X^2) \\O(X^2)  & 1 \end{array} \right) .\]
\end{lemma}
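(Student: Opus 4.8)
The plan is to transport the asymptotics through the identity linking the Jost block $F_{R1}$ with the Faddeev block $\hat{M}_{R1}$, using the behaviour of the potentials near $x\to-\infty$ from Lemma \ref{aspot} together with the Volterra series for $\hat{M}_{R1}$. Since $\Gamma^1$ acts as $+I_2$ on the first two components, the upper-left $2\times2$ block of (\ref{lienFMR}) reads
\[ F_{R1}(x,\lambda,z) = e^{i(\lambda x - C^-(x))}\,\hat{M}_{R1}(x,\lambda,z). \]
I would first expand the scalar prefactor in the Liouville variable. By Lemma \ref{aspot}, $c(s)-c_- = O(e^{2\kappa_- s})$, so $C^-(x) = c_- x + O(e^{2\kappa_- x})$ and $\lambda x - C^-(x) = (\lambda-c_-)x + O(e^{2\kappa_- x})$; moreover (cf.\ the proof of Lemma \ref{aspotX}) $X = \frac{a_-}{\kappa_-}e^{\kappa_- x} + O(e^{3\kappa_- x})$, hence $e^{\kappa_- x} = \frac{\kappa_-}{a_-}X(1+O(X^2))$. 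As $\lambda,c_-\in\R$ the exponent $i(\lambda-c_-)/\kappa_-$ is purely imaginary, so $\vert X^{i(\lambda-c_-)/\kappa_-}\vert = 1$ and
\[ e^{i(\lambda x - C^-(x))} = \left(\frac{\kappa_-}{a_-}\right)^{i\frac{\lambda-c_-}{\kappa_-}} X^{i\frac{\lambda-c_-}{\kappa_-}} + O(X^2), \qquad X\to 0. \]

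Next I would show $\hat{M}_{R1}(x,\lambda,z) = I_2 + O(X^2)$. The crude estimate $\Vert \hat{M}_{R1}-I_2\Vert \le \exp\big(\int_{-\infty}^x \Vert k(s,z)\Vert \,\mathrm ds\big)-1$ only yields $O(X)$, so instead I use the series (\ref{serieM_R1}) together with the iterated bound $\Vert \hat{M}_{R1}^k(x,\lambda,z)\Vert \le \frac{1}{(2k)!}\big(\int_{-\infty}^x \Vert k(s,z)\Vert \,\mathrm ds\big)^{2k}$ from the proof of Lemma \ref{estimM} (valid for complex $z$, with $n$ replaced by $|z|$ in the constants). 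Since $\Vert k(s,z)\Vert = O(e^{\kappa_- s})$, one has $\int_{-\infty}^x \Vert k(s,z)\Vert \,\mathrm ds = O(e^{\kappa_- x}) = O(X)$, so the term $k=1$ is $O(X^2)$ and $\sum_{k\ge 2}\hat{M}_{R1}^k = O(X^4)$, whence $\hat{M}_{R1} - I_2 = O(X^2)$. One even sees that the off-diagonal $2\times2$ block of $k(y,z)k(t,z)^{\star}$ is proportional to $a(t)b(y)-a(y)b(t)$, which vanishes to leading exponential order because $a$ and $b$ share the same rate $e^{\kappa_- x}$; thus the off-diagonal entries of $\hat{M}_{R1}$ are in fact $O(X^4)$, but $O(X^2)$ suffices here.

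Multiplying the two expansions and using that the prefactor has modulus $1$, hence is bounded and bounded away from $0$ near $X=0$, I get
\[ F_{R1}(x,\lambda,z) = \left(\left(\frac{\kappa_-}{a_-}\right)^{i\frac{\lambda-c_-}{\kappa_-}} X^{i\frac{\lambda-c_-}{\kappa_-}} + O(X^2)\right)I_2 + O(X^2), \]
and then reabsorbing the diagonal part of the remainder into the scalar $O(X^2)$ and the off-diagonal part (divided by the nonvanishing prefactor) into the off-diagonal entries of a matrix factor gives exactly
\[ F_{R1}(x,\lambda,z) = \left(\left(\frac{\kappa_-}{a_-}\right)^{i\frac{\lambda-c_-}{\kappa_-}} X^{i\frac{\lambda-c_-}{\kappa_-}} + O(X^2)\right)\left(\begin{array}{cc} 1 & O(X^2) \\ O(X^2) & 1\end{array}\right), \]
which is the assertion; the analogous statements for $F_{R2}$, $F_{R3}$, $F_{R4}$ follow identically, changing the Bessel index (and, for the last two blocks, replacing $q$ by $\overline{q}$ and $\lambda$ by $-\lambda$). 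The main obstacle is the sharpness in the second step: a plain Gronwall-type bound loses one power of $X$, so one must exploit the double-integral structure of each Volterra iterate together with the second-order asymptotics in Lemma \ref{aspot} (in particular that $c-c_{\pm}$ and the subleading corrections to $a$ and $b$ decay two exponential orders faster than the main terms).
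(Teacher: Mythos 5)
Your proof is correct and follows essentially the same route as the paper: the identity $F_{R1}=e^{i(\lambda x-C^-(x))}\hat{M}_{R1}$, the expansion of the scalar prefactor via Lemma \ref{aspot} and the asymptotics of $X=g(x)$, and the bound $\Vert\hat{M}_{R1}-I_2\Vert=O\bigl(\bigl(\int_{-\infty}^x\Vert k\Vert\bigr)^2\bigr)=O(X^2)$ coming from the quadratic estimate on each Volterra iterate in (\ref{serieM_R1}). The extra observation about the off-diagonal entries being $O(X^4)$ is a correct refinement but, as you note, not needed.
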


\begin{proof}
Thanks to (\ref{serieM_R1}) we know that
\[ \Vert \hat{M}_{R1}(x,\lambda,z) - \hat{M}_{R1}^0(x,\lambda,z) \Vert = O \left( \left(\int_{-\infty}^x \Vert  k(s,z) \Vert \, \mathrm ds \right)^{2} \right).\]
We recall that
\[k(x,z)= e^{2iC^{-}(x)} \left( \begin{array}{cc} -ib(x) & za(x) \\-za(x)  &ib(x) \end{array} \right).\]
Moreover, according to the asymptotics, given in Lemma \ref{aspot}, we have $b(x) = O(a(x))$ when $x \to -\infty$. Then $\Vert k(s) \Vert \leq C a(x)$ for $x \to -\infty$. 
Thus, since $\hat{M}_{R1}^0(x,\lambda,z) = I_2$,
\[ \Vert \hat{M}_{R1}(x,\lambda,z) - I_2 \Vert = O(X^2), \quad X \to 0.\]
Concerning $F_{R1}$, we know that
\[F_{R1} (x,\lambda,z) = e^{i(\lambda x-C^{-}(x))} \hat{M}_{R1}(x,\lambda,z),\]
then
\[ \Vert F_{R1}(x,\lambda,z) - F_{R1}^0(x,\lambda,z) \Vert = O(X^2), \quad X \to 0,\]
where 
\[F_{R1}^0 (x,\lambda,z) = e^{i(\lambda x-C^{-}(x))} I_2.\]
Moreover,
\begin{equation}\label{estimexpX}
 e^{i \lambda h(X)} = \left( \frac{\kappa_-}{a_-} \right)^{\frac{i\lambda}{\kappa_-}} X^{\frac{i\lambda}{\kappa_-}}+O(X^2), \quad X \to 0.
\end{equation}
Indeed, thanks to the asymptotic of the potential $a$, given in Lemma \ref{aspot},
\[ X = g(x) = \int_{-\infty}^{x} a(t) \, \mathrm dt = \frac{a_-}{\kappa_-}e^{\kappa_- x} + O(e^{3\kappa_- x}), \quad x \to - \infty,\]
then,
\[e^{i \lambda h(X)} = e^{i\lambda \left( \frac{1}{\kappa_-} \ln \left( \frac{\kappa_-}{a_-} X \right) + O(X^2) \right)} = \left( \frac{\kappa_-}{a_-} \right)^{\frac{i\lambda}{\kappa_-}} X^{\frac{i\lambda}{\kappa_-}}(1+O(X^2)) = \left( \frac{\kappa_-}{a_-} \right)^{\frac{i\lambda}{\kappa_-}} X^{\frac{i\lambda}{\kappa_-}}+O(X^2), \quad X \to 0.\]
Furthermore,
\begin{eqnarray*}
 C^-(x) &=& \int_{-\infty}^x (c(s)-c_-) \, \mathrm ds +c_-x\\
&=& \int_{-\infty}^x (c_-e^{2\kappa_- s} + O(e^{4\kappa_- s})) \, \mathrm ds +c_-x\\
&=& c_-x + O(X^2), \quad X \to 0.
\end{eqnarray*}
Thus,
\begin{eqnarray*}
 e^{i(\lambda h(X)-C^{-}(h(X)))} &=& e^{ih(X)(\lambda -c_-)} + O(X^2)\\
&=& \left( \frac{\kappa_-}{a_-} \right)^{i \left(\frac{\lambda-c_-}{\kappa_-}\right)} X^{i \left(\frac{\lambda-c_-}{\kappa_-}\right)} + O(X^2), \quad X \to 0.
\end{eqnarray*}
\end{proof}

Thanks to these estimates we can find the principal term of the Jost function in terms of modified Bessel functions.
Concerning the coefficient $F_{R1,1}$, we first have to find $\alpha_{11}$ and $\beta_{11}$  defined in (\ref{u_0,1}) to obtain that the solution $u_{11}$ of the equation (\ref{eqintu}) 
\[ u_{11}(X,\lambda,z) = u_{0,11}(X,\lambda,z) + \int_{0}^{X} G_1(t,X,z)(q_-(t,\lambda)u_{11}(t,\lambda,z) + h_{1,1,3}(t,\lambda,z))\, \mathrm dt\]
is $F_{R1,1}$. Concerning $F_{R1,2}$ we have to find $\alpha_{12}$ and $\beta_{12}$ to obtain that the solution $u_{12}$ of (\ref{eqintu}) 
\[ u_{12}(X,\lambda,z) = u_{0,12}(X,\lambda,z) + \int_{0}^{X} G_1(t,X,z)(q_-(t,\lambda)u_{12}(t,\lambda,z) + h_{1,2,4}(t,\lambda,z))\, \mathrm dt\]
is $F_{R1,2}$. Using the asymptotics 
of $F_{R1,1}(X,\lambda,z)$ and $F_{R1,2}(X,\lambda,z)$ when $X \to 0$ we prove the following Proposition.

\begin{prop}\label{propoF_R1,1}
For any fixed $z$, for any $X \in ]0,A[$,
\[F_{R1,1}(X,\lambda,z) = F_{R1,1}^- (X,\lambda,z) + \int_{0}^{X} G_1(t,X,z)(q_-(t,\lambda)F_{R1,1}(t,\lambda,z) + h_{1,1,3}(t,\lambda,z))\, \mathrm dt,\]
where,
\[ F_{R1,1}^- (X,\lambda,z) = \frac{z^{\nu_-}}{2^{\nu_-}} \left( \frac{\kappa_-}{a_-} \right)^{i \left( \frac{\lambda-c_-}{\kappa_-} \right)} \Gamma(1-\nu_-) \sqrt{X} I_{-\nu_-}(zX).\]
Similarly,
\[F_{R1,2}(X,\lambda,z) = \int_{0}^{X} G_1(t,X,z)(q_-(t,\lambda)F_{R1,2}(t,\lambda,z) + h_{1,2,4}(t,\lambda,z))\, \mathrm dt.\]
\end{prop}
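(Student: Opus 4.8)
The plan is to determine the constants $\alpha_{1j},\beta_{1j}$ ($j\in\{1,2\}$) entering the homogeneous parts $u_{0,1j}$ of (\ref{u_0,1}) by matching the two sides of the integral representation (\ref{eqintu}) as $X\to 0$ against the boundary asymptotics of Lemma \ref{estimF_R1}. Since $F_{R1,1}$ and $F_{R1,2}$ solve the scalar Sturm--Liouville equations (\ref{eqF_R1,1}) of Lemma \ref{eqJostX} (with inhomogeneities $h_{1,1,3}$ and $h_{1,2,4}$, the latter involving $F_{R1,3}$, treated here as a known function), equation (\ref{eqintu}) provides constants $\alpha_{11},\beta_{11}$ and $\alpha_{12},\beta_{12}$ with
\[ F_{R1,1}(X,\lambda,z)=u_{0,11}(X,\lambda,z)+\int_{0}^{X} G_1(t,X,z)\big(q_-(t,\lambda)F_{R1,1}(t,\lambda,z)+h_{1,1,3}(t,\lambda,z)\big)\,\mathrm dt, \]
and the analogous identity for $F_{R1,2}$. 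These integrals make sense because the integrand is $O(1)$ as $t\to 0$ (Lemma \ref{termesourceb}, boundedness of the Jost components near $X=0$, and $q_-(\cdot,\lambda)=O(1)$) and $G_1(t,X,z)$ stays bounded there by Proposition \ref{estimG}; it remains to identify the four constants.

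To carry out the matching I would expand the three ingredients near $X=0$. By (\ref{defI}), $\sqrt{X}\,I_{-\nu_-}(zX)=\frac{(z/2)^{-\nu_-}}{\Gamma(1-\nu_-)}X^{\frac12-\nu_-}+O(X^2)$, where $\tfrac12-\nu_-=i(\lambda-c_-)/\kappa_-$, so the leading power has modulus $1$; by (\ref{defK}) and $K_{-\nu_-}=K_{\nu_-}$, $\sqrt{X}\,K_{-\nu_-}(zX)$ is a linear combination of $X^{\frac12-\nu_-}$ and $X^{\frac12+\nu_-}$ (the coefficients being nonzero because $\nu_-\notin\Z$) up to $O(X^2)$, and $|X^{\frac12+\nu_-}|=X$. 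Substituting these into $G_1(t,X,z)=\sqrt{tX}\,(I_{-\nu_-}(zt)K_{-\nu_-}(zX)-I_{-\nu_-}(zX)K_{-\nu_-}(zt))$, the contributions proportional to $t^{\frac12-\nu_-}X^{\frac12-\nu_-}$ cancel between the two products, leaving $G_1(t,X,z)=C(z)\big(X^{\frac12-\nu_-}t^{\frac12+\nu_-}-X^{\frac12+\nu_-}t^{\frac12-\nu_-}\big)+(\text{terms smaller by }O(t^2)+O(X^2))$ with $C(z)\neq 0$. Since the integrand $q_-F_{R1,1}+h_{1,1,3}$ is $O(1)$, $\int_0^X t^{\frac12+\nu_-}\mathrm dt=O(X^2)$ and $\int_0^X t^{\frac12-\nu_-}\mathrm dt=O(X)$ (both convergent because $\mathrm{Re}(\nu_-)=\tfrac12$), the $X^{\frac12-\nu_-}$ prefactor multiplies an $O(X^2)$ integral and the $X^{\frac12+\nu_-}$ prefactor multiplies an $O(X)$ integral, so the integral term in (\ref{eqintu}) is $O(X^2)$ as $X\to 0$.

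Plugging these expansions back, the right-hand side of (\ref{eqintu}) for $F_{R1,1}$ takes the form $\big(\alpha_{11}\tfrac{(z/2)^{-\nu_-}}{\Gamma(1-\nu_-)}+C'(z)\beta_{11}\big)X^{\frac12-\nu_-}+C''(z)\beta_{11}X^{\frac12+\nu_-}+O(X^2)$ with $C''(z)\neq 0$. Comparing with $F_{R1,1}(X,\lambda,z)=(\kappa_-/a_-)^{i(\lambda-c_-)/\kappa_-}X^{\frac12-\nu_-}+O(X^2)$ from Lemma \ref{estimF_R1}, and using that the scales $X^{\frac12-\nu_-}$ (modulus $1$), $X^{\frac12+\nu_-}$ (modulus $X$) and $X^2$ are asymptotically independent: the $X^{\frac12+\nu_-}$ term, being of order $X$, cannot be absorbed in $O(X^2)$, so $\beta_{11}=0$; then equating the $X^{\frac12-\nu_-}$ coefficients gives $\alpha_{11}=(z/2)^{\nu_-}\Gamma(1-\nu_-)(\kappa_-/a_-)^{i(\lambda-c_-)/\kappa_-}$, hence $u_{0,11}=\alpha_{11}\sqrt{X}\,I_{-\nu_-}(zX)=F_{R1,1}^-$, which is the first formula. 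For $F_{R1,2}$, Lemma \ref{estimF_R1} gives $F_{R1,2}(X,\lambda,z)=O(X^2)$ near $0$, so now both the $X^{\frac12+\nu_-}$ and the $X^{\frac12-\nu_-}$ coefficients of the right-hand side must vanish, which forces $\beta_{12}=\alpha_{12}=0$, i.e.\ $u_{0,12}=0$, and this is exactly the second formula.

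The main obstacle is the precise bookkeeping of the three competing powers $X^{\frac12-\nu_-}$, $X^{\frac12+\nu_-}$ and $X^2$ and, above all, the proof that the integral remainder in (\ref{eqintu}) is genuinely $O(X^2)$ rather than merely $O(X)$: the pointwise bound of Proposition \ref{estimG} is not sharp enough here, so one must exploit the cancellation of the leading ($X^{\frac12-\nu_-}$) part of the Green kernel $G_1$ together with the integrability of $t^{\frac12-\nu_-}$ at the singular endpoint $t=0$, which relies on $\mathrm{Re}(\nu_-)=\tfrac12$. All the remaining steps are direct computations based on the series (\ref{defI})--(\ref{defK}), the asymptotics of Lemma \ref{aspotX}, and Lemma \ref{estimF_R1}.
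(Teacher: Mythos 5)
Your proposal is correct and follows essentially the same route as the paper: write the solution as a homogeneous part plus the Green-kernel integral, show via the cancellation of the leading powers in $G_1$ and the $O(1)$ bound on $q_-F_{R1,1}+h_{1,1,3}$ that the integral remainder is $O(X^2)$ as $X\to 0$, and then identify $\alpha_{1j},\beta_{1j}$ by matching the asymptotically independent scales $X^{\frac12-\nu_-}$, $X^{\frac12+\nu_-}$, $O(X^2)$ against Lemma \ref{estimF_R1}. The paper performs exactly this matching (using the small-argument asymptotics of $I_{-\nu_-}$ and $K_{-\nu_-}$ from \cite{Leb} in place of your direct series expansions), so no further comparison is needed.
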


\begin{proof}
 We search $\alpha_{11}$ and $\beta_{11}$ such that
\[F_{R1,1}(X,\lambda,z) = F_{R1,1}^- (X,\lambda,z) + \int_{0}^{X} G_1(t,X,z)(q_-(t,\lambda)F_{R1,1}(t,\lambda,z) + h_{1,1,3}(t,\lambda,z))\, \mathrm dt,\]
where
\[ F_{R1,1}^- (X,\lambda,z) = \alpha_{11} \sqrt{X} I_{-\nu_-}(zX) + \beta_{11} \sqrt{X} K_{-\nu_-}(zX).\]
We recall (see \cite{Leb} eqs $(5.7.1)$ and $(5.7.2)$) that,
\begin{equation}\label{estimI_nu}
 \alpha \sqrt{X} I_{-\nu_-}(zX) \sim  \frac{\alpha z^{-\nu_-}}{\Gamma(1-\nu_-)2^{-\nu_-}} X^{i \left(\frac{\lambda-c_-}{\kappa_-}\right)}, \quad X \to 0
\end{equation}
and
\begin{equation}\label{estimK_nu}
 \beta \sqrt{X} K_{-\nu_-}(zX) \sim  \frac{-\beta \pi}{2\sin(\nu_- \pi)} \frac{z^{\nu_-}}{\Gamma(1+\nu_-)2^{\nu_-}} X^{1-i \left(\frac{\lambda-c_-}{\kappa_-}\right)}, \quad X \to 0.
\end{equation}
Thus, for any fixed $z \in \mathbb{C}$, when $X \to 0$,
\begin{eqnarray*}
 G_1(t,X,z) &=& \sqrt{tX}(I_{-\nu_-}(zt)K_{-\nu_-}(zX)-I_{-\nu_-}(zX)K_{-\nu_-}(zt))\\
&=& \sqrt{t}I_{-\nu_-}(zt) \sqrt{X} K_{-\nu_-}(zX)-\sqrt{X}I_{-\nu_-}(zX) \sqrt{t}K_{-\nu_-}(zt)\\
&\sim& \frac{- \pi}{\sin(\nu_- \pi) \Gamma(1-\nu_-) \Gamma(1+\nu_-)}\left( t^{i \left(\frac{\lambda-c_-}{\kappa_-}\right)} X^{1-i \left(\frac{\lambda-c_-}{\kappa_-}\right)} - X^{i \left(\frac{\lambda-c_-}{\kappa_-}\right)} t^{1-i \left(\frac{\lambda-c_-}{\kappa_-}\right)}\right).
\end{eqnarray*}
Finally, thanks to Lemma \ref{termesourceb},
\[ \int_{0}^{X} G_1(t,X,z)\underbrace{(q_-(t)F_{R1,1}(t,\lambda,z) + h_{1,1,3}(t,\lambda,z))}_{=O(1)}\, \mathrm dt = O(X^2), \quad X \to 0.\]
Moreover, thanks to Lemma \ref{estimF_R1},
\[F_{R1,1}(X,\lambda,z) = \left( \frac{\kappa_-}{a_-} \right)^{i \left(\frac{\lambda-c_-}{\kappa_-}\right)} X^{i \left(\frac{\lambda-c_-}{\kappa_-}\right)} + O(X^2), \quad X \to 0.\]
This permits to conclude, using again (\ref{estimI_nu}) and (\ref{estimK_nu}), that
\[ \alpha_{11} = \frac{z^{\nu_-}}{2^{\nu_-}} \left( \frac{\kappa_-}{a_-} \right)^{i \left( \frac{\lambda-c_-}{\kappa_-} \right)} \Gamma(1-\nu_-) \quad \quad \mathrm{and} \quad \quad \beta_{11}=0.\]
The same argument allows us to prove that
\[ \alpha_{12} = 0 \quad \quad \mathrm{and} \quad \quad \beta_{12}=0.\]
\end{proof}

We do the same analysis for the block $F_{R2}$. Since the study is similar we just give the corresponding results without proofs.

\begin{lemma}\label{estimF_R2}
 When $X \to 0$,
\[F_{R2,1}(X,\lambda,z) = \frac{b_-}{2a_- \nu_-} \left( \frac{\kappa_-}{a_-} \right)^{-i \left( \frac{\lambda-c_-}{\kappa_-} \right)} X^{1-i \left( \frac{\lambda-c_-}{\kappa_-} \right)} + O(X^3)\]
and
\[F_{R2,2}(X,\lambda,z) = iz \frac{1}{2\nu_-} \left( \frac{\kappa_-}{a_-} \right)^{-i \left(\frac{\lambda-c_-}{\kappa_-}\right)} X^{1-i \left(\frac{\lambda-c_-}{\kappa_-}\right)} + O(X^3).\]
\end{lemma}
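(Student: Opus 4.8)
The plan is to mirror the proof of Lemma \ref{estimF_R1}, replacing the Volterra equation (\ref{eqinteMRun}) for the block $\hat M_{R1}$ by the equation (\ref{eqinteMRdeux}) for $\hat M_{R2}$ and then transferring the result to the Jost matrix through (\ref{lienFMR}). Since $\Gamma^1$ is $\mathrm{diag}(I_2,-I_2)$ in $2\times 2$ block notation, the relation (\ref{lienFMR}) gives for the top-right block
\[ F_{R2}(x,\lambda,z) = e^{-i(\lambda x + C^-(x))}\,\hat M_{R2}(x,\lambda,z), \]
so the scalar prefactor here is $e^{-i(\lambda x + C^-(x))}$, and not the $e^{i(\lambda x - C^-(x))}$ that occurred for $\hat M_{R1}$. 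The structural difference with the $F_{R1}$ case is that the zeroth iterate of (\ref{eqinteMRdeux}) vanishes: the behaviour of $F_{R2}$ near $X=0$ is carried by the first iterate $i\int_{-\infty}^x e^{2i\lambda(x-y)}k(y,z)\,\mathrm dy$.

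Next I would evaluate this first iterate using the asymptotics of Lemma \ref{aspot}. As $y\to-\infty$ one has $a(y) = a_- e^{\kappa_- y} + O(e^{3\kappa_- y})$, $b(y) = b_- e^{\kappa_- y} + O(e^{3\kappa_- y})$ and $C^-(y) = c_- y + O(e^{2\kappa_- y})$, so
\[ k(y,z) = e^{2ic_- y}e^{\kappa_- y}\begin{pmatrix} -ib_- & za_- \\ -za_- & ib_- \end{pmatrix} + O(e^{3\kappa_- y}); \]
integrating the scalar factor $e^{(-2i\lambda + 2ic_- + \kappa_-)y}$ from $-\infty$ to $x$ (the integral converges at $-\infty$ because the real part of the exponent is $\kappa_- > 0$) produces the constant $(\kappa_- - 2i(\lambda - c_-))^{-1} = (2\kappa_- \nu_-)^{-1}$. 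For the remainder I would argue from (\ref{eqinteMRdeux}): the crude bound $\Vert \hat M_{R2}(t,\lambda,z) \Vert = O(1)$ of Lemma \ref{estimM} together with $\Vert k(t,z) \Vert \leq C\,a(t) = O(e^{\kappa_- t})$ as $t\to-\infty$ makes the iterated (double-integral) term of (\ref{eqinteMRdeux}) of order $O(e^{2\kappa_- x})$, whence $\hat M_{R2}(x,\lambda,z) = O(e^{\kappa_- x})$; feeding this improved bound back into the iterated term shows it is in fact $O(e^{3\kappa_- x})$. Altogether
\[ \hat M_{R2}(x,\lambda,z) = \frac{i\,e^{(2ic_- + \kappa_-)x}}{2\kappa_- \nu_-}\begin{pmatrix} -ib_- & za_- \\ -za_- & ib_- \end{pmatrix} + O(e^{3\kappa_- x}), \quad x\to-\infty. \]

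It remains to pass to the Liouville variable, exactly as in the proofs of Lemma \ref{estimF_R1} and Lemma \ref{aspotX}. Multiplying by $e^{-i(\lambda x + C^-(x))} = e^{-i(\lambda + c_-)x}(1 + O(e^{2\kappa_- x}))$ turns the exponential $e^{(2ic_- + \kappa_-)x}$ into $e^{(\kappa_- + i(c_- - \lambda))x}$ (the correction $O(e^{2\kappa_- x})$ times the $O(e^{\kappa_- x})$ leading term being again $O(e^{3\kappa_- x})$); and since $X = \frac{a_-}{\kappa_-}e^{\kappa_- x} + O(e^{3\kappa_- x})$ one has $e^{\kappa_- x} = \frac{\kappa_-}{a_-}X + O(X^3)$ and $e^{i(c_- - \lambda)x} = \left(\frac{\kappa_-}{a_-}X\right)^{i(c_- - \lambda)/\kappa_-}(1 + O(X^2))$, so that
\[ e^{(\kappa_- + i(c_- - \lambda))x} = \left(\frac{\kappa_-}{a_-}\right)^{1 - i(\lambda - c_-)/\kappa_-}X^{1 - i(\lambda - c_-)/\kappa_-} + O(X^3). \]
Reading off the $(1,1)$ and $(1,2)$ entries of the resulting matrix and using the identity $\frac{i}{2\kappa_- \nu_-}\left(\frac{\kappa_-}{a_-}\right)^{1 - i(\lambda - c_-)/\kappa_-} = \frac{i}{2a_- \nu_-}\left(\frac{\kappa_-}{a_-}\right)^{-i(\lambda - c_-)/\kappa_-}$ then gives exactly the two claimed formulas for $F_{R2,1}$ and $F_{R2,2}$, with remainder $O(X^3)$. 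I do not expect a genuine obstacle here; the points deserving care are the correct scalar prefactor for this off-diagonal block and the verification that the first-iterate approximation, the conjugation by $e^{-iC^-}$, and the change of variable each contribute an error no worse than $O(X^3)$ — all consequences of the exponential decay of the potentials at the event horizon.
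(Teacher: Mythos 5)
Your proof is correct and is exactly the natural transposition of the paper's proof of Lemma \ref{estimF_R1} to the block $\hat{M}_{R2}$ (the paper itself omits this proof, saying only that ``the study is similar''): Volterra iteration of (\ref{eqinteMRdeux}) with the leading behaviour now carried by the first iterate, then transfer to $F_{R2}$ via (\ref{lienFMR}) and the change to the Liouville variable. The block prefactor $e^{-i(\lambda x + C^-(x))}$, the constant $(\kappa_- - 2i(\lambda-c_-))^{-1} = (2\kappa_-\nu_-)^{-1}$, the bootstrap showing the iterated term is $O(e^{3\kappa_- x})$, and the final $O(X^3)$ bookkeeping all check out against the stated formulas.
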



The analogous of Proposition \ref{propoF_R1,1} for the second block is the following.

\begin{prop}\label{propoF_R2}
For any fixed $z$ and for any $X \in ]0,A[$,
\[F_{R2,1}(X,\lambda,z) = F_{R2,1}^- (X,\lambda,z) + \int_{0}^{X} G_2(t,X,z)(q_-(t,\lambda)F_{R2,1}(t,\lambda,z) + h_{2,1,3}(t,\lambda,z))\, \mathrm dt,\]
where
\[ F_{R2,1}^- (X,\lambda,z) = \frac{b_-}{2a_-} \left(\frac{z}{2}\right)^{-\nu_-}\left( \frac{\kappa_-}{a_-} \right)^{-i \left( \frac{\lambda-c_-}{\kappa_-} \right)} \Gamma(1-\mu_-) \sqrt{X} I_{\nu_-}(zX).\]
Similarly,
\[F_{R2,2}(X,\lambda,z) = F_{R2,2}^- (X,\lambda,z) + \int_{0}^{X} G_2(t,X,z)(q_-(t,\lambda)F_{R2,2}(t,\lambda,z) + h_{2,2,4}(t,\lambda,z))\, \mathrm dt,\]
where
\[ F_{R2,2}^- (X,\lambda,z) = i  \left(\frac{z}{2}\right)^{\mu_-}  \left( \frac{\kappa_-}{a_-} \right)^{-i \left( \frac{\lambda-c_-}{\kappa_-} \right)} \Gamma(1-\mu_-) \sqrt{X} I_{\nu_-}(zX).\]
\end{prop}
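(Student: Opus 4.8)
The plan is to transcribe, almost word for word, the argument used for the first block in Proposition \ref{propoF_R1,1}; the only genuine changes are the choice of the fundamental system $(I_{\nu_-},K_{\nu_-})$ adapted to the second block (as already fixed in \eqref{u_0,2}) and the different leading power of $X$ coming from Lemma \ref{estimF_R2}. Concretely, for $j\in\{1,2\}$ and the associated index pairs $(j,k)\in\{(1,3),(2,4)\}$, I would start from the general solution of the inhomogeneous equation \eqref{eqF_R1,1} with source $h_{2,j,k}$ in the Volterra form \eqref{eqintu},
\[ u_{2j}(X,\lambda,z) = \alpha_{2j}\sqrt{X}I_{\nu_-}(zX) + \beta_{2j}\sqrt{X}K_{\nu_-}(zX) + \int_0^X G_2(t,X,z)\bigl(q_-(t,\lambda)u_{2j}(t,\lambda,z) + h_{2,j,k}(t,\lambda,z)\bigr)\,\mathrm dt, \]
and determine the constants $\alpha_{2j}$, $\beta_{2j}$ so that the unique solution of this equation is precisely $F_{R2,j}$. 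As in Proposition \ref{propoF_R1,1}, the identification is made by comparing the $X\to 0$ asymptotics of both sides.

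For the right-hand side I would use: (a) the small-argument expansions $\sqrt{X}I_{\nu_-}(zX)\sim \frac{(z/2)^{\nu_-}}{\Gamma(1+\nu_-)}X^{1-i(\lambda-c_-)/\kappa_-}$ and $\sqrt{X}K_{\nu_-}(zX)\sim \tfrac{1}{2}\Gamma(\nu_-)(z/2)^{-\nu_-}X^{i(\lambda-c_-)/\kappa_-}$ from \cite{Leb}, which, inserted into $G_2(t,X,z)=\sqrt{t}I_{\nu_-}(zt)\sqrt{X}K_{\nu_-}(zX)-\sqrt{X}I_{\nu_-}(zX)\sqrt{t}K_{\nu_-}(zt)$, give the leading behaviour of the Green kernel near the origin; (b) Lemma \ref{termesourceb}, which yields $h_{2,j,k}(t,\lambda,z)=O(1)$ as $t\to 0$, together with $q_-(t,\lambda)=O(1)$ and the leading order of $F_{R2,j}$ from Lemma \ref{estimF_R2}. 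Combining these shows that the integral term in the formula above is $O(X^2)$ as $X\to 0$. On the left-hand side I would use Lemma \ref{estimF_R2} itself, which provides $F_{R2,1}(X,\lambda,z)=\frac{b_-}{2a_-\nu_-}\bigl(\frac{\kappa_-}{a_-}\bigr)^{-i(\lambda-c_-)/\kappa_-}X^{1-i(\lambda-c_-)/\kappa_-}+O(X^3)$ and the analogous expression for $F_{R2,2}$.

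It then remains to match orders. Since the $\sqrt{X}K_{\nu_-}(zX)$ term contributes a factor $X^{i(\lambda-c_-)/\kappa_-}$ of modulus $O(1)$, whereas $F_{R2,j}$, the integral term and the $\sqrt{X}I_{\nu_-}(zX)$ term are all $O(X)$, the coefficient of $\sqrt{X}K_{\nu_-}(zX)$ must vanish, i.e. $\beta_{21}=\beta_{22}=0$; equating the coefficients of $X^{1-i(\lambda-c_-)/\kappa_-}$ then gives $\alpha_{21}\frac{(z/2)^{\nu_-}}{\Gamma(1+\nu_-)}=\frac{b_-}{2a_-\nu_-}\bigl(\frac{\kappa_-}{a_-}\bigr)^{-i(\lambda-c_-)/\kappa_-}$ and the corresponding relation for $\alpha_{22}$. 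Finally I would simplify using the two elementary identities $\Gamma(1+\nu_-)=\nu_-\Gamma(\nu_-)$ and $\nu_-=1-\mu_-$ (so that $\Gamma(1+\nu_-)/\nu_-=\Gamma(\nu_-)=\Gamma(1-\mu_-)$ and $\frac{z}{2}(z/2)^{-\nu_-}=(z/2)^{\mu_-}$), which turn these coefficients into exactly the announced $F_{R2,1}^-$ and $F_{R2,2}^-$. The only delicate point, as in Proposition \ref{propoF_R1,1}, is the bookkeeping of orders in $X$: one must check that every error generated — the $O(X^2)$ from the Green-kernel integral, the $O(X^3)$ remainder of Lemma \ref{estimF_R2}, and the $O(X^{3-i(\lambda-c_-)/\kappa_-})$ correction inside the series for $I_{\nu_-}$ — is strictly smaller than the $X^{1-i(\lambda-c_-)/\kappa_-}$ leading term, so that nothing survives at the order $X^{i(\lambda-c_-)/\kappa_-}$ of the $K_{\nu_-}$ term and the constants $\alpha_{2j}$, $\beta_{2j}$ are pinned down unambiguously. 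Everything else is a routine copy of the block-$F_{R1}$ computation, the source terms $h_{2,j,k}$ entering only through the $O(1)$ bound of Lemma \ref{termesourceb}.
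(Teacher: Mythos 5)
Your proposal is correct and is precisely the argument the paper intends but omits ("since the study is similar we just give the corresponding results without proofs"): it transcribes the proof of Proposition \ref{propoF_R1,1} to the second block, with the fundamental system $(I_{\nu_-},K_{\nu_-})$ from (\ref{u_0,2}), the $O(X^2)$ bound on the Green-kernel integral, and the $X\to 0$ matching against Lemma \ref{estimF_R2}. The coefficient identifications check out, including the simplifications via $\Gamma(1+\nu_-)/\nu_-=\Gamma(\nu_-)=\Gamma(1-\mu_-)$ and $\tfrac{z}{2}(z/2)^{-\nu_-}=(z/2)^{\mu_-}$ coming from $\nu_-+\mu_-=1$.
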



\subsection{Improvement of the first estimates on Jost functions}\label{partDuh}
\noindent
In our computation of the asymptotics of the coefficients of Jost functions from the right (respectively from the left) for large $z$ in the complex plane 
(see next Section) we need estimates on these functions of the form
\[ |F_{Ri,j}(X,\lambda,z)| \leq Ce^{|\mathrm{Re}(z)|X}, \quad |F_{Li,j}(X,\lambda,z)| \leq Ce^{|\mathrm{Re}(z)|(A-X)}, \quad \forall (i,j) \in \{1,2,3,4\}^2,\]
for all $z \in \C$ uniformly for $X \in ]0,X_1[$ (respectively $X \in ]X_1,A[$), $X_1 \in ]0,A[$ fixed.
At this time, thanks to Lemma \ref{estimM} and the link between the Faddeev matrices 
and the Jost functions given by (\ref{lienFMR}), we know that,
\[ |F_{Ri,j}(X,\lambda,z)| \leq Ce^{|z|X}, \quad |F_{Li,j}(X,\lambda,z)| \leq Ce^{|z|X}, \quad \forall (i,j) \in \{1,2,3,4\}^2,\]
for all $z \in \C$ uniformly for $X \in ]0,X_1[$ (respectively $X \in ]X_1,A[$), $X_1 \in ]0,A[$ fixed. 
These estimates are not enough for our purpose. Using the Phragm\'en-Lindel\"{o}f's Theorem, we observe that it would be sufficient to prove that the components of the Jost functions are bounded on $i \mathbb{R}$. We shall now prove this claim. We first note that the scalar equations obtained on the components of Jost functions in the variable $X$ in Lemma \ref{eqJostX} are
coupled, due to the form of the rest $h_{i,j,k}$. We thus transform each of these coupled scalar equations in one vectorial equation and study them by perturbation using an iterative method.

\begin{lemma}\label{bornsuriR}
 For all $(i,j) \in \{1,2,3,4\}^2$ and for all $X_1 \in ]0,A[$ the function $z \mapsto F_{Ri,j}(X,\lambda,z)$ is bounded on $i\mathbb{R}$ uniformly in $X \in ]0,X_1[$ and 
$z \mapsto F_{Li,j}(X,\lambda,z)$ is bounded on $i\mathbb{R}$ uniformly in $X \in ]X_1,A[$.
\end{lemma}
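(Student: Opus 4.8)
The plan is to rewrite each coupled pair of singular Sturm--Liouville equations of Lemma \ref{eqJostX} as a single first-order Volterra system in the Liouville variable $X$ and to solve that system by a Picard iteration whose bounds become uniform in $z$ precisely because $z$ is restricted to the imaginary axis. Fix a block, say $F_{R1}$, and the coupled pair $(j,k)=(1,3)$; the pair $(2,4)$, the blocks $F_{R2},F_{R3},F_{R4}$ (with $q$ replaced by $\overline{q}$ and $\nu_-$ by $\mu_-$ where needed) and the left Jost functions (with $\int_0^X$ and $X\to0$ replaced by $\int_X^A$ and $X\to A$, and $X_1$ now bounded away from $0$) are treated identically. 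Write $\mathbf{u}(X,z)=(F_{R1,1}(X,\lambda,z),F_{R1,3}(X,\lambda,z))^{\top}$. Using $\sqrt{X}\,I_{\pm\nu_-}(zX)$ as a fundamental system of the homogeneous equation $u''+\frac{\omega_-}{X^2}u=z^2u$ and the variation-of-constants representation already exploited in Proposition \ref{propoF_R1,1}, the pair $\mathbf{u}$ together with its derivative satisfies a closed first-order Volterra system for the vector $\mathbf{U}=(\mathbf{u},\,w(X,z)\mathbf{u}')$, with weight $w(X,z)=X/(1+|z|X)$ chosen so that the free term remains bounded at the regular singular point $X=0$; the kernel $\mathcal{K}(t,X,z)$ of this system is assembled from the Green kernel $G_1$, its derivative $\partial_X G_1$, and the coupling coefficients of $h_{1,1,3}$, $h_{1,3,1}$. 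These coupling coefficients are $O(1/z)$ as $|z|\to\infty$ (the $O(1/n)$ structure of Proposition \ref{eqJost1}), and although individual monomials of $h_{i,j,k}$ may be singular at $X=0$ the combinations $h_{i,j,k}$ themselves are $O(1)$ there (Lemma \ref{termesourceb}, via Lemma \ref{aspotX}); moreover the denominators $z^2a^2+b^2$ occurring in those coefficients produce no genuine pole on $i\mathbb{R}$, each $F_{Ri,j}(X,\lambda,\cdot)$ being entire by the convergent Volterra series of Lemma \ref{estimM}.

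The reason for working on $z\in i\mathbb{R}$ is that then $\mathrm{Re}(z)=0$, so every factor $e^{|\mathrm{Re}(z)|(X-t)}$ in the Bessel estimates equals $1$. Proposition \ref{estimG} then gives $|G_1(t,X,z)|\le C$ for all $0<t<X<A$ and all $z\in i\mathbb{R}$ with $|z|\ge1$, and the companion estimate for $\partial_X G_1$ (from the modified Bessel derivative estimates, cf. \cite{Iro}, \cite{DKN}) shows that the entry $w(t,z)^{-1}w(X,z)\partial_X G_1(t,X,z)$ of $\mathcal{K}$ is bounded by a constant against the same decaying factors; the range $|z|\le1$ is harmless since there $F_{Ri,j}$ is entire and hence locally bounded in $z$ uniformly in $X\in]0,X_1[$. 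It then remains to bound the free term $\mathbf{U}_0$ uniformly in $z\in i\mathbb{R}$ and $X\in]0,X_1[$: for $|zX|\ge1$ this follows from the asymptotics \eqref{asInu}, since on $i\mathbb{R}$ one has $|e^{zX}|=1$, the factor $(2\pi z)^{-1/2}$ exactly compensates the prefactor $z^{\nu_-}\sqrt{X}$ in $F_{R1,1}^{-}$, and $|z^{-i(\lambda-c_-)/\kappa_-}|$ and $|(\kappa_-/a_-)^{i(\lambda-c_-)/\kappa_-}|$ are bounded constants (recall $\mathrm{Re}(\nu_-)=\mathrm{Re}(\mu_-)=\tfrac12$); for $|zX|\le1$ one uses the series \eqref{defI}, which gives $|\sqrt{X}\,I_{-\nu_-}(zX)|\sim|z|^{-1/2}$ independently of $X$, again cancelling the prefactor. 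The same two cancellations control $w\,\mathbf{u}_0'$. Hence $\sup_{X\in]0,X_1[}\|\mathbf{U}_0(X,z)\|\le C_1$ and $\sup\|\mathcal{K}(t,X,z)\|\le C_0$, both uniformly in $z\in i\mathbb{R}$.

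Finally, with the kernel bounded by $C_0$ on the finite interval $]0,X_1[$ and the free term bounded by $C_1$, the Volterra system is solved by its Neumann series $\mathbf{U}=\sum_{m\ge0}\mathbf{U}_m$, $\mathbf{U}_{m+1}(X,z)=\int_0^X\mathcal{K}(t,X,z)\mathbf{U}_m(t,z)\,\mathrm dt$; a straightforward induction gives $\|\mathbf{U}_m(X,z)\|\le C_1(C_0X_1)^m/m!$, so the series converges and $\|\mathbf{U}(X,z)\|\le C_1e^{C_0X_1}$ uniformly in $z\in i\mathbb{R}$ and $X\in]0,X_1[$. Reading off the first two components yields that $z\mapsto F_{R1,1}(X,\lambda,z)$ and $z\mapsto F_{R1,3}(X,\lambda,z)$ are bounded on $i\mathbb{R}$ uniformly in $X\in]0,X_1[$; repeating the argument over the remaining pairs, blocks and the left Jost functions finishes the proof. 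The main obstacle is the presence of the derivative terms $\mathbf{u}'$ in the coupling source $h_{i,j,k}$: they cannot be discarded, so $\mathbf{u}'$ must be propagated through the iteration, and near the regular singular point $X=0$ this is workable only with the precise weight $w(X,z)=X/(1+|z|X)$ and the sharp derivative estimate on $\partial_X G_1$ — this is exactly the technical content collected in Section \ref{partDuh}; once it is in hand the proof reduces to the classical Volterra bound above.
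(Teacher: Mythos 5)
Your proof is correct and follows essentially the same route as the paper's: the coupled pair is turned into a first-order Volterra system for the vector of components and their derivatives, built on the modified-Bessel fundamental system via a Duhamel formula, and a Picard iteration is closed using the fact that $\vert e^{zX}\vert = 1$ on $i\mathbb{R}$, with bounded $\vert z\vert$ covered by the a priori bound $\vert F_{Ri,j}\vert \leq C e^{\vert z\vert X}$ of Lemma \ref{estimM}. The only difference is bookkeeping: you normalize the derivative components by the weight $w(X,z)=X/(1+\vert z\vert X)$ so as to obtain a uniformly bounded kernel and the standard $(C_0X_1)^m/m!$ Volterra bound, whereas the paper keeps the unweighted vector and tracks the powers $X^{2k}$ and $\vert z\vert X^{2k-1}$ through an explicit induction (Lemmas \ref{estimUk} and \ref{somalpha}).
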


\begin{proof}
 Since the proofs are similar in the other cases, we just do the proof for the couple $(F_{R1,1},F_{R1,3})$. We start with the second order scalar equations given in Lemma \ref{eqJostX}:
\[ F_{R1,1}''(X,\lambda,z) +q(X,\lambda) F_{R1,1}(x,\lambda,z) = z^2 F_{R1,1}(X,\lambda,z) + h_{1,1,3}(X,\lambda,z),\]
\[ F_{R1,3}''(X,\lambda,z) +q(X,\lambda) F_{R1,3}(x,\lambda,z) = z^2 F_{R1,3}(X,\lambda,z) + h_{1,3,1}(X,\lambda,z),\]
where (see Lemmas \ref{eqJost1} and \ref{eqJostX}, Equation (\ref{deffij}) and Lemma \ref{termesourceb})
\begin{eqnarray*}
h_{1,1,3}(X,\lambda,z) &=& c_1(X,\lambda,z) F_{R1,1}'(X,\lambda,z) + c_2(X,\lambda,z) F_{R1,3}'(X,\lambda,z) \\
& &  + c_3(X,\lambda,z) F_{R1,1}(X,\lambda,z) + c_4(X,\lambda,z) F_{R1,3}(X,\lambda,z),
\end{eqnarray*}
\begin{eqnarray*}
h_{1,3,1}(X,\lambda,z) &=& c_1(X,\lambda,z) F_{R1,3}'(X,\lambda,z) + c_2(X,\lambda,z) F_{R1,1}'(X,\lambda,z) \\
& & + c_3(X,\lambda,z) F_{R1,3}(X,\lambda,z) + c_4(X,\lambda,z) F_{R1,1}(X,\lambda,z),
\end{eqnarray*}
and the functions $c_i$ are given in the proof of Lemma \ref{eqJost1} (see (\ref{defci})). We know that when $X \to 0$
\begin{eqnarray}\label{esticX}
  c_i(X,\lambda,z) = O(X), \quad \forall i \in \{1,2\}, \quad c_i(X,\lambda,z) = O(1), \quad \forall i \in \{3,4\},
\end{eqnarray}
and when $z \to \infty$ in the complex plane
\begin{eqnarray}\label{esticz}
  c_i(X,\lambda,z) = O \left( \frac{1}{z} \right) \quad \forall i \in \{1,2,3,4\}.
\end{eqnarray}
We now transform this pair of scalar equations of second order in a single vectorial equation of first order. We set,
\[U(X,\lambda,z) = \left( \begin{array}{c} F_{R1,1}(X,\lambda,z) \\ F_{R1,3}(X,\lambda,z) \\ F_{R1,1}'(X,\lambda,z) \\ F_{R1,3}'(X,\lambda,z) \end{array} \right),\]
and we obtain
\[ U'(X,\lambda,z) = A(X,\lambda,z) U(X,\lambda,z) + B(X,\lambda,z) U(X,\lambda,z),\]
where
\[A(X,\lambda,z) = \left( \begin{array}{cccc} 0 & 0 & 1 & 0\\ 0& 0 & 0& 1 \\  z^2 - \frac{\omega_-}{X^2} &0&0&0 \\ 0&z^2 - \frac{\omega_-}{X^2}&0&0 \end{array} \right), \quad 
B(X,\lambda,z) = \left( \begin{array}{cccc} 0 & 0 & 0 & 0\\ 0& 0 & 0& 0 \\ c_3 - q_- & c_4 & c_1 & c_2\\ c_4 & c_3 - q_- & c_2 & c_1 \end{array} \right).\]
We use the notation
\[U(X,\lambda,z) = \left( \begin{array}{c} U_1(X,\lambda,z) \\ U_2(X,\lambda,z) \\ U_3(X,\lambda,z) \\ U_4(X,\lambda,z) \end{array} \right),\]
and we begin by studying the homogeneous equation
\[ U'(X,\lambda,z) = A(X,\lambda,z) U(X,\lambda,z).\]
We note that this equation is equivalent to the pair of modified Bessel equations
\[ U_1''(X,\lambda,z) + \frac{\omega_-}{X^2} U_1 (X,\lambda,z) = z^2 U_1 (X,\lambda,z),\]
\[ U_3''(X,\lambda,z) + \frac{\omega_-}{X^2} U_3 (X,\lambda,z) = z^2 U_3 (X,\lambda,z).\]
For the basis of the space of solutions of these scalar equations we choose $(\sqrt{X}I_{\nu_-}(zX),\sqrt{X}I_{-\nu_-}(zX))$ ($\nu_- \notin \mathbb{Z}$). In the following of the proof we set $f_{\nu} = \sqrt{X}I_{\nu_-}(zX)$ 
and $g_{\nu} = \sqrt{X}I_{-\nu_-}(zX)$.
Thus, we can choose the basis
formed by the four vectors
\[V_1 = \left( \begin{array}{c} f_{\nu} \\0 \\ f_{\nu}' \\ 0 \end{array} \right), \quad V_2 = \left( \begin{array}{c} 0 \\ f_{\nu} \\ 0 \\ f_{\nu}' \end{array} \right), \quad
V_3 = \left( \begin{array}{c} g_{\nu} \\0 \\ g_{\nu}' \\ 0 \end{array} \right), \quad V_2 = \left( \begin{array}{c} 0 \\ g_{\nu} \\ 0 \\ g_{\nu}' \end{array} \right)\]
and we set $C(X,\lambda,z)$ the matrix in which the $\mathrm{i}^{\mathrm{th}}$ column is $V_i$. Using the properties on the Wronskian of two modified Bessel functions (see \cite{Leb}), we can prove the following Lemma.

\begin{lemma}\label{detC}
 \[\det (C(X,\lambda,z)) = W(f_{\nu},g_{\nu})^2 = \frac{4 \sin (\nu \pi)^2}{\pi^2} =: \Delta.\]
\end{lemma}

We can now study the inhomogeneous equation. We write the solution of the homogeneous equation under the form
\[U_0(X,\lambda,z) = \left( \begin{array}{c} \alpha_1 f_{\nu} + \beta_1 g_{\nu} \\ \alpha_3 f_{\nu} + \beta_3 g_{\nu} \\ \alpha_1 f_{\nu}' + \beta_1 g_{\nu}' \\ \alpha_3 f_{\nu}' + \beta_3 g_{\nu}' \end{array} \right),\]
where $\alpha_i$ and $\beta_i$, $i \in \{1,3\}$ are chosen so that the Jost functions satisfy the prescribed asymptotics at $X \to 0$. We can now write a generalized Duhamel's formula
\[ U(X,\lambda,z) = U_0(X,\lambda,z) + \int_{0}^{X} R(X,t,\lambda,z)B(t,\lambda,z)U(t,\lambda,z)  \, \mathrm dt,\]
where $R(X,t, \lambda,z)$ is the resolvent of the homogeneous problem for $t \in ]0,X[$. We can write $R(X,t,\lambda,z) = C(X,\lambda,z) C(t,\lambda,z)^{-1}$ and we obtain
\begin{eqnarray}\label{Duh}
 U(X,\lambda,z) = U_0(X,\lambda,z) + C(X,\lambda,z) \int_{0}^{X} C(t,\lambda,z)^{-1}B(t,\lambda,z)U(t,\lambda,z)  \, \mathrm dt.
\end{eqnarray}
The integral term makes sense thanks to the known asymptotics of the Jost function from the right and their derivatives when $X \to 0$, the 
good estimates on the functions $c_i$ given by (\ref{esticX}) and the formula (\ref{defI}).
We will use an iterative method to solve this integral equation. We define
\[ U^{0}(X,\lambda,z) = U_0(X,\lambda,z),\]
\begin{eqnarray}\label{ite}
 U^{k+1}(X,\lambda,z) = C(X,\lambda,z) \int_{0}^{X} C(t,\lambda,z)^{-1}B(t,\lambda,z)U^{k}(t,\lambda,z)  \, \mathrm dt, \quad \forall k \in \mathbb{N}.
\end{eqnarray}
Our aim is now to prove the following Lemma.

\begin{lemma}\label{estimUk}
 For all $X_{1} \in ]0,A[$ there exists a constant $C \in \mathbb{R}$ such that for all $X \in ]0,X_1[$ and for large pure imaginary complex $z$ we have
\[ |U_i^k(X,\lambda,z)| \leq \alpha_k \frac{5^k C^{3k+1}}{|\Delta|^k} X^{2k}, \quad i \in \{1,2\},\]
\[ |U_i^k(X,\lambda,z)| \leq \alpha_k \frac{5^k C^{3k+1} |z|}{|\Delta|^k} X^{2k-1}, \quad i \in \{3,4\},\]
where
\[ \alpha_0 = 1 \quad \mathrm{and} \quad \alpha_{k+1} = \left( \frac{1}{2k+1} + \frac{1}{2k+2} \right) \alpha_k, , \quad \forall k \in \mathbb{N}.\]
\end{lemma}

\begin{proof}
We begin the proof by the case $k=0$. Thanks to the estimates (\ref{asInu}) and the formula (\ref{defI}) we know that for all $X_{0} \in ]0,A[$ 
there exists a constant $C \in \mathbb{R}$ such that for all $X \in ]0,X_1[$ and for large $z = iy$, $y \in \mathbb{R}$:
\begin{eqnarray}\label{estimB}
 | f_{\nu}(X,\lambda,z) | \leq \frac{CX}{|z|^{\frac{1}{2}}}, \quad | g_{\nu}(X,\lambda,z) | \leq \frac{C}{|z|^{\frac{1}{2}}},
\end{eqnarray}
\begin{eqnarray}\label{estimdB}
 | f_{\nu}'(X,\lambda,z) | \leq C|z|^{\frac{1}{2}}, \quad | g_{\nu}'(X,\lambda,z) | \leq \frac{C|z|^{\frac{1}{2}}}{X}.
\end{eqnarray}
We recall that
\[U^0(X,\lambda,z) = \left( \begin{array}{c} \alpha_1 f_{\nu} + \beta_1 g_{\nu} \\ \alpha_3 f_{\nu} + \beta_3 g_{\nu} \\ \alpha_1 f_{\nu}' + \beta_1 g_{\nu}' \\ \alpha_3 f_{\nu}' + \beta_3 g_{\nu}' \end{array} \right).\]
We know that the constants $\alpha_i$ and $\beta_i$ don't depend on $X$ but they depend on $z$. Indeed, thanks to Lemma \ref{propoF_R1,1} we have
\[ \alpha_1 = \alpha_ 3 = \beta_3 = 0, \quad \beta_1 = \left( \frac{z}{2} \right)^{\nu_-} \left( \frac{\kappa_-}{a_-} \right)^{i\left( \frac{\lambda-c_-}{\kappa_-} \right)} \Gamma(1-\nu-).\]
Thus, using the estimates (\ref{estimB}) and (\ref{estimdB}) we obtain the proof of Lemma \ref{estimUk} in the case $k=0$.
We suppose that the result is true for an integer $k$ and we prove the result of the Lemma for the first and the third components, since the study is the same for the other components.
Thanks to (\ref{ite}) we know that (for a sake of simplicity we omit the parameters of the functions),
\begin{eqnarray*}
 U_1^{k+1}(X,\lambda,z) \Delta &=& -f_{\nu} \int_{0}^{X} g_{\nu}( (c_3-q_-) U_1^k + c_4 U_2^k + c_1 U_3^k + c_2 U_4^k)  \, \mathrm dt \\
 && + g_{\nu} \int_{0}^{X} f_{\nu}( (c_3-q_-) U_1^k + c_4 U_2^k + c_1 U_3^k + c_2 U_4^k)  \, \mathrm dt
\end{eqnarray*}
and
\begin{eqnarray*}
 U_3^{k+1}(X,\lambda,z) \Delta &=& -f_{\nu}' \int_{0}^{X} g_{\nu}( (c_3-q_-) U_1^k + c_4 U_2^k + c_1 U_3^k + c_2 U_4^k)  \, \mathrm dt \\
 && + g_{\nu}' \int_{0}^{X} f_{\nu}( (c_3-q_-) U_1^k + c_4 U_2^k + c_1 U_3^k + c_2 U_4^k)  \, \mathrm dt.
\end{eqnarray*}
Using the hypothesis we can estimate the integral terms. Indeed,
\begin{eqnarray*}
  \left| \int_{0}^{X} g_{\nu}( (c_3-q_-) U_1^k + c_4 U_2^k + c_1 U_3^k + c_2 U_4^k)  \, \mathrm dt \right| &\leq& 5C^2 \alpha_k \frac{5^k C^{3k+1}}{|\Delta|^k} \frac{X^{2k+1}}{2k+1}
\end{eqnarray*}
and
\begin{eqnarray*}
 \left| \int_{0}^{X} f_{\nu}( (c_3-q_-) U_1^k + c_4 U_2^k + c_1 U_3^k + c_2 U_4^k)  \, \mathrm dt \right| \leq 5C^2 \alpha_k \frac{5^k C^{3k+1}}{|\Delta|^k} \frac{X^{2k+2}}{2k+2}.
\end{eqnarray*}
Thus, using the estimates (\ref{estimB}) and (\ref{estimdB}), we can prove the Lemma for $U_1^{k+1}$ and $U_3^{k+1}$:
\begin{eqnarray*}
 |\Delta| U_1^{k+1}(X,\lambda,z) &\leq& \frac{5^{k+1} C^{3(k+1)+1}}{|\Delta|^k} \alpha_{k+1} X^{2(k+1)}
\end{eqnarray*}
and
\[ |\Delta| U_3^{k+1}(X,\lambda,z) \leq \frac{5^{k+1} C^{3(k+1)+1} |z|}{|\Delta|^k} \alpha_{k+1} X^{2(k+1)-1}.\]
Thus, the Lemma is proved.
\end{proof}

To use the iterative method we need a sommability result on the term
\[\frac{5^k C^{3k+1}}{|\Delta|^k} \alpha_k X^{2k},\]
which is given by the following Lemma.

\begin{lemma}\label{somalpha}
 \[\alpha_k \leq \frac{3^k}{(2k-1)!!},\]
where $(2k-1)!!$ denotes the product of all the odd number lower than $2k-1$.
\end{lemma}

\begin{proof}
 It suffices to note that
\[ \alpha_k = \prod_{j=1}^k \frac{4j-1}{2j(2j-1)} \leq \prod_{j=1}^k \left( \frac{2}{2j-1} + \frac{1}{(2j-1)^2} \right) \leq \prod_{j=1}^k \frac{3}{2j-1} = \frac{3^k}{(2k-1)!!}.\]
\end{proof}

Thanks to Lemmas \ref{estimUk} and \ref{somalpha} we obtain that for all $X_1 \in ]0,A[$, there exists a constant $C \in \mathbb{R}$ such that for all $X \in ]0,X_1[$ and 
for large pure imaginary complex $z$ we have 
for $i \in \{ 1,2\}$,
 \[ \left| \sum_{k \geq 0} U_i^k (X,\lambda,z) \right| \leq \sum_{k \geq 0} \alpha_k \frac{5^k C^{3k+1}}{|\Delta|^k} X^{2k} \leq \sum_{k \geq 0}  \frac{\left( \frac{15}{|\Delta|} \right)^k C^{3k+1} X^{2k}}{(2k-1)!!} < +\infty\]
and for $i \in \{ 3,4\}$,
 \[ \left| \sum_{k \geq 0} U_i^k (X,\lambda,z) \right| \leq \sum_{k \geq 0}  \frac{\left( \frac{15}{|\Delta|} \right)^k C^{3k+1} |z| X^{2k-1}}{(2k-1)!!} < +\infty.\]
We set
\[ U(X,\lambda,z) = \sum_{k \geq 0} U^k(X,\lambda,z),\]
which is the solution of the integral equation. We recall that
\[ U_1(X,\lambda,z) = F_{R1,1}(X,\lambda,z), \quad U_2(X,\lambda,z) = F_{R1,3}(X,\lambda,z),\]
and we note that we have shown that for all $X_1 \in ]0,A[$, the functions $F_{R1,1}$ and $F_{R1,3}$ are bounded for large $z \in i \mathbb{R}$ uniformly in $X \in ]0,X_1[$. 
Since the Jost functions are continuous in $z$, 
we conclude that $F_{R1,1}$ and $F_{R1,3}$ are bounded on $i \mathbb{R}$ uniformly in $X \in ]0,X_1[$.
\end{proof}

\begin{theorem}\label{estimJ}
 For all $X_1 \in ]0,A[$, there is a constant $C$ such that for all $X \in ]0,X_1[$ and for all $z$ in the complex plane,
\[ |F_{Ri,j}(X,\lambda,z)| \leq C e^{|\mathrm{Re}(z)|X}, \quad |F_{Ri,j}'(X,\lambda,z)| \leq \frac{C|z|}{X} e^{|\mathrm{Re}(z)|X}, \quad \forall (i,j) \in \{1,2,3,4\}^2.\]
 For all $X_1 \in ]0,A[$, there is a constant $C$ such that for all $X \in ]X_1,A[$ and for all $z$ in the complex plane,
\[ |F_{Li,j}(X,\lambda,z)| \leq C e^{|\mathrm{Re}(z)|X},\quad |F_{Li,j}'(X,\lambda,z)| \leq \frac{C|z|}{X} e^{|\mathrm{Re}(z)|X} \quad \forall (i,j) \in \{1,2,3,4\}^2.\]
\end{theorem}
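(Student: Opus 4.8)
The plan is to combine the two a priori estimates already established with a Phragm\'en--Lindel\"of argument, quadrant by quadrant. Recall that Lemma~\ref{estimM} together with the identity (\ref{lienFMR}) gives the crude bound $|F_{Ri,j}(X,\lambda,z)|\le Ce^{|z|X}$ for \emph{all} $z\in\C$, uniformly for $X\in]0,X_1[$, while Lemma~\ref{bornsuriR} gives $|F_{Ri,j}(X,\lambda,z)|\le C$ for $z\in i\R$, again uniformly in $X\in]0,X_1[$. Moreover, for each fixed $X$ the map $z\mapsto F_{Ri,j}(X,\lambda,z)$ is entire: this follows from the Volterra series (\ref{serieM_R1}) for the Faddeev blocks, which converges locally uniformly in $z$, from (\ref{lienFMR}), and from the fact that $k(x,z)$ is affine in $z$.

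Fix $X\in]0,X_1[$ and set $\phi(z)=F_{Ri,j}(X,\lambda,z)\,e^{-zX}$. On the first quadrant $\{z:\mathrm{Re}(z)\ge0,\ \mathrm{Im}(z)\ge0\}$, $\phi$ is analytic in the interior, continuous up to the boundary, and satisfies $|\phi(z)|\le Ce^{(|z|-\mathrm{Re}(z))X}\le Ce^{|z|X}$, so it has exponential growth of order $1<2$ in this sector of opening $\pi/2$. On the boundary ray $\R^+$ one has $|\mathrm{Re}(z)|=|z|$, so the crude bound yields $|\phi|\le C$; on the boundary ray $i\R^+$ one has $|e^{-zX}|=1$, so Lemma~\ref{bornsuriR} yields $|\phi|\le C$. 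The Phragm\'en--Lindel\"of principle for a sector of opening $\pi/2$ then gives $|\phi|\le C$, hence $|F_{Ri,j}(X,\lambda,z)|\le Ce^{\mathrm{Re}(z)X}=Ce^{|\mathrm{Re}(z)|X}$ throughout the first quadrant. Running the same argument on the fourth quadrant (with the same $\phi$) and on the second and third quadrants (with $\psi(z)=F_{Ri,j}(X,\lambda,z)\,e^{zX}$, using the ray $\R^-$ where again $|\mathrm{Re}(z)|=|z|$) covers the whole plane and proves the first displayed inequality; since all constants above are uniform in $X\in]0,X_1[$, so is the conclusion.

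For the derivative I would differentiate in $X$ the integral representations of Proposition~\ref{propoF_R1,1} and their analogues for the other blocks: since the Green kernels vanish on the diagonal, $G_i(X,X,z)=0$, no boundary term appears and $F_{Ri,j}'(X,\lambda,z)=\big(F_{Ri,j}^-\big)'(X,\lambda,z)+\int_0^X \partial_X G_i(t,X,z)\big(q_-(t,\lambda)F_{Ri,j}(t,\lambda,z)+h_{i,j,k}(t,\lambda,z)\big)\,\mathrm dt$. The modified Bessel asymptotics of \cite{Leb} (cf.\ also \cite{Iro,DKN}) bound $(F_{Ri,j}^-)'$ by $\frac{C|z|}{X}e^{|\mathrm{Re}(z)|X}$ and $\partial_X G_i$ by roughly $C|z|\,\frac{e^{|\mathrm{Re}(z)|(X-t)}}{1+|zt|^{1/2}}$; plugging in the bound on $F_{Ri,j}$ just obtained together with Proposition~\ref{estimG} controls all terms of the integrand not involving a derivative of the Jost functions. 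The terms of $h_{i,j,k}$ that do involve $F_{Ri,j}'$ and $F_{Ri,k}'$ carry coefficients that are $O(1/z)$ (Proposition~\ref{eqJost1}), so for $|z|$ large a short Gronwall/bootstrap on the pair $(F_{Ri,j}',F_{Ri,k}')$ closes the estimate, while for $|z|$ in a bounded set one falls back on continuity in $z$ and the crude bounds. The left Jost functions $F_{Li,j}$ are handled identically after the substitution $X\mapsto A-X$ and integration from $A$, using the expansions of $q$ and of the source terms near $X=A$ recorded in Lemma~\ref{eqJostX} and Section~\ref{Green}.

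I expect the main obstacle to be the derivative bound rather than the Phragm\'en--Lindel\"of step: one has to keep careful track of the coupling between $F_{Ri,j}'$ and $F_{Ri,k}'$ hidden inside $h_{i,j,k}$, and verify that the $1/z$ gain genuinely absorbs it uniformly down to the boundary $X\to0$, where the $1/X$ factor in $(F_{Ri,j}^-)'$ and the quadratic singularity of $q_-$ have to be dealt with at the same time.
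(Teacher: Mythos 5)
Your Phragm\'en--Lindel\"of argument for the bound on $F_{Ri,j}$ itself is exactly the paper's proof: the paper combines the crude bound $Ce^{|z|X}$ from Lemma~\ref{estimM} and (\ref{lienFMR}), the boundedness on $i\R$ from Lemma~\ref{bornsuriR}, and entirety in $z$, and invokes Phragm\'en--Lindel\"of; you merely supply the quadrant-by-quadrant details, correctly. The left Jost functions are treated the same way, as you say.

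For the derivative estimate, however, you have taken a much heavier route than necessary, and the part you yourself flag as the ``main obstacle'' is precisely the part you have not closed. The paper's argument is a one-liner: the Jost functions satisfy the \emph{first-order} stationary Dirac equation, which in the Liouville variable reads
\[
\Gamma^1 D_X F_R(h(X),\lambda,z) \;=\; \frac{1}{a(X)}\Bigl(\lambda - c(X) - z\,a(X)\Gamma^2 - b(X)\Gamma^0\Bigr)F_R(h(X),\lambda,z),
\]
so $F_R' = M(X,z)F_R$ with $\Vert M(X,z)\Vert \le C\bigl(a(X)^{-1}+|z|\bigr) \le C|z|/X$ on $]0,X_1[$ by Lemma~\ref{aspotX} (for $|z|\gtrsim 1$, which is the regime of interest), and the derivative bound follows instantly from the bound on $F_R$ just proved. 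By contrast, differentiating the second-order integral representation of Proposition~\ref{propoF_R1,1} forces you to control $\partial_X G_i$ near the diagonal and, worse, to absorb the $F_{Ri,j}'$ and $F_{Ri,k}'$ terms hidden in $h_{i,j,k}$ via a Gronwall/bootstrap that must be uniform as $X\to 0$, where the $1/X$ blow-up of $(F^-_{Ri,j})'$ and the quadratic singularity of $q$ interact; you sketch this but do not verify that it closes. As written this leaves a genuine gap in the derivative half of your proof, even though the statement is true and the fix (use the first-order ODE, not the second-order one) is immediate.
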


\begin{proof}
 Concerning the Jost functions from the right, thanks to Lemma \ref{estimM} we know that
\[ |F_{Ri,j}(X,\lambda,z)| \leq C e^{|z|X},\quad \forall (i,j) \in \{1,2,3,4\}^2.\]
 Moreover, thanks to Lemma \ref{bornsuriR} we know that
\[ |F_{Ri,j}(X,\lambda,z)| \leq C,\quad \forall z \in i \mathbb{R}, \quad \forall (i,j) \in \{1,2,3,4\}^2.\]
 Thus, since the Jost functions are entire (see Section 5), using the Phragm\'en-Lindel\"{o}f's Theorem we conclude that the first estimate of Theorem \ref{estimJ} is satisfied. The same proof is also true for the Jost functions from the left.\\
To prove the estimate for the derivative of the Jost functions we just recall that these functions satisfy the equation (\ref{eqJ}) and we conclude using the estimates on the Jost functions.
\end{proof}

\begin{coro}\label{estimh}
  For all $X_1 \in ]0,A[$, there is a constant $C$ such that for all $X \in ]0,X_1[$ and for all $z$ in the complex plane,
\[ |h_{i,j,k}(X,\lambda,z)| \leq C e^{|\mathrm{Re}(z)|X}, \quad \forall i, \in \{1,2,3,4\}, \quad \forall (j,k) \in \{(1,3),(2,4),(3,1),(4,2)\}.\]
\end{coro}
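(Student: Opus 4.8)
The plan is to read the estimate off the Jost–function bounds of Theorem \ref{estimJ}, after inserting them into the explicit expression of the remainder $h_{i,j,k}$. Recall from Lemma \ref{eqJostX} that $h_{i,j,k}(X,\lambda,z)=f_{i,j,k}(X,\lambda,z)/a(X)^2$, and from the proof of Proposition \ref{eqJost1} (equations (\ref{deffij})--(\ref{defci})) that $f_{i,j,k}$, hence $h_{i,j,k}$, is of the form
\[
 h_{i,j,k}=\gamma_1\,F_{Ri,j}'+\gamma_2\,F_{Ri,k}'+\gamma_3\,F_{Ri,j}+\gamma_4\,F_{Ri,k},
\]
with coefficients $\gamma_r=\gamma_r(X,\lambda,z)$ that are explicit rational expressions in $a,a',b,b',c$ and $z$ (and similarly for the components built from the Jost function from the left). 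The first thing I would do is record the bounds on these coefficients: using the horizon asymptotics of Lemmas \ref{aspot} and \ref{aspotX} — in particular the cancellation of the leading terms in the numerators $a^2bb'-a'ab^2$ and $-ab'+a'b$, which forces an extra power of the Regge--Wheeler exponential, i.e. of $X$ — one obtains that the coefficients of the \emph{differentiated} components vanish at the event horizon, $|\gamma_1(X,\lambda,z)|,|\gamma_2(X,\lambda,z)|\le C\,X/(1+|z|)$, while the coefficients of the undifferentiated components stay bounded, $|\gamma_3(X,\lambda,z)|,|\gamma_4(X,\lambda,z)|\le C$, all of this uniformly for $X\in\,]0,X_1[$; the decay in $1/|z|$ is the one already isolated in Proposition \ref{eqJost1}.

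Then I would simply substitute the bounds of Theorem \ref{estimJ}, namely $|F_{Ri,j}(X,\lambda,z)|\le C\,e^{|\mathrm{Re}(z)|X}$ and $|F_{Ri,j}'(X,\lambda,z)|\le (C|z|/X)\,e^{|\mathrm{Re}(z)|X}$, valid uniformly in $X\in\,]0,X_1[$ and $z\in\mathbb{C}$. The undifferentiated terms $\gamma_3F_{Ri,j}$ and $\gamma_4F_{Ri,k}$ are then $O(e^{|\mathrm{Re}(z)|X})$ at once. The only delicate terms are $\gamma_1F_{Ri,j}'$ and $\gamma_2F_{Ri,k}'$, because of the $|z|/X$ factor carried by the derivative estimate; but the bound on $\gamma_1,\gamma_2$ is exactly tailored to absorb it:
\[
 |\gamma_1(X,\lambda,z)\,F_{Ri,j}'(X,\lambda,z)|\le\frac{CX}{1+|z|}\cdot\frac{C|z|}{X}\,e^{|\mathrm{Re}(z)|X}\le C\,e^{|\mathrm{Re}(z)|X},
\]
and likewise for the $\gamma_2$ term. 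Summing the four contributions gives $|h_{i,j,k}(X,\lambda,z)|\le C\,e^{|\mathrm{Re}(z)|X}$ for all $z\in\mathbb{C}$, uniformly for $X\in\,]0,X_1[$. For the components built from the Jost function from the left the argument is word for word the same, with $X$ replaced by $A-X$ and the asymptotics taken at the cosmological horizon instead.

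The one step that is not purely mechanical is the joint coefficient bound $|\gamma_1|,|\gamma_2|\le CX/(1+|z|)$: it requires combining the cancellation of the leading-order behaviour of the numerators near both horizons (which yields the gain of one power of $X$, respectively of $A-X$) with the $O(1/z)$ control already present in Proposition \ref{eqJost1}, and keeping track that all these estimates are uniform in $z$. Once this is in hand, Corollary \ref{estimh} is just the substitution of Theorem \ref{estimJ} into the linear relation for $h_{i,j,k}$ displayed above; it is the quantitative, $z$-uniform upgrade of the qualitative statement $h_{i,j,k}=O(1)$ (for fixed $z$, as $X\to 0$ or $X\to A$) already recorded in Lemma \ref{termesourceb}.
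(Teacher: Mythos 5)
Your proposal is correct and follows essentially the same route as the paper: the paper likewise writes $h_{i,j,k}$ as the linear combination $c_1F_{Ri,j}'+c_2F_{Ri,k}'+c_3F_{Ri,j}+c_4F_{Ri,k}$, invokes the coefficient bounds (\ref{esticX})--(\ref{esticz}) (your $|\gamma_1|,|\gamma_2|\le CX/(1+|z|)$, $|\gamma_3|,|\gamma_4|\le C$), and substitutes Theorem \ref{estimJ}, the $X/(1+|z|)$ factor absorbing the $|z|/X$ from the derivative estimate exactly as you describe. Your write-up simply makes explicit the cancellation in the numerators that the paper leaves implicit in (\ref{esticX}).
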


\begin{proof}
 We only prove that $h_{1,1,3}(X,\lambda,z)$ satisfies this estimate since the other cases are similar. We recall that
\begin{eqnarray*}
h_{1,1,3}(X,\lambda,z) &=& c_1(X,\lambda,z) F_{R1,1}'(X,\lambda,z) + c_2(X,\lambda,z) F_{R1,3}'(X,\lambda,z) \\
& &  + c_3(X,\lambda,z) F_{R1,1}(X,\lambda,z) + c_4(X,\lambda,z) F_{R1,3}(X,\lambda,z),
\end{eqnarray*}
where the function $c_i$ satisfy (\ref{esticX}) and (\ref{esticz}).
Therefore, thanks to Theorem \ref{estimJ}, we obtain the good estimate.
\end{proof}

\subsection{Asymptotics of the Jost function from the right for large complex z}
\noindent
We begin by studying the first block. Since the other coefficients can be treated using exactly the same argument, we just give the proof 
of Theorem \ref{asJostdroite} for the coefficients $F_{R1,1}$ and $F_{R1,2}$.
We prove the following Proposition:

\begin{prop}\label{approxFR_1}
 For all $X_1 \in ]0,A[$ there exists a constant C such that for large $z$ in the complex plane and for all $X \in ]0,X_1[$,
\[\vert F_{R1,1}(X,\lambda,z) - \alpha_{11} \sqrt{X}I_{-\nu_-}(zX) \vert \leq C \frac{e^{|\mathrm{Re}(z)|X}}{\vert z \vert}\]
and
\[ F_{R1,2}(X,\lambda,z)  = O \left( \frac{e^{|\mathrm{Re}(z)|X}}{\vert z \vert} \right).\]
\end{prop}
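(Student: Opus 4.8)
The plan is to read the estimate off directly from the integral representations of Proposition~\ref{propoF_R1,1}, combined with the a priori bounds already established: the Green kernel estimate of Proposition~\ref{estimG}, the exponential bound $|F_{R1,j}(X,\lambda,z)| \leq Ce^{|\mathrm{Re}(z)|X}$ of Theorem~\ref{estimJ}, and the bound $|h_{i,j,k}(X,\lambda,z)| \leq Ce^{|\mathrm{Re}(z)|X}$ of Corollary~\ref{estimh}. No iteration is needed here: the content is just a careful scalar integral estimate.

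First I would recall from Proposition~\ref{propoF_R1,1} that, for $|z| \geq 1$,
\[F_{R1,1}(X,\lambda,z) - \alpha_{11}\sqrt{X}\,I_{-\nu_-}(zX) = \int_0^X G_1(t,X,z)\bigl(q_-(t,\lambda)F_{R1,1}(t,\lambda,z) + h_{1,1,3}(t,\lambda,z)\bigr)\,\mathrm dt.\]
Since $q_-(\cdot,\lambda)$ is continuous, hence bounded, on $[0,X_1]$, combining this with Theorem~\ref{estimJ} and Corollary~\ref{estimh} yields the source bound
\[\bigl|q_-(t,\lambda)F_{R1,1}(t,\lambda,z) + h_{1,1,3}(t,\lambda,z)\bigr| \leq C\,e^{|\mathrm{Re}(z)|t}, \qquad t \in\,]0,X_1[\,.\]
Inserting the Green kernel bound of Proposition~\ref{estimG} and pulling the exponential out of the integral, the modulus of the right-hand side is at most
\[\frac{C\,e^{|\mathrm{Re}(z)|X}}{1+|zX|^{1/2}} \int_0^X \frac{\mathrm dt}{1+|zt|^{1/2}} \;\leq\; \frac{C\,e^{|\mathrm{Re}(z)|X}}{1+|zX|^{1/2}}\int_0^X \frac{\mathrm dt}{|z|^{1/2}\,t^{1/2}} \;=\; \frac{2C\,X^{1/2}\,e^{|\mathrm{Re}(z)|X}}{|z|^{1/2}\,\bigl(1+|zX|^{1/2}\bigr)}.\]

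It then remains to use the elementary inequality $\frac{X^{1/2}}{|z|^{1/2}\,(1+|zX|^{1/2})} \leq \frac{1}{|z|}$, valid for $|z| \geq 1$: if $|z|X \geq 1$ then $1+|zX|^{1/2} \geq |z|^{1/2}X^{1/2}$ and the left-hand side equals $1/|z|$; if $|z|X < 1$ then $X^{1/2} < |z|^{-1/2}$, and keeping only $1+|zX|^{1/2} \geq 1$ gives the same bound. Hence $|F_{R1,1}(X,\lambda,z) - \alpha_{11}\sqrt{X}\,I_{-\nu_-}(zX)| \leq C\,e^{|\mathrm{Re}(z)|X}/|z|$, uniformly in $X \in\,]0,X_1[$. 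For $F_{R1,2}$ I would argue identically, now starting from the second identity of Proposition~\ref{propoF_R1,1} --- which has no homogeneous term --- together with the bounds $|F_{R1,2}(t,\lambda,z)| \leq C\,e^{|\mathrm{Re}(z)|t}$ and $|h_{1,2,4}(t,\lambda,z)| \leq C\,e^{|\mathrm{Re}(z)|t}$; the same computation gives $F_{R1,2}(X,\lambda,z) = O\bigl(e^{|\mathrm{Re}(z)|X}/|z|\bigr)$.

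There is no genuine obstacle here: everything reduces to the scalar estimate above, and the only point requiring a moment's care is that the factor $1+|zX|^{1/2}$ in the denominator of the Green kernel must be exploited to absorb the $|z|^{-1/2}$ loss coming from the integrable $t^{-1/2}$ singularity at the origin --- which is exactly what the two-case inequality achieves. One should also note that Theorem~\ref{estimJ} and Corollary~\ref{estimh} were established independently of the present Proposition, so the argument is not circular.
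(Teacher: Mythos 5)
Your proposal is correct and follows essentially the same route as the paper: both start from the integral representation of Proposition \ref{propoF_R1,1}, bound the source term via Theorem \ref{estimJ}, Corollary \ref{estimh} and the boundedness of $q_-$, and then apply the Green kernel estimate of Proposition \ref{estimG}; the only difference is that you make explicit the elementary two-case inequality absorbing the $|z|^{-1/2}$ loss from the $t^{-1/2}$ singularity, which the paper leaves implicit. Your remark on non-circularity with respect to Theorem \ref{estimJ} is accurate and matches the paper's logical order.
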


\begin{remark}
 Since
\[I_{\nu}(z) = \sum_{k=0}^{\infty} \frac{\left( \frac{z}{2} \right)^{\nu + 2k}}{\Gamma(k+\nu+1) k!}, \quad \vert z \vert < \infty, \quad \vert \arg(z) \vert < \pi\]
and
\[ \alpha_{11} = \frac{z^{\nu_-}}{2^{\nu_-}} \left( \frac{\kappa_-}{a_-} \right)^{i \left( \frac{\lambda-c_-}{\kappa_-} \right)} \Gamma(1-\nu_-),\]
there is no problem to extend the function $z \mapsto \alpha_{11} \sqrt{X}I_{-\nu_-}(zX)$ on $\mathbb{R}^-$ by symmetry since this function is even.
We expected to have this symmetry for large $z$ since in the massless case (\cite{DN})
 the first Jost function from the right is even.
\end{remark}

\begin{proof}
 Thanks to Proposition \ref{propoF_R1,1}, we know that
\[F_{R1,1}(X,\lambda,z) = \alpha_{11} \sqrt{X} I_{-\nu_-}(zX) + \int_{0}^{X} G_1(t,X,z)(q_-(t,\lambda)F_{R1,1}(X,\lambda,z) + h_{1,1,3}(t,\lambda,z))\, \mathrm dt,\]
where
\[ \alpha_{11} = \frac{z^{\nu_-}}{2^{\nu_-}} \left( \frac{\kappa_-}{a_-} \right)^{i \left( \frac{\lambda-c_-}{\kappa_-} \right)} \Gamma(1-\nu_-) \quad \quad \mathrm{and} \quad \quad \nu_- = \frac{1}{2}-i \frac{(\lambda-c_-)}{\kappa_-}.\]
Using Proposition \ref{estimG} and Theorem \ref{estimJ} we obtain that 
 for all $X_1 \in ]0,A[$ there exists a constant C such that for large $z$ in the complex plane and for all $X \in ]0,X_1[$,
\begin{eqnarray*}
 \left| \int_{0}^{X} G_1(t,X,z)q_-(t,\lambda)F_{R1,1}(X,\lambda,z)\, \mathrm dt \right| &\leq& C\frac{e^{|\mathrm{Re}(z)|X}}{1+\vert zX \vert^{\frac{1}{2}}} \int_{0}^{X} \frac{1}{1+\vert zt \vert^{\frac{1}{2}}}\, \mathrm dt\\
&\leq& C \frac{e^{|\mathrm{Re}(z)|X}}{\vert z \vert}.
\end{eqnarray*}
Thus, for all $X_1 \in ]0,A[$ there exists a constant C such that for large $z$ in the complex plane and for all $X \in ]0,X_1[$,
\[ \left| F_{R1,1}(X,\lambda,z) - \alpha_{11} \sqrt{X} I_{-\nu_-}(zX) + \int_{0}^{X} G_1(t,X,z)h_{1,1,3}(t,\lambda,z)\, \mathrm dt \right| \leq C \frac{e^{|\mathrm{Re}(z)|X}}{\vert z \vert}.\]
Moreover, thanks to Proposition \ref{estimG} and Corollary \ref{estimh} we have, for large $z$ in the complex plane
\[\left| \int_{0}^{X} G_1(t,X,z)h_{1,1,3}(t,\lambda,z)\, \mathrm dt \right| \leq C \frac{e^{|\mathrm{Re}(z)|X}}{\vert z \vert}, \quad \forall X \in ]0,X_1[.\]
We conclude that for all $X_1 \in ]0,A[$ there exists a constant C such that for large $z$ in the complex plane,
\[\vert F_{R1,1}(X,\lambda,z) - \alpha_{11} \sqrt{X}I_{-\nu_-}(zX) \vert \leq C \frac{e^{|\mathrm{Re}(z)|X}}{\vert z \vert}, \quad \forall X \in ]0,X_1[.\]
By the same argument we obtain that for all $X_1 \in ]0,A[$ there exists a constant C such that for large $z$ in the complex plane,
\[ \vert F_{R1,2}(X,\lambda,z) \vert  \leq C \frac{e^{|\mathrm{Re}(z)|X}}{\vert z \vert}, \quad \forall X \in ]0,X_1[.\]
\end{proof}


Concerning the second block we obtain the following Proposition. Since the analysis is similar we omit the proof.

\begin{prop}\label{approxFR_2}
For all $X_1 \in ]0,A[$ there exists a constant C such that for large $z$ in the complex plane,
\[ \vert F_{R2,1}(X,\lambda,z) \vert  = C \frac{e^{|\mathrm{Re}(z)|X}}{\vert z \vert}, \quad  \forall X \in ]0,X_1[\]
and
\[\vert F_{R2,2}(X,\lambda,z) - \alpha_{22} \sqrt{X}I_{\nu_-}(zX) \vert \leq C \frac{e^{|\mathrm{Re}(z)|X}}{\vert z \vert}, \quad  \forall X \in ]0,X_1[.\]
\end{prop}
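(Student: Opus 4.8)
The plan is to follow, almost verbatim, the argument given for Proposition~\ref{approxFR_1}, now starting from the integral representations of the second block furnished by Proposition~\ref{propoF_R2}. Recall that these read
\[F_{R2,1}(X,\lambda,z) = F_{R2,1}^-(X,\lambda,z) + \int_{0}^{X} G_2(t,X,z)\big(q_-(t,\lambda)F_{R2,1}(t,\lambda,z) + h_{2,1,3}(t,\lambda,z)\big)\, \mathrm dt,\]
\[F_{R2,2}(X,\lambda,z) = F_{R2,2}^-(X,\lambda,z) + \int_{0}^{X} G_2(t,X,z)\big(q_-(t,\lambda)F_{R2,2}(t,\lambda,z) + h_{2,2,4}(t,\lambda,z)\big)\, \mathrm dt,\]
with $F_{R2,2}^- = \alpha_{22}\sqrt{X}I_{\nu_-}(zX)$ the claimed leading term, and $F_{R2,1}^- = \alpha_{21}\sqrt{X}I_{\nu_-}(zX)$ where $\alpha_{21} = \frac{b_-}{2a_-}\left(\frac{z}{2}\right)^{-\nu_-}\left(\frac{\kappa_-}{a_-}\right)^{-i(\lambda-c_-)/\kappa_-}\Gamma(1-\mu_-)$.

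First I would estimate the Volterra integral terms exactly as in Proposition~\ref{approxFR_1}. By Proposition~\ref{estimG}, for $|z|\geq 1$ one has $|G_2(t,X,z)|\leq Ce^{|\mathrm{Re}(z)|(X-t)}/\big((1+|zX|^{1/2})(1+|zt|^{1/2})\big)$; by Theorem~\ref{estimJ}, $|F_{R2,j}(t,\lambda,z)|\leq Ce^{|\mathrm{Re}(z)|t}$, while $q_-$ is bounded on $]0,X_1[$ by Lemma~\ref{eqJostX}; and by Corollary~\ref{estimh}, $|h_{2,j,k}(t,\lambda,z)|\leq Ce^{|\mathrm{Re}(z)|t}$. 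Substituting and integrating, one gets uniformly for $X\in]0,X_1[$
\[\left| \int_{0}^{X} G_2(t,X,z)\big(q_-F_{R2,j} + h_{2,j,k}\big)\, \mathrm dt\right| \leq C\,\frac{e^{|\mathrm{Re}(z)|X}}{1+|zX|^{1/2}}\int_{0}^{X}\frac{\mathrm dt}{1+|zt|^{1/2}} \leq C\,\frac{e^{|\mathrm{Re}(z)|X}}{|z|}.\]

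It then remains to handle the free terms $F_{R2,j}^-$, and this is the only place where the two coefficients behave differently. For $j=2$: since $\mathrm{Re}(\mu_-)=\tfrac12$ one has $|(z/2)^{\mu_-}| = |z/2|^{1/2}e^{-\mathrm{Im}(\mu_-)\arg(z)}$, so the prefactor $\alpha_{22}$ grows like $|z|^{1/2}$ and $\alpha_{22}\sqrt{X}I_{\nu_-}(zX)$ genuinely has size $e^{|\mathrm{Re}(z)|X}$; together with the integral bound this is exactly the second estimate of the Proposition. For $j=1$: since $\mathrm{Re}(\nu_-)=\tfrac12$, we instead have $|(z/2)^{-\nu_-}|\leq C|z|^{-1/2}$, and using the asymptotics~(\ref{asInu}) — extended to all of $\mathbb{C}$ by the evenness of $z\mapsto \alpha_{2j}\sqrt{X}I_{\nu_-}(zX)$ noted after Theorem~\ref{asJostdroite} — together with the uniform bound $|\sqrt{X}I_{\nu_-}(zX)|\leq C\,Xe^{|\mathrm{Re}(z)|X}/(1+|zX|)^{1/2}$, we obtain $|F_{R2,1}^-(X,\lambda,z)|\leq C|z|^{-1/2}Xe^{|\mathrm{Re}(z)|X}/(1+|zX|)^{1/2}\leq Ce^{|\mathrm{Re}(z)|X}/|z|$, where the last step uses $X|z|^{1/2}/(1+|zX|)^{1/2}\leq X_1^{1/2}$. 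Combining with the integral estimate gives the first estimate.

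I do not expect any genuine obstacle: the argument is a line-by-line transcription of Proposition~\ref{approxFR_1}, the single new observation being that $\mathrm{Re}(\nu_-)=\tfrac12>0$ makes $F_{R2,1}^-$ subleading. The only care required — as in Proposition~\ref{approxFR_1} — is the uniformity of the Green-kernel, Jost-function and modified-Bessel bounds down to the endpoint $X=0$, and dealing with the negative real $z$-axis via the evenness of $z\mapsto G_2(t,X,z)$ recorded after Proposition~\ref{estimG}; this is precisely the routine bookkeeping the authors have in mind when they call the proof ``similar'' and omit it.
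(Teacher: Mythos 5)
Your overall strategy is exactly what the paper intends: the paper omits the proof of Proposition \ref{approxFR_2} with the remark that the analysis is similar to that of Proposition \ref{approxFR_1}, and your transcription of that argument --- Volterra integral term controlled by Proposition \ref{estimG}, Theorem \ref{estimJ} and Corollary \ref{estimh}, plus a separate estimate of the free terms $F_{R2,j}^-$ from Proposition \ref{propoF_R2} --- is the intended route. You also correctly isolate the one genuinely new point: unlike $F_{R1,2}$, whose free term vanishes ($\alpha_{12}=\beta_{12}=0$), the coefficient $F_{R2,1}$ carries a nonzero free term $\alpha_{21}\sqrt{X}I_{\nu_-}(zX)$ (proportional to the mass through $b_-$), which must itself be shown to be $O(e^{|\mathrm{Re}(z)|X}/|z|)$.

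However, the uniform Bessel bound you invoke for that step, $|\sqrt{X}I_{\nu_-}(zX)|\leq C\,X e^{|\mathrm{Re}(z)|X}/(1+|zX|)^{1/2}$, is false as stated. For $|zX|\to\infty$ the left-hand side behaves like $\sqrt{X}\,e^{|\mathrm{Re}(z)|X}/|zX|^{1/2}=e^{|\mathrm{Re}(z)|X}/|z|^{1/2}$, while your right-hand side is of order $X^{1/2}e^{|\mathrm{Re}(z)|X}/|z|^{1/2}$, smaller by the unbounded factor $X^{-1/2}$ (take for instance $X=|z|^{-1/2}\to 0$, so that $|zX|\to\infty$). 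The correct uniform bound, obtained by interpolating the small-argument behaviour $|I_{\nu_-}(w)|\asymp |w|^{1/2}$ (valid since $\mathrm{Re}(\nu_-)=\tfrac12$) with the large-argument asymptotics (\ref{asInu}), is $|I_{\nu_-}(w)|\leq C\,\frac{|w|^{1/2}}{1+|w|}\,e^{|\mathrm{Re}(w)|}$, whence $|\sqrt{X}I_{\nu_-}(zX)|\leq C\,\frac{X|z|^{1/2}}{1+|zX|}\,e^{|\mathrm{Re}(z)|X}\leq C\,|z|^{-1/2}e^{|\mathrm{Re}(z)|X}$ uniformly in $X\in\,]0,X_1[$ for $|z|\geq 1$. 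Combined with $|\alpha_{21}|\leq C|z|^{-1/2}$ this gives $|F_{R2,1}^-(X,\lambda,z)|\leq Ce^{|\mathrm{Re}(z)|X}/|z|$ as required, so your conclusion and the rest of the proof are unaffected; only that one displayed inequality needs to be replaced.
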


\begin{remark}
 Since
\[I_{\nu}(z) = \sum_{k=0}^{\infty} \frac{\left( \frac{z}{2} \right)^{\nu + 2k}}{\Gamma(k+\nu+1) k!}, \quad \vert z \vert < \infty, \quad \vert \arg(z) \vert < \pi,\]
\[ \alpha_{21} = \frac{b_-}{2a_-} \left(\frac{z}{2} \right)^{-\nu_-}  \left( \frac{\kappa_-}{a_-} \right)^{-i \left( \frac{\lambda-c_-}{\kappa_-} \right)} \frac{\Gamma(1+\nu_-)}{\nu_-}\]
and
\[ \alpha_{22} = i \frac{z}{2} \left(\frac{z}{2} \right)^{-\nu_-}  \left( \frac{\kappa_-}{a_-} \right)^{-i \left( \frac{\lambda-c_-}{\kappa_-} \right)} \frac{\Gamma(1+\nu_-)}{\nu_-}.\]
there is no problem to extend the functions $z \mapsto \alpha_{21} \sqrt{X}I_{\nu_-}(zX)$ and $z \mapsto \alpha_{22} \sqrt{X}I_{\nu_-}(zX)$ on $\mathbb{R}^-$ by symmetry since these functions have parity properties.
We note that the function $z \mapsto \alpha_{22} \sqrt{X}I_{\nu_-}(zX)$ is odd. We expected to have this symmetry for large $z$ since, in the massless case (see \cite{DN}), the second component of Jost function from the right is odd.
\end{remark}

\subsection{Asymptotics of the Jost functions from the left for large z}
\noindent
As for the Jost functions from the right we can prove that the Jost functions from the left satisfy the following estimates.

\begin{theorem}\label{asJostgauche}
 We set
\[ \alpha_L(z) = \left( \frac{z}{2} \right)^{\nu_+} \left( - \frac{\kappa_+}{a_+} \right)^{i \left( \frac{\lambda-c_+}{\kappa_+} \right)} \Gamma(1-\nu_+)\]
and
\[ \beta_L(z) = \left( \frac{z}{2} \right)^{\mu_+} \left(- \frac{\kappa_+}{a_+} \right)^{-i \left( \frac{\lambda-c_+}{\kappa_+} \right)} \Gamma(1-\mu_+).\]
For large $z$ in the complex plane (see remark below), the Jost functions from the left satisfy the following estimates, uniformly on each compact subset of $]0,A[$,
\[ \left| \left| F_{L1}(X,\lambda,z) - \alpha_L(z) \sqrt{A-X} I_{-\nu_+}(z(A-X)) \begin{pmatrix}  1 & 0 \\ 0 & 1   \end{pmatrix} \right|\right| = O \left( \frac{e^{| \mathrm{Re}(z)| (A-X)}}{\vert z \vert} \right),\]
\[ \left| \left| F_{L2}(X,\lambda,z) + i \beta_L(z) \sqrt{A-X} I_{\nu_+}(z(A-X)) \begin{pmatrix}  0 &1 \\ -1 & 0   \end{pmatrix} \right| \right| = O \left( \frac{e^{| \mathrm{Re}(z) | (A-X)}}{|z|} \right) ,\]
\[ \left| \left| F_{L3}(X,\lambda,z) +i \overline{\beta_L(\bar{z})} \sqrt{A-X} I_{\mu_+}(z(A-X)) \begin{pmatrix}  0 &1 \\ -1 & 0   \end{pmatrix} \right| \right| = O \left( \frac{e^{| \mathrm{Re}(z) | (A-X)}}{|z|} \right) ,\]
\[ \left| \left| F_{L4}(X,\lambda,z) - \overline{\alpha_L(\bar{z})} \sqrt{A-X} I_{-\mu_+}(z(A-X))  \begin{pmatrix}  1 & 0 \\ 0 & 1   \end{pmatrix} \right|\right| = O \left( \frac{e^{| \mathrm{Re}(z)| (A-X)}}{\vert z \vert} \right).\]
\end{theorem}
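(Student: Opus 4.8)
The plan is to reproduce, \emph{mutatis mutandis}, the proof of Theorem~\ref{asJostdroite}, exchanging the roles of the two horizons: the boundary $X=0$ is replaced by $X=A$, the surface gravity $\kappa_-$ by $\kappa_+$, the limiting values $a_-,b_-,c_-$ by $a_+,b_+,c_+$, and the Bessel orders $\nu_-,\mu_-$ by $\nu_+=\frac12-i\frac{\lambda-c_+}{\kappa_+}$ and $\mu_+=\overline{\nu_+}$. Concretely, I would perform the change of variable $X\mapsto A-X$ in the Sturm--Liouville equations of Lemma~\ref{eqJostX}, so that the quadratic singularity (\ref{quadrasingA}) at $A$ becomes a singularity of the type treated in Section~\ref{Green}, and work with the left Faddeev matrix $\hat M_L$ and the link (\ref{lienFML}) in place of $\hat M_R$ and (\ref{lienFMR}). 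The crude a priori bound comes from Lemma~\ref{estimM}, which here gives $\|\hat M_{Li}(x,\lambda,z)\|\le C\exp(|z|(A-X))$, hence $|F_{Li,j}(X,\lambda,z)|\le Ce^{|z|(A-X)}$.

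First I would establish the analogues of Lemmas~\ref{estimF_R1} and~\ref{estimF_R2}: the behaviour of the components $F_{Li,j}(X,\lambda,z)$ as $X\to A$. Writing $\hat M_L$ as the norm-convergent series analogous to (\ref{serieM_R1}) and using $\|k(s,z)\|\le Ca(s)$ near $+\infty$ (Lemma~\ref{aspot}), one gets $\|\hat M_{L1}(x,\lambda,z)-I_2\|=O((A-X)^2)$ as $X\to A$, and similarly for the other blocks; combining this with the $X\to A$ asymptotics $e^{\kappa_+x}\sim-\frac{\kappa_+}{a_+}(A-X)$ and $C^-(x)=c_+x+\mathrm{const}+O((A-X)^2)$ — the analogues of (\ref{estimexpX}) and of the computation of $C^-$ in the proof of Lemma~\ref{estimF_R1} — pins down the leading term of each $F_{Li,j}$ near $A$ and accounts for the factors $(-\kappa_+/a_+)^{\pm i(\lambda-c_+)/\kappa_+}$ in $\alpha_L(z)$ and $\beta_L(z)$.

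Next I would set up the perturbative scheme. The homogeneous equation $u''+\frac{\omega_+}{(A-X)^2}u=z^2u$ (and its complex conjugate, governing $F_{L3},F_{L4}$ with $\mu_+$ in place of $\nu_+$) has the fundamental system $\sqrt{A-X}\,I_{\pm\nu_+}(z(A-X))$, $\sqrt{A-X}\,K_{\pm\nu_+}(z(A-X))$, from which one builds Green kernels $G_i^+(t,X,z)$ supported in $t\in\,]X,A[$ and satisfying the analogue of Proposition~\ref{estimG}, namely $|G_i^+(t,X,z)|\le\frac{Ce^{|\mathrm{Re}(z)|(t-X)}}{(1+|z(A-X)|^{1/2})(1+|z(A-t)|^{1/2})}$ for $|z|\ge1$. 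Writing each component as $F_{Li,j}(X,\lambda,z)=F_{Li,j}^+(X,\lambda,z)+\int_X^A G_i^+(t,X,z)\big(q_+(t,\lambda)F_{Li,j}(t,\lambda,z)+h_{i,j,k}(t,\lambda,z)\big)\,\mathrm dt$ and imposing the boundary behaviour found above determines the principal terms $F_{Li,j}^+$ (the analogue of Propositions~\ref{propoF_R1,1} and~\ref{propoF_R2}); in particular the coefficients of the $K$-Bessel functions and the off-diagonal entries vanish. Finally the integral remainder is estimated by $O(e^{|\mathrm{Re}(z)|(A-X)}/|z|)$ exactly as in the proof of Proposition~\ref{approxFR_1}, using the kernel bound together with $|F_{Li,j}(X,\lambda,z)|\le Ce^{|\mathrm{Re}(z)|(A-X)}$ and $|h_{i,j,k}(X,\lambda,z)|\le Ce^{|\mathrm{Re}(z)|(A-X)}$ (Theorem~\ref{estimJ} and Corollary~\ref{estimh} in the form adapted to the cosmological end). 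The extension of the asymptotics to all of $\mathbb{C}$, including a neighbourhood of $\mathbb{R}^-$, follows from the parity in $z$ of the maps $z\mapsto\alpha_L(z)\sqrt{A-X}\,I_{-\nu_+}(z(A-X))$ and its three companions, exactly as in the remark after Theorem~\ref{asJostdroite}.

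The step I expect to be the main obstacle, as in the right-horizon case, is the improvement of the crude bound $e^{|z|(A-X)}$ to $e^{|\mathrm{Re}(z)|(A-X)}$, i.e. proving that $z\mapsto F_{Li,j}(X,\lambda,z)$ is bounded on $i\mathbb{R}$ uniformly on each compact subset of $]X_1,A[$ — the analogue of Lemma~\ref{bornsuriR} — after which Phragm\'en-Lindel\"{o}f's Theorem yields the claim. I would transform each coupled pair of scalar equations near $A$ into a first order vectorial system, use the fundamental matrix built from $\sqrt{A-X}\,I_{\pm\nu_+}(z(A-X))$ and the constancy of the associated Wronskian (analogue of Lemma~\ref{detC}), write a generalized Duhamel formula, and run the same iteration controlled by the coefficients $\alpha_k$ (Lemmas~\ref{estimUk} and~\ref{somalpha}); this relies on the coupling coefficients satisfying $c_i=O(A-X)$ for $i\in\{1,2\}$, $c_i=O(1)$ for $i\in\{3,4\}$ as $X\to A$, and $c_i=O(1/z)$ as $|z|\to\infty$, which follows from Lemma~\ref{aspot} and the $X\to A$ analogue of Lemma~\ref{aspotX}. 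Everything else is a routine transcription of the arguments of Section~\ref{asym}.
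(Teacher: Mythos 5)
Your proposal is correct and follows exactly the route the paper itself takes: the paper gives no separate argument for Theorem \ref{asJostgauche} beyond the remark that it is proved ``as for the Jost functions from the right,'' i.e.\ by transcribing Sections 4.2--4.5 under the substitutions $X\mapsto A-X$, $\kappa_-\mapsto\kappa_+$, $(a_-,b_-,c_-)\mapsto(a_+,b_+,c_+)$, $\nu_-\mapsto\nu_+$, working with $\hat M_L$ and (\ref{lienFML}) in place of $\hat M_R$ and (\ref{lienFMR}). All the ingredients you list (boundary asymptotics at $X\to A$ producing the factor $e^{\kappa_+x}\sim-\frac{\kappa_+}{a_+}(A-X)$, the Green kernel on $]X,A[$, the Volterra/Duhamel iteration, the $i\mathbb{R}$ bound plus Phragm\'en--Lindel\"of, and the parity extension past $\mathbb{R}^-$) match the intended argument, including the correct form $e^{|\mathrm{Re}(z)|(A-X)}$ of the improved bound for the left Jost functions.
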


\begin{remark}
 As in the study of the Jost function from the right we have to be careful because the asymptotics of the modified Bessel functions are given in the whole complex plane 
 except near the axis $\R^-$. However using parity/imparity properties (given by the definition of $\alpha_L(z)$ 
 and $\beta_L(z)$) we can extend the asymptotics to the whole complex plane.
\end{remark}

\noindent
As a consequence, using the asymptotic (\ref{asInu}),
we can prove the following asymptotics of the Jost functions from the left for $z \to +\infty$, $z$ real.

\begin{theorem}\label{asJostgaucheR}
 The Jost functions from the left satisfy the following asymptotics for $z \to +\infty$, $z$ real, uniformly on each compact subset of $]0,A[$,
 \[ F_{L1}(X,\lambda,z) = \frac{2^{-\nu_+}}{\sqrt{2\pi}} \left(- \frac{\kappa_+}{a_+} \right)^{i \frac{(\lambda-c_+)}{\kappa_+}} \Gamma(1-\nu_+) z^{-i\frac{(\lambda-c_+)}{\kappa_+}} e^{z(A-X)} \begin{pmatrix}
                                                                                                                                                                                                  1 & O \left( \frac{1}{z} \right) \\ O \left( \frac{1}{z} \right) & 1
                                                                                                                                                                                                 \end{pmatrix}
\left( 1 + O \left( \frac{1}{z} \right) \right),\] 
 \[ F_{L2}(X,\lambda,z) = -i \frac{2^{-\mu_+}}{\sqrt{2\pi}} \left(- \frac{\kappa_+}{a_+} \right)^{-i \frac{(\lambda-c_+)}{\kappa_+}} \Gamma(1-\mu_+) z^{i\frac{(\lambda-c_+)}{\kappa_+}} e^{z(A-X)} \begin{pmatrix}
                                                                                                                                                                                                  O \left( \frac{1}{z} \right) & 1 \\ -1 & O \left( \frac{1}{z} \right)
                                                                                                                                                                                                 \end{pmatrix}\left( 1 + O \left( \frac{1}{z} \right) \right),\] 
 \[ F_{L3}(X,\lambda,z) = -i \frac{2^{-\nu_+}}{\sqrt{2\pi}} \left(- \frac{\kappa_+}{a_+} \right)^{i \frac{(\lambda-c_+)}{\kappa_+}} \Gamma(1-\nu_+) z^{-i\frac{(\lambda-c_+)}{\kappa_+}} e^{z(A-X)}\begin{pmatrix}
                                                                                                                                                                                                  O \left( \frac{1}{z} \right) & 1 \\ -1 & O\left( \frac{1}{z} \right)
                                                                                                                                                                                                 \end{pmatrix} \left( 1 + O \left( \frac{1}{z} \right) \right),\]
 \[ F_{L4}(X,\lambda,z) = \frac{2^{-\mu_+}}{\sqrt{2\pi}} \left(- \frac{\kappa_+}{a_+} \right)^{-i \frac{(\lambda-c_+)}{\kappa_+}} \Gamma(1-\mu_+) z^{i\frac{(\lambda-c_+)}{\kappa_+}} e^{z(A-X)}\begin{pmatrix}
                                                                                                                                                                                                  1 & O \left( \frac{1}{z} \right) \\ O \left( \frac{1}{z} \right) & 1                                                                                                                                                                                    \end{pmatrix} \left( 1 + O \left( \frac{1}{z} \right) \right).\] 
\end{theorem}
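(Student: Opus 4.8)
The plan is to obtain Theorem~\ref{asJostgaucheR} as a direct consequence of Theorem~\ref{asJostgauche}, by inserting the large-argument expansion~(\ref{asInu}) of the modified Bessel functions and specializing it to the positive real axis. No new analytic input is needed; the only work is the bookkeeping of the various complex powers and the verification that all errors stay uniform on compact subsets of $]0,A[$.

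First I would record the behaviour of the Bessel factors appearing in Theorem~\ref{asJostgauche} when $z\to+\infty$ with $z$ real. For such $z$ one has $|\mathrm{Re}(z)|=z$, so the remainders in Theorem~\ref{asJostgauche} are $O(e^{z(A-X)}/z)$; and, for $X$ in a fixed compact subset of $]0,A[$, the argument $z(A-X)$ is a positive real number tending to $+\infty$, hence $\mathrm{sg}(\mathrm{Im}(z(A-X)))=0$ and the second (exponentially decaying) term of~(\ref{asInu}) only contributes $O(e^{-2z(A-X)})$ relative to the first. Applying~(\ref{asInu}) (valid, with the obvious change of index, for each of $I_{\pm\nu_+}$ and $I_{\pm\mu_+}$) and cancelling $\sqrt{A-X}$ against the $\sqrt{A-X}$ inside $(2\pi z(A-X))^{-1/2}$, I get, uniformly on compact subsets of $]0,A[$,
\[ \sqrt{A-X}\,I_{\pm\nu_+}(z(A-X)) = \frac{e^{z(A-X)}}{\sqrt{2\pi z}}\Big(1+O\big(\tfrac1z\big)\Big), \qquad \sqrt{A-X}\,I_{\pm\mu_+}(z(A-X)) = \frac{e^{z(A-X)}}{\sqrt{2\pi z}}\Big(1+O\big(\tfrac1z\big)\Big), \]
the uniformity being automatic because $A-X$ stays bounded away from $0$ and $A$ on such a subset.

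Next I would substitute these into the four lines of Theorem~\ref{asJostgauche}. For $F_{L1}$ this gives
\[ F_{L1}(X,\lambda,z) = \frac{\alpha_L(z)}{\sqrt{2\pi z}}\,e^{z(A-X)}\,I_2\Big(1+O\big(\tfrac1z\big)\Big) + O\Big(\frac{e^{z(A-X)}}{z}\Big), \]
and, using $\nu_+=\tfrac12-i\tfrac{\lambda-c_+}{\kappa_+}$ so that $(z/2)^{\nu_+}=\tfrac{z^{1/2}}{2^{1/2}}\,2^{i(\lambda-c_+)/\kappa_+}\,z^{-i(\lambda-c_+)/\kappa_+}$, one simplifies
\[ \frac{\alpha_L(z)}{\sqrt{2\pi z}} = \frac{2^{-\nu_+}}{\sqrt{2\pi}}\left(-\frac{\kappa_+}{a_+}\right)^{i(\lambda-c_+)/\kappa_+}\Gamma(1-\nu_+)\,z^{-i(\lambda-c_+)/\kappa_+}. \]
For real $z>0$ this prefactor has modulus bounded above and below by positive constants (since $|z^{-i(\lambda-c_+)/\kappa_+}|=1$ and $|2^{-\nu_+}|=2^{-1/2}$), so the remainder $O(e^{z(A-X)}/z)$ equals the prefactor times $e^{z(A-X)}$ times a $2\times2$ matrix with entries $O(1/z)$; absorbing it together with the diagonal $O(1/z)$ corrections yields exactly the stated form, with the matrix factor $\left(\begin{smallmatrix}1 & O(1/z)\\ O(1/z) & 1\end{smallmatrix}\right)(1+O(1/z))$. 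The argument for $F_{L2}$ is identical with $\mu_+=\tfrac12+i\tfrac{\lambda-c_+}{\kappa_+}$ replacing $\nu_+$, the sign $-i$ and the matrix $\left(\begin{smallmatrix}0&1\\-1&0\end{smallmatrix}\right)$ being carried along and the off-diagonal $O(1/z)$ entries again produced by the remainder. For $F_{L3}$ and $F_{L4}$ I would first note that, since $-\kappa_+/a_+>0$ is real and $\overline{\mu_+}=\nu_+$, we have $\overline{\Gamma(1-\mu_+)}=\Gamma(1-\nu_+)$ and $\overline{(z/2)^{\mu_+}}=(z/2)^{\nu_+}$ for real $z>0$, whence $\overline{\beta_L(\bar z)}=\alpha_L(z)$ and $\overline{\alpha_L(\bar z)}=\beta_L(z)$; inserting this and the Bessel expansion into the $F_{L3}$ and $F_{L4}$ lines of Theorem~\ref{asJostgauche} gives the two remaining asymptotics, with $I_{\mu_+}$ (resp. $I_{-\mu_+}$) replaced by its leading term.

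I do not expect a genuine obstacle: the statement is essentially a corollary of Theorem~\ref{asJostgauche}. The one step needing real care is the reduction of the products of complex powers $2^{\nu_+}$, $z^{\pm i(\lambda-c_+)/\kappa_+}$ and $(-\kappa_+/a_+)^{\pm i(\lambda-c_+)/\kappa_+}$ to the precise constants written in the statement (and, relatedly, the conjugation identities $\overline{\beta_L(\bar z)}=\alpha_L(z)$, $\overline{\alpha_L(\bar z)}=\beta_L(z)$); checking that~(\ref{asInu}) applies on $\R^+$ with its sign-dependent term disappearing, and that the errors remain uniform on compact subsets of $]0,A[$, are both routine.
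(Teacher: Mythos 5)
Your proposal is correct and follows exactly the route the paper intends: the paper states Theorem \ref{asJostgaucheR} as an immediate consequence of Theorem \ref{asJostgauche} combined with the Bessel asymptotics (\ref{asInu}), giving no further detail, and your computation (vanishing of the sign-dependent term on $\R^+$, the simplification $(z/2)^{\nu_+}/\sqrt{z}=2^{-\nu_+}z^{-i(\lambda-c_+)/\kappa_+}$, the conjugation identities $\overline{\beta_L(\bar z)}=\alpha_L(z)$ and $\overline{\alpha_L(\bar z)}=\beta_L(z)$ for real $z>0$, and the absorption of the $O(e^{z(A-X)}/z)$ remainders into the prefactor) supplies precisely the omitted bookkeeping.
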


\begin{remark}
 Corresponding asymptotics are also true for $z \to - \infty$ by parity/imparity. 
\end{remark}

\subsection{Asymptotics of the matrix of scattering data}
\noindent
We recall that the matrix $\hat{A}_L(\lambda,z)$ is defined by
\[\hat{F}_L(x,\lambda,z) = \hat{F}_R(x,\lambda,z) \hat{A}_{L}(\lambda,z),\]
and, thanks to the equation (\ref{lienALJ}) we see that
\[ \hat{A}_{L}(\lambda,z) = \Gamma^1 \hat{F}_R(x,\lambda,\bar{z})^{\star} \Gamma^1 \hat{F}_L(x,\lambda,z).\]
Moreover
\[F_{R} (x,\lambda,z) = e^{-i \Gamma^1 C^{-}(x)} \hat{F}_R(x,\lambda,z) \quad \text{and} \quad F_L (x,\lambda,z) = e^{-i \Gamma^1 C^{-}(x)} \hat{F}_L(x,\lambda,z),\]
thus,
\[ \hat{A}_{L}(\lambda,z) = \Gamma^1 F_R(x,\lambda,\bar{z})^{\star} \Gamma^1 F_L(x,\lambda,z).\]
Then, the blocks of the matrix $\hat{A}_{L}(\lambda,z)$ satisfy for all $X \in ]0,A[$ the relations
\[\begin{cases}
\hat{A}_{L1}(\lambda,z) = F_{R1}(X,\lambda,\bar{z})^{\star} F_{L1}(X,\lambda,z) - F_{R3}(X,\lambda,\bar{z})^{\star} F_{L3}(X,\lambda,z),\\
\hat{A}_{L2}(\lambda,z) = F_{R1}(X,\lambda,\bar{z})^{\star} F_{L2}(X,\lambda,z) - F_{R3}(X,\lambda,\bar{z})^{\star} F_{L4}(X,\lambda,z),\\
\hat{A}_{L3}(\lambda,z) = -F_{R2}(X,\lambda,\bar{z})^{\star} F_{L1}(X,\lambda,z) + F_{R4}(X,\lambda,\bar{z})^{\star} F_{L3}(X,\lambda,z),\\
\hat{A}_{L4}(\lambda,z) = -F_{R2}(X,\lambda,\bar{z})^{\star} F_{L2}(X,\lambda,z) + F_{R4}(X,\lambda,\bar{z})^{\star} F_{L4}(X,\lambda,z).
\end{cases}\]

\begin{remark}
 We immediatly see, thanks to the symmetries on the asymptotics of the Jost functions, that for $z$ large enough in the complex 
 plane we have $\hat{A}_{L1}(\lambda,z) \sim \hat{A}_{L4}(\lambda,\bar{z})^{\star}$ and $\hat{A}_{L2}(\lambda,z) \sim \hat{A}_{L3}(\lambda,\bar{z})^{\star}$ which are the symmetries 
 true for any $z$ in the massless case studied in \cite{DN}.
\end{remark}

Thanks to Theorems \ref{asJostdroite} and \ref{asJostgauche} and the asymptotics (\ref{asInu}) we obtain that if we set
\[ \alpha(z) =  \left( \frac{\kappa_-}{a_-} \right)^{-i \left( \frac{\lambda-c_-}{\kappa_-} \right)}  \left( - \frac{\kappa_+}{a_+} \right)^{i \left( \frac{\lambda-c_+}{\kappa_+} \right)} \Gamma(1-\mu_-)  \Gamma(1-\nu_+) \left( \frac{z}{2} \right)^{\mu_- + \nu_+}\]
and
\[ \beta(z) =  \left( \frac{\kappa_-}{a_-} \right)^{-i \left( \frac{\lambda-c_-}{\kappa_-} \right)}  \left( - \frac{\kappa_+}{a_+} \right)^{-i \left( \frac{\lambda-c_+}{\kappa_+} \right)} \Gamma(1-\mu_-)  \Gamma(1-\mu_+) \left( \frac{z}{2} \right)^{\mu_- + \mu_+},\]
the blocks of the matrix $\hat{A}_{L}$ satisfy the following asymptotics for large $z$ in the complex plane:
\begin{eqnarray*}
&&\left| \left| \hat{A}_{L1}(\lambda,z) - \alpha(z) \sqrt{X(A-X)} ( I_{-\mu_-}(zX) I_{-\nu_+}(z(A-X)) + I_{\nu_-}(zX)I_{\mu_+}(z(A-X)) )   \begin{pmatrix}  1 & 0 \\ 0 & 1   \end{pmatrix} \right|\right| \\
&&\quad \quad \quad \quad \quad \quad  \quad \quad \quad = O \left( \frac{e^{| \mathrm{Re}(z)| A}}{\vert z \vert} \right),
\end{eqnarray*}
\begin{eqnarray*}
&& \left| \left| \hat{A}_{L2}(\lambda,z) + i \beta(z) \sqrt{X(A-X)} (I_{\nu_-}(zX)I_{-\mu_+}(z(A-X)) + I_{-\mu_-}(zX)I_{\nu_+}(z(A-X))) \begin{pmatrix}  0 &1 \\ -1 & 0   \end{pmatrix} \right| \right| \\
&&\quad \quad \quad \quad \quad \quad  \quad \quad \quad = O \left( \frac{e^{| \mathrm{Re}(z) | A}}{|z|} \right) ,
\end{eqnarray*}
\begin{eqnarray*}
&& \left| \left| \hat{A}_{L3}(\lambda,z) + i \overline{\beta(\bar{z})} \sqrt{X(A-X)} (I_{\mu_-}(zX)I_{-\nu_+}(z(A-X)) + I_{-\nu_-}(zX)I_{\mu_+}(z(A-X))) \begin{pmatrix}  0 &1 \\ -1 & 0   \end{pmatrix} \right| \right| \\
&&\quad \quad \quad \quad \quad \quad  \quad \quad \quad = O \left( \frac{e^{| \mathrm{Re}(z) | A}}{|z|} \right) ,
\end{eqnarray*}
\begin{eqnarray*}
&& \left| \left| \hat{A}_{L4}(\lambda,z) - \overline{\alpha(\bar{z})} \sqrt{X(A-X)} ( I_{-\nu_-}(zX) I_{-\mu_+}(z(A-X)) + I_{\mu_-}(zX)I_{\nu_+}(z(A-X)) )  \begin{pmatrix}  1 & 0 \\ 0 & 1   \end{pmatrix} \right|\right| \\
&&\quad \quad \quad \quad \quad \quad  \quad \quad \quad = O \left( \frac{e^{| \mathrm{Re}(z)| A}}{\vert z \vert} \right).
\end{eqnarray*}

Now, using the asymptotics of the modified Bessel functions (\ref{asInu}), we obtain the asymptotics of the matrix $\hat{A}_L(\lambda,z)$ for large $z$ in the complex plane 
(we can extend these asymptotics to the half line $\mathbb{R}^-$ by parity).

\begin{theorem}\label{asA_L}
The blocks of the matrix $\hat{A}_{L}$ satisfy the following estimates for large $z$ in the complex plane:
$$\begin{array}{rl}
\hat{A}_{L1}(\lambda,z) =&  \frac{1}{2\pi} \left( - \frac{\kappa_+}{a_+} \right)^{i \frac{(\lambda-c_+)}{\kappa_+}} \left( \frac{\kappa_-}{a_-} \right)^{-i \frac{(\lambda-c_-)}{\kappa_-}} \Gamma(1-\nu_+) \Gamma(1-\mu_-) \left( \frac{z}{2} \right)^{i\left(\frac{(\lambda-c_-)}{\kappa_-}-\frac{(\lambda-c_+)}{\kappa_+} \right)} \\
& \times  \left( e^{zA} + e^{-zA} e^{-\mathrm{sg}(\mathrm{Im}(z)) \pi \left( \frac{\lambda-c_+}{\kappa_+} - \frac{\lambda-c_-}{\kappa_-} \right)} \right) \begin{pmatrix}  1 & O \left( \frac{1}{z} \right) \\ O \left( \frac{1}{z} \right) & 1  \end{pmatrix} \left( 1 + O \left( \frac{1}{z} \right) \right),\\
\hat{A}_{L2}(\lambda,z) =& \frac{-i}{2\pi} \left( - \frac{\kappa_+}{a_+} \right)^{-i \frac{(\lambda-c_+)}{\kappa_+}} \left( \frac{\kappa_-}{a_-} \right)^{-i \frac{(\lambda-c_-)}{\kappa_-}} \Gamma(1-\mu_+) \Gamma(1-\mu_-) \left( \frac{z}{2} \right)^{i\left(\frac{(\lambda-c_-)}{\kappa_-}+\frac{(\lambda-c_+)}{\kappa_+} \right)} \\
& \times \left( e^{zA} - e^{-zA} e^{\mathrm{sg}(\mathrm{Im}(z)) \pi \left( \frac{\lambda-c_+}{\kappa_+} - \frac{\lambda-c_-}{\kappa_-} \right)} \right) \begin{pmatrix}  O \left( \frac{1}{z} \right) & 1 \\ -1 & O \left( \frac{1}{z} \right)   \end{pmatrix}\left( 1 + O \left( \frac{1}{z} \right) \right),\\
\hat{A}_{L3}(\lambda,z) =& \frac{-i}{2\pi} \left( - \frac{\kappa_+}{a_+} \right)^{i \frac{(\lambda-c_+)}{\kappa_+}} \left( \frac{\kappa_-}{a_-} \right)^{i \frac{(\lambda-c_-)}{\kappa_-}} \Gamma(1-\nu_+) \Gamma(1-\nu_-) \left( \frac{z}{2} \right)^{-i\left(\frac{(\lambda-c_-)}{\kappa_-}+\frac{(\lambda-c_+)}{\kappa_+} \right)} \\
& \times \left( e^{zA} - e^{-zA} e^{-\mathrm{sg}(\mathrm{Im}(z)) \pi \left( \frac{\lambda-c_+}{\kappa_+} + \frac{\lambda-c_-}{\kappa_-} \right)} \right) \begin{pmatrix}  O \left( \frac{1}{z} \right) & 1 \\ -1 & O \left( \frac{1}{z} \right)   \end{pmatrix} \left( 1 + O \left( \frac{1}{z} \right) \right),\\
\hat{A}_{L4}(\lambda,z) =& \frac{1}{2\pi} \left( - \frac{\kappa_+}{a_+} \right)^{-i \frac{(\lambda-c_+)}{\kappa_+}} \left( \frac{\kappa_-}{a_-} \right)^{i \frac{(\lambda-c_-)}{\kappa_-}} \Gamma(1-\mu_+) \Gamma(1-\nu_-) \left( \frac{z}{2} \right)^{-i\left(\frac{(\lambda-c_-)}{\kappa_-}-\frac{(\lambda-c_+)}{\kappa_+} \right)} \\
& \times \left( e^{zA} + e^{-zA} e^{\mathrm{sg}(\mathrm{Im}(z)) \pi \left( \frac{\lambda-c_+}{\kappa_+} - \frac{\lambda-c_-}{\kappa_-} \right)} \right) \begin{pmatrix}  1 & O \left( \frac{1}{z} \right) \\ O \left( \frac{1}{z} \right) & 1  \end{pmatrix} \left( 1 + O \left( \frac{1}{z} \right) \right).
\end{array}$$
\end{theorem}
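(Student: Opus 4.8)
The plan is to derive Theorem~\ref{asA_L} from the asymptotics of the blocks $\hat A_{Lj}(\lambda,z)$ in terms of products of modified Bessel functions established immediately above (the ones carrying the prefactors $\alpha(z)$ and $\beta(z)$), which themselves come from the block formulas expressing $\hat A_{Lj}(\lambda,z)$ through $F_{R\cdot}(X,\lambda,\bar z)^\star$ and $F_{L\cdot}(X,\lambda,z)$, from Theorems~\ref{asJostdroite} and \ref{asJostgauche}, from the conjugation identities $\overline{I_\nu(\bar w)}=I_{\bar\nu}(w)$ for $w=zX$ with $X\in]0,A[$, $\overline{\beta_R(\bar z)}=\alpha_R(z)$ and $\overline{\alpha_L(\bar z)}=\beta_L(z)$ (so that $\alpha(z)=\overline{\alpha_R(\bar z)}\alpha_L(z)$ and $\beta(z)=\overline{\alpha_R(\bar z)}\beta_L(z)$), and from the elementary matrix algebra $\begin{pmatrix}0&1\\-1&0\end{pmatrix}^\star\begin{pmatrix}0&1\\-1&0\end{pmatrix}=I_2$ and $I_2^\star\begin{pmatrix}0&1\\-1&0\end{pmatrix}=\begin{pmatrix}0&1\\-1&0\end{pmatrix}$, which forces $\hat A_{L1},\hat A_{L4}$ to be scalar multiples of $I_2$ and $\hat A_{L2},\hat A_{L3}$ scalar multiples of $\begin{pmatrix}0&1\\-1&0\end{pmatrix}$, up to $O(1/z)$ off-structure entries. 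The only remaining task is then to insert the large-$z$ expansion $(\ref{asInu})$ into those Bessel products.

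Concretely, for $|\arg z|\le\pi-\delta$ I would write each factor as
\[
\sqrt{X}\,I_\rho(zX)=\frac{1}{\sqrt{2\pi z}}\left(e^{zX}(1+O(1/z))+e^{-zX}e^{\mathrm{sg}(\mathrm{Im}(z))i\pi(\rho+\frac12)}(1+O(1/z))\right),
\]
and likewise with $X$ replaced by $A-X$, then multiply the two factors appearing in each block and add the two resulting products (for instance $I_{-\mu_-}(zX)I_{-\nu_+}(z(A-X))+I_{\nu_-}(zX)I_{\mu_+}(z(A-X))$ in the case of $\hat A_{L1}$). Besides the two genuine exponentials $e^{\pm zA}$, this produces the $X$-dependent mixed exponentials $e^{\pm z(2X-A)}$; the point is that the two Bessel products are paired exactly so that, using $e^{\pm i\pi}=-1$ and $\nu_\pm=\frac12\mp i\frac{\lambda-c_\pm}{\kappa_\pm}$, $\mu_\pm=\overline{\nu_\pm}$, the coefficients of $e^{z(2X-A)}$ and of $e^{z(A-2X)}$ cancel, consistently with the $X$-independence of $\hat A_L$. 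After this cancellation the $\sqrt{X(A-X)}$ prefactor cancels the $(2\pi z)^{-1}(X(A-X))^{-1/2}$ coming from the two factors $(2\pi zX)^{-1/2}$ and $(2\pi z(A-X))^{-1/2}$, the two surviving $e^{zA}$-contributions add up to produce the overall $\frac{1}{2\pi}$, and combining $\alpha(z)=C_0\,(z/2)^{\mu_-+\nu_+}$ with $\mu_-+\nu_+=1+i(\frac{\lambda-c_-}{\kappa_-}-\frac{\lambda-c_+}{\kappa_+})$ yields the announced power $(z/2)^{i(\cdots)}$ together with the $\Gamma$-factors, the phase carried by the $e^{-zA}$ terms giving $e^{\mp\mathrm{sg}(\mathrm{Im}(z))\pi(\cdots)}$. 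The same computation with the indices appropriate to each block handles $\hat A_{L2}$, $\hat A_{L3}$ and $\hat A_{L4}$. Finally, since both sides of the resulting identities are entire in $z$ (the Jost functions are entire, see Section~5, and the right-hand sides are combinations of entire functions of definite parity), the asymptotics, a priori valid only on $|\arg z|\le\pi-\delta$, extend to the whole complex plane, including $\mathbb{R}^-$.

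I expect the main obstacle to be the bookkeeping of the branch-cut phases $e^{\mathrm{sg}(\mathrm{Im}(z))i\pi(\cdot)}$ coming from $(\ref{asInu})$: they must be tracked through the products of four Bessel functions and, for the conjugated-argument factors $F_{R\cdot}(X,\lambda,\bar z)^\star$, through the substitution $z\mapsto\bar z$ (which reverses $\mathrm{sg}(\mathrm{Im}(z))$) followed by complex conjugation; it is precisely the matching of these phases that produces both the cancellation of the $X$-dependent exponentials and the even/odd parity of the final expressions that legitimises the extension to $\mathbb{R}^-$. The uniformity in $X$ of the $O(1/z)$ remainders on compact subsets of $]0,A[$ is automatic once one recalls that $\hat A_L$ does not depend on $X$, so that one may as well evaluate the identities at a single convenient value such as $X=A/2$.
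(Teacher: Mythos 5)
Your proposal is correct and follows essentially the same route as the paper: express the blocks $\hat{A}_{Lj}(\lambda,z)$ as $\Gamma^1 F_R(\cdot,\lambda,\bar z)^\star\Gamma^1 F_L(\cdot,\lambda,z)$, insert the Bessel-function asymptotics of Theorems \ref{asJostdroite} and \ref{asJostgauche}, and then expand the resulting Bessel products via (\ref{asInu}), extending to $\mathbb{R}^-$ by parity. You in fact supply details the paper leaves implicit — the phase bookkeeping under $z\mapsto\bar z$ plus conjugation, and the cancellation of the $X$-dependent exponentials $e^{\pm z(2X-A)}$ forced by the relations $\nu_\pm+\mu_\pm=1$ — and your verification that these cancel is exactly what makes the paper's two-line derivation legitimate.
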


\begin{remark}
 For large $z$ in $\mathbb{R}^+$ we obtain the analoguous of the asymptotics given in \cite{DN}, Theorem 4.19, in the massless case.
\end{remark}

\section{The Complex Angular Momentum method}
\noindent
In this Section, we follow Section 3 of \cite{DN} and use the idea of the complex angular momentum method. 
First we note that the Jost functions $F_L(x,\lambda,z)$ and $F_R(x,\lambda,z)$ are analytic in the $z \in \C$ variable. Indeed, for instance, we recall that,
\[F_{R} (x,\lambda,z) = e^{-i \Gamma^1 C^{-}(x)} \hat{M}_R(x,\lambda,z) e^{i \lambda \Gamma^1 x}.\]
Moreover, using series expansion (see (\ref{serieM_R1})) we obtain the analyticity of the blocks of $\hat{M}_R$ since the terms of the series
are polynomial in the variable $z$. Thus the Jost functions are also analytic.\\
The application $z \mapsto \hat{A}_L(\lambda,z)$ is also analytic on the complex plane $\mathbb{C}$. Indeed, by definition
\[F_L(x,\lambda,z) = F_R(x,\lambda,z) \hat{A}_{L}(\lambda,z).\]
and $\det(F_R)= 1$. Thus the coefficients of the matrix $\hat{A}_{L}(\lambda,z)$ are combinations of the components of the Jost functions. Then, these coefficients are 
analytic functions of the variable $z$.

We now prove that the coefficients $\hat{A}_{Li,j}$ of the matrix $\hat{A}_L$ (see Section 3.2 for the notation) belong to the Nevanlinna class when restricted to the half 
plane $\Pi^+ = \{ z \in \mathbb{C}, \, \mathrm{Re}(z) > 0 \}$. As an application we prove the uniqueness results mentionned in Introduction.

Recall first that the Nevanlinna class $N(\Pi^+)$ is defined as the set of all analytic functions $f(z)$ on $\Pi^+$ that satisfy the estimate
\[ \underset{0 < r < 1}{\sup} \int_{-\pi}^{\pi} \ln^+ \left| f \left( \frac{1-re^{i\varphi}}{1+re^{i\varphi}} \right) \right| \, \mathrm d\varphi < \infty,\]
where $\ln^+(x) = \ln(x)$ if $\ln(x) \geq 0$ and $0$ if $\ln(x) < 0$. We shall use the following result proved in \cite{Ram}.

\begin{lemma}\label{Nev}
 Let $h \in H(\Pi^+)$ be a holomorphic function in $\Pi^+$ satisfying
\[\vert h(z) \vert \leq C e^{A \mathrm{Re}(z)}, \quad \forall z \in \Pi^{+},\]
where $A$ and $C$ are two constants. Then $h \in N(\Pi^+)$.
\end{lemma}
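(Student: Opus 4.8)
The plan is to transport the problem to the unit disk by the usual Cayley-type transform and recognize the resulting integrand as controlled by the Poisson kernel. Introduce the conformal map $\phi(w) = \dfrac{1-w}{1+w}$, which sends the open unit disk $D = \{w \in \mathbb{C} : |w| < 1\}$ biholomorphically onto $\Pi^+$ (it carries the boundary circle onto the imaginary axis, $0 \mapsto 1$ and $1 \mapsto 0$). Since $h$ is holomorphic on $\Pi^+$, the composition $g := h \circ \phi$ is holomorphic on $D$, and by definition $h \in N(\Pi^+)$ is precisely the statement that
\[
\sup_{0<r<1} \int_{-\pi}^{\pi} \ln^+ \big| g(re^{i\varphi}) \big| \, \mathrm d\varphi < \infty .
\]
So it suffices to bound this supremum using the growth hypothesis $|h(z)| \le C e^{A\,\mathrm{Re}(z)}$ on $\Pi^+$.

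If $A \le 0$ the hypothesis forces $h$ to be bounded on $\Pi^+$ and the conclusion is immediate, so assume $A > 0$. Using subadditivity $\ln^+(xy) \le \ln^+ x + \ln^+ y$ together with the bound on $h$, and the fact that $\mathrm{Re}(\phi(w)) > 0$ for $w \in D$, one gets for $w = re^{i\varphi}$
\[
\ln^+ \big| g(re^{i\varphi}) \big| \;\le\; \ln^+ C + A\, \mathrm{Re}\big( \phi(re^{i\varphi}) \big).
\]
A direct computation gives $\mathrm{Re}\,\phi(re^{i\varphi}) = \dfrac{1-r^2}{|1+re^{i\varphi}|^2} = \dfrac{1-r^2}{1+2r\cos\varphi+r^2}$, which is exactly a rotate of the Poisson kernel $P_r$; equivalently, $\mathrm{Re}\,\phi$ is a positive harmonic function on $D$ with value $1$ at the origin. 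Hence by the mean value property $\int_{-\pi}^{\pi} \mathrm{Re}\,\phi(re^{i\varphi}) \, \mathrm d\varphi = 2\pi$ for every $r \in [0,1)$, and therefore
\[
\sup_{0<r<1} \int_{-\pi}^{\pi} \ln^+ \big| g(re^{i\varphi}) \big| \, \mathrm d\varphi \;\le\; 2\pi \ln^+ C + 2\pi A \;<\; \infty,
\]
which is the assertion that $h \in N(\Pi^+)$.

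There is no genuine obstacle here: the only point deserving care is the identification of $\mathrm{Re}\,\phi(re^{i\varphi})$ with (a rotate of) the Poisson kernel, after which the normalization of the Poisson integral — equivalently the mean value property for the positive harmonic function $\mathrm{Re}\,\phi$ — delivers the uniform-in-$r$ bound for free. I would present the computation of $\mathrm{Re}\,\phi$ explicitly and invoke the standard identity $\frac{1}{2\pi}\int_{-\pi}^{\pi} P_r(\varphi)\,\mathrm d\varphi = 1$ to keep the argument self-contained.
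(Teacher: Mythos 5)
Your proof is correct. Note that the paper itself does not prove this lemma at all --- it simply cites the reference [Ram] (Ramm, \emph{An inverse scattering problem with part of the fixed-energy phase shifts}) --- so there is no in-paper argument to compare against; your write-up supplies the standard proof that the citation points to. All the steps check out: the map $\phi(w)=\frac{1-w}{1+w}$ is exactly the change of variable built into the paper's definition of $N(\Pi^+)$, the subadditivity $\ln^+(xy)\le \ln^+ x+\ln^+ y$ reduces the integrand to $\ln^+ C + A\,\mathrm{Re}\,\phi(re^{i\varphi})$ when $A>0$, the computation $\mathrm{Re}\,\phi(re^{i\varphi})=\frac{1-r^2}{1+2r\cos\varphi+r^2}$ is right, and the mean value property of the positive harmonic function $\mathrm{Re}\,\phi$ (with $\mathrm{Re}\,\phi(0)=1$) gives the uniform-in-$r$ bound $2\pi\ln^+ C+2\pi A$. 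The case $A\le 0$ is correctly dispatched by boundedness. The argument is complete and self-contained.
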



As a consequence of Lemma \ref{Nev} and Theorem \ref{asA_L}, we thus get,

\begin{coro}\label{ALN}
 For each $\lambda \in \mathbb{R}$ fixed, the applications $z \mapsto \hat{A}_{Li,j}(\lambda,z)_{\vert \Pi^+}$ belong to $N(\Pi^+)$ for $(i,j) \in \{1,2,3,4\}^2$.
\end{coro}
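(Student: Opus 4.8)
The plan is to obtain Corollary \ref{ALN} as a soft consequence of the exponential bound furnished by Theorem \ref{asA_L} together with the criterion of Lemma \ref{Nev}. First, recall from the opening of this section that each coefficient $z \mapsto \hat{A}_{Li,j}(\lambda,z)$ is entire in $z$ (it is a combination of components of the Jost functions, which are entire), hence in particular holomorphic on $\Pi^+$. So it only remains to establish an estimate of the form $|\hat{A}_{Li,j}(\lambda,z)| \le C e^{A\,\mathrm{Re}(z)}$ for all $z \in \Pi^+$, with $A = \int_{\mathbb{R}} a(s)\,\mathrm ds$; Lemma \ref{Nev} then immediately gives $\hat{A}_{Li,j}(\lambda,\cdot)_{\vert \Pi^+} \in N(\Pi^+)$.

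To get this estimate for $|z|$ large I would simply read it off the asymptotics of Theorem \ref{asA_L}. Each block $\hat{A}_{Lk}(\lambda,z)$ is, up to a $2\times 2$ matrix whose entries are $O(1/z)$ or equal to $\pm 1$, of the shape $c\,(z/2)^{i\sigma}\bigl(e^{zA} \pm e^{-zA} e^{\pm \mathrm{sg}(\mathrm{Im}(z))\pi\tau}\bigr)\bigl(1 + O(1/z)\bigr)$ with $c \neq 0$ a constant and $\sigma,\tau$ real. On $\Pi^+$ one has $|\arg(z/2)| = |\arg z| < \pi/2$, hence $|(z/2)^{i\sigma}| = e^{-\sigma \arg(z/2)} \le e^{|\sigma|\pi/2}$; likewise $|e^{\pm \mathrm{sg}(\mathrm{Im}(z))\pi\tau}| \le e^{\pi|\tau|}$; and since $\mathrm{Re}(z) > 0$ we have $|e^{-zA}| = e^{-A\,\mathrm{Re}(z)} \le 1 \le e^{A\,\mathrm{Re}(z)}$ while $|e^{zA}| = e^{A\,\mathrm{Re}(z)}$. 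Thus the bracket is bounded by a constant multiple of $e^{A\,\mathrm{Re}(z)}$, the factor $1 + O(1/z)$ is bounded for $|z| \ge R_0$, and the matrix entries are bounded there as well; so there are $R_0, C_1 > 0$ with $|\hat{A}_{Li,j}(\lambda,z)| \le C_1 e^{A\,\mathrm{Re}(z)}$ for every $z \in \Pi^+$ with $|z| \ge R_0$. On the remaining set $\{z \in \Pi^+ : |z| \le R_0\}$, continuity of the entire function $\hat{A}_{Li,j}(\lambda,\cdot)$ on the compact closure of that set gives a bound by some $C_2$, and since $e^{A\,\mathrm{Re}(z)} \ge 1$ on $\Pi^+$ this is absorbed into the exponential bound; taking $C = \max(C_1, C_2)$ yields $|\hat{A}_{Li,j}(\lambda,z)| \le C e^{A\,\mathrm{Re}(z)}$ on all of $\Pi^+$, and Lemma \ref{Nev} applied to each of the sixteen scalar functions concludes.

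I expect essentially no real obstacle here, the corollary being a routine consequence of the hard asymptotic work already completed. The only points requiring minimal care are bookkeeping: one must check that the multivalued prefactor $(z/2)^{i\sigma}$ and the sign-dependent exponential $e^{\pm \mathrm{sg}(\mathrm{Im}(z))\pi\tau}$ remain \emph{uniformly} bounded on $\Pi^+$ — which holds precisely because $\Pi^+$ keeps $\arg z$ away from $\pm\pi$ — and that the bounded region not covered by Theorem \ref{asA_L} is disposed of by compactness, using that $\mathrm{Re}(z) > 0$ keeps $e^{A\,\mathrm{Re}(z)}$ bounded below by $1$.
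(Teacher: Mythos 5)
Your proposal is correct and follows exactly the route the paper intends: the paper's proof consists of the single sentence that the corollary is an immediate consequence of Theorem \ref{asA_L}, the analyticity having been established at the start of Section 5 and the Nevanlinna membership following from the exponential bound via Lemma \ref{Nev}. You have merely filled in the (correct) bookkeeping that the paper leaves implicit.
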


\begin{proof}
 This is an immediate consequence of Theorem \ref{asA_L}.
\end{proof}


We now recall the following result proved in \cite{Ram}, Theorem 1.3.

\begin{theorem}[\cite{Ram}, Thm. 1.3]\label{thmNev}
 Let $h \in N(\Pi^+)$ satisfying $h(n) = 0$ for all $n \in \mathcal{L}$ where $\mathcal{L} \subset \mathbb{N}^{\star}$ with $\sum_{n \in \mathcal{L}} \frac{1}{n} = \infty$. Then $h \equiv 0$ in $N(\Pi^+)$.
\end{theorem}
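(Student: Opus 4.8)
The plan is to push the statement to the unit disc by a Cayley transform and then invoke the classical description of the zero sets of Nevanlinna functions. Let $\Phi(w) = \frac{1-w}{1+w}$; this is a biholomorphism of the unit disc $\mathbb{D}$ onto $\Pi^+$, with inverse $\Phi^{-1}(z) = \frac{1-z}{1+z}$. Setting $\tilde h(w) = h(\Phi(w))$, the function $\tilde h$ is holomorphic on $\mathbb{D}$, and the very definition of $N(\Pi^+)$ recalled above says precisely that $\sup_{0<r<1}\int_{-\pi}^{\pi}\ln^+|\tilde h(re^{i\varphi})|\,\mathrm d\varphi < \infty$, i.e. $\tilde h$ belongs to the Nevanlinna class $N(\mathbb{D})$ of the disc. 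The hypothesis $h(n)=0$ for $n\in\mathcal L$ translates into $\tilde h(w_n)=0$, where $w_n = \Phi^{-1}(n) = \frac{1-n}{1+n}\in(-1,0]$.

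Next I would use the classical fact that a function in $N(\mathbb{D})$ which does not vanish identically has a zero sequence $\{w_k\}$ satisfying the Blaschke condition $\sum_k(1-|w_k|)<\infty$. Assuming $\tilde h\not\equiv 0$, this follows from Jensen's formula: writing $\tilde h(w)=w^m g(w)$ with $g(0)\neq0$ and applying Jensen's identity on circles $|w|=r$ avoiding the (isolated) zeros of $\tilde h$, one gets
\[ \ln|g(0)| + \sum_{0<|w_k|<r}\ln\frac{r}{|w_k|} \;=\; \frac{1}{2\pi}\int_{-\pi}^{\pi}\ln|\tilde h(re^{i\theta})|\,\mathrm d\theta - m\ln r \;\le\; \frac{1}{2\pi}\int_{-\pi}^{\pi}\ln^+|\tilde h(re^{i\theta})|\,\mathrm d\theta - m\ln r . \]
Letting $r\to 1^-$, the right-hand side stays bounded by the $N(\mathbb{D})$-bound for $\tilde h$ (note $-m\ln r\to 0$), so $\sum_k\ln\frac1{|w_k|}<\infty$, whence $\sum_k(1-|w_k|)<\infty$ since $1-x\le\ln\frac1x$ for $x\in(0,1]$. (One may equally well quote the Riesz--Smirnov canonical factorization of Nevanlinna functions to obtain the Blaschke condition directly.)

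Finally, the points $w_n$ with $n\in\mathcal L$ are among the zeros of $\tilde h$, and
\[ 1-|w_n| \;=\; 1-\frac{n-1}{n+1} \;=\; \frac{2}{n+1} \;\ge\; \frac1n \qquad (n\ge 1), \]
so the M\"untz hypothesis $\sum_{n\in\mathcal L}\frac1n=\infty$ forces $\sum_{n\in\mathcal L}(1-|w_n|)=\infty$, contradicting the Blaschke condition. Hence $\tilde h\equiv 0$ on $\mathbb{D}$, and therefore $h\equiv 0$ on $\Pi^+$, as claimed. The only delicate point is establishing the Blaschke condition for the \emph{full} Nevanlinna class rather than for a Hardy space $H^p$ (where one could simply use subharmonicity of $|f|^p$): here one must argue with $\ln^+|f|$ via Jensen's formula as above. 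Everything else is a routine computation with the Cayley transform together with an elementary comparison of series.
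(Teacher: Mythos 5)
Your proof is correct. The paper does not actually prove this statement --- it is quoted verbatim from Ramm's work (\cite{Ram}, Theorem 1.3) --- but your argument (Cayley transform to the disc, membership in $N(\mathbb{D})$ by the very definition of $N(\Pi^+)$ given in the paper, the Blaschke condition via Jensen's formula, and the observation that $1-|w_n|=\tfrac{2}{n+1}\ge \tfrac1n$ turns the M\"untz condition into a violation of Blaschke) is precisely the classical proof underlying the cited result. The details, including the handling of circles through zeros and the comparison $1-x\le -\ln x$, are all in order.
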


%
%

In other words, thanks to Corollary \ref{ALN} and Theorem \ref{thmNev}, the scattering data $\hat{A}_{Li,j}(\lambda,z)$ are uniquely determined as functions of $z \in \mathbb{C}$ from their values on a subset $\mathcal{L}$ of the integers that satisfies the M\"untz condition 
$\sum_{n \in \mathcal{L}} \frac{1}{n} = \infty$. In our proof of the main Theorem we need the following result.

\begin{coro}\label{corL}
  Consider two dS-RN black holes and denote by $\hat{L}$, $\hat{R}$ and $\tilde{\hat{L}}$, $\tilde{\hat{L}}$ the corresponding reflection coefficients. Let $\mathcal{L} \subset \mathbb{N}^{\star}$ satisfying 
$\sum_{n \in \mathcal{L}} \frac{1}{n} = \infty$. Assume that the following equality holds
\[\hat{L}(\lambda,n) = \tilde{\hat{L}}(\lambda,n), \quad \forall n \in \mathcal{L} \quad( \mathrm{respectively} \quad \hat{R}(\lambda,n) = \tilde{\hat{R}}(\lambda,n), \quad \forall n \in \mathcal{L}).\]
Then, if we denote by $P$ the set of poles of $\hat{L}$ (respectively $Q$ the set of poles of $\hat{R}$) we have,
\[\hat{L}(\lambda,z) = \tilde{\hat{L}}(\lambda,z), \quad \forall z \in \mathbb{C} \setminus P \quad( \mathrm{respectively} \quad \hat{R}(\lambda,z) = \tilde{\hat{R}}(\lambda,z), \quad \forall z \in \mathbb{C} \setminus Q).\]
\end{coro}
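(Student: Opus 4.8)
The plan is to reduce the statement to Corollary \ref{ALN} and the uniqueness result Theorem \ref{thmNev} by clearing denominators. One cannot apply Theorem \ref{thmNev} directly to the difference $\hat L(\lambda,\cdot)-\tilde{\hat L}(\lambda,\cdot)$, because this is only a meromorphic function — a ratio of Nevanlinna functions — and $N(\Pi^+)$ is a ring, not a field; so I would instead work with the polynomial (in the scattering-data entries) numerators. This passage through the entire numerators rather than through the meromorphic quotient is the one genuinely important point; everything else is bookkeeping.

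First, recall from (\ref{defL}) that $\hat L(\lambda,z)=\hat A_{L3}(\lambda,z)\hat A_{L1}(\lambda,z)^{-1}$, and write for the $2\times 2$ block $\hat A_{L1}(\lambda,z)^{-1}=\det(\hat A_{L1}(\lambda,z))^{-1}\operatorname{adj}(\hat A_{L1}(\lambda,z))$. Observe that $\det\hat A_{L1}(\lambda,\cdot)\not\equiv 0$: by (\ref{eqAL}) one has $\hat A_{L1}(\lambda,n)^{\star}\hat A_{L1}(\lambda,n)=I_2+\hat A_{L3}(\lambda,n)^{\star}\hat A_{L3}(\lambda,n)\succeq I_2$ for every $n\in\mathbb{N}^{\star}$, so $\hat A_{L1}(\lambda,n)$ is invertible; since $z\mapsto\det\hat A_{L1}(\lambda,z)$ is entire (Section 5), it is not identically zero, its zero set is discrete, and the set $P$ of poles of $\hat L(\lambda,\cdot)$ is contained in it (the same holds for the tilded objects).

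Next, I would set
\[
M(\lambda,z):=\hat A_{L3}(\lambda,z)\operatorname{adj}(\hat A_{L1}(\lambda,z))\det(\tilde{\hat A}_{L1}(\lambda,z))-\tilde{\hat A}_{L3}(\lambda,z)\operatorname{adj}(\tilde{\hat A}_{L1}(\lambda,z))\det(\hat A_{L1}(\lambda,z)).
\]
Each of the four entries of $M(\lambda,\cdot)$ is a finite sum of products of the scalar functions $\hat A_{Li,j}(\lambda,\cdot)$ and $\tilde{\hat A}_{Li,j}(\lambda,\cdot)$; since $N(\Pi^+)$ is stable under sums and products (one has $\ln^+|fg|\leq\ln^+|f|+\ln^+|g|$ and $\ln^+|f+g|\leq\ln 2+\ln^+|f|+\ln^+|g|$) and each factor lies in $N(\Pi^+)$ by Corollary \ref{ALN}, these entries belong to $N(\Pi^+)$. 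For $n\in\mathcal L$, the hypothesis $\hat L(\lambda,n)=\tilde{\hat L}(\lambda,n)$ gives
\[
M(\lambda,n)=\bigl(\hat L(\lambda,n)-\tilde{\hat L}(\lambda,n)\bigr)\det(\hat A_{L1}(\lambda,n))\det(\tilde{\hat A}_{L1}(\lambda,n))=0
\]
(the determinants are scalars and commute with the matrices). Applying Theorem \ref{thmNev} to each entry of $M(\lambda,\cdot)$ forces $M(\lambda,z)\equiv 0$ on $\Pi^+$, hence on all of $\mathbb{C}$ by analytic continuation, the entries being entire.

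Finally, for $z\in\mathbb{C}\setminus P$ one has $\det\hat A_{L1}(\lambda,z)\neq 0$; away from the discrete zero set of $\det\tilde{\hat A}_{L1}(\lambda,\cdot)$ I would divide the identity $M(\lambda,z)=0$ by $\det\hat A_{L1}(\lambda,z)\det\tilde{\hat A}_{L1}(\lambda,z)$ to obtain $\hat L(\lambda,z)=\tilde{\hat L}(\lambda,z)$ there, and since $\hat L(\lambda,\cdot)$ is holomorphic on $\mathbb{C}\setminus P$ the finitely many remaining isolated points are removable for $\tilde{\hat L}(\lambda,\cdot)$, so the equality extends to all of $\mathbb{C}\setminus P$. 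The case of $\hat R$ is identical, using the first expression in (\ref{defR}), namely $\hat R(\lambda,z)=-\hat A_{L1}(\lambda,z)^{-1}\hat A_{L2}(\lambda,z)$, so that again only the entries of $\hat A_L$ and Corollary \ref{ALN} intervene.
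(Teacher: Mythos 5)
Your proof is correct, and it rests on exactly the same two pillars as the paper's: Corollary \ref{ALN} (the entries $\hat{A}_{Li,j}$ lie in $N(\Pi^+)$) together with Theorem \ref{thmNev}, applied after clearing denominators so that the uniqueness theorem acts on entire functions rather than on the meromorphic quotient $\hat{L}-\tilde{\hat{L}}$ --- which, as you rightly stress, is the one essential point. The only real difference is the device used to clear the denominator. The paper exploits the two representations $\hat{L}=\hat{A}_{L3}\hat{A}_{L1}^{-1}=-\hat{A}_{R4}^{-1}\hat{A}_{R3}$, so that the hypothesis cross-multiplies into the bilinear identity $\tilde{\hat{A}}_{R4}(\lambda,n)\hat{A}_{L3}(\lambda,n)=-\tilde{\hat{A}}_{R3}(\lambda,n)\hat{A}_{L1}(\lambda,n)$; this is shorter, but it tacitly needs the $\hat{A}_{Ri,j}$ (or at least $\tilde{\hat{A}}_{R3},\tilde{\hat{A}}_{R4}$) to be in the Nevanlinna class, which is not literally what Corollary \ref{ALN} states and must be pulled from the relations (\ref{eqALR})--(\ref{eqALR2}). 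Your adjugate-determinant version produces a cubic rather than bilinear expression, but it invokes only the $\hat{A}_{Li,j}$ covered by Corollary \ref{ALN}, and your preliminary observation that (\ref{eqAL}) forces $\det\hat{A}_{L1}(\lambda,n)\neq 0$ (hence $\det\hat{A}_{L1}(\lambda,\cdot)\not\equiv 0$ with discrete zero set containing $P$) makes explicit a point the paper leaves implicit. One cosmetic remark: the exceptional set in your last step is discrete but not necessarily finite; the removable-singularity argument goes through unchanged for isolated points, so nothing is lost.
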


\begin{proof}
Since the proof is similar, we just show the result for the reflection coefficient $\hat{L}$. We recall (see (\ref{defL})) that
 \[ \hat{L}(\lambda,n) = \hat{A}_{L3}(\lambda,n) \hat{A}_{L1}(\lambda,n)^{-1} = -\hat{A}_{R4}(\lambda,n)^{-1} \hat{A}_{R3}(\lambda,n).\]
Thus, thanks to the hypothesis, we immediatly obtain
\[\tilde{\hat{A}}_{R4}(\lambda,n)  \hat{A}_{L3}(\lambda,n)  = -\tilde{\hat{A}}_{R3}(\lambda,n) \hat{A}_{L1}(\lambda,n), \quad \forall n \in \mathcal{L}.\]
Using Corollary \ref{ALN} and Theorem \ref{thmNev}, we deduce from this equality that
\[ \tilde{\hat{A}}_{R4}(\lambda,z) \hat{A}_{L3}(\lambda,z)  = -\tilde{\hat{A}}_{R3}(\lambda,z) \hat{A}_{L1}(\lambda,z), \quad \forall z \in \C.\]
Finally, for all $z \in \C \setminus P$ (we note that thanks to the previous equality $P = \tilde{P}$, where $\tilde{P}$ is the set of poles of $\tilde{\hat{L}}$),
\[ \hat{L}(\lambda,z) = \hat{A}_{L3}(\lambda,z) \hat{A}_{L1}(\lambda,z)^{-1} = -\tilde{\hat{A}}_{R3}(\lambda,z) \tilde{\hat{A}}_{R4}(\lambda,z)^{-1} = \tilde{\hat{L}}(\lambda,z).\]
\end{proof}

\section{Proof of the main Theorem}\label{preuveL}
\noindent
The aim of this Section is to prove Theorem \ref{mainthm}. Let us first assume that the assertion $(i)$ is satisfied. Hence we have to prove the uniqueness of the parameters $(M,Q,\Lambda)$ of a dS-RN black hole from
the knowledge of the reflection coefficients of the partial scattering operators, $L(\lambda,n)$, for a fixed energy $\lambda \in \mathbb{R}$ and for all $n \in \mathcal{L} \subset \mathbb{N}^{\star}$ 
satisfying the M\"{u}ntz condition $\sum_{n \in \mathcal{L}} \frac{1}{n} = \infty$.

Consider thus two dS-RN black holes with parameters $(M,Q,\Lambda)$ and $(\tilde{M},\tilde{Q},\tilde{\Lambda})$ respectively. We shall denote 
by $a(x)$, $b(x)$ and $c(x)$ and $\tilde{a}(x)$, $\tilde{b}(x)$ and $\tilde{c}(x)$ the corresponding potentials appearing in the Dirac equation (\ref{eqDir}) (see (\ref{ham})). Recall that they satisfy 
the hypotheses of the Section \ref{Dirpb}, (\ref{defpot}), and the estimates of Lemma \ref{aspot}.
We assume that
 \[ L(\lambda,n) = \tilde{L}(\lambda,n), \quad \forall n \in \mathcal{L}.\]
Thus (see Proposition \ref{LienSS}),
\[\hat{L}(\lambda,n) = \tilde{\hat{L}}(\lambda,n), \quad \forall n \in \mathcal{L}.\]

\begin{lemma}\label{lemA=Atilde}
 Suppose that
 \[\hat{L}(\lambda,n) = \tilde{\hat{L}}(\lambda,n), \quad \forall n \in \mathcal{L}.\]
 Then
 \[A:= \int_{\mathbb{R}} a(t)\, \mathrm dt = \int_{\mathbb{R}} \tilde{a}(t)\, \mathrm dt =: \tilde{A}.\]
\end{lemma}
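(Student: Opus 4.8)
The plan is to recover the number $A$ from the exponential growth rate of $\det\hat{A}_{L1}(\lambda,n)$ as $n\to\infty$, after noticing that the hypothesis $\hat{L}(\lambda,n)=\tilde{\hat{L}}(\lambda,n)$ already forces $|\det\hat{A}_{L1}(\lambda,n)|=|\det\tilde{\hat{A}}_{L1}(\lambda,n)|$ for all $n\in\mathcal{L}$. The bridge between the reflection coefficient and $\hat{A}_{L1}$ is the ``unitarity'' relation \eqref{eqAL}, and the growth rate is read off from Theorem~\ref{asA_L}.

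First I would carry out the algebraic reduction. By \eqref{eqAL}, $\hat{A}_{L1}(\lambda,n)^{\star}\hat{A}_{L1}(\lambda,n)=I_2+\hat{A}_{L3}(\lambda,n)^{\star}\hat{A}_{L3}(\lambda,n)\geq I_2$, so $\hat{A}_{L1}(\lambda,n)$ is invertible for every $n\in\mathbb{N}^{\star}$. Inserting the definition $\hat{L}(\lambda,n)=\hat{A}_{L3}(\lambda,n)\hat{A}_{L1}(\lambda,n)^{-1}$ from \eqref{defL} into \eqref{eqAL} and simplifying gives
\[\hat{L}(\lambda,n)^{\star}\hat{L}(\lambda,n)=I_2-\bigl(\hat{A}_{L1}(\lambda,n)\hat{A}_{L1}(\lambda,n)^{\star}\bigr)^{-1},\]
hence $\hat{A}_{L1}(\lambda,n)\hat{A}_{L1}(\lambda,n)^{\star}=\bigl(I_2-\hat{L}(\lambda,n)^{\star}\hat{L}(\lambda,n)\bigr)^{-1}$, and likewise $\tilde{\hat{A}}_{L1}(\lambda,n)\tilde{\hat{A}}_{L1}(\lambda,n)^{\star}=\bigl(I_2-\tilde{\hat{L}}(\lambda,n)^{\star}\tilde{\hat{L}}(\lambda,n)\bigr)^{-1}$. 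Since $\hat{L}(\lambda,n)=\tilde{\hat{L}}(\lambda,n)$ for $n\in\mathcal{L}$, the right-hand sides coincide, so $\hat{A}_{L1}(\lambda,n)\hat{A}_{L1}(\lambda,n)^{\star}=\tilde{\hat{A}}_{L1}(\lambda,n)\tilde{\hat{A}}_{L1}(\lambda,n)^{\star}$ and, taking determinants,
\[\bigl|\det\hat{A}_{L1}(\lambda,n)\bigr|^{2}=\bigl|\det\tilde{\hat{A}}_{L1}(\lambda,n)\bigr|^{2},\qquad\forall\,n\in\mathcal{L}.\]

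Then I would feed in the asymptotics. Specialising Theorem~\ref{asA_L} to $z=n\in\mathbb{N}^{\star}$, where $\mathrm{sg}(\mathrm{Im}\,z)=0$, and taking determinants (the $2\times2$ matrix factor there has determinant $1+O(1/n)$) gives $\det\hat{A}_{L1}(\lambda,n)=c_{\ast}^{2}\,(n/2)^{2i\gamma}\,(e^{nA}+e^{-nA})^{2}\bigl(1+O(1/n)\bigr)$, where $c_{\ast}=\frac{1}{2\pi}\,(-\kappa_+/a_+)^{i(\lambda-c_+)/\kappa_+}\,(\kappa_-/a_-)^{-i(\lambda-c_-)/\kappa_-}\,\Gamma(1-\nu_+)\Gamma(1-\mu_-)$ is a nonzero constant independent of $n$ (the Gamma factors being finite and nonzero since $1-\nu_+$ and $1-\mu_-$ have real part $\tfrac12$) and $\gamma=\frac{\lambda-c_-}{\kappa_-}-\frac{\lambda-c_+}{\kappa_+}\in\mathbb{R}$. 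Since $(n/2)^{i\gamma}$ is unimodular ($\gamma$ real) and $e^{nA}+e^{-nA}>0$, this yields
\[\bigl|\det\hat{A}_{L1}(\lambda,n)\bigr|^{2}=|c_{\ast}|^{4}\,(e^{nA}+e^{-nA})^{4}\bigl(1+O(1/n)\bigr),\]
and the analogous formula, with a possibly different nonzero constant $\tilde{c}_{\ast}$, for the second black hole. Combined with the determinant identity of the previous step and with $(e^{nA}+e^{-nA})^{4}\sim e^{4nA}$ as $n\to\infty$, this gives $|c_{\ast}|^{4}e^{4nA}(1+o(1))=|\tilde{c}_{\ast}|^{4}e^{4n\tilde{A}}(1+o(1))$ along $\mathcal{L}$; taking logarithms, dividing by $n$, and letting $n\to\infty$ through $\mathcal{L}$ (which is infinite because $\sum_{n\in\mathcal{L}}1/n=\infty$) gives $A=\tilde{A}$.

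The one point requiring care is this last step: Theorem~\ref{asA_L} must be used precisely enough to know that the scalar prefactor of $\hat{A}_{L1}(\lambda,n)$ is a nonzero constant independent of $n$, that the matrix factor has determinant $1+O(1/n)$, and that the power $(n/2)^{i\gamma}$ contributes only a unimodular factor on the positive real axis because $\gamma$ is real; otherwise the exponential rate $e^{2nA}$ hidden inside $\det\hat{A}_{L1}(\lambda,n)$ could be masked by an $n$-dependent modulus. Everything else is elementary algebra, so no serious obstacle remains beyond this bookkeeping.
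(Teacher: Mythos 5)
Your proof is correct, and it follows the same overall strategy as the paper's (reconstruct from $\hat{L}$ an algebraic combination of the blocks of $\hat{A}_L$, then read off $A$ from the exponential rate in Theorem \ref{asA_L}), but the algebraic reduction is genuinely different and in one respect cleaner. The paper substitutes $\hat{L}=\hat{A}_{L3}\hat{A}_{L1}^{-1}$ into (\ref{eqAL}) to get $\hat{A}_{L1}^{\star}=\hat{A}_{L1}^{-1}+\hat{A}_{L3}^{\star}\hat{L}$, uses the hypothesis twice (once in this identity, once in the form $(\hat{A}_{L3}^{\star})^{-1}\hat{A}_{L1}^{\star}=(\tilde{\hat{A}}_{L3}^{\star})^{-1}\tilde{\hat{A}}_{L1}^{\star}$), and isolates $(\hat{A}_{L3}^{\star})^{-1}\hat{A}_{L1}^{-1}=(\tilde{\hat{A}}_{L3}^{\star})^{-1}\tilde{\hat{A}}_{L1}^{-1}$, whose common decay rate $e^{-2nA}=e^{-2n\tilde{A}}$ then forces $A=\tilde{A}$; this requires $\hat{A}_{L3}(\lambda,n)$ to be invertible, which the paper only guarantees for $n$ large via Theorem \ref{asA_L}. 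You instead package (\ref{eqAL}) and (\ref{defL}) into the single identity $\hat{A}_{L1}\hat{A}_{L1}^{\star}=(I_2-\hat{L}^{\star}\hat{L})^{-1}$, which needs only the invertibility of $\hat{A}_{L1}$ (automatic from (\ref{eqAL}) for \emph{every} $n$), uses the hypothesis once, and then compares $|\det\hat{A}_{L1}(\lambda,n)|^2$ through its growth rate $e^{4nA}$. Your bookkeeping of the prefactors in Theorem \ref{asA_L} is the right thing to check and is done correctly: on the positive real axis $\mathrm{sg}(\mathrm{Im}\,z)=0$, the power $(n/2)^{i\gamma}$ and the factors $(-\kappa_+/a_+)^{i(\lambda-c_+)/\kappa_+}$, $(\kappa_-/a_-)^{-i(\lambda-c_-)/\kappa_-}$ are unimodular, the Gamma factors are finite and nonzero, and the matrix factor has determinant $1+O(1/n)$, so the modulus is $n$-independent up to $1+O(1/n)$ and the rate $A$ is not masked. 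Both arguments use the Müntz condition only through the fact that $\mathcal{L}$ is infinite.
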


\begin{proof}
 We recall that (see (\ref{eqAL}) and (\ref{defL})),
 \[ \hat{L}(\lambda,n) = \hat{A}_{L3}(\lambda,n) \hat{A}_{L1}(\lambda,n)^{-1}\]
 and
\[ \hat{A}_{L1}(\lambda,n)^{\star} \hat{A}_{L1}(\lambda,n) = I_{2} + \hat{A}_{L3}(\lambda,n)^{\star} \hat{A}_{L3}(\lambda,n).\]
Note that, thanks to their asymptotics given in Theorem \ref{asA_L}, the blocks $\hat{A}_{Li}(\lambda,n)$, $i \in \{1,2,3,4\}$, 
of the matrix $\hat{A}_{L}(\lambda,n)$, are invertible if $n$ is large enough. \\
Thus, for $n$ large enough,
\begin{eqnarray*}
 \hat{A}_{L1}(\lambda,n)^{\star} &=& \hat{A}_{L1}(\lambda,n)^{-1} + \hat{A}_{L3}(\lambda,n)^{\star} \hat{L}(\lambda,n)\\
 &=& \hat{A}_{L1}(\lambda,n)^{-1} + \hat{A}_{L3}(\lambda,n)^{\star} \tilde{\hat{L}}(\lambda,n)\\
 &=& \hat{A}_{L1}(\lambda,n)^{-1} + \hat{A}_{L3}(\lambda,n)^{\star} \left((\tilde{\hat{A}}_{L3}(\lambda,n)^{\star})^{-1}( \tilde{\hat{A}}_{L1}(\lambda,n)^{\star} -  \tilde{\hat{A}}_{L1}(\lambda,n)^{-1})\right).
\end{eqnarray*}
 Thus,
 \[(\hat{A}_{L3}^{\star})^{-1} \hat{A}_{L1}^{\star} -  (\hat{A}_{L3}^{\star})^{-1}\hat{A}_{L1}^{-1} = (\tilde{\hat{A}}_{L3}^{\star})^{-1} \tilde{\hat{A}}_{L1}^{\star} -  (\tilde{\hat{A}}_{L3}^{\star})^{-1}\tilde{\hat{A}}_{L1}^{-1}.\]
 Moreover, the hypothesis implies that, 
 \[(L(\lambda,n)^{\star})^{-1} = (\tilde{L}(\lambda,n)^{\star})^{-1},\]
 i.e.
 \[(\hat{A}_{L3}^{\star})^{-1} \hat{A}_{L1}^{\star} = (\tilde{\hat{A}}_{L3}^{\star})^{-1} \tilde{\hat{A}}_{L1}^{\star}.\]
 Thus,
 \[ (\hat{A}_{L3}^{\star})^{-1}\hat{A}_{L1}^{-1} = (\tilde{\hat{A}}_{L3}^{\star})^{-1}\tilde{\hat{A}}_{L1}^{-1}.\]
 Finally, using the asymptotics given in Theorem \ref{asA_L}, we obtain that $A = \tilde{A}$.
 \end{proof}

Hence, we can define the diffeomorphisms $h, \tilde{h}: ]0,A[ \rightarrow \mathbb{R}$ as the inverse of the Liouville 
transforms $g$ and $\tilde{g}$ in which we use the potentials $a(x)$ and $\tilde{a}(x)$ respectively.
Now, as in \cite{DN}, we follow a strategy inspired by \cite{FY}. Let us introduce for $X \in ]0,A[$ the $4 \times 4$ matrix 
\[ P(X,\lambda,z) = \begin{pmatrix} P_1(X,\lambda,z) & P_2(X,\lambda,z) \\ P_3(X,\lambda,z) & P_4(X,\lambda,z) \end{pmatrix}, \quad P_j(X,\lambda,z) \in M_2(\mathbb{C}), \quad j \in \{1,2,3,4\},\]
defined by
\[P(X,\lambda,z)  \tilde{F}_R(\tilde{h}(X),\lambda,z) = F_R(h(X),\lambda,z).\]
We first prove that the matrix $P(X,\lambda,z)$ is constant equal to $\pm I_4$ using the Phragm\'en-Lindel\"{o}f's Theorem and Liouville's Theorem thanks to the good 
estimates on the coefficients of the matrix $P(X,\lambda,z)$ as well as their analyticity with respect to $z$. After that, we obtain two equalities on scalar functions 
depending on the potentials and we deduce from the explicit form of the potentials the uniqueness of the parameters $M$, $Q$ and $\Lambda$.

\subsection{Study of the matrix P}\label{P=I}
\noindent
We first recall that
\[F_{R} (x,\lambda,z) = e^{-i \Gamma^1 C^{-}(x)} \hat{F}_R(x,\lambda,z).\]
Thus,
\begin{eqnarray*}
 P(X,\lambda,z) &=& F_R(h(X),\lambda,z) \tilde{F}_R(\tilde{h}(X),\lambda,z)^{-1} \\
&=& e^{-i \Gamma^1 C^{-}(h(X))} \hat{F}_R(x,\lambda,z) \tilde{\hat{F}}_R(\tilde{h}(X),\lambda,z)^{-1} e^{i \Gamma^1 \tilde{C}^{-}(\tilde{h}(X))}.
\end{eqnarray*}
We know (see (\ref{eqJ})) that
\[ \tilde{\hat{F}}_R(\tilde{h}(X),\lambda,z)^{-1} = \Gamma^1 \tilde{\hat{F}}_R(\tilde{h}(X),\lambda,z)^{\star} \Gamma^1.\]
Thus,
\begin{eqnarray*}
 P(X,\lambda,z) &=& e^{-i \Gamma^1 C^{-}(h(X))} \hat{F}_R(h(X),\lambda,z) \Gamma^1 \tilde{\hat{F}}_R(\tilde{h}(X),\lambda,z)^{\star} \Gamma^1 e^{i \Gamma^1 \tilde{C}^{-}(\tilde{h}(X))}\\
&=& e^{-i \Gamma^1 C^{-}(h(X))} \begin{pmatrix} \hat{F}_{R1} \tilde{\hat{F}}_{R1}^{\star} - \hat{F}_{R2} \tilde{\hat{F}}_{R2}^{\star}  & -\hat{F}_{R1} \tilde{\hat{F}}_{R3}^{\star}+\hat{F}_{R2} \tilde{\hat{F}}_{R4}^{\star} \\ \hat{F}_{R3} \tilde{\hat{F}}_{R1}^{\star} - \hat{F}_{R4} \tilde{\hat{F}}_{R2}^{\star} & -\hat{F}_{R3} \tilde{\hat{F}}_{R3}^{\star} + \hat{F}_{R4} \tilde{\hat{F}}_{R4}^{\star}  \end{pmatrix}e^{i \Gamma^1 \tilde{C}^{-}(\tilde{h}(X))}\\
&=& \begin{pmatrix} F_{R1} \tilde{F}_{R1}^{\star} - F_{R2} \tilde{F}_{R2}^{\star}  & -F_{R1} \tilde{F}_{R3}^{\star}+F_{R2} \tilde{F}_{R4}^{\star} \\ F_{R3} \tilde{F}_{R1}^{\star} - F_{R4} \tilde{F}_{R2}^{\star} & -F_{R3} \tilde{F}_{R3}^{\star} + F_{R4} \tilde{F}_{R4}^{\star}  \end{pmatrix}.
\end{eqnarray*}

\begin{lemma}\label{Pholo}
 For all $(i,j) \in \{1,2,3,4\}^2$, the applications $z \mapsto P_{i,j}(X,\lambda,z)$ are analytic on $\mathbb{C}$ and of exponential type (i.e. that there exist some constants $c$ and $C$ such that $\vert P_{i,j}(X,\lambda,z) \vert \leq c e^{C\vert z \vert}$).
\end{lemma}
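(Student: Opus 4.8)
The plan is to establish the two asserted properties -- analyticity in $z$ and exponential type -- directly from the explicit block formula for $P(X,\lambda,z)$ obtained just above the lemma, namely
\[
P(X,\lambda,z) = \begin{pmatrix} F_{R1} \tilde{F}_{R1}^{\star} - F_{R2} \tilde{F}_{R2}^{\star}  & -F_{R1} \tilde{F}_{R3}^{\star}+F_{R2} \tilde{F}_{R4}^{\star} \\ F_{R3} \tilde{F}_{R1}^{\star} - F_{R4} \tilde{F}_{R2}^{\star} & -F_{R3} \tilde{F}_{R3}^{\star} + F_{R4} \tilde{F}_{R4}^{\star}  \end{pmatrix},
\]
where every entry is evaluated at the corresponding point ($F_{Ri}$ at $(h(X),\lambda,z)$, $\tilde F_{Ri}$ at $(\tilde h(X),\lambda,z)$). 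So each $P_{i,j}$ is a finite sum of products of a component of $F_R$ with a component of $\tilde F_R^{\star}$. The key observation is that, since $X$ is fixed, I only need to know how each factor behaves as a function of $z$ alone.

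First I would treat analyticity. As recalled at the start of Section 5, the Faddeev matrices $\hat M_R$, hence the Jost functions $\hat F_R(x,\lambda,z) = \hat M_R(x,\lambda,z)e^{i\lambda\Gamma^1 x}$, are entire in $z$ for each fixed $x$: this follows from the uniformly convergent series representation (\ref{serieM_R1}) whose terms are polynomials in $z$. The same holds for $\tilde F_R$. The conjugating factor $e^{-i\Gamma^1 C^-(x)}$ does not depend on $z$, so $F_R(h(X),\lambda,z)$ and $\tilde F_R(\tilde h(X),\lambda,z)$ are entire in $z$. Taking transpose conjugates: a priori $\tilde F_{Ri}(\tilde h(X),\lambda,z)^{\star}$ is anti-holomorphic in $z$, but it equals the holomorphic function $\bar z \mapsto \overline{\tilde F_{Ri}(\tilde h(X),\lambda,\bar z)}^{\,\mathrm{T}}$ composed with $z\mapsto\bar z$... the cleaner route, which I would use, is to invoke the block relation (\ref{eqJ}), i.e. $\tilde{\hat F}_R^{-1} = \Gamma^1 \tilde{\hat F}_R^{\star}\Gamma^1$, valid a priori for $\lambda\in\mathbb R$ but which shows that the entries of $\tilde F_R^{\star}$ that appear in $P$ are, up to signs and the $z$-independent conjugation factors, exactly the entries of $\tilde F_R(\tilde h(X),\lambda,z)^{-1}$, and the latter is meromorphic; combined with $\det\tilde F_R = 1$ (noted in Section 5) it is in fact entire in $z$. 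Hence each $P_{i,j}(X,\lambda,z)$, being a finite sum of products of entire functions of $z$, is entire in $z$.

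Next, the exponential type bound. Here I would use Theorem \ref{estimJ}: for $X\in]0,X_1[$, $|F_{Ri,j}(X,\lambda,z)|\le C e^{|\mathrm{Re}(z)|X}\le C e^{|z|X}$, and similarly, via Theorem \ref{asJostgauche} (equivalently the left-hand analogue, $|F_{Li,j}(X,\lambda,z)|\le Ce^{|\mathrm{Re}(z)|X}$), one controls the factors coming from the region near $X=A$; but in fact since $X$ is fixed in $]0,A[$ I can choose any $X_1$ with $X<X_1<A$ and both $F_R$ and $\tilde F_R$ at that fixed $X$ obey $|\,\cdot\,|\le C e^{C|z|}$ with $C = \max(X_1,\tilde X_1)$ absorbing everything. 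Multiplying at most two such factors in each entry of $P$ gives $|P_{i,j}(X,\lambda,z)|\le c\, e^{2C|z|}$, which is precisely the exponential type estimate claimed. I should be a little careful that the constants in Theorem \ref{estimJ} are uniform for $X$ in a fixed compact subinterval of $]0,A[$, but that is exactly what that theorem provides.

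The main obstacle -- really the only delicate point -- is making the analyticity argument for the $F_{Ri}^{\star}$ factors rigorous: a transpose-conjugate of an entire function is not holomorphic, so one must either argue (as above) that the relevant combinations reduce to entries of $\tilde F_R^{-1}$ which is genuinely meromorphic in $z$ (and then entire because $\det = 1$ and the Jost function has no zeros, or one simply observes the entries are entire from the explicit cofactor formulas), or one notes that $P$ is defined by the original relation $P(X,\lambda,z)\tilde F_R(\tilde h(X),\lambda,z)=F_R(h(X),\lambda,z)$, so $P(X,\lambda,z)=F_R(h(X),\lambda,z)\,\tilde F_R(\tilde h(X),\lambda,z)^{-1}$ with a genuinely meromorphic inverse, whose poles are cancelled since $P$ admits the explicit holomorphic block expression above. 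I would present it this second way, since it is the shortest: entire $\times$ (entire with $\det=1$)${}^{-1}$ is meromorphic, the explicit formula shows it is in fact holomorphic everywhere, and Theorem \ref{estimJ} (plus its left analogue Theorem \ref{asJostgauche}) gives the exponential bound on each entry of that explicit expression.
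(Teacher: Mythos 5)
Your proof is correct and follows essentially the same route as the paper's (very terse) proof: analyticity from the entirety of the Jost functions together with the explicit block expression $P=F_R\tilde F_R^{-1}$ with $\det \tilde F_R=1$, and the exponential type bound from the uniform estimates $|F_{Ri,j}(X,\lambda,z)|\le Ce^{|\mathrm{Re}(z)|X}$ on compact subintervals of $]0,A[$. Your extra care with the fact that the transpose-conjugate $\tilde F_R^{\star}$ is not itself holomorphic — resolving it by passing to the genuinely entire inverse via the cofactor formula — addresses a point the paper glosses over, and is the right way to make the argument rigorous.
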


\begin{proof}
 Since the Jost functions are analytic on the whole complex plane, the Lemma is an easy consequence of the previous equality. The fact that the components $P_{i,j}(X,\lambda,z)$ are of exponential type is a consequence of Theorems \ref{asJostdroite}, \ref{asJostgauche} and \ref{asA_L}. 
\end{proof}

We just study $P_1$ and $P_2$ since the study of $P_3$ and $P_4$ is similar. We know that 
\[P_1(X,\lambda,z) = F_{R1} \tilde{F}_{R1}^{\star} - F_{R2} \tilde{F}_{R2}^{\star}\]
and
\[P_2(X,\lambda,z) = -F_{R1} \tilde{F}_{R3}^{\star}+F_{R2} \tilde{F}_{R4}^{\star}.\]
We now use
\begin{eqnarray*}
 F_L(x,\lambda,z) &=& F_R(x,\lambda,z) \hat{A}_L(\lambda,z) \\
&=& \begin{pmatrix} F_{R1}\hat{A}_{L1} + F_{R2}\hat{A}_{L3} & F_{R1} \hat{A}_{L2} + F_{R2}\hat{A}_{L4} \\ F_{R3}\hat{A}_{L1} + F_{R4}\hat{A}_{L3} & F_{R3} \hat{A}_{L2} + F_{R4}\hat{A}_{L4} \end{pmatrix}.
\end{eqnarray*}
Thus, since
\[F_{L1} = F_{R1}\hat{A}_{L1} + F_{R2}\hat{A}_{L3},\]
we obtain
\[F_{R1} = (F_{L1} - F_{R2}\hat{A}_{L3})\hat{A}_{L1}^{-1} = F_{L1}\hat{A}_{L1}^{-1} - F_{R2}\hat{L}(\lambda,z)\]
and, thanks to
\[F_{L2} = F_{R1} \hat{A}_{L2} + F_{R2}\hat{A}_{L4},\]
we also obtain
\[F_{R2} = (F_{L2} - F_{R1}\hat{A}_{L2})\hat{A}_{L4}^{-1} = F_{L2}\hat{A}_{L4}^{-1} - F_{R1}\hat{A}_{L2} \hat{A}_{L4}^{-1}.\]
Note that, thanks to their asymptotics the matrix $A_{Li}$ and $A_{Ri}$, for $i \in \{1,2,3,4\}$, are invertible for real $z$ sufficiently large.
Then, for $z$ real large enough,
\[\tilde{F}_{R2}^{\star} = (\tilde{\hat{A}}_{L4}^{-1})^{\star} \tilde{F}_{L2}^{\star} - (\tilde{\hat{A}}_{L4}^{-1})^{\star} \tilde{\hat{A}}_{L2}^{\star} \tilde{F}_{R1}^{\star}.\]
Moreover, thanks to Equations (\ref{eqALR}) and (\ref{eqALR2}),
\[ (\tilde{\hat{A}}_{L4}^{-1})^{\star} \tilde{\hat{A}}_{L2}^{\star} = -\tilde{\hat{A}}_{R4}^{-1} \tilde{\hat{A}}_{R3} = \tilde{\hat{L}}.\]
Finally,
\begin{eqnarray*}
 P_1(X,\lambda,z) &=& F_{R1} \tilde{F}_{R1}^{\star} - F_{R2} \tilde{F}_{R2}^{\star} \\
&=& ( F_{L1}\hat{A}_{L1}^{-1} - F_{R2}\hat{L}(\lambda,z)) \tilde{F}_{R1}^{\star} - F_{R2} ((\tilde{\hat{A}}_{L4}^{-1})^{\star} \tilde{F}_{L2}^{\star} - \tilde{\hat{L}}(\lambda,z) \tilde{F}_{R1}^{\star})\\
&=&  F_{L1}\hat{A}_{L1}^{-1} \tilde{F}_{R1}^{\star} - F_{R2} (\tilde{\hat{A}}_{L4}^{-1})^{\star} \tilde{F}_{L2}^{\star} + F_{R2} (\tilde{\hat{L}}(\lambda,z) - \hat{L}(\lambda,z))\tilde{F}_{R1}^{\star}.
\end{eqnarray*}
Thus, since $\hat{L}$ and $\tilde{\hat{L}}$ have no singularities on $\mathbb{R}$, Corollary \ref{corL} shows that $\hat{L}(\lambda,z) = \tilde{\hat{L}}(\lambda,z)$ for all $z \in \mathbb{R}$. Then, we obtain that for all 
$z \in \mathbb{R}$ and for all $X \in ]0,A[$,
\begin{eqnarray}\label{expresP1}
 P_1(X,\lambda,z) =  F_{L1}\hat{A}_{L1}^{-1} \tilde{F}_{R1}^{\star} - F_{R2} (\tilde{\hat{A}}_{L4}^{-1})^{\star} \tilde{F}_{L2}^{\star}.
\end{eqnarray}
For $P_2$, we obtain similarly
\begin{eqnarray}\label{expresP2}
P_2(X,\lambda,z) = - F_{L1}\hat{A}_{L1}^{-1} \tilde{F}_{R3}^{\star} + F_{R2} (\tilde{\hat{A}}_{L4}^{-1})^{\star} \tilde{F}_{L4}^{\star}.
\end{eqnarray}

Thus, Theorems \ref{asJostdroite}, \ref{asJostgauche} and \ref{asA_L} (and Section \ref{partDuh}) show that these applications are bounded on 
$\mathbb{R}$ and $i \mathbb{R}$. Finally applying the Phragm\'{e}n-Lindel\"{o}f's Theorem (\cite{Bo}, Theorem 1.4.2) on each quadrant of the complex plane, we deduce that, $z \mapsto P_{i,j}(X,\lambda,z)$ is bounded on $\mathbb{C}$. 
By Liouville's Theorem, we thus obtained that the applications $z \mapsto P_{i,j}(X,\lambda,z)$ are constants on $\mathbb{C}$. Contrary to \cite{DN} we can't use the evaluation on $z = 0$ because we don't have explicit formula for
$P(X,\lambda,0)$. To obtain that $P(X,\lambda,z) = \pm I_4$ we use the asymptotics of the Jost functions. First, by definition
\[P(X,\lambda)  \tilde{F}_R(\tilde{h}(X),\lambda,z) = F_R(h(X),\lambda,z),\]
so
\begin{equation}\label{eqP_1P_21}
P_1(X,\lambda)  \tilde{F}_{R1}(\tilde{h}(X),\lambda,z) + P_2(X,\lambda)  \tilde{F}_{R3}(\tilde{h}(X),\lambda,z) = F_{R1}(h(X),\lambda,z)
\end{equation}
and
\begin{equation}\label{eqP_1P_22}
P_1(X,\lambda)  \tilde{F}_{R2}(\tilde{h}(X),\lambda,z) + P_2(X,\lambda)  \tilde{F}_{R4}(\tilde{h}(X),\lambda,z) = F_{R2}(h(X),\lambda,z).
\end{equation}
Thanks to the Theorem \ref{asJostdroiteR} we know that for every fixed $X \in ]0,A[$, for large $z$, $z$ real,
\[F_{R1}(h(X),\lambda,z) = \frac{2^{-\nu_-}}{\sqrt{2\pi}} \left( \frac{\kappa_-}{a_-} \right)^{i \frac{(\lambda-c_-)}{\kappa_-}} \Gamma(1-\nu_-) z^{-i\frac{(\lambda-c_-)}{\kappa_-}} e^{zX} \begin{pmatrix}
                                                                                                                                                                                               1 & O\left( \frac{1}{z} \right) \\ O\left( \frac{1}{z} \right) & 1
                                                                                                                                                                                              \end{pmatrix} \left( 1 + O \left( \frac{1}{z} \right) \right)\]
and
\[ F_{R3}(h(X),\lambda,z) = i \frac{2^{-\nu_-}}{\sqrt{2\pi}} \left( \frac{\kappa_-}{a_-} \right)^{i \frac{(\lambda-c_-)}{\kappa_-}} \Gamma(1-\nu_-) z^{-i\frac{(\lambda-c_-)}{\kappa_-}} e^{zX} \begin{pmatrix}
                                                                                                                                                                                                O\left( \frac{1}{z} \right) &1 \\ -1 & O\left( \frac{1}{z} \right)
                                                                                                                                                                                              \end{pmatrix}
 \left( 1 + O \left( \frac{1}{z} \right) \right).\]
Thus, setting
\[ \alpha = \frac{2^{-\nu_-}}{\sqrt{2\pi}} \left( \frac{\kappa_-}{a_-} \right)^{i \frac{(\lambda-c_-)}{\kappa_-}} \Gamma(1-\nu_-) \quad \mathrm{and} \quad \tilde{\alpha} = \frac{2^{-\tilde{\nu}_-}}{\sqrt{2\pi}} \left( \frac{\tilde{\kappa}_-}{\tilde{a}_-} \right)^{i \frac{(\lambda-\tilde{c}_-)}{\tilde{\kappa}_-}} \Gamma(1-\tilde{\nu}_-),\]
we obtain, using (\ref{eqP_1P_21}) and just keeping the main terms, that for every fixed $X \in ]0,A[$, for large real $z$,
\[ \left( \tilde{\alpha} P_1(X,\lambda) I_2+ i \tilde{\alpha} P_2(X,\lambda) \begin{pmatrix} 0 & 1 \\ -1 & 0 \end{pmatrix} \right) z^{i \left( \frac{\lambda-c_-}{\kappa_-} - \frac{\lambda-\tilde{c}_-}{\tilde{\kappa}_-} \right)}  \sim \alpha I_2.\]
We deduce that
\begin{eqnarray}\label{egexpo}
 \frac{\lambda-c_-}{\kappa_-} = \frac{\lambda-\tilde{c}_-}{\tilde{\kappa}_-},
\end{eqnarray}
then
\begin{eqnarray}\label{egnumu}
 \nu_- = \tilde{\nu}_- \quad \mathrm{and} \quad \tilde{\mu} = \mu.
\end{eqnarray}
Thus,
\begin{eqnarray*}
\tilde{\alpha} &=& \frac{2^{-\tilde{\nu}_-}}{\sqrt{2\pi}} \left( \frac{\tilde{\kappa}_-}{\tilde{a}_-} \right)^{i \frac{(\lambda-\tilde{c}_-)}{\tilde{\kappa}_-}} \Gamma(1-\tilde{\nu}_-)\\
&=& \alpha \left( \frac{\tilde{\kappa}_-}{\tilde{a}_-} \frac{a_-}{\kappa_-} \right)^{i \frac{(\lambda-\tilde{c}_-)}{\tilde{\kappa}_-}}.
\end{eqnarray*}
Then, for all $X \in ]0,A[$,
\begin{equation}\label{eq1P}
  \left( P_1(X,\lambda) + i P_2(X,\lambda) \begin{pmatrix} 0 & 1 \\ -1 & 0 \end{pmatrix} \right) \left( \frac{\tilde{\kappa}_-}{\tilde{a}_-} \frac{a_-}{\kappa_-} \right)^{i \frac{(\lambda-c_-)}{\kappa_-}}  = I_2.
\end{equation}
Similarly, using (\ref{eqP_1P_22}), for all $X \in ]0,A[$,
\begin{equation}\label{eq2P}
 \left( P_1(X,\lambda) + iP_2(X,\lambda)\begin{pmatrix} 0 & 1 \\ -1 & 0 \end{pmatrix}  \right) \left( \frac{\tilde{\kappa}_-}{\tilde{a}_-} \frac{a_-}{\kappa_-} \right)^{-i \frac{(\lambda-c_-)}{\kappa_-}}  = I_2.
\end{equation}
Thus, using Equations (\ref{eq1P}) and (\ref{eq2P}),
\begin{eqnarray}\label{egpm1}
 \left( \frac{\tilde{\kappa}_-}{\tilde{a}_-} \frac{a_-}{\kappa_-} \right)^{-i \frac{(\lambda-c_-)}{\kappa_-}} = \left( \frac{\tilde{\kappa}_-}{\tilde{a}_-} \frac{a_-}{\kappa_-} \right)^{i \frac{(\lambda-c_-)}{\kappa_-}} = \pm 1.
\end{eqnarray}
Finally, using Equations (\ref{egexpo})-(\ref{egnumu}) and (\ref{egpm1}), the asymptotics of the Jost functions and the fact that $z \mapsto P(X,\lambda,z)$ is constant,
we obtain, thanks to Equations (\ref{expresP1}) and (\ref{expresP2}), that
\[P_1(X,\lambda,z) = \pm I_2 \quad \quad \mathrm{and} \quad \quad P_2(X,\lambda,z) = 0.\]
Similarly we show that
\[P_3(X,\lambda,z) = 0 \quad \quad \mathrm{and} \quad \quad P_4(X,\lambda,z) = \pm I_2 .\]
In addition,
\[P_1(X,\lambda,z) = \left( \frac{\tilde{\kappa}_-}{\tilde{a}_-} \frac{a_-}{\kappa_-} \right)^{-i \frac{(\lambda-c_-)}{\kappa_-}}I_2 = P_{4}(X,\lambda,z).\]
Finally, for all $z \in \mathbb{C}$ and for all $X \in ]0,A[$,
\[P(X,\lambda,z) = \pm I_4.\]

\subsection{Proof of Theorem \ref{mainthm} under the first assumption}\label{conclu}
\noindent
We recall that we work with the operator
\[H = \Gamma^1 D_x + za(x) \Gamma^2 + b(x) \Gamma^0 + c(x).\]
Since we work with the Liouville's variable $X$, we introduce the operator
\[L = a(X) \Gamma^1 D_X + za(X) \Gamma^2 + b(X) \Gamma^0 + c(X)\]
where we use the notations $a(X) = a(h(X))$, $b(X) = b(h(X))$ and $c(X) = c(h(X))$. Note that 
\[ H F(x,\lambda,z) = \lambda F(x,\lambda,z) \Leftrightarrow L F(h(X),\lambda,z) = \lambda F(h(X),\lambda,z).\]
Thus, by definition of the Jost functions,
\[\Gamma^1 D_X(F_R(h(X),\lambda,z)) = \left( -z \Gamma^2 - \frac{b(X)}{a(X)} \Gamma^0  - \frac{c(X)}{a(X)}  + \frac{\lambda}{a(X)} \right)F_R(h(X),\lambda,z)\]
and similarly
\[\Gamma^1 D_X(\tilde{F}_R(\tilde{h}(X),\lambda,z)) = \left( -z \Gamma^2 - \frac{\tilde{b}(X)}{\tilde{a}(X)} \Gamma^0  - \frac{\tilde{c}(X)}{\tilde{a}(X)}  + \frac{\lambda}{\tilde{a}(X)} \right)\tilde{F}_R(\tilde{h}(X),\lambda,z).\]
Moreover, in Section \ref{P=I}, we have shown that
\[F_R(h(X),\lambda,z) = \pm \tilde{F}_R(\tilde{h}(X),\lambda,z), \quad \forall X \in ]0,A[.\]
Then,
\[\left( \left(\frac{\tilde{b}(X)}{\tilde{a}(X)} - \frac{b(X)}{a(X)} \right) \Gamma^0 + \left(\frac{\tilde{c}(X)}{\tilde{a}(X)} - \frac{c(X)}{a(X)} \right) + \lambda \left(\frac{1}{a(X)} - \frac{1}{\tilde{a}(X)} \right)\right) F_R(h(X),\lambda,z) =0, \quad \forall X \in ]0,A[.\]
Thanks to the definition of the matrix
\[ \Gamma^0 = \begin{pmatrix} 0 & \sigma \\ \sigma^{\star} & 0 \end{pmatrix} \quad \quad \mathrm{with} \quad \quad \sigma = \begin{pmatrix} -i & 0 \\ 0 & i \end{pmatrix},\]
we easily obtain
\[ \begin{pmatrix} \left( \left(\frac{\tilde{c}}{\tilde{a}} - \frac{c}{a} \right) + \lambda \left( \frac{1}{a}-\frac{1}{\tilde{a}} \right) \right) I_2 & \left( \frac{\tilde{b}}{\tilde{a}} - \frac{b}{a} \right) \sigma \\ \left( \frac{\tilde{b}}{\tilde{a}} - \frac{b}{a} \right)\sigma^{\star} & \left( \left(\frac{\tilde{c}}{\tilde{a}} - \frac{c}{a} \right) + \lambda \left( \frac{1}{a}-\frac{1}{\tilde{a}} \right) \right) I_2 \end{pmatrix} F_R(h(X),\lambda,z) = 0.\]
Since the Jost function form the right $F_R$ is invertible we deduce that,
\begin{eqnarray}\label{eq1}
 \frac{ \tilde{c} - \lambda}{\tilde{a}}(\tilde{h}(X)) = \frac{ c - \lambda}{a}(h(X)), \quad \forall X \in ]0,A[
\end{eqnarray}
and
\begin{eqnarray}\label{eq2}
\frac{\tilde{b}}{\tilde{a}}(\tilde{h}(X)) = \frac{b}{a}(h(X)), \quad \forall X \in ]0,A[.
\end{eqnarray}
This is the first statement of Theorem \ref{mainthm}. We now use the explicit form of the potentials. We recall that, using the notation $r = r(h(X))$,
\[  a(x) = \frac{\sqrt{F(r)}}{r}, \quad b(x) = m \sqrt{F(r)}, \quad c(x) = \frac{qQ}{r}.\]
Thus, Equation (\ref{eq2}) gives us that
\[ \tilde{r}(\tilde{h}(X)) = r(h(X)).\]
Moreover, using Equation (\ref{eq1}),
\[  \frac{F(r(h(X)))}{(qQ-\lambda r(h(X)))^2} = \frac{\tilde{F}(r(h(X)))}{(q\tilde{Q}-\lambda r(h(X)))^2}.\]
Thus,
\[ F(r) (q\tilde{Q}-\lambda r)^2  = \tilde{F}(r) (qQ-\lambda r)^2.\]
Finally, using the definition of $F$,
\[F(r) = 1 - \frac{2M}{r} + \frac{Q^2}{r^2} - \frac{\Lambda r^2}{3}\]
and identifying the coefficients of $r^6$, $r^5$ and $r^3$ we obtain respectively
\[ \frac{\lambda^2 \Lambda}{3} = \frac{\lambda^2 \tilde{\Lambda}}{3},\]
\[ 2 \lambda q \tilde{Q} \frac{\Lambda}{3} = 2 \lambda q Q \frac{\tilde{\Lambda}}{3}\]
and
\[2 \lambda q \tilde{Q} + 2 M\lambda^2 = 2 \lambda q Q + 2 \tilde{M} \lambda^2.\]
If $\lambda \neq 0$, these equalities allow us to conclude,
\[ M = \tilde{M}, \quad Q = \tilde{Q}, \quad \Lambda = \tilde{\Lambda}.\]
If $\lambda = 0$ ($q \neq 0$) we have
\[ F(r) \tilde{Q}^2  = \tilde{F}(r) Q^2.\]
Thus, using the definition of $F$ we easily obtain that
\[ M = \tilde{M}, \quad Q^2 = \tilde{Q}^2, \quad \Lambda = \tilde{\Lambda}.\]
Then,
\[ F(r) = \tilde{F}(r),\]
and using (\ref{eq1}) we thus obtain $Q = \tilde{Q}$.\\
This conclude the proof of Theorem \ref{mainthm}.
\hfill $\square$

\subsection{Proof of the main Theorem under the second assumption}\label{preuveR}
\noindent
The aim of this Section is to prove Theorem \ref{mainthm} if the assertion $(ii)$ is satisfied, i.e. to prove the uniqueness of the parameters $(M,Q,\Lambda)$ of a dS-RN black hole from
the knowledge of the reflection coefficient of the partial scattering operators, $R(\lambda,n)$, for a fixed energy $\lambda \in \mathbb{R}$ and for all $n \in \mathcal{L} \subset \mathbb{N}^{\star}$ 
satisfying the M\"{u}ntz condition $\sum_{n \in \mathcal{L}} \frac{1}{n} = \infty$. The strategy is exactly the same as the one used previously, the only difference comes 
from the fact that the equalities $R(\lambda,n) = \tilde{R}(\lambda,n)$ do not exactly imply that $\hat{R}(\lambda,n) = \tilde{\hat{R}}(\lambda,n)$.

Consider thus two dS-RN black holes with parameters $(M,Q,\Lambda)$ and $(\tilde{M},\tilde{Q},\tilde{\Lambda})$ respectively. We assume that
 \[ R(\lambda,n) = \tilde{R}(\lambda,n), \quad \forall n \in \mathcal{L}.\]
Thus, using the link between the scattering operator $S$ and the scattering matrix $\hat{S}$ given in Proposition \ref{LienSS} we deduce from this 
equality that
\[\hat{R}(\lambda,n)e^{-2i\beta} =  \tilde{\hat{R}}(\lambda,n)e^{-2i\tilde{\beta}}, \quad \forall n \in \mathcal{L},\]
where $\beta$ is the constant defined in (\ref{beta}).

\begin{lemma}
 Suppose that
\[\hat{R}(\lambda,n)e^{-2i\beta} =  \tilde{\hat{R}}(\lambda,n)e^{-2i\tilde{\beta}}, \quad \forall n \in \mathcal{L}.\]
 Then
 \[A:= \int_{\mathbb{R}} a(t)\, \mathrm dt = \int_{\mathbb{R}} \tilde{a}(t)\, \mathrm dt =: \tilde{A}.\]
\end{lemma}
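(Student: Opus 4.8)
The plan is to mimic, step for step, the proof of Lemma~\ref{lemA=Atilde}, using the reflection coefficient from the right in place of the one from the left; the only new feature is that the two scattering operators are now assumed to agree only up to a fixed unimodular scalar, which will turn out to be harmless. First I would rewrite the hypothesis: by Proposition~\ref{LienSS}, the equalities $R(\lambda,n)=\tilde R(\lambda,n)$ on $\mathcal{L}$ are equivalent to $e^{-2i\beta}\hat R(\lambda,n)=e^{-2i\tilde\beta}\tilde{\hat R}(\lambda,n)$, that is
\[\hat R(\lambda,n)=\gamma\,\tilde{\hat R}(\lambda,n),\qquad n\in\mathcal{L},\qquad \gamma:=e^{2i(\beta-\tilde\beta)},\]
where $\beta$ and $\tilde\beta$ are the real constants of~(\ref{beta}) (integrals of the real potential $c$), so that $\gamma$ is a scalar with $|\gamma|=1$. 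Since $\bar\gamma\gamma=1$, this immediately gives $\hat R(\lambda,n)^{\star}\hat R(\lambda,n)=\tilde{\hat R}(\lambda,n)^{\star}\tilde{\hat R}(\lambda,n)$ for all $n\in\mathcal{L}$; this is exactly the mechanism behind the remark, made in the introduction, that it suffices to know the reflection coefficient up to a constant unitary matrix.

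Next I would produce a relation expressing a product of blocks of $\hat A_L$ in terms of $\hat R^{\star}\hat R$ alone. From $\hat R=\hat A_{R2}\hat A_{R4}^{-1}$ (see~(\ref{defR})) together with the counterpart of~(\ref{eqAL}) for the Faddeev matrices from the right, $\hat A_{R4}^{\star}\hat A_{R4}=I_2+\hat A_{R2}^{\star}\hat A_{R2}$ (also obtained in \cite{AKM}), one gets $\hat A_{R4}^{\star}(I_2-\hat R^{\star}\hat R)\hat A_{R4}=I_2$, hence $\hat A_{R4}\hat A_{R4}^{\star}=(I_2-\hat R^{\star}\hat R)^{-1}$ for $n$ large enough that all the relevant blocks are invertible, which is guaranteed by Theorem~\ref{asA_L}. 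Using~(\ref{eqALR2}) in the form $\hat A_{R4}=\hat A_{L4}^{\star}$, the left-hand side equals $\hat A_{L4}^{\star}\hat A_{L4}$, so
\[\hat A_{L4}(\lambda,n)^{\star}\hat A_{L4}(\lambda,n)=\big(I_2-\hat R(\lambda,n)^{\star}\hat R(\lambda,n)\big)^{-1},\qquad n\ \text{large},\]
and the same identity holds with tildes. Combined with $\hat R^{\star}\hat R=\tilde{\hat R}^{\star}\tilde{\hat R}$ on $\mathcal{L}$, this yields
\[\hat A_{L4}(\lambda,n)^{\star}\hat A_{L4}(\lambda,n)=\tilde{\hat A}_{L4}(\lambda,n)^{\star}\tilde{\hat A}_{L4}(\lambda,n),\qquad n\in\mathcal{L}\ \text{large}.\]

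Finally I would conclude exactly as at the end of Lemma~\ref{lemA=Atilde}. By Theorem~\ref{asA_L}, as $n\to+\infty$ along $\mathbb{N}^{\star}$ one has $\hat A_{L4}(\lambda,n)=c_n\,e^{nA}\,(I_2+o(1))$, where $|c_n|=\frac{1}{2\pi}\,|\Gamma(1-\mu_+)|\,|\Gamma(1-\nu_-)|$ is a fixed positive number: the $e^{-nA}$ term is negligible against $e^{nA}$, the powers of $-\kappa_+/a_+$, $\kappa_-/a_-$ and $n/2$ occurring in the prefactor are unimodular because their exponents are purely imaginary, and $\Gamma$ has no zeros while $1-\mu_+$ and $1-\nu_-$ have real part $\tfrac{1}{2}$. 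Hence $\operatorname{tr}\big(\hat A_{L4}(\lambda,n)^{\star}\hat A_{L4}(\lambda,n)\big)=\frac{1}{2\pi^{2}}\,|\Gamma(1-\mu_+)|^{2}\,|\Gamma(1-\nu_-)|^{2}\,e^{2nA}\,(1+o(1))$, and likewise for the tilde black hole with $A$ replaced by $\tilde A$. Equating the two expressions along the infinite set $\mathcal{L}$ and taking logarithms forces $2n(A-\tilde A)=O(1)$ as $n\to\infty$, i.e. $A=\tilde A$. I expect the only slightly delicate point to be the verification that the exponential growth rate of $\hat A_{L4}(\lambda,n)$ is genuinely $e^{nA}$ with a nonvanishing coefficient — which is immediate from the explicit formula in Theorem~\ref{asA_L} — since the rest is a verbatim transcription of the argument already used for the coefficient $\hat L$.
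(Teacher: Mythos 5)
Your proof is correct, and it reaches the conclusion by the same two\-/step template as the paper: combine the algebraic relations among the blocks of the scattering data with the large-$n$ asymptotics of Theorem \ref{asA_L} to compare the exponential rates $e^{2nA}$ and $e^{2n\tilde A}$. The algebra, however, is organized differently. The paper's proof starts from $\hat{R} = -\hat{A}_{L1}^{-1}\hat{A}_{L2}$ and the relation $\hat{A}_{L1}\hat{A}_{L1}^{\star} - \hat{A}_{L2}\hat{A}_{L2}^{\star} = I_2$, carries the unimodular factor $e^{2i(\beta-\tilde\beta)}$ through several lines, and isolates the \emph{decaying} quantity $\hat{A}_{L1}^{-1}(\hat{A}_{L2}^{\star})^{-1} = e^{2i(\beta-\tilde\beta)}\,\tilde{\hat{A}}_{L1}^{-1}(\tilde{\hat{A}}_{L2}^{\star})^{-1}$, from which $A=\tilde A$ follows by comparing moduli. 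You instead observe at the outset that the hypothesis constrains $\hat R$ only through the gauge-invariant combination $\hat R^{\star}\hat R$, and you package the block relations into the single identity $\hat A_{L4}^{\star}\hat A_{L4} = (I_2-\hat R^{\star}\hat R)^{-1}$, a \emph{growing} quantity; the phase $e^{2i(\beta-\tilde\beta)}$ disappears in the first line rather than the last, which also makes transparent the remark in the introduction that the reflection coefficients need only be known up to a constant unitary factor. The one point to make explicit is the provenance of the identity $\hat A_{R4}^{\star}\hat A_{R4} = I_2 + \hat A_{R2}^{\star}\hat A_{R2}$: it is not among the relations (\ref{eqALR})--(\ref{equnit}) listed in Section \ref{first}, but it is the $(2,2)$-block of the $J$-unitarity relation $\hat A_{R}^{\star}\Gamma^1\hat A_{R} = \Gamma^1$ from \cite{AKM} (equivalently, via (\ref{eqALR})--(\ref{eqALR2}), the $(2,2)$-block of $\hat A_L\Gamma^1\hat A_L^{\star}=\Gamma^1$), exactly as the identity $\hat{A}_{L1}\hat{A}_{L1}^{\star} - \hat{A}_{L2}\hat{A}_{L2}^{\star} = I_2$ invoked in the paper's proof is its $(1,1)$-block, so it stands on the same footing and should be cited as such. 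Your final step (trace, logarithm, $2n(A-\tilde A)=O(1)$ along the infinite M\"untz set $\mathcal L$) is sound, including the verification that the scalar prefactors in Theorem \ref{asA_L} are nonvanishing and of fixed modulus.
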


\begin{proof}
 The proof of this Lemma is strictly the same as the proof of Lemma \ref{lemA=Atilde}.
 We use the equations (\ref{defR}), (\ref{eqALR}), (\ref{eqALR2}) and (\ref{equnit}) to obtain that
 \[ \hat{R}(\lambda,n) = -\hat{A}_{L1}(\lambda,n)^{-1} \hat{A}_{L2}(\lambda,n)\]
 and
\[ \hat{A}_{L1}(\lambda,n) \hat{A}_{L1}(\lambda,n)^{\star} - \hat{A}_{L2}(\lambda,n) \hat{A}_{L2}(\lambda,n)^{\star} = I_{2}.\]
Thus,
\begin{eqnarray*}
 \hat{A}_{L1}(\lambda,n)^{\star} &=& \hat{A}_{L1}(\lambda,n)^{-1} - \hat{R}(\lambda,n)\hat{A}_{L2}(\lambda,n)^{\star} \\
 &=& \hat{A}_{L1}(\lambda,n)^{-1} - e^{2i(\beta - \tilde{\beta})}\tilde{\hat{R}}(\lambda,n)\hat{A}_{L2}(\lambda,n)^{\star} \\
 &=& \hat{A}_{L1}(\lambda,n)^{-1} +e^{2i(\beta - \tilde{\beta})} \left(( \tilde{\hat{A}}_{L1}(\lambda,n)^{\star} -  \tilde{\hat{A}}_{L1}(\lambda,n)^{-1})(\tilde{\hat{A}}_{L2}(\lambda,n)^{\star})^{-1} \right) \hat{A}_{L2}(\lambda,n)^{\star}.
\end{eqnarray*}
 Then,
 \[ \hat{A}_{L1}^{\star}(\hat{A}_{L2}^{\star})^{-1} -  \hat{A}_{L1}^{-1}(\hat{A}_{L2}^{\star})^{-1} = e^{2i(\beta - \tilde{\beta})} \left(  \tilde{\hat{A}}_{L1}^{\star}(\tilde{\hat{A}}_{L2}^{\star})^{-1} -  \tilde{\hat{A}}_{L1}^{-1}(\tilde{\hat{A}}_{L2}^{\star})^{-1} \right).\]
 Moreover, the hypothesis implies that, 
 \[(\hat{R}(\lambda,n)^{\star})^{-1} = e^{2i(\beta - \tilde{\beta})} (\tilde{\hat{R}}(\lambda,n)^{\star})^{-1},\]
 i.e.
 \[ \hat{A}_{L1}^{\star} (\hat{A}_{L2}^{\star})^{-1} = e^{2i(\beta - \tilde{\beta})} \tilde{\hat{A}}_{L1}^{\star} (\tilde{\hat{A}}_{L2}^{\star})^{-1}.\]
 Thus,
 \[ \hat{A}_{L1}^{-1}(\hat{A}_{L2}^{\star})^{-1} = e^{2i(\beta - \tilde{\beta})} \tilde{\hat{A}}_{L1}^{-1}(\tilde{\hat{A}}_{L2}^{\star})^{-1}.\]
 Finally, using the asymptotics given by Theorem \ref{asA_L}, we obtain that $A = \tilde{A}$.
 \end{proof}

Hence, as in the previous Subsection, we can introduce, a matrix $P(X,\lambda,z)$ for $X \in ]0,A[$. 
However, due to the presence of the term $e^{2i(\beta - \tilde{\beta})}$, the definition of this matrix is a little bit different (but is 
the definition given in the Section 5 of \cite{DN} if we set $c = \beta - \tilde{\beta}$). Indeed, we define $P(X,\lambda,z)$ by
\[P(X,\lambda,z)  \tilde{F}_R(\tilde{h}(X),\lambda,z) = F_R(h(X),\lambda,z) e^{i (\beta - \tilde{\beta}) \Gamma^1}.\]
The strategy is now exactly the same as the one used previously: we have to prove that the matrix $P(X,\lambda,z)$ is constant equal to $\pm I_4$. 
After some calculations we obtain that
\[P_1(X,\lambda,z) =  e^{i(\beta - \tilde{\beta})} F_{R2}\hat{A}_{R2}^{-1} \tilde{F}_{L1}^{\star} - e^{-i(\beta - \tilde{\beta})} F_{L2} (\tilde{\hat{A}}_{R3}^{-1})^{\star} \tilde{F}_{R1}^{\star}\]
and
\[P_2(X,\lambda,z) = - e^{i(\beta - \tilde{\beta})} F_{R2}\hat{A}_{R2}^{-1} \tilde{F}_{L3}^{\star} + e^{-i(\beta - \tilde{\beta})} F_{L2} (\tilde{\hat{A}}_{R3}^{-1})^{\star} \tilde{F}_{R3}^{\star}.\]
Thanks to these equalities and the asymptotics given in Theorems \ref{asJostdroite}, \ref{asJostgauche} and \ref{asA_L}, we can apply the Phragm\'{e}n-Lindel\"{o}f's and Liouville's 
Theorems to obtain that $z \mapsto P(X,\lambda,z)$ is constant on $\mathbb{C}$. Thus, as previously, we use the asymptotics of the Jost functions on the real line 
given by Theorems \ref{asJostdroiteR}, \ref{asJostgaucheR} and \ref{asA_L}, to obtain that
\[P(X,\lambda,z) = \pm I_4.\]
Finally, to finish the proof of Theorem \ref{mainthm} in the case $(ii)$ we use the same procedure than the one in Section \ref{conclu}.
\hfill $\square$

\vspace{0,5cm}

\noindent
\textit{Acknowledgments:} This paper was initiated by T.Daud\'e and F.Nicoleau during the PhD of the author. 
The author wants to deeply thank T.Daud\'e and F.Nicoleau for their help and their encouragement.

\newpage
{}

\end{document}